\title[Extremal event graphs] {Extremal event graphs: a (stable) tool for analyzing noisy time series data} 
\author[R. Belton and B. Cummins and B. T. Fasy and T. Gedeon]{}
\subjclass{Primary: 05C90, 55N31. Secondary: 92-08}
 \keywords{time series, topological data analysis, stability, directed graphs, biological networks}
 \email{robin.belton@montana.edu}
 \email{breschine.cummins@montana.edu}
 \email{brittany.fasy@montana.edu}
 \email{gedeon@math.montana.edu}
\thanks{This material is based upon work supported by the US National Science Foundation under grant No.~DGE 1649608 (Belton) and DMS 1839299 (Cummins and Gedeon), CCF 2046730 (Fasy), as well as the National Institute of Health under grant No. 5R01GM126555-01 (Gedeon).}
\begin{document}

\centerline{\scshape Robin Belton, Bree Cummins, and Tom\'{a}\v{s} Gedeon}
\medskip
{\footnotesize
 \centerline{Department of Mathematical Sciences}
   \centerline{Montana State University}
   \centerline{Bozeman, MT, 59717, USA}
}

\medskip

\centerline{\scshape Brittany Terese Fasy}
\medskip
{\footnotesize
 \centerline{Department of Mathematical Sciences and School of Computing}
   \centerline{Montana State University}
   \centerline{Bozeman, MT, 59717, USA}
}

\bigskip

\begin{abstract}
Local maxima and minima, or \textit{extremal events}, in experimental time series can be used as a coarse summary to
characterize data. However, the discrete sampling in recording experimental measurements suggests
uncertainty on the true timing of extrema during the experiment. This in turn gives uncertainty in the 
timing order of extrema within the time series. Motivated by
applications in genomic time series and biological network analysis, we
construct a weighted directed acyclic graph (DAG) called an \textit{extremal event
DAG} using techniques from persistent homology that is robust to measurement
noise. Furthermore, we define a distance between extremal event DAGs based on
the edit distance between strings. We prove several properties including local stability for the
extremal event DAG distance with respect to pairwise $L_{\infty}$ distances between
functions in the time series data. Lastly, we provide algorithms, publicly free software, and
implementations on extremal event DAG construction and comparison. 
\end{abstract}

\maketitle

\section{Introduction}\label{sec:intro}

Experimental time series data are ubiquitous in today's science and provide a
window through which we can observe the underlying dynamics of complex systems,
ranging from cells to ecosystems and climate. We study collections of time series which 
are also referred to as multivariate time series in the literature \cite{TsayMultivariate13, WeiMultivariate19}.
We construct a weighted directed
graph descriptor of a collection of time series data using \textit{persistent
homology}, a technique that belongs to a collection of approaches known as
Topological Data Analysis (TDA) that uses algebraic topology
\cite{HatcherAlgebraic02, MunkresElements84} to extract shape from data. TDA is used to study data from a wide range of applications including material science \cite{YongjinHigh18}, cancer biology \cite{VipondMultiparameter21}, and political science \cite{FengPersistent21}. Some of the classic and foundational papers in TDA include \cite{ZomorodianComputing05, KaczynskiComputational04, GhristBarcodes08, EdelsbrunnerPersistent08, OtterA17}. 

Our descriptor characterizes a  collection of time series by the order of their extrema in a way that also captures the robustness of this order with respect to measurement uncertainty.   Our motivation comes from the desire to mathematically capture and compare collections of `omics time series data, such as transcriptomics, proteomics, and others. In particular, the coarse information of orders of extrema have been used to assess \textit{regulatory network models} of gene/protein interactions~\cite{CumminsModel18}. Other applications involve quantifying similarity between gene expression time series \cite{BerryUsing20, SmithAn20} across repeated experiments.

Our mathematical methods are motivated by the combinatorial approaches in~\cite{CumminsModel18,BerryUsing20,NeremA19} that use only the approximate timing of time series extrema as the relevant features of experimental data. To take into account the uncertainty of capturing temporal orderings of extrema,
\cite{CumminsModel18} replaced the single time point locations of extrema with time intervals that were determined by manual inspection.
If the intervals are disjoint, then the ordering of extrema is interpreted to be robust to measurement uncertainty. In the follow-up paper~\cite{BerryUsing20}, an approach was developed in which the intervals are algorithmically
constructed using \textit{merge trees} \cite{EdelsbrunnerComputational10, MorozovInterleaving13}, branch decompositions
\cite{MorozovDistributed13}, and sublevel sets. These intervals are called \textit{$\varepsilon$-extremal intervals} and have the property that they are the smallest intervals for which all continuous perturbations of a continuous function (with additional technical restrictions) that lie within an $\varepsilon$-band are guaranteed to attain an extremum under measurement uncertainty of size $\varepsilon$. 
Using the $\varepsilon$-extremal intervals, labeled directed acyclic graphs (DAGs) are
constructed to represent the time series data for any fixed value of $\varepsilon$.
We refer to these DAGs as $\varepsilon$-\textit{DAGs}. Vertices or nodes in the
$\varepsilon$-DAG represent extrema in the time series data. Directed edges
$a\to b$ indicate that we can unambiguously discern the order (in time)  of events
corresponding to vertices $a$ and $b$ under measurement uncertainty of size $\varepsilon$.

Continuing this line of research, \cite{NeremA19} defined a distance metric that compares 
two collections of time series by comparing the corresponding
$\varepsilon$-DAGs. This metric involves computing the directed maximal
common edge subgraph (DMCES) and was applied in \cite{BerryUsing20} to
quantify similarity in replicate experiments of microarray yeast
cell cycle data. Additionally, the metric was used in
\cite{SmithAn20} to provide quantitative evidence that an intrinsic oscillator
drives the blood stage cycle of the malaria parasite \textit{Plasmodium
falciparum}. The metric for $\varepsilon$-DAGs using the DMCES is effective in
capturing similarity between the time series, but is computationally expensive.
This limits the total number of extrema across all time series that can be
effectively analyzed. Another limitation is that the distance can only be measured at a single measurement uncertainty level $\varepsilon$, which is often unknown and thus the distance has to be computed multiple times for a collection of $\varepsilon$ values. A better measurement would incorporate information about changes in similarity as a function of changing $\varepsilon$ in a single value.

We significantly expand and generalize the work of \cite{BerryUsing20} and \cite{NeremA19} by
constructing a \textit{weighted} DAG that reflects robustness of the extremal ordering for all 
levels of measurement error $\varepsilon$. We call this an \textit{extremal event DAG}. 
Vertices in this graph again represent extrema in the time series data, and a directed edge 
$a\to b$ indicates that extremum $a$ occurs before extremum $b$. The node weights measure 
prominence of extrema while the edge weights indicate the
smallest $\varepsilon$ level for which the relative order between the two
associated extrema can no longer be guaranteed. The node weights are computed using
\textit{sublevel set persistence}~\cite{EdelsbrunnerComputational10}.  After
representing the collection of time series as an extremal event DAG, we define a
distance between extremal event DAGs as a modified version of the \textit{edit
distance} (Chapter 15 of \cite{CormenIntroduction09}). The edit distance
quantifies similarity of  two strings based on the minimum number of operations
(e.g., insertion, substitution, and deletion) it takes to align the two strings.
This distance is commonly used in many applications including DNA sequence
alignment, see \cite{NeedlemanA70} for one of the first papers on the topic. The standard algorithm for the
edit distance between two strings of length $n$ can be computed via dynamic
programming in $\Theta(n^2)$ (Chapter 15 of \cite{CormenIntroduction09}). 

We prove several key properties of the extremal event DAG weights. Most importantly for computability 
and applications, \thmref{edge-weights} gives a simple criterion to
compute edge weights.
Furthermore, we analyze stability properties of distances between extremal event DAGs.
 \secref{stability} gives stability results
for distances used to compare extremal event DAGs with
respect to pairwise $L_{\infty}$ distance between the underlying continuous
functions. These stability results show that small changes within
time series data lead to small changes in the corresponding extremal event DAG
distances. In \thmref{extremal-DAG-stability}, we show the extremal event DAG
distance is stable in a local case: two paired continuous functions from the two collections of time series must lie 
within an $\varepsilon$ band that allows for an unambiguous alignment of the minima and maxima
between the two time series. Additionally, one of the time series can have small
amplitude additional maxima and minima.

Furthermore, extremal event DAGs and the extremal event DAG distance is computable. The general pipeline of utilizing extremal event DAGs takes two datasets of collections of time series as input, computes the associated extremal event DAGs, and the extremal event DAG distance between the extremal event DAGs, see \figref{pipeline}. We illustrate two biological applications of extremal event DAGs and distances in \secref{applications}. Additionally, the polynomial time algorithm to compute the extremal event DAG given collections of discrete time series, and the dynamic program needed to compute the extremal event DAG distance is provided in the supplementary materials. Lastly, free and public software on computing extremal event DAGs and the distance between these descriptors is available at \cite{BeltonComputing21}.

    \begin{figure}[htp]
        \centering
        {\includegraphics[width=1\textwidth]{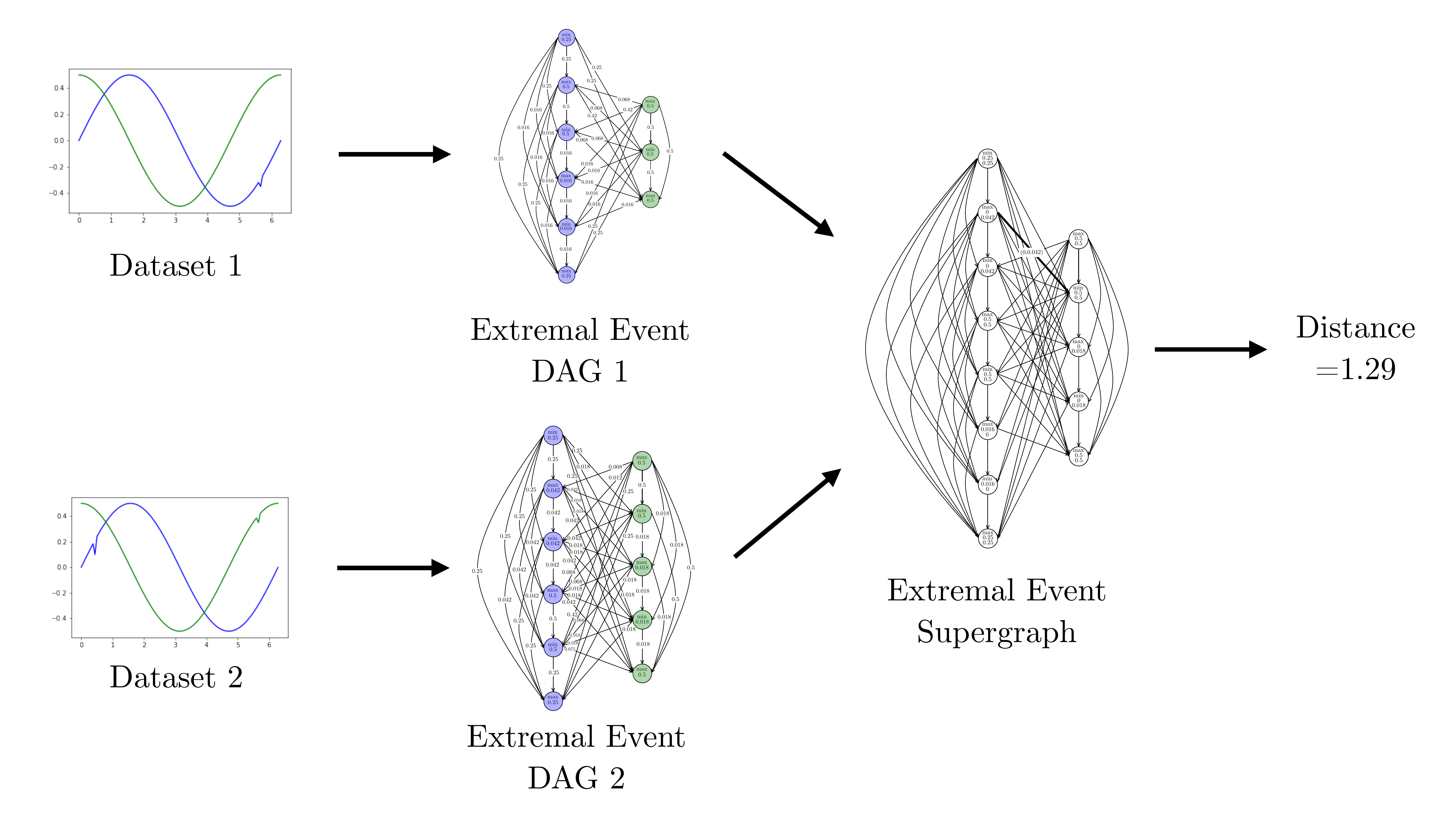}}
        \caption{Schematic for Comparing Collections of Time Series Using Extremal Event DAGs.
Each dataset consists of collections of time series. Extremal event DAGs are computed for both datasets. Each vertex corresponds to a local extremum of one of the time series in the dataset. Vertices highlighted in blue correspond to local extrema in the blue time series while vertices highlighted in green correspond to local extrema in the green time series. Vertices and directed edges capture the order of local extrema, and weights signify the robustness of extremal orderings. To compute a distance between extremal event DAGs, an extremal event supergraph is computed using a modified version of the edit distance. The extremal event supergraph has doubly weighted vertices and edges. The weight in the first component is associated to extremal event DAG 1 and the weight in the second component is associated to extremal event DAG 2. Datasets 1 and 2 differ by only a few perturbations and we see the resulting extremal event DAG distance is equal to 1.29.}
        \label{fig:pipeline}
    \end{figure}

\subsection{TDA in Analyzing Time Series Data}
TDA has been used to study time series data and the dynamics of underlying biological networks. One common method studies single variable time series by using Takens' embedding theorem \cite{TakensDetecting80}. The data is transformed by computing a sliding window embedding of the data into a point cloud in $\R^n$. This point cloud is then analyzed using one dimensional persistent homology to detect and quantify periodicity. Examples that take this or a modified approach include detecting periodicity of genomic time series data \cite{PereaSliding15}, studying entropy and the dynamics of the underlying system \cite{SmallComplex13, BandtPermutation02},  characterizing gene regulatory networks \cite{BerwaldCritical14}, and distinguishing between audio signals of the same note from different instruments \cite{SandersonComputational17}. The goal in these approaches is to quantify periodicity of single variable time series whereas we are interested in capturing order of local extrema in multivariate time series. 

The second common method to study single variable time series using TDA is to apply sublevel set persistence to detect prominent features. Applications include signal processing \cite{KhasawnehTopological18}, Fourier spectrum analysis and parameter detection \cite{KhasawnehOn20}, arrythmia detection \cite{MeryllTopological20}, and cancer studies \cite{LawsonPersistent19}. Additionally, in \cite{ChenA19}, sublevel set persistence is used to define a topological
regularizer that can then be used as a classifier for machine learning. Furthermore, sublevel set persistence on time series can be used to determine noise that is often seen as small peaks in the time series \cite{AudunSeparating20}. We also apply sublevel set persistence to detect prominent features, however, we also explore its connections to capturing order and robustness of local extrema across multiple time series. 

Lastly, TDA has been used to study other types of time dependent data. One includes dynamic metric spaces that can be used to describe phenomena such as bird flocks, insect swarms, schools of fish, and aphid trajectories. TDA techniques to study these types of data include vineyards \cite{SteinerVines06}, CROCKER plots \cite{TopazTopological15, UlmerA19, XianCapturing21}, spatiotemporal filtrations \cite{KimSpatiotemporal21} and zig-zag persistence \cite{CarlssonZigzag10, KimStable21}. Furthermore, time series data of fMRI images have been used to construct functional networks, and then applying filtrations on the weights of these networks to extract topological features \cite{StolzPersistent17, PetriHomological14}. A more extensive summary on TDA techniques to study time series can be found in \cite{GholizadehA18}. 

\subsection{Biological Motivation}\label{sec:bio-motivation}
Extremal event DAGs are developed abstractly for mulitvariate time series in general, however, we focus on the application of analyzing `omics data that measures expression levels of thousands of genes. Transcription of genes produces messenger RNA (mRNA) which are translated to proteins. Gene expression, measured by either the amount of mRNA produced (transcriptomics) or by the amount of corresponding protein (proteomics), can be used to measure the level of activity of a given gene product. There is strong evidence that the relative phases of oscillating regulators are important to controlling important cellular processes  such as the cell cycle \cite{SimmonsTranscription08}, circadian rhythm, or malaria parasite periodic infection of human blood cells. The assertion of~\cite{BerryUsing20,CumminsModel18} is that the ordering of extrema is a reasonable approximation of control by phase relationship.

For example, it is hypothesized that a small transcriptional regulatory network controlling the cell cycle can activate hundreds of other transcription factors in a phase-specific manner to play a vital role in maintaining the proper progression of DNA replication and cell division~\cite{BristowCheckpoints14, ChoReconciling17, OrlandoGlobal08, SimmonsCyclin12, SimonSerial01}. There are still many open questions about the precise role of the transcriptional network in the ordering of cell cycle events \cite{RahiThe16, SheddenAnalysis02}. A reproducible ordering of gene expression such as what was observed in \cite{BerryUsing20} provides supporting evidence for the central role of the cell cycle gene regulatory network in orchestrating timely expression of other cell cycle events.

A question of interest in biology is evaluating the similarity of two experiments across labs or experimental conditions. For example, an experimentalist may wish to measure
the similarity of expression level of genes driving the cell cycle between replicate experiments between time series collected under different growth conditions, or across organisms and tissues.
For example, circadian clock networks in different tissues that control the temporal ordering of phase specific gene expression \cite{MureDiurnal18, ZhangA14}. Similarity and differences in timing of the same network in tissues like heart and liver can tell us about their mutual coupling as well as coupling to the master circadian clock in the brain \cite{MaThe15, RubenA18}.
In summary, mathematically modeling and comparing orders of extremal events in `omics data is useful for identifying time series differences in multiple biological applications. In particular, extremal event DAGs and distances can be used to study the important biological questions about time dependent cellular processes, some of which we have mentioned here.

\section{Background}\label{sec:prelims}

We now summarize necessary terminology for the results that have been developed in this paper. Many of these terms build off of ideas mentioned in \secref{intro}. Throughout this section and the rest of this paper, we use the following notation. Let $X \subset \R$ denote an arbitrary subset of $\R$. Let $C:=[a_1,a_2] \subset \R$ be a closed interval of $\R$. 

\subsection{Extrema}

For a subset $X\subset \R$, $x \in X$ and $\varepsilon>0$, let
$B_{\varepsilon}(x)$ be the open neighborhood of radius $\varepsilon$ centered
at $x$. That is
\[ B_{\varepsilon}(x) := \{ y \in X \mid |y-x| < \varepsilon\}. \]

\begin{defn}[Local Extrema]\label{def: local-extrema}
    Let $f: X \rightarrow \R$ be a function. We say $f$ has a \emph{local
    minimum} at $x \in X$ if there exists $\varepsilon>0$ for which $f(x)< f(y)$
    for all $y \in B_{\varepsilon}(x)\setminus\{x\}$. Similarly, $f$ has a
    \emph{local maximum} at $x \in X$ if there exists $\varepsilon>0$ for which
    $f(x)>f(y)$ for all $y \in B_{\varepsilon}(x)\setminus\{x\}$. We refer to
    any local minimum or local maximum as a \emph{local extremum} of $f$.
    If $x \in X$ with $f(x) < f(y)$ for
    all $y \in X\setminus \{x\}$, we say $f$ has a \emph{global minimum} at $x$.
    Similarly, for $x \in X$ where $f(x) > f(y)$ for all $y \in X\setminus \{x\}$,
    we say $f$ has a \emph{global maximum} at $x$.
\end{defn}

We often order the extrema of a function. To ease notation, we write $[n]$ to be the set of the first $n$ integers. That is
\[ [n] := \{1,2, \dots, n\}. \]

\subsection{Distances}

We use the $L_\infty$ metric to quantify distances between collections of
points and functions.

\begin{defn}[$L_{\infty}$ metric]\label{def:l-infty}
    Let $\overline{\R} := \R \cup \{\infty\}$. For points $p = (p_1, p_2, \dots, p_n)$, $q = (q_1, q_2, \dots, q_n) \in
    \overline{\R}^n$, we define the \emph{$L_{\infty}$ distance between
    points} $p$ and $q$ as $\norm{p-q}_\infty = \max_i|p_i-q_i|$
    For functions
    $f, g: K\rightarrow \R$ where $K\subset \R$ is compact, we define the \emph{$L_{\infty}$ distance
    between functions} $f$ and $g$ as $\norm{f-g}_{\infty} = \sup_{x \in
    K}|f(x)-g(x)|$.
\end{defn}

For a subset $X \subset \overline{\R}^n$, $x \in X$ and $\varepsilon>0$, let
$\square_{\varepsilon}(x)$ be the $L_{\infty}$ open neighborhood of radius
$\varepsilon$ centered at $x$. That is
\[ \square_{\varepsilon}(x) := \{ y \in X \mid \norm{y-x}_{\infty} < \varepsilon\}. \]

\subsection{$\varepsilon$-Perturbations} We consider perturbations of a function $f$.

\begin{defn}[$\varepsilon$-Neighborhood of $f$]
    Let $f \colon K \to \R$ be a continuous function, where $K\subset \R$ is compact.
    For $\varepsilon \geq 0$, define
    \[ N_{\varepsilon}(f) := \{g: K \to \R \mid g \text{ is continuous and } \norm{f-g}_{\infty} <
    \varepsilon\} \]
    to be the \emph{$\varepsilon$-neighborhood of $f$}. A function $g \in
    N_{\varepsilon}(f)$ is called an \emph{$\varepsilon$-perturbation of $f$}.
\end{defn}

\subsection{$\varepsilon$-Extremal Intervals}

Let $\text{INT}(C)$ be the set of relatively open intervals contained in \mbox{$C:=[a_1,a_2] \subset \R$}. To enable comparability between local extrema for functions in $N_\varepsilon(f)$ for different levels of $\varepsilon$, we use the following definition.

\begin{defn}[$\varepsilon$-Extremal Interval at $t$]
    Let $f:C\rightarrow \mathbb{R}$ be a continuous function and $T $ be the set of domain coordinates of all local extrema of $f$. Let $\varepsilon>0$. Define $\varphi_{\varepsilon}^f:T\rightarrow \text{INT}(C)$ such that
    \begin{description}
        \item[Case 1] If $t \in T$ and $(t, f(t))$ is a local minimum, define $\varphi^f_{\varepsilon}(t)$ to be the
            connected component of \mbox{$(f-\varepsilon)^{-1}(-\infty,f(t)+\varepsilon)$} that contains  $t$.
        \item[Case 2] If $t \in T$ and $(t, f(t))$ is a local maximum, define $\varphi^f_{\varepsilon}(t)$
            to be the connected component of \mbox{$(f+\varepsilon)^{-1}(f(t)-\varepsilon, \infty)$} that contains $t$.
    \end{description}
    We call $\varphi^f_\varepsilon(t)$ the \emph{$\varepsilon$-extremal interval at $t$} (see~\figref{nodelife-max}).
    \label{def:epsilon-intervals}
\end{defn}

We note that sometimes we omit the superscript $f$ and simply write $\varphi_{\varepsilon}$ when the function used to construct the $\varepsilon$-extremal intervals is clear.

\begin{rem}[Notation for Endpoints of $\varepsilon$-Extremal Intervals]
    Let $f:C\rightarrow \R$ be a continuous function with a local extremum at $t
    \in C$. Suppose $\varphi^f_{\varepsilon}(t)$ is the $\varepsilon$-extremal
    interval at $t$. We define the left endpoint of~$\varphi^f_{\varepsilon}(t)$
    to be $\le(\varphi^f_{\varepsilon}(t)):= \inf(\varphi^f_{\varepsilon}(t))$. We define the right endpoint
    of $\varphi^f_{\varepsilon}(t)$ to be $\re(\varphi^f_{\varepsilon}(t)):= \sup(\varphi^f_{\varepsilon}(t))$.
    Finally, we denote the length of  $\varphi_{\varepsilon}^f(t_i)$ by
    $\length(\varphi^f_{\varepsilon}(t))$.
\end{rem}

\begin{figure}[htp]
    \centering
    {\includegraphics[width=.6\textwidth]{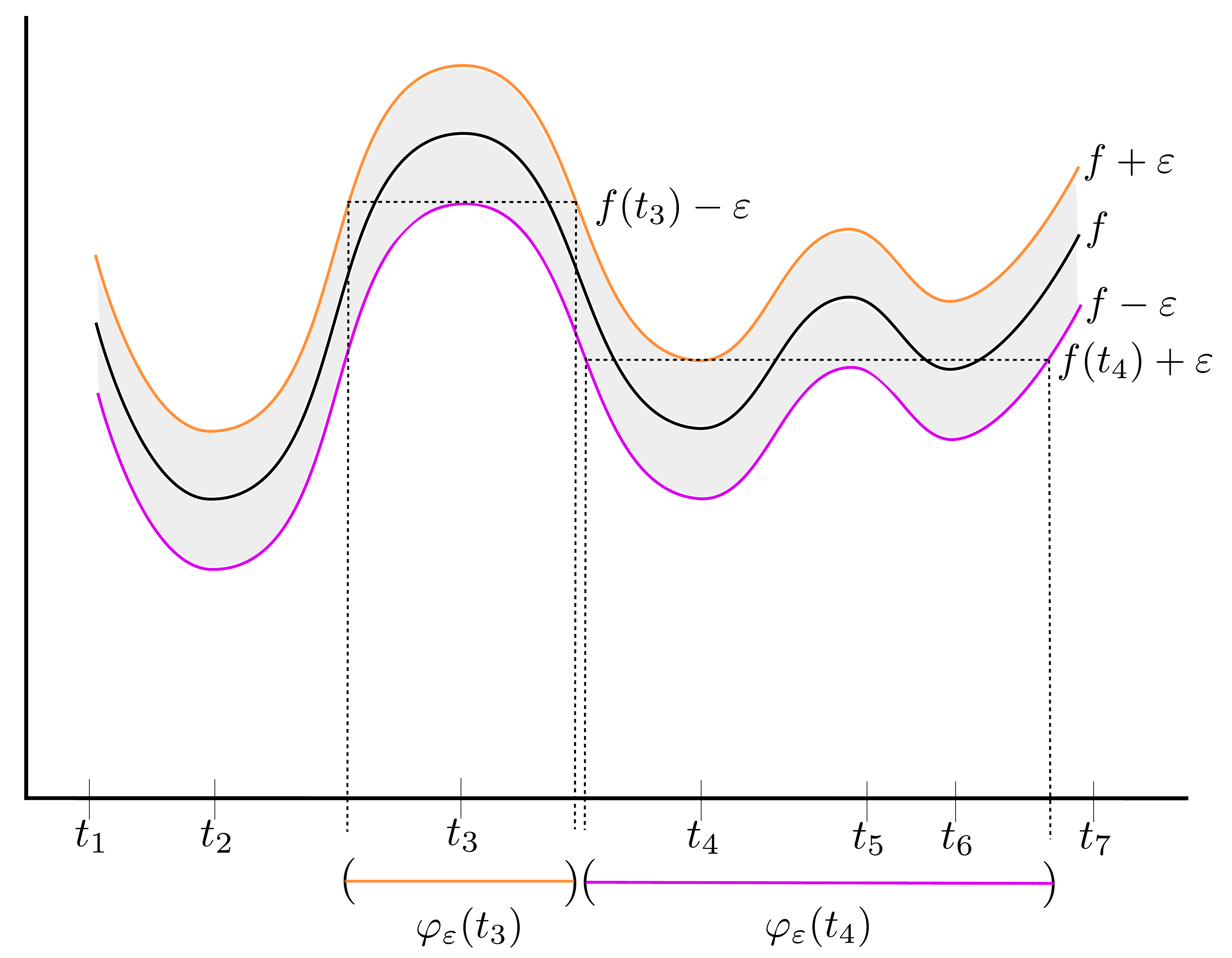}} \\
    \caption{The $\varepsilon$-extremal intervals at $t_3$ and $t_4$.
The $\varepsilon$-extremal interval at $t_3$ is the connected component of
\mbox{$(f+\varepsilon)^{-1}(f(t_3)-\varepsilon, \infty)$} that contains $t_3$ (Case 2). The
$\varepsilon$~-extremal interval at $t_4$ is the connected component of
\mbox{$(f-\varepsilon)^{-1}(-\infty,f(t_4)+\varepsilon)$} that contains $t_4$ (Case 1).}
    \label{fig:nodelife-max}
\end{figure}

\subsection{Sublevel Set Persistence}
We work with a specific case of persistent homology that
encodes the changes of the connectedness of sublevel sets of a real-valued function~$f: X
\rightarrow \R$ defined on a topological space $X$ as the height parameter ranges from~$-\infty$ to~$\infty$. This
information is encoded in a topological descriptor called a persistence diagram and encodes the prominence of
the local extrema of $f$. Persistent homology is a general mathematical framework
and we only provide the definitions necessary for our results here;
see \cite{EdelsbrunnerComputational10, PereaA19} for more detailed
introductions to persistence. Below we define sublevel set persistence using notation similar to 
page 181 of \cite{EdelsbrunnerComputational10}. One assumption we need is tameness of $f$. 

\begin{defn}[Homological Critical Values]
    Let $X$ be a topological space, $f:X \rightarrow \R$ a function. We
    call~ $a\in \R$ a \emph{homological critical value} if there exists a non-negative integer $n$ and
    $\delta > 0$ such that for all $0 < \varepsilon < \delta$, the linear map
    $H_n(f^{-1}(-\infty, a-\varepsilon]) \rightarrow H_n(f^{-1}(-\infty,
    a+\varepsilon])$ induced by the inclusion of sublevel sets is not an isomorphism.
\end{defn}

In other words, the homological critical values are the values at which the
homology of the sublevel sets change. For Morse functions $f$ on a smooth manifold, 
these points are exactly the heights of the local extrema of $f$ \cite{MilnorMorse63}. 
We also note that in this paper we work with $\mathbb{Z}/2\mathbb{Z}$ coefficients.

\begin{defn}[Tameness]
    Let $X$ be a topological space. A function $f: X \rightarrow
    \R$ is \emph{tame} if it has a finite number of homological critical values
    and the homology groups $H_n(f^{-1}(-\infty, a])$ are finite for every $a
    \in \R$.
\end{defn}

Next, we consider a nested sequence of sublevel sets. 
A \emph{filtration} of a topological space $X$ is a nested family of subspaces
$(X_r)_{r\in T}$ starting at the empty set, where $T \subset \R$,
such that for all $r, s \in T$ where $r\leq s$, we have $X_r \subset X_s$, and $\bigcup_{r \in T} X_r = X$. For
$f: X\rightarrow \R$, the sequence of all \emph{sublevel sets} $(f^{-1}(-\infty, r])_{r\in\R}$, ordered by inclusion and indexed
by $\R$, is called the \emph{sublevel set filtration}. 
In particular, for $r\leq s$, we have $f^{-1}(-\infty, r] \subset f^{-1}(-\infty, s]$. The inclusion map 
$\iota: f^{-1}(-\infty, r] \rightarrow f^{-1}(-\infty, s]$
 induces a homomorphism between homology groups 
 \[g_n^{r,s}: H_n(f^{-1}(-\infty, r]) \rightarrow H_n(f^{-1}(-\infty, s]).\]
 This homomorphism takes the homology of the sublevel set of $f^{-1}(-\infty, r]$ to the homology of the sublevel set of $f^{-1}(-\infty, s]$. The image of $g_n^{r,s}$ contains all this important information. We define the \emph{$n^{th}$ persistent homology groups} to be the images of all the homomorphisms $H_n^{r,s} := \im(g_n^{r,s})$. The \emph{$n^{th}$ persistent Betti numbers} are the ranks; $\beta_n^{r,s} := \text{rank}(H_n^{r,s})$. 

Next we describe how to encode the persistent homology groups into a multiset of points in the extended plane. Consider a tame function $f: X\rightarrow \R$. Let $(r_i)_{i=1}^N$ be the ordered sequence of homological critical values of $f$. Since we are working with a tame function, there are only a finite number of  heights we need to consider where the sublevel sets change. 
To ease notation, we write $H_n^{i,j} := H_n^{r_i, r_j}$, $\beta_n^{i,j}:=\beta_n^{r_i, r_j}$, and $g_n^{i,j} := g_n^{r_i, r_j}$. 
The persistent homology group $H_n^{i,j}$ consists of the homology classes of $f^{-1}(-\infty, r_i]$ 
that still persist in $f^{-1}(-\infty, r_j]$.  The persistent Betti number $\beta_n^{i,j}$ counts the number of homology classes that persist between $f^{-1}(-\infty, r_i]$ and $f^{-1}(-\infty, r_j]$. The first index for which a homology class appears is the \textit{birth} of that class. When a class in $f^{-1}(-\infty, r_{i-\varepsilon}]$ merges with another in $f^{-1}(-\infty, r_i]$, then the class \emph{dies} at a height of $r_i$. When classes merge together we follow the \emph{elder rule} (See section 7.1 of \cite{EdelsbrunnerComputational10}) that requires the class with a greater birth height to merge with the class of the lower birth height.  

\begin{defn}[Persistence Diagram $D_n(f)$]\label{def:pd}
    Let $f: X\rightarrow \R$ be a tame function with homological critical values $R := \{r_i\}_{i=1}^N$. 
    Consider the set $S := R \cup \{\infty\}$.
    The $n^{th}$ dimensional \emph{persistence diagram} $D_n(f)$ is the multiset
    set of points in $\overline{\R}^2$ such that the point~$p = (r_i, r_j)\in R \times S$ 
    where $r_i\leq r_j$ is included with multiplicity, 
    $$
     \mu_n(p) = \lim_{\varepsilon \rightarrow 0^+}  (\beta_n^{i,j}-\beta_n^{i,j+\varepsilon})-(\beta_n^{i-\varepsilon, j} - \beta_n^{i-\varepsilon, j+\varepsilon}).
   $$
    We set $p \in D_n(f)$ if, and only if, $\mu_n(p) >0$. 
    \end{defn}
In regards to $\mu_n(p)$, the first difference counts the number of homology classes that are born at or before a height of $r_i$ and die at a height of $r_j$. The second difference counts the number of homology classes that are born at or before a height of $r_i-\varepsilon$ and die at a height of $r_j$.

The persistence diagram summarizes the homology
groups as the height parameter ranges from~$-\infty$ to~$\infty$.  Each
persistence point $p=(b,d)\in D_n(f)$ is called a \emph{birth-death pair} since it represents a unique generator of the
homology groups of the sublevel sets of $f$ that is born at parameter~$b$ and
dies going into parameter $d$. In this work, we are concerned with a special type of tame function.

\begin{defn}[Nicely Tame Functions]
    Let $X \subset \R$ be a topological space. A function $f: X \rightarrow \R$
    is \emph{nicely tame} if $f$ is tame, continuous, and for each critical
    value $y$, the preimage $f^{-1}(y)$ is a finite set.
\end{defn}

Specifically, we work with a nicely tame function $f: [a_1,a_2] \rightarrow \R$ where $a_1,a_2 \in \R$. If the function values at the local
extrema are unique, then there is a one-to-one correspondence between
persistence points and the local minima of $f$. Then a persistence point $(b,d)$ corresponds to the local minimum $(t,f(t)=b)$ since the homological critical values are the local minima and maxima of $f$.

In the event that the values of several minima are the same, this correspondence is not unique. However, a unique
correspondence can be induced by fixing an order on the local minima (e.g.,
 the domain coordinates) and using that ordering to break ties. For a multiset $A$, we write $\size{A}$
for the \emph{total multiplicity of $A$} i.e., $\size{A} = \sum_{p \in
A} \mu(p)$. \figref{PD} gives an example of a function and its persistence
diagram from a sublevel set filtration.

In this work, we are only concerned with $D_0(f)$ and we denote $D(f) := D_0(f)$. 
Furthermore, all our persistence diagrams have a unique point in $D(f)$ with a death coordinate of $\infty$. 
We call this the \emph{essential connected component}. 
If $t$ does not represent the essential component, then there exists a local maximum $(t', f(t'))$ such that $(f(t),f(t')) \in D(f)$.
In this case, $f(t')$ is the height at which the connected
component of~$f^{-1}(-\infty, f(t')]$ containing $t$ merges with another
connected component of the sublevel set $f^{-1}(-\infty, f(t')]$ represented by
a local minimum $s$ where $f(s) < f(t)$.\footnote{The choice of pairings in
$D(f)$ follows the Elder Rule. In the case where both connected
components are born at the same height, we arbitrarily choose to continue the
connected component that occurs first in the domain.} We call $f(t')$ the
\emph{merge height of $t$}.

\begin{figure}[htp]
    \centering
    {\includegraphics[width=.7\textwidth]{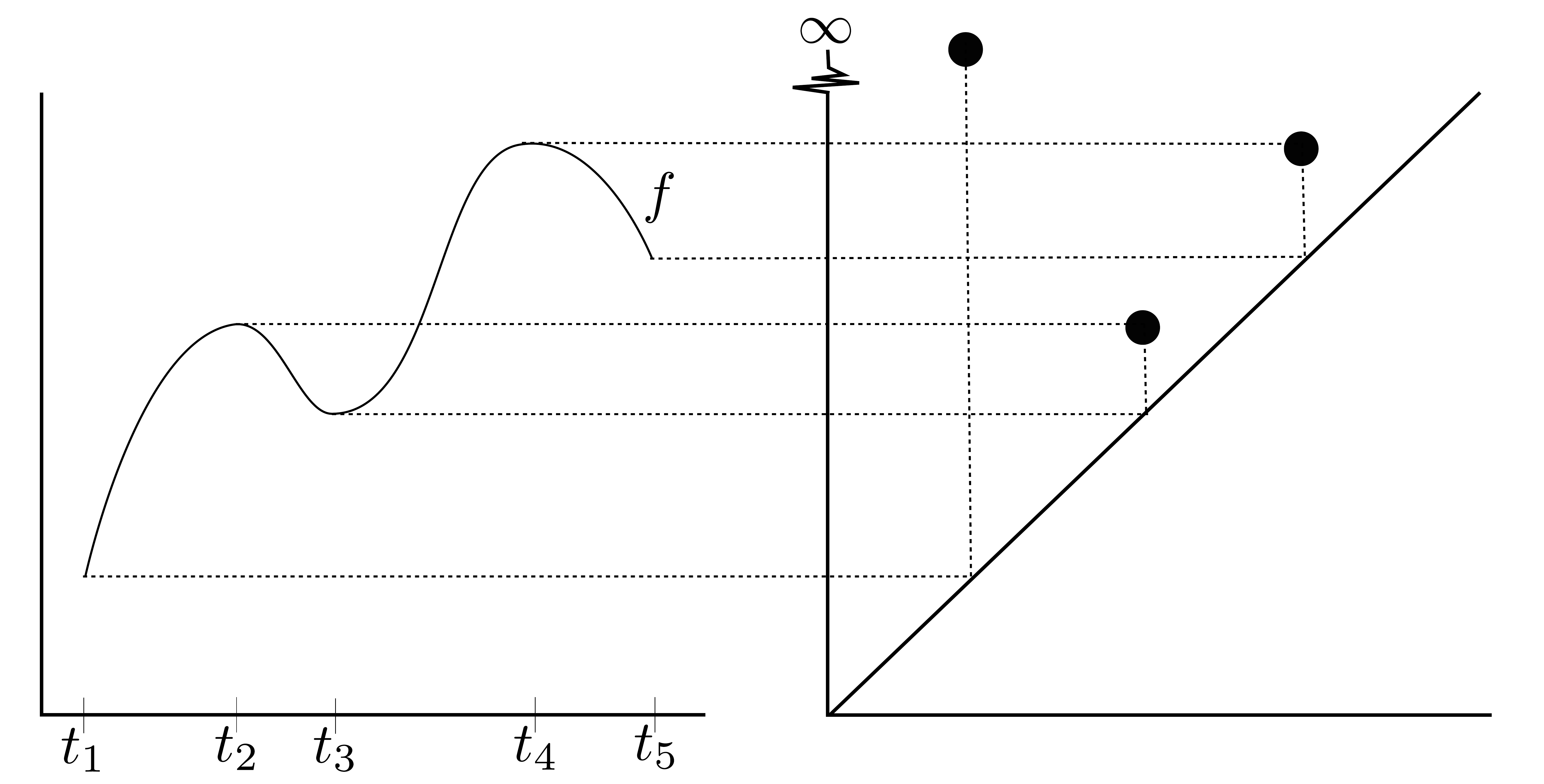}} \\
    \caption{Left. A function, $f:[t_1,t_5]\rightarrow \R$. Right. Persistence diagram of $f$, $D(f)$
	obtained from a sublevel set filtration of $f$.  The set
	$D(f)$ is $\{(f(t_1), \infty), (f(t_3), f(t_2)), (f(t_5), f(t_4))\}$. The first coordinate of each point in 
	$D(f)$ is the height of a local minimum, while the second coordinate is the height of a local maximum or $\infty$.  }
    \label{fig:PD}
\end{figure}

As mentioned before, sublevel set persistence has been previously studied to characterize
prominent features of data. In \cite{BendichHomology13}, well groups
that are defined using $\varepsilon>0$ and sublevel sets of a continuous
function between topological spaces are used to measure the robustness of
homology groups from sublevel sets of a function under any
$\varepsilon$-perturbation. Unlike $\varepsilon$-extremal intervals, well groups
are algebraic groups and are not specific to a local extremum. 

\subsection{Persistence of Extrema}
\label{sec:node-lives}

Given a function $f \colon C \to \R$ and the domain coordinate $t$ of a local
extremum, any continuous function $g\in N_\varepsilon(t)$ is guaranteed to have the same type of 
extremum in the interval~$\varphi^f_{\varepsilon}(t)$ for small enough $\varepsilon$. At some
value of $\varepsilon$, however, this is no longer guaranteed.
We use the persistence diagram $D(f)$ to assist us with understanding when this
occurs.

\begin{defn}[Birth-Death Pairing Map]
    Let $f:C\rightarrow \R$ be a nicely tame function.
    Let~$\{t_i\}_{i=1}^n$ be the
    set of domain coordinates of the local minima of $f$. Define the
        \emph{birth-death pairing map} to be~$\zeta_f: \{t_i\}_{i=1}^n \rightarrow \R_{>0}$ by 
    \[ \zeta_f(t_i) = \begin{cases}
        \max(f) & \text{if } t_i \text{ represents the essential component}\\
        f(t_j) & \text{ otherwise,}
    \end{cases}
    \]
    where $f(t_j)$ is the merge height of the minimum at $t_i$.
    \label{def:birth-death-map}
\end{defn}

Observe the minima for $f$ are the maxima of $-f$ and vice-versa. Additionally, the absolute difference in heights between extrema of $f$ remain the same in both $f$ and $-f$. Hence, we can study the prominence of maxima of $f$ by studying the minima of $-f$. This follows from \cite{BendichPersistent10} which discusses the symmetry between persistence diagrams computed from height filtrations that are ascending versus descending.

\begin{defn}[Persistence of Extrema]
    Let $f:C\rightarrow \R$ be a nicely tame function, and
    let \mbox{$(b,d) \in D(f)$}.  The persistence of $(b,d)$
    is the difference between the birth and death heights,
    $d-b$. Suppose $t$ is the domain coordinate such that $f(t)=b$
    and $(t,f(t))$ is the local minimum of $f$ representing the pair $(b,d)$.
    We define the \emph{persistence of the extremum}  $(t,f(t))$,
    denoted~$\pers_f(t)$, as
    \[ \pers_f(t) :=
        \begin{cases}
            \max(f)-f(t), & \text{if } (t, f(t)) \text{ is the global minimum of } f\\
            d-b, & \text{if } (t, f(t)) \text{ is a local (and not global) minimum of } f\\
            \pers_{-f}(t), & \text{if } (t, f(t)) \text{ is a local maximum of } f.\\
        \end{cases}
    \]
\end{defn}

See~\figref{nodelife} for an example of computing the persistence of local extrema.

\begin{defn}[Node Life]\label{def:nodelife}
    Let $f:C \rightarrow \R$ be a nicely tame function with a local extremum at domain coordinate $t$. The \emph{node life} of $t$ is $\pers_f(t)/2$.
\end{defn}

We sometimes omit the subscript $f$ from $\pers_f(t)$ when the function we are
computing the node life from is clear.
Proposition 2 and Corollary 1 from~\cite{BerryUsing20} states that $\varphi_\varepsilon^f(t)$ is
the smallest interval for which any nicely tame $\varepsilon$-perturbation of $f$ is guaranteed to have at least one local extremum of the same type as $t$, as long as $\varepsilon$ is less than the node life.

\begin{figure}[htp]
    \centering
    {\includegraphics[width=.7\textwidth]{images/nodelife}}
   {\includegraphics[width=.7\textwidth]{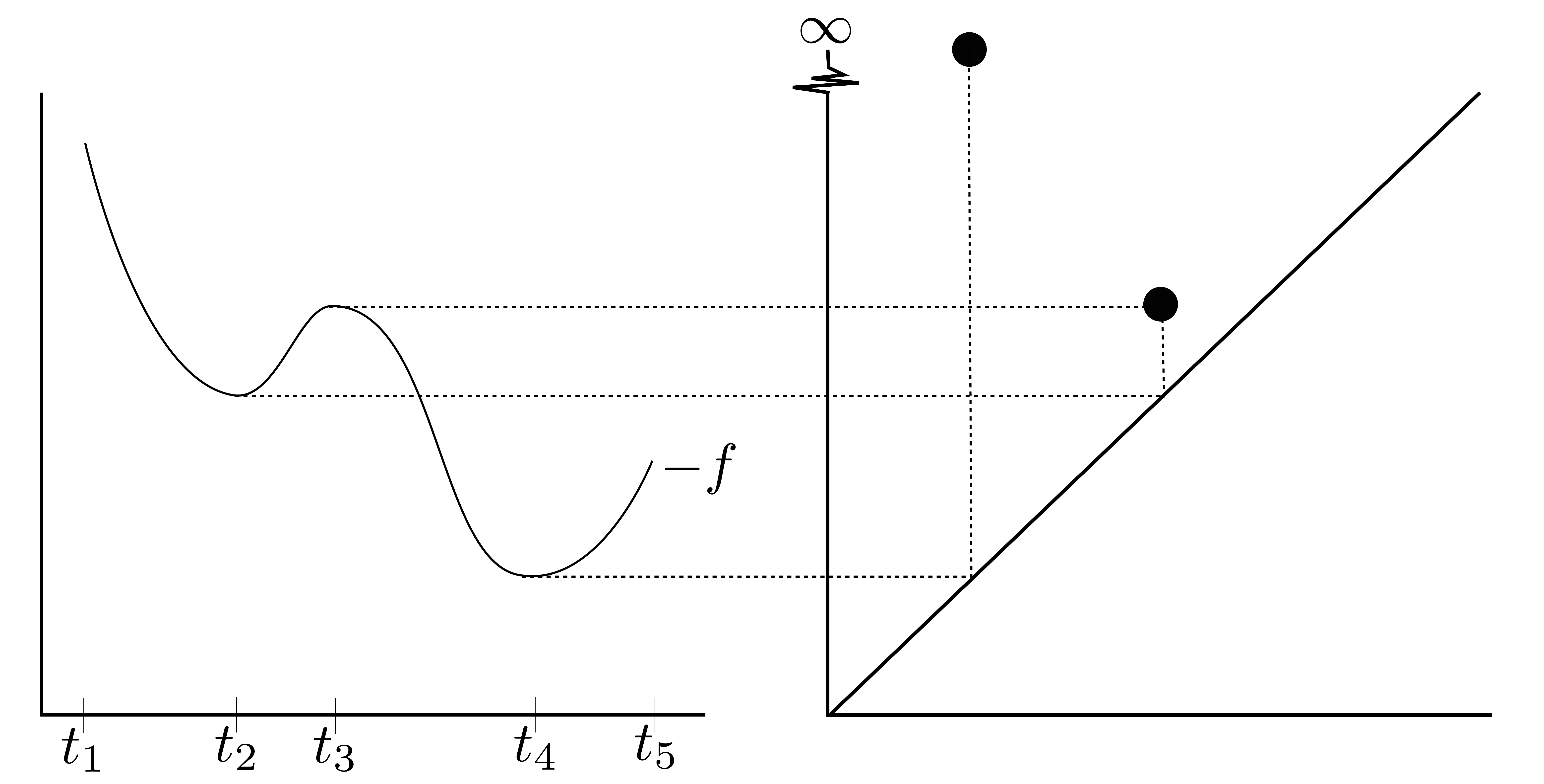}}
    \caption{Top. A continuous function $f$ and its persistence diagram from a sublevel set filtration.
In this example, $\pers_f(t_1) = \max(f)-f(t_1)$, $\pers_f(t_3)=f(t_2)-f(t_3)$, and $\pers_f(t_5)=f(t_4)-f(t_5)$.
Bottom. $-f$ and its persistence diagram from a sublevel set filtration. Now we can compute the persistence of the local maxmima of $f$ as 
$\pers_f(t_4) = f(t_4)-\min(f)$ and $\pers_f(t_2)=f(t_2)-f(t_3)$.}
    \label{fig:nodelife}
\end{figure}
\section{Extremal Event DAG}

To define the extremal event DAG, that will capture information about extrema of multiple time series, we need a notion of comparability of local extrema.

\begin{defn}[Comparability of Extrema]
Let $f,g: C\rightarrow \mathbb{R}$ be nicely tame functions.
Let $t_f, t_g$ be local extrema of $f$ and $g$ respectively. Let $\varepsilon > 0$. We declare $t_f \prec_\varepsilon t_g$ if for
every nicely tame $\varepsilon$-perturbation of $f$ and $g$ there exists $\varepsilon$-perturbed extrema
$t_f'$ and $t_g'$ such that $t_f' \in \varphi^f_{\varepsilon}(t_f)$, $t_g' \in \varphi^g_{\varepsilon}(t_g)$, and  $t_f'<t_g'$. We say $t_f$ and $t_g$ are
\emph{comparable at $\varepsilon$} if all of the following hold.
\begin{enumerate}
\item $\pers_f(t_f) > 2\varepsilon$
\item $\pers_g(t_g) > 2\varepsilon$
\item $t_f\prec_\varepsilon t_g$ or $t_g \prec_\varepsilon t_f$
\end{enumerate}
If at least one of these conditions does not hold, then $t_f$ and $t_g$ are \emph{incomparable at $\varepsilon$}. \label{def:comparable}
\end{defn}

\defref{comparable} relates order of extrema to possible $\varepsilon$-perturbations of functions. Using this definition, we are ready to define the extremal event DAG.

\begin{defn}[Extremal Event DAG]\label{def:extremal-DAG}
    Let $F=\{f_i:C \rightarrow \R\}_{i=1}^n$ be a collection of nicely tame
    functions. For $i \in [n]$, let $t_1^i<t_2^i<\dots <t_{n_i}^i$ be the domain coordinates for
    the local extrema of $f_i$.
    The \emph{extremal event DAG of $F$} is the directed graph,
    $\DAG(F):=(V, E,\omega_V,\omega_E)$, where
    \begin{itemize}
        \item $V:= \{ v(i,j) ~|~ i \in [n] \text{ and } j\in [n_i] \}$. In
            particular $v(i,j)\in V$ corresponds to the extremum of $f_i$ at~$t_j^i$.
        \item $E := \{ (v(i,j), v(r,s)) \mid t_j^i < t_s^r \}$.
        \item $\omega_V \colon V \to \R_{\geq 0}$
            is defined by the node life $\omega_V(v(i,j)):=\frac{1}{2}\pers_{f_i}(t_j^i)$.
            We call $\omega_V$ the \emph{node weights}.
        \item $\omega_E \colon E \to \R_{\geq 0}$ is defined by $\omega_{E}(v(i,j), v(r,s))
            := \inf\{\varepsilon\mid t_j^i \text{
                and } t_s^r \text{ are incomparable}\}$. We call~$\omega_E$ the \emph{edge weights}.
    \end{itemize}
\end{defn}

We note that we use the terms vertices and nodes interchangeably throughout this paper.
Given $\varepsilon>0$, we can easily recover $\varepsilon$-$\DAG(F)$ from
$\DAG(F)$ where $\varepsilon$-$\DAG(F)$ is defined in \cite{BerryUsing20}.
Specifically, $\varepsilon$-$\DAG(F)$ is the subgraph of the
$\DAG(F)$ that consists of vertices and edges with a weight less than or equal
to $\varepsilon$. Hence, $\DAG(F)$ is a stronger descriptor since it is not
dependent on $\varepsilon$.

Computing the vertices and directed edges of the extremal event DAG can be done
directly from the graphs of the functions in $F$. Computing the
weights requires more information. Expanding upon earlier observation about node lives, we 
note that we choose to define the node weights as the node lives for the following reason. 
Proposition 2 of~\cite{BerryUsing20} states that every nicely tame $g \in
N_{\varepsilon}(f)$ has a local extremum of the same type as $t$, say $t' \in
\varphi^f_{\varepsilon}(t)$ as long as $\varepsilon < \frac{1}{2}\pers_f(t)$. Furthermore,
Proposition 1 of~\cite{BerryUsing20} states that for any two local extrema at
$(s, f(s))$, and $(t, f(t))$ of $f$ of the same type, we have $\varphi^f_{\varepsilon}(t) \cap
\varphi^f_{\varepsilon}(s) = \emptyset$.  Hence, when $\varepsilon <\frac{1}{2}\pers_f(t)$,
we guarantee a relative ordering of extrema for $\varepsilon$-perturbations of
$f$. If $\varepsilon > \frac{1}{2}\pers_f(t)$, Proposition 1 of~\cite{BerryUsing20} does not apply 
and we lose the association between the extrema of the perturbed function of $g \in N_{\varepsilon}(f)$ and the extremum of $f$ at $t$.

\subsection{Computing Edge Weights}

In \appref{ext-intervals-prop}, we prove a few properties of $\varepsilon$-extremal intervals that are needed to prove the condition for computing edge weights. We state a condition for checking that requirement  3 of~\defref{comparable} is met.

\begin{thm}[Computing Edge Weights]

    Let $F=\{f_i:C \rightarrow \R\}_{i=1}^n$ be a collection of nicely tame
    functions where $t_1^i<t_2^i<\dots <t_{n_i}^i$ are all the domain coordinates of the local extrema of
    $f_i$. Let \mbox{$\DAG(F) := (V,E,\omega_V,\omega_E)$} be the extremal event DAG
    of~$F$. For all edges~$(v(i,j), v(c,d)) \in E$, the following statements hold

    \begin{enumerate}
        \item If $i=c$, then
            $$\omega_{E}(v(i,j), v(c,d)) = \min\{\omega_V(v(i,j)), \omega_V(v(c,d))\}.$$ \label{stmt:edge-weights-same}
        \item If $i\neq c$, then
            $$\omega_{E}(v(i,j), v(c,d))=\min\{\omega_V(v(i,j)), \omega_V(v(c,d)), \varepsilon^*(t_j^i, t_d^c)\},$$
            where
            $$\varepsilon^*(t_j^i, t_d^c) := \inf\{\varepsilon \mid \varphi^{f_i}_{\varepsilon}(t_j^i) \cap \varphi^{f_c}_{\varepsilon}(t_d^c) \neq \emptyset\}.$$
            \label{stmt:edge-weights-dif}
    \end{enumerate}
    \label{thm:edge-weights}
\end{thm}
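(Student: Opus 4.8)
The plan is to unwind the definition of $\omega_E$ as an infimum of $\varepsilon$-values at which the two extrema become incomparable, and to track which of the three conditions in \defref{comparable} fails first as $\varepsilon$ increases. By definition, $\omega_E(v(i,j),v(c,d)) = \inf\{\varepsilon \mid t_j^i \text{ and } t_d^c \text{ incomparable at } \varepsilon\}$, and incomparability at $\varepsilon$ means at least one of (1) $\pers_{f_i}(t_j^i) > 2\varepsilon$, (2) $\pers_{f_c}(t_d^c) > 2\varepsilon$, (3) $t_j^i \prec_\varepsilon t_d^c$ or $t_d^c \prec_\varepsilon t_j^i$ fails. Conditions (1) and (2) fail precisely when $\varepsilon \geq \omega_V(v(i,j))$ and $\varepsilon \geq \omega_V(v(c,d))$ respectively (recalling $\omega_V = \pers/2$), so the infimum over $\varepsilon$ at which (1) or (2) fails is exactly $\min\{\omega_V(v(i,j)), \omega_V(v(c,d))\}$. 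Thus in all cases $\omega_E(v(i,j),v(c,d)) \leq \min\{\omega_V(v(i,j)), \omega_V(v(c,d))\}$, and the whole problem reduces to understanding the infimum of $\varepsilon$ at which condition (3) fails — i.e., at which neither $t_j^i \prec_\varepsilon t_d^c$ nor $t_d^c \prec_\varepsilon t_j^i$ holds — provided we have also verified that below $\min\{\omega_V(v(i,j)), \omega_V(v(c,d))\}$ the two extrema are genuinely comparable (so the infimum is not attained spuriously early by some subtlety in the $\prec_\varepsilon$ relation).

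First I would handle statement (1), the case $i = c$. Here $t_j^i$ and $t_d^c$ are extrema of the \emph{same} function $f_i$, with $t_j^i < t_d^c$ (since the edge exists). Using the properties of $\varepsilon$-extremal intervals proved in \appref{ext-intervals-prop} — in particular that $\varphi^{f_i}_\varepsilon$ intervals for extrema of the same type are disjoint (Proposition 1 of \cite{BerryUsing20}), and presumably a monotonicity/containment statement about how $\varphi^{f_i}_\varepsilon(t)$ grows with $\varepsilon$ and stays ordered — I would argue that as long as $\varepsilon$ is below both node lives, the intervals $\varphi^{f_i}_\varepsilon(t_j^i)$ and $\varphi^{f_i}_\varepsilon(t_d^c)$ remain ordered left-to-right, so every $\varepsilon$-perturbation has a perturbed extremum of the first type in the left interval and one of the second type in the right interval, giving $t_j^i \prec_\varepsilon t_d^c$; hence condition (3) holds throughout that range and the binding constraint is $\min\{\omega_V(v(i,j)), \omega_V(v(c,d))\}$. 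Combined with the universal upper bound from the previous paragraph, this gives equality.

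Next I would handle statement (2), $i \neq c$. Now the intervals $\varphi^{f_i}_\varepsilon(t_j^i)$ and $\varphi^{f_c}_\varepsilon(t_d^c)$ come from different functions and there is no disjointness guarantee; they can overlap. The key claim is that condition (3) holds at $\varepsilon$ (with both persistences exceeding $2\varepsilon$) if and only if the two intervals are disjoint — if they are disjoint, one lies entirely to the left of the other and every perturbation has its guaranteed extrema in the respective intervals, so the order is forced; conversely, if they overlap, one can construct $\varepsilon$-perturbations of $f_i$ and of $f_c$ whose extrema fall at the \emph{same} point (or in swapped order) inside the overlap, so neither $\prec_\varepsilon$ direction holds. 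This identifies the infimum of $\varepsilon$ at which (3) fails as $\varepsilon^*(t_j^i, t_d^c) = \inf\{\varepsilon \mid \varphi^{f_i}_\varepsilon(t_j^i) \cap \varphi^{f_c}_\varepsilon(t_d^c) \neq \emptyset\}$. Taking the minimum of this with the two node-life thresholds (whichever constraint is violated at the smallest $\varepsilon$) yields the formula; the universal upper bound again shows we cannot do better.

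The main obstacle I anticipate is the careful bookkeeping in the $\prec_\varepsilon$ analysis: the relation $\prec_\varepsilon$ quantifies over \emph{all} nicely tame $\varepsilon$-perturbations and asserts existence of perturbed extrema in the specified intervals with the right order, so proving that order is \emph{forced} requires knowing that (a) below the node life every perturbation really does have an extremum of the correct type in $\varphi_\varepsilon(t)$ — this is exactly Proposition 2 of \cite{BerryUsing20} — and (b) if the intervals are disjoint these guaranteed extrema cannot be reordered, while (c) if the intervals overlap one can \emph{build} a perturbation realizing any order, including ties, which requires an explicit local construction. Subtlety (c) is where an actual argument (as opposed to a quotation of prior results) is needed, and getting the infimum-versus-minimum and the open-versus-closed conditions to line up at the boundary value $\varepsilon = \varepsilon^*$ or $\varepsilon = \omega_V$ is the delicate point; I expect the strict inequalities in \defref{comparable} (the $>2\varepsilon$ and the strict $<$ in $t_f' < t_g'$) to be exactly what makes the infimum come out clean, so I would be careful to use them rather than non-strict versions.
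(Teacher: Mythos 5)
Your reduction is sound as far as it goes: conditions (1) and (2) of \defref{comparable} fail exactly at $\varepsilon \geq \omega_V(v(i,j))$ and $\varepsilon \geq \omega_V(v(c,d))$, so $\omega_E \leq \min\{\omega_V(v(i,j)),\omega_V(v(c,d))\}$ always, and the problem reduces to locating where condition (3) first fails. Your treatment of statement (2) also matches the paper's: below $\varepsilon^*$ the intervals are disjoint and Proposition 2 of \cite{BerryUsing20} forces the order, while above $\varepsilon^*$ the two functions can be perturbed \emph{independently} so that their guaranteed extrema coincide or swap.

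The gap is in statement (1). You assert that for $\varepsilon$ below both node lives the intervals $\varphi^{f_i}_{\varepsilon}(t_j^i)$ and $\varphi^{f_i}_{\varepsilon}(t_d^c)$ ``remain ordered left-to-right,'' but this is false: for \emph{adjacent} extrema of opposite types the intervals can overlap well before $\varepsilon$ reaches either node life. (Take a min at height $0$ and an adjacent max at height $1$ on a monotone stretch: the intervals meet once $\varepsilon > 1/4$, while both node lives can equal $1/2$; \stmtref{properties-epsiff} of \lemref{properties} only guarantees that neither interval contains the \emph{other extremum}, not disjointness.) Worse, the general principle you invoke for statement (2) --- overlap of the intervals allows a perturbation realizing any order --- would, if applied here, predict $\omega_E = \min\{\omega_V(v(i,j)),\omega_V(v(c,d)),\varepsilon^*\}$ in the $i=c$ case as well, contradicting the theorem. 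The reason it fails for a single function is that one perturbation $g$ must produce \emph{both} extrema, and a $g$ whose perturbed max precedes its perturbed min would be forced to have additional extrema between them, contradicting adjacency. Making this precise is the main content of the paper's proof (\lemref{appedgeweights}): assuming incomparability, one derives the height bounds $g(t_j') > f(t_d)-\varepsilon$ and $g(t_d') < f(t_j)+\varepsilon$, combines them with $\varepsilon \leq \tfrac{1}{2}(f(t_d)-f(t_j))$ from \lemref{properties} to get $g(t_d') < g(t_j')$ with $t_d' < t_j'$, and reaches a contradiction with the adjacency of $t_d'$ and $t_j'$. This case analysis (including the boundary subcase $\re(\varphi_{\varepsilon}(t_d))=a_2$) is where the real work lies, and your proposal does not supply it.
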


The proof of \thmref{edge-weights} is technical and involves analyzing several cases. 
We provide the proof of \thmref{edge-weights} in \appref{edgeweights}.

\subsection{Example of Extremal Event DAG Construction}

We give an example of constructing an extremal event DAG.

\begin{example}
    We construct the extremal event DAG for \mbox{$\sin(x):[0,2\pi]\rightarrow \R$},
    $\cos(x):[0,2\pi]\rightarrow \R$ as illustrated in~\figref{extremalDAGsine}. First we
    compute the persistence diagram from a sublevel set filtration of
    $\sin(x)$, $-\sin(x)$, $\cos(x),$ and $-\cos(x)$.
    This computes the node lives of all local extrema in $\sin(x)$
    and $\cos(x)$. These node lives are the node weights in the extremal event DAG. Next we
    compute edge weights between nodes based on \thmref{edge-weights}(1).
    
    For an illustration of this, consider the local extrema at
    $x=0$ and $x=\frac{\pi}{2}$~of~$\sin(x)$. The node lives of these two local
    extrema are $0.5$ and $1$ respectively. The edge weight between the two corresponding
    vertices in the extremal event DAG is the minimum of these two node lives: $0.5$.

    Computing
    the edge weights between two vertices corresponding to different function is more involved.
    We need to apply~\thmref{edge-weights}(2). To illustrate this, consider the local extrema at
    $x=\frac{\pi}{2}$ of~$\sin(x)$ and $x=\pi$ of $\cos(x)$. Since $\sin(x)$ and
    $\cos(x)$ are translations of one another, the
    $\varepsilon$-extremal intervals grow at the same rate for both $\sin(x)$ and
    $\cos(x)$. We know that $\varphi^{\sin}_{\varepsilon}(\pi/2)$ and $\varphi^{\cos}_{\varepsilon}(\pi)$
    first intersect at the half-way point of the domain coordinates, which is $\frac{3\pi}{4}$.
    Using the definition of the $\varepsilon$-extremal  intervals we find

    \begin{align*}
        \sin(\pi/2)-\varepsilon &= \sin(3\pi/4)+\varepsilon\\
        \varepsilon &= \frac{1}{4}(2-\sqrt{2}) \approx 0.14
    \end{align*}

    \begin{align*}
        \cos(\pi)+\varepsilon &= \cos(3\pi/4)-\varepsilon\\
        \varepsilon &= \frac{1}{4}(2-\sqrt{2}) \approx 0.14
    \end{align*}

    The epsilon value computed is the infimum $\varepsilon$ for which $\varphi^{\sin}_\varepsilon(\pi/2)$
    and $\varphi^{\cos}_\varepsilon(\pi)$ both contain $\frac{3\pi}{4}$. Hence this is also the infimum
    $\varepsilon$ for which $\varphi^{\sin}_\varepsilon(\pi/2)\cap \varphi^{\cos}_\varepsilon(\pi)\neq \emptyset$.
    Since $0.14$ is less than the node life of either extremum, then by~\thmref{edge-weights}(2), the
    edge weight between the vertices corresponding to $x=\pi/2$
    of $\sin(x)$ and $x=\pi$ of $\cos(x)$ is $0.14$. Applying a similar process to all edges, we get
    the extremal event DAG.
    
\begin{figure}[htp]
    \begin{subfigure}[b]{0.45\textwidth}
         \centering
         \includegraphics[width=\textwidth]{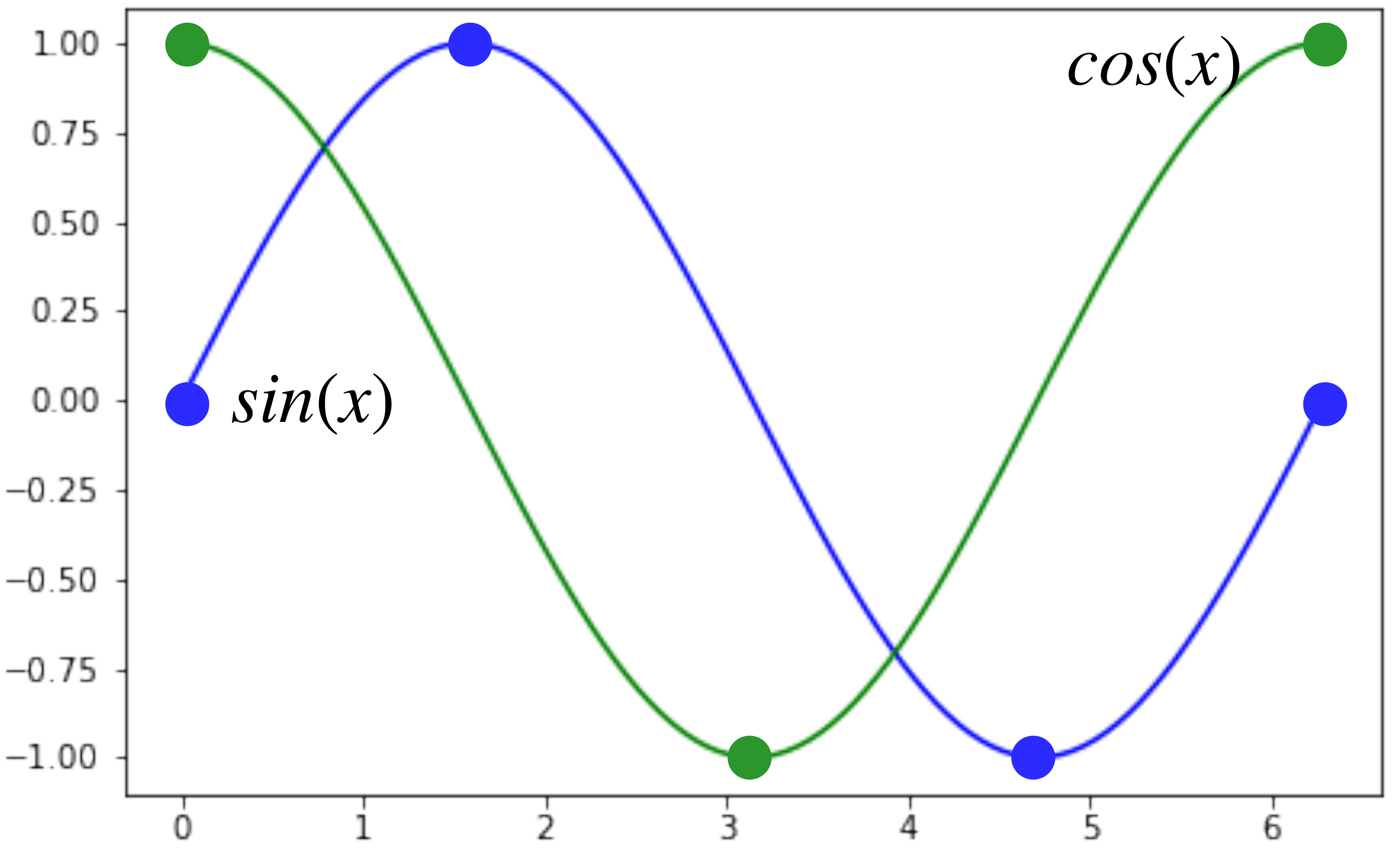}
         \caption{Functions.}
         \label{fig:fxns}
     \end{subfigure}
     \hfill
     \begin{subfigure}[b]{0.4\textwidth}
         \centering
         \includegraphics[width=\textwidth]{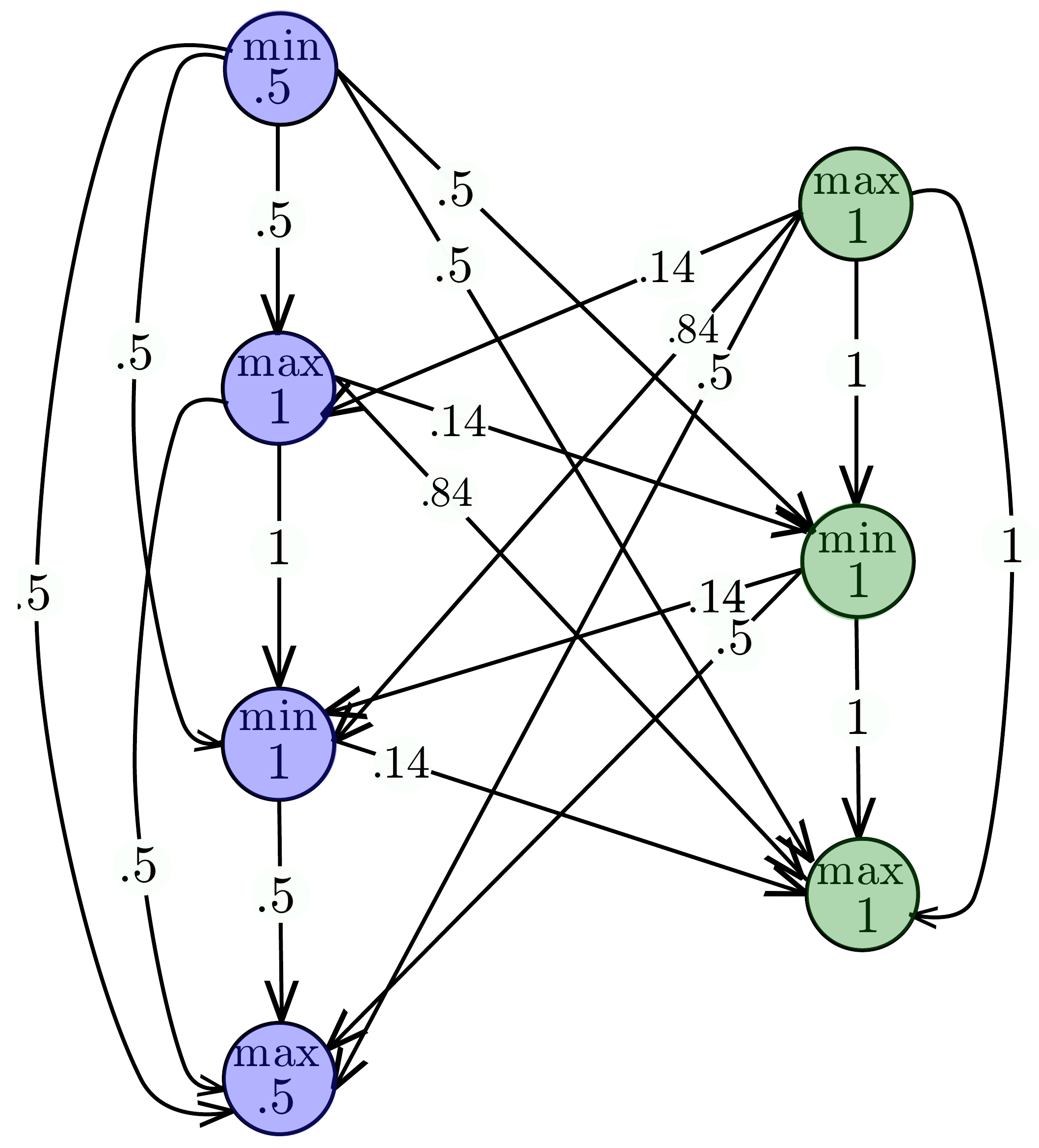}
         \caption{Extremal Event DAG.}
         \label{fig:DAG}
     \end{subfigure} \\
       \caption{Extremal Event DAG for $\sin(x):[0,2\pi]\rightarrow \R$ and $\cos(x):[0,2\pi]\rightarrow \R$.
        The vertices on the left and highlighted in blue represent the local extrema of $\sin(x)$ while the 
        vertices on the right and highlighted in green represent the local extrema of $\cos(x)$. The vertices 
        highlighted in blue from top to bottom correspond to the local extrema of $\sin(x)$ in ascending order 
        by domain coordinate. For example, the top blue vertex with label and weight $(\min, .5)$ corresponds 
        to the local extremum $(0, 0)$, the second blue vertex $(\max, 1)$ corresponds to the local extremum $(\frac{\pi}{2}, 1)$, 
        etc. Similarly the green vertices correspond to the local extrema of cosine in ascending order by domain coordinate. 
        Directed edges indicate the ordering of the domain coordinates of the local extrema. The vertex weights are the 
        node lives of the corresponding local extrema while the edge weights are computed using~\thmref{edge-weights}.}
        \label{fig:extremalDAGsine}
    \end{figure}
\end{example}

\section{Extremal Event DAG Distance}

In this section, we define a distance between extremal event DAGs
representing different collections of time series, or \textit{datasets}. In
particular, we first discuss the alignment of nodes between different
collections of time series and then the alignment of edges. We call the result
an \textit{extremal event supergraph}. The weights on the nodes and edges of the
extremal event supergraph are determined by the weights on the two extremal event DAGs. The
distance between the extremal event DAGs is then computed from the weights on the
extremal event supergraph.

\subsection{Backbone Distance}

From~\figref{extremalDAGsine}, one can see that each time series is translated
into an ordered linear sequence of alternating minima and maxima. These linear
sequences greatly simplify the comparison between datasets assuming that 
there is a one-to-one correspondence between the identifications of each of the time series
in each dataset. For example, consider two
gene expression datasets under different experimental conditions. In this case,
each time series has a unique identity corresponding to the gene that it
represents. Our primary task in this section is to perform a matching operation
between the extrema of two time series with matching identities. To perform the
matching, we use a modified version of the edit distance. Throughout this section, we refer to
\figref{time-series-and-dags} for illustrations of definitions.

\begin{figure}[htp]
     \centering
     \begin{subfigure}[b]{0.4\textwidth}
         \centering
         \includegraphics[width=\textwidth]{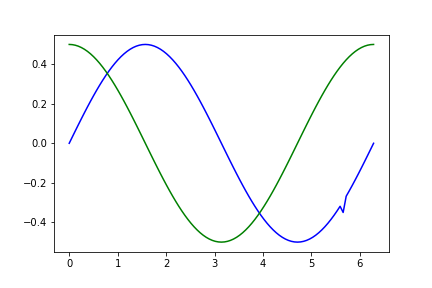}
         \caption{Dataset 1}
         \label{fig:timeseries1}
     \end{subfigure}
     \hfill
     \begin{subfigure}[b]{0.4\textwidth}
         \centering
         \includegraphics[width=\textwidth]{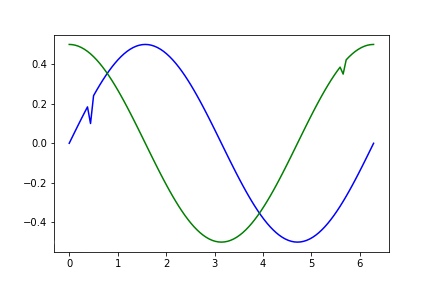}
         \caption{Dataset 2}
         \label{fig:timeseries2}
     \end{subfigure} \\
     \begin{subfigure}[b]{0.4\textwidth}
         \centering
         \includegraphics[width=\textwidth]{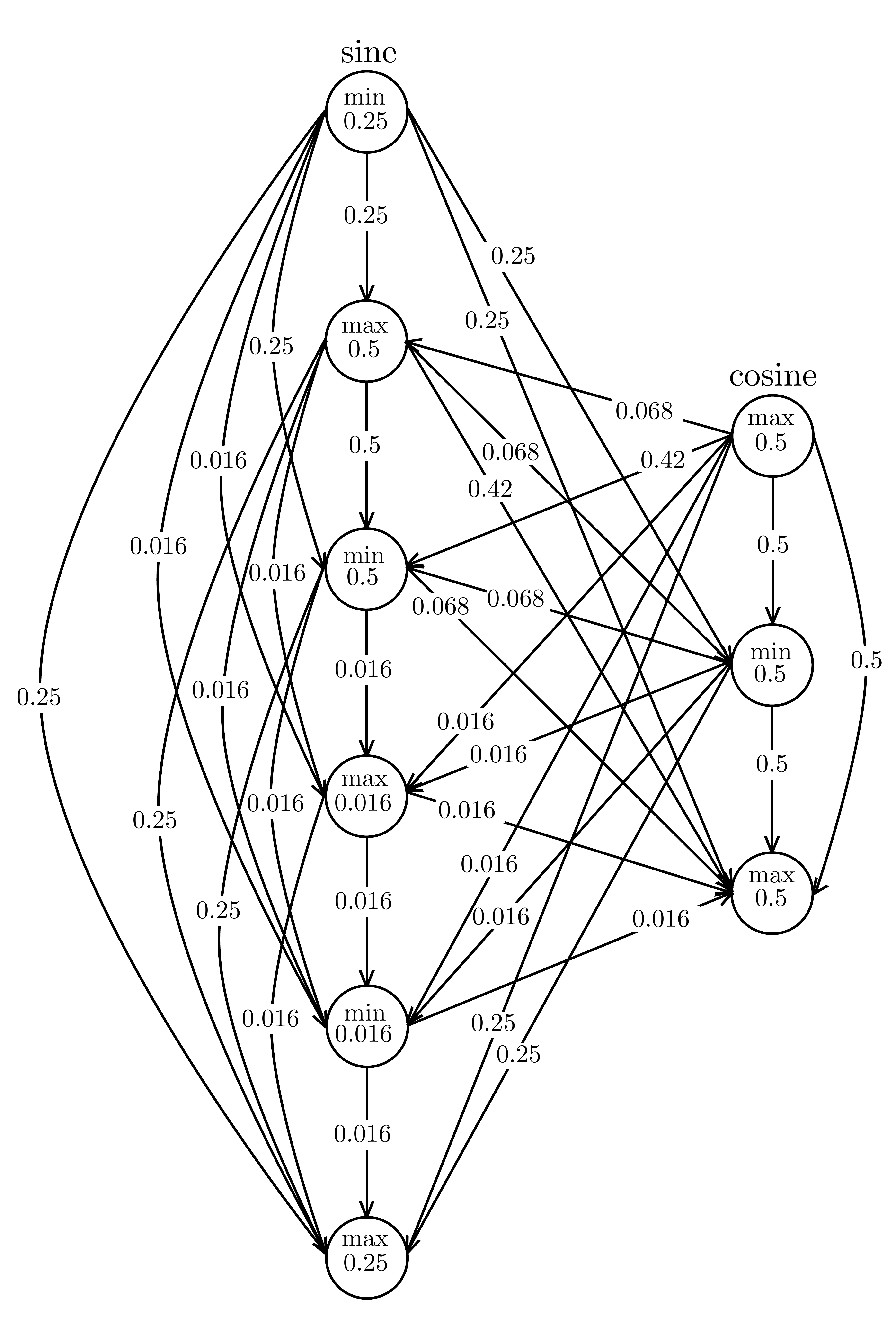}
         \caption{Extremal Event DAG 1}
         \label{fig:DAG1}
     \end{subfigure}
     \hfill
          \begin{subfigure}[b]{0.4\textwidth}
         \centering
         \includegraphics[width=\textwidth]{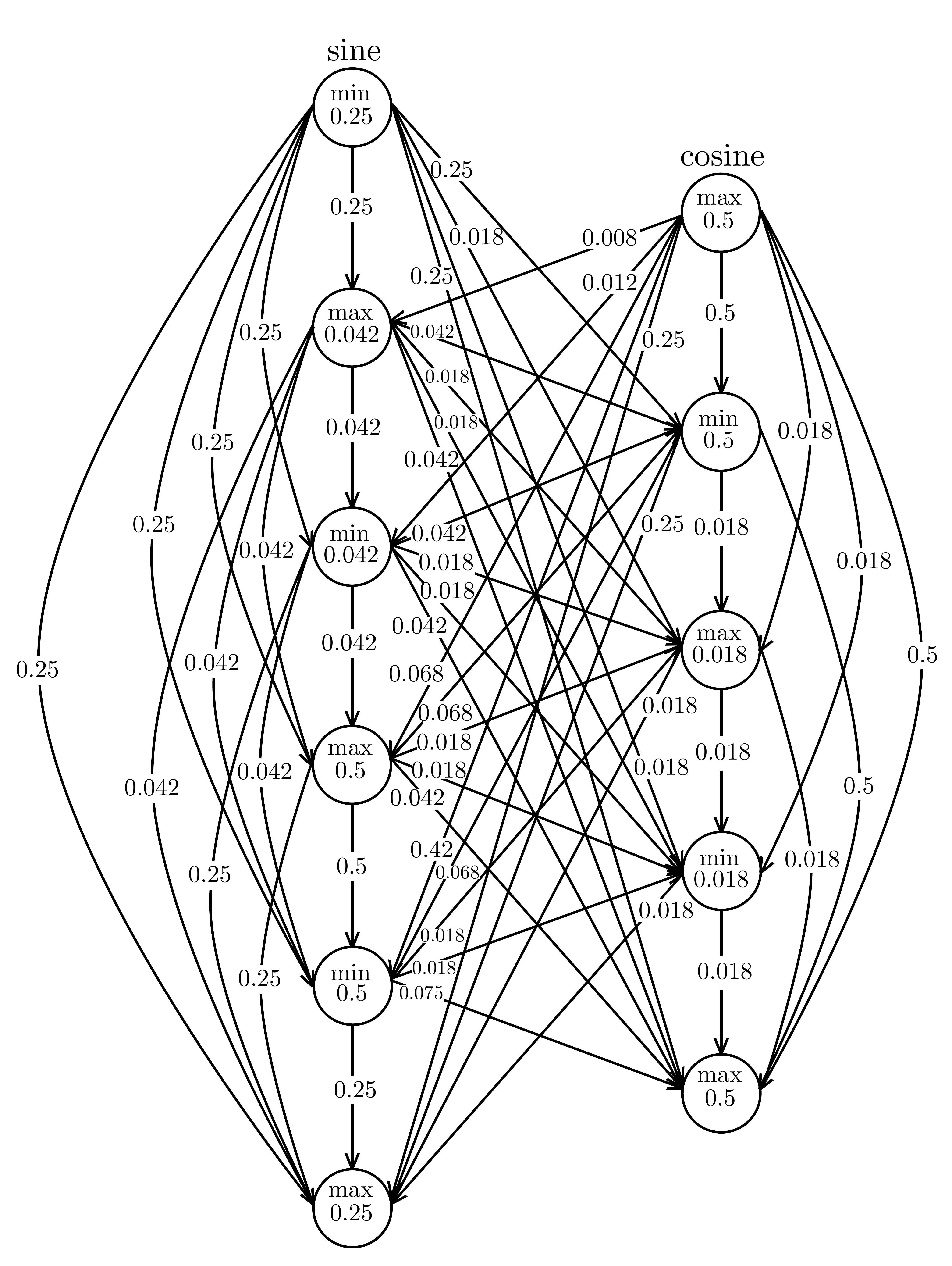}
         \caption{Extremal Event DAG 2}
         \label{fig:DAG2}
     \end{subfigure}

        \caption{Time Series Data and Corresponding Extremal Event DAGs. We consider two 
        datasets consisting of two functions, $\frac{1}{2}\sin(x)$ and
        $\frac{1}{2}\cos(x)$ over $[0, 2\pi]$ with some added noise. In
        \figref{timeseries1} and \figref{timeseries2}, we label the blue curve
        as ``sine" and green curve as ``cosine". \figref{DAG1} is the
        extremal event DAG for Dataset 1 while \figref{DAG2} is
        extremal event DAG for Dataset 2.}
        \label{fig:time-series-and-dags}
\end{figure}

\begin{defn}[Backbones] A \emph{backbone} is a finite sequence $\x = (x_1, x_2,
    \dots , x_n)$, where each $x_i$ is a tuple~$x_i=(s_i, w_i)$ with $s_i$ a string, and $w_i \in
    \R_{\geq 0}$. The empty string is denoted by 0.
    The \emph{length} of $\x$ is denoted $\length(\x)$, and is equal to the
    number of elements in the sequence (here, $\length(\x)=n$).
    We call each $x_i$ a
    \emph{node} and we denote the first $k$ terms of $\x$ by~$\xtok{k}.$
\label{def:backbone}
\end{defn}

\begin{rem}[Constructing Backbones from Nicely Tame Functions]
    In what follows, we construct a backbone from a nicely tame
    function $f \colon C \to \R$ by computing $\DAG(\{
        f\})=(V,E,\omega_V,\omega_E)$ and removing all
    edges and edge weights; the nodes are ordered
    by their corresponding domain coordinates. Then, the data associated to each
    node $v \in V$ is a string representing which type of local maxima (min or
    max) along with the node weight $\omega_V(v)$.  This backbone for
    $f$ is denoted as $B(f)$.
    See~\figref{backbones} for an example. 
\end{rem}

\begin{figure}[htp]
    \centering
    \begin{subfigure}[b]{\textwidth}
        \centering
        \includegraphics[width=\textwidth]{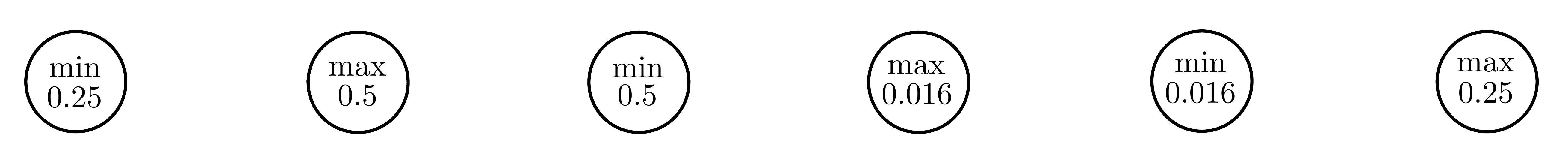}
        \caption{Sine Backbone 1}
        \label{fig:backbone1}
    \end{subfigure} \\
    \begin{subfigure}[b]{\textwidth}
        \centering
        \includegraphics[width=\textwidth]{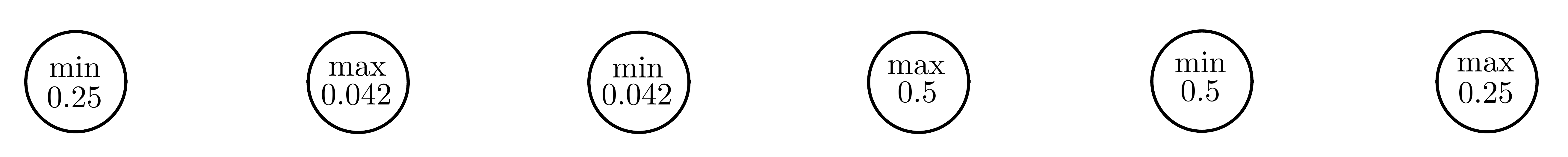}
        \caption{Sine Backbone 2}
        \label{fig:backbone2}
    \end{subfigure}
    \caption{Extracting Sine Backbones from Extremal Event DAG 1 and Extremal Event DAG 2.
        \figref{backbone1} illustrates the backbone where each node corresponds
        to a local extremum of the sine labeled curve from Dataset 1 (see
        \figref{timeseries1}). Mathematically, this backbone is the sequence
        $(\min, 0.25)$, $(\max, 0.5)$, $(\min, 0.5)$, $(\max, 0.016)$, $(\min, 0.016)$,
        $(\max, 0.25)$.  \figref{backbone2} illustrates the backbone where each
        node corresponds to a local extremum of the sine labeled curve from Dataset 2
         (see \figref{timeseries2}). Mathematically, this backbone is
        the sequence $(\min, 0.25)$, $(\max, 0.042)$, $(\min, 0.042)$, $(\max, 0.5)$,
        $(\min, 0.5)$, $(\max, 0.25)$. }
        \label{fig:backbones}
\end{figure}

\begin{rem}[Backbones as Sets]
    We consider functions over backbones to other spaces. For these settings, we
    think of a backbone as an ordered multiset,~$\x = \{x_1, x_2, \dots, x_n\}$,
    (i.e., repeated elements are allowed) equipped  with an injective index function,
    $\iota_\x: \x \rightarrow [n]$ where $\iota_\x(x_i) = i$.
    Let $\zero := (0,0)$. We
    also define $\tilde{\x} = \{\zero\} \cup \x$, where $\zero$ is the
    \emph{empty node.} The function $\iota$ is not extended to $\tilde{\x}$.
\end{rem}

Next we discuss alignments and how to compute a distance between two backbones using an optimal alignment.

\begin{defn}[Alignment]\label{def:alignment}
    Let $\x = (x_1,x_2,\dots, x_m)$ and $\y = (y_1,y_2,\dots, y_n)$ be
    backbones. An \emph{alignment} is a totally ordered correspondence between $\tilde{\x}$ and
    $\tilde{\y}$ that does not repeat elements of~$\x$ or $\y$ and respects the
    labels (or strings) of the backbones. We say that the
    number of pairs in the correspondence is the \emph{length} of the alignment.
    In particular, we represent an alignment of length $k$ between $\x$ and
    $\y$ as a  function $\alpha: [k] \to \tilde{\x} \times \tilde{\y}$,
    where $\alpha(i)$ can be written as two coordinate
    functions~$\alpha(i):= (\alpha_{\x}(i), \alpha_{\y}(i))$, such that
    \begin{enumerate}
        \item \textbf{No Null Alignments.} The pair $(\zero,\zero)$ is not in the image of
            $\alpha$, which we denote by $\im(\alpha)$. \label{property:nullalignments}
        \item \textbf{Preserves Order of Backbones.} The coordinate functions $\alpha_\x: [k] \to \tilde{\x}$, $\alpha_\y:
            [k]\rightarrow \tilde{\y}$ are \emph{partially monotone}. The
            function $\alpha_\x$ is partially monotone if for every
            $i,j \in [k]$ such that~$\alpha_\x(i)\neq \zero$ and $\alpha_\x(j) \neq
            \zero$, we have
            $$
                \iota_\x(\alpha_\x(i)) < \iota_\x(\alpha_\x(j))
                \text{ if and only if } i < j.
            $$
            An analogous definition applies to $\alpha_\y$. \label{property:preservesbackbones}
        \item \textbf{No Misalignments.}
            For each $\left((s_x,w_x),(s_y,w_y)\right) \in \im(\alpha)$,
            we either have equality in strings \mbox{$s_x = s_y$}, or one of $(s_x, w_x)$, $(s_y, w_y)$ is equal to $\zero$.  
            \label{property:nomisalignments}

        \item \textbf{Restriction to Matching.} Each element of $\x$ and $\y$ appears in the image of $\alpha_{\x}$ and $\alpha_{\y}$ exactly once. 
        That is, for each $x_i \in \x$, there exists exactly one $j \in [k]$ for which $\alpha_{\x}(j)=x_i$. The analogous statement holds 
        for each $y_i \in \y$. \label{property:restrictiontomatching}
    \end{enumerate}
    If $\alpha(i)=(\alpha_{\x}(i), \zero)$, we say that $\alpha_{\x}(i)$ is aligned
    with an \emph{insertion}; similarly for $\alpha(i)=(\zero,\alpha_{\y}(i))$. We
    denote the restriction of $\alpha$ to the first $h$ integers, $[h] =\{1, 2,
    \dots, h\} \subset [k]$ as $\alphtok{h}$.
\end{defn}

\begin{notn}[Elements in $\im(\alpha)$]\label{not:notation-image-alpha}
 When we use the notation $(x,y) \in \im(\alpha)$, we always assume that $x\not = \zero$ 
 and $y \not = \zero$. We also use notation $(x,\zero) \in \im(\alpha)$ and 
 $(\zero,y) \in \im(\alpha)$ to denote that $x$ or $y$ is aligned with an insertion.
\end{notn}

Note that the restriction of $\im(\alpha) \cap (\x \times \y)$ is a partial
matching (that is, each element in $\x \times \y$, if not aligned with an
insertion, is aligned with a distinct element of the other backbone). We call
any pair~$(x,y) \in \im(\alpha) \cap (\x \times \y)$ a \emph{nontrivial match}.
An example of two different alignments of the sine backbones shown
in~\figref{backbones} is given in~\figref{alignments}. \figref{alignment1} is
aligned without insertions, while~\figref{alignment2} has two insertions in each
of the backbones. Notice that the insertions occur at the small noisy extrema in
each of the time series.

\begin{figure}[htp]
    \centering
    \begin{subfigure}[b]{\textwidth}
        \centering
        \includegraphics[width=\textwidth]{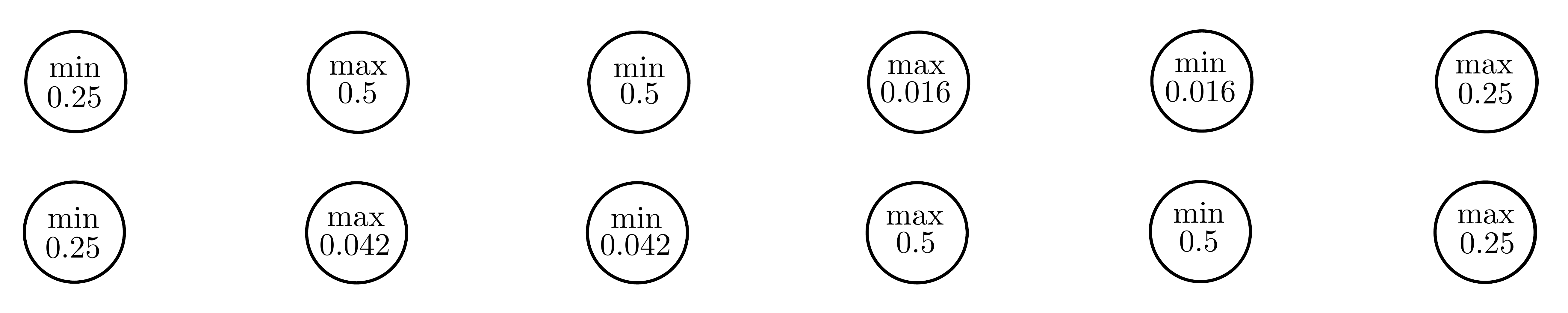}
        \caption{Alignment 1}
        \label{fig:alignment1}
    \end{subfigure} \\
    \begin{subfigure}[b]{\textwidth}
        \centering
        \includegraphics[width=\textwidth]{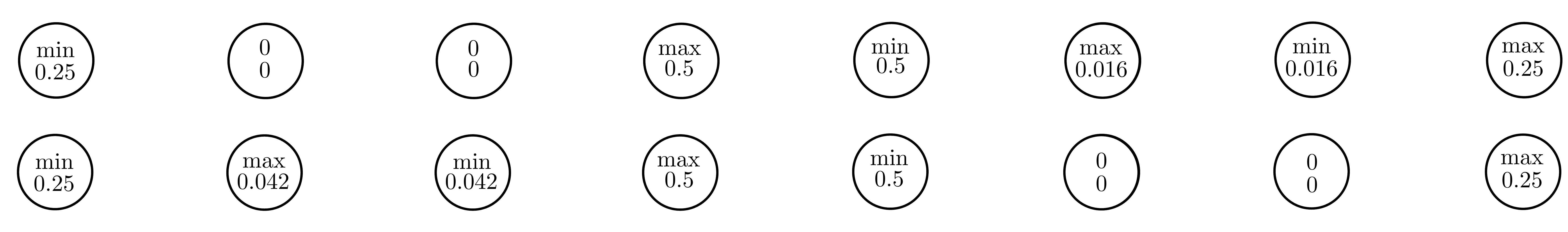}
        \caption{Alignment 2}
        \label{fig:alignment2}
    \end{subfigure}
    \caption{Two Possible Alignments of Sine Backbones. We consider the
        backbones shown in \figref{backbones}. Call these $\x$ and $\y$
        respectively. The top row consists of nodes from $\x$ while the bottom row consists of nodes from $\y$.
        \figref{alignment1} gives an alignment, $\alpha_1: \{1, 2,
        \dots, 6\} \rightarrow \tilde{\x} \times \tilde{\y}$ of the two
        backbones where $\alpha_1(i) = (x_i, y_i)$. \figref{alignment2} gives an
        alignment $\alpha_2: \{1, 2, \dots, 8\}\rightarrow \tilde{\x} \times
        \tilde{\y}$ where $\alpha_2(1) = (x_1, y_1),$ $\alpha_2(2) = (\zero, y_2),$
        $\alpha_2(3) = (\zero, y_3),$ $\alpha_2(4) = (x_2, y_4),$
        $\alpha_2(5)=(x_3, y_5),$  $\alpha_2(6) = (x_4, \zero),$ $\alpha_2(7) =
        (x_5, \zero),$ and $\alpha_2(8) = (x_6, y_6)$.} \label{fig:alignments}
\end{figure}

\begin{defn}[Cost of Alignment]\label{def:cost}
    Let $\x$ and $\y$ be backbones and $\alpha:[k] \rightarrow \tilde{\x} \times \tilde{\y}$ be an
    alignment of length $k$. The \textit{cost} of $\alpha$ is defined as
    $$
        \cost(\alpha) := \sum_{(x,y), (x, \zero), (\zero, y) \in \im(\alpha)}
        |w_x-w_y|,
    $$
    where $x=(s_x,w_x)$ and $y=(s_y,w_y)$.
    We define the cost of the partial alignment $c_{\x, \y}(i,j)$ to be the minimum cost of aligning $\xtok{i}$
    with $\ytok{j}$, that is,
    $$
        c_{\x, \y}(i, j)
        :=
        \min\{ \cost(\alpha) \mid \alpha \text{ is an alignment of } \xtok{i} \text{ and } \ytok{j} \}.
    $$
\end{defn}

Referring again to~\figref{alignments}, we compute that the alignment
in~\figref{alignment2} has a lower cost, 0.116, than that
in~\figref{alignment1}, 0.932.

\begin{defn}[Optimal Alignment]
    Let $\x=(x_1, x_2, \dots, x_m)$ and $\y=(y_1, y_2, \dots, y_n)$ be
    backbones. We call an alignment $\alpha:[k] \rightarrow \tilde{\x} \times
    \tilde{\y}$ \emph{optimal} if $\cost(\alpha)=c_{\x, \y}(m,n)$.
    \label{def:optimal-cost}
\end{defn}

An optimal alignment minimizes cost. We note that there could be multiple
alignments that minimize cost and so an optimal alignment is not necessarily
unique. 

We define the distance between two backbones~$\x$ and~$\y$ using an
optimal alignment. To do this, we need to identify nontrivial matches, i.e.,
those alignment pairs that do not involve insertions.

\begin{defn}[Backbone Distance]\label{def:backbone-dist}
    The \emph{backbone distance} between backbones $\x$ and $\y$ is defined
    as
    \begin{equation}\label{backbone-dist}
    d_{\B}(\x, \y) = \inf_{\alpha}
        \left (\sum_{(x,y) \in \im(\alpha)} |w_x - w_y|
        + \sum_{(x,\zero) \in \im(\alpha)} w_x
        + \sum_{(\zero,y) \in \im(\alpha)} w_y \right )
    \end{equation}
    where $\alpha$ ranges over all alignments between $\x$ and $\y$.
\end{defn}

The backbone distance finds the best alignment between $\x$ and $\y$, then
defines  the distance to be the $L_1$ norm between a vector consisting of the
node weights in $\tilde{\x}$ and a vector consisting of the matching node weights
in $\tilde{\y}$. The first term of Equation \ref{backbone-dist} accounts for the
cost of the nodes in $\x$ that are aligned with nodes in $\y$, the second term
accounts for the cost of the nodes in $\x$ that are aligned with an insertion,
and the third term accounts for the cost of the nodes in $\y$ that are aligned
with an insertion. We show that this distance is in fact a metric in \appref{backbone}. 


\subsection{Extremal Event DAG Distance}
Using the backbone distance, we define a distance between two extremal event DAGs, $D$
and $D'$, constructed from comparable datasets. In other words, the number and
identity of the time series are the same between the two datasets so that the
choice of which backbones to align is clear. Once the alignments have been
computed, we construct a supergraph based on $D$ and $D'$ where there is a vertex
for each node pair from each alignment. We add a directed edge between two
vertices if the edge exists between the two vertices in either $D$ or $D'$. After
we construct the supergraph, we impose two weight functions on the vertices and
nodes given by the weights of the nodes and edges in $D$ and $D'$ respectively.
The difference between these weight vectors is the extremal event DAG distance between
$D$ and $D'$.

\begin{defn}[Extremal Event Supergraph]\label{def:supergraph}\label{def:supergraph-weights}
    Let $D = (V, E, \omega_{V}, \omega_{E})$ and $D' = (V', E',
    \omega'_{V}, \omega'_{E})$ be two extremal event DAGs with $n$ pairs of aligned
    backbones. Let $\x_1, \x_2, \dots, \x_n$ be the backbones of $D$, and~$\y_1,
    \y_2, \dots, \y_n$ be the backbones of $D'$ and, for each $i \in [n]$, let
    $\alpha^{(i)}: [k_i]
    \to \tilde{\x}_i \times \tilde{\y}_i$ be the corresponding alignments where $k_i = \length(\alpha^i)$.
    Just as we expanded $\alpha$ to two coordinate functions in
    \defref{alignment},
    $\alpha^{(i)}$ can be expanded into two coordinate
    functions~$\alpha^{(i)}:= (\alpha_{\x_i}, \alpha_{\y_i})$. The
    \emph{extremal event supergraph determined by the alignments $\{\alpha^{(i)}\}_{i=1}^n$} of $D$ and
    $D'$ is a doubly weighted directed graph~$(V_\alpha, E_\alpha,
    \omega_{\alpha}, \omega'_{\alpha})$, where
    \begin{itemize}
        \item $V_\alpha := \{ v(i,j) \mid i \in [n], j \in [k_i] \} $. That is, the vertices of $V_{\alpha}$ are in one-to-one correspondence with each element of every alignment. Note $V \cup V' \subset V_{\alpha}$.
        \item An ordered pair of vertices $(v(i,j), v(k,l)) \in V_{\alpha} \times V_{\alpha}$ is
            a directed edge in $E_\alpha$ if and only if either one or both of the following is true
            \begin{list}{$\circ$}{}
                \item $ (\alpha_{\x_i}(j), \alpha_{\x_k}(l)) \in E$
                \item $(\alpha_{\y_i}(j),  \alpha_{\y_k}(l)) \in E'$.
            \end{list}
            Note $E \cup E' \subset E_\alpha$. 
        \item   The weight function~$\omega_\alpha: V_\alpha \cup E_\alpha \rightarrow \R_{\geq 0}$ is
            defined by
            \begin{equation*}
                \omega_\alpha(x) =
                    \begin{cases}
                        \omega_{V}(v(i,j)), & x = v(i,j) \in V \subset V_{\alpha}\\
                        \omega_{E}(v(i,j), v(k,l)),  & x=(v(i,j),v(k,l)) \in
                            E \subset E_{\alpha} \\
                        0 & \text{otherwise}.
                    \end{cases}
            \end{equation*}
        \item  The weight function~$\omega'_\alpha: V_\alpha \cup E_\alpha \rightarrow \R_{\geq 0}$
            is defined by
            \begin{equation*}
                \omega'_\alpha(x) =
                    \begin{cases}
                        \omega'_{V}(v((i,j)), &  x=v(i,j) \in V' \subset V_{\alpha}\\
                        \omega'_{E}(v(i,j), v(k,l)),  & x=(v(i,j),v(k,l)) \in
                            E' \subset E_{\alpha} \\
                        0 & \text{otherwise}.
                    \end{cases}
            \end{equation*}
    \end{itemize}
\end{defn}

We give an example of an extremal event supergraph and its weights in \figref{superDAG}. 

\begin{figure}[htp]
    \includegraphics[width=.8\textwidth]{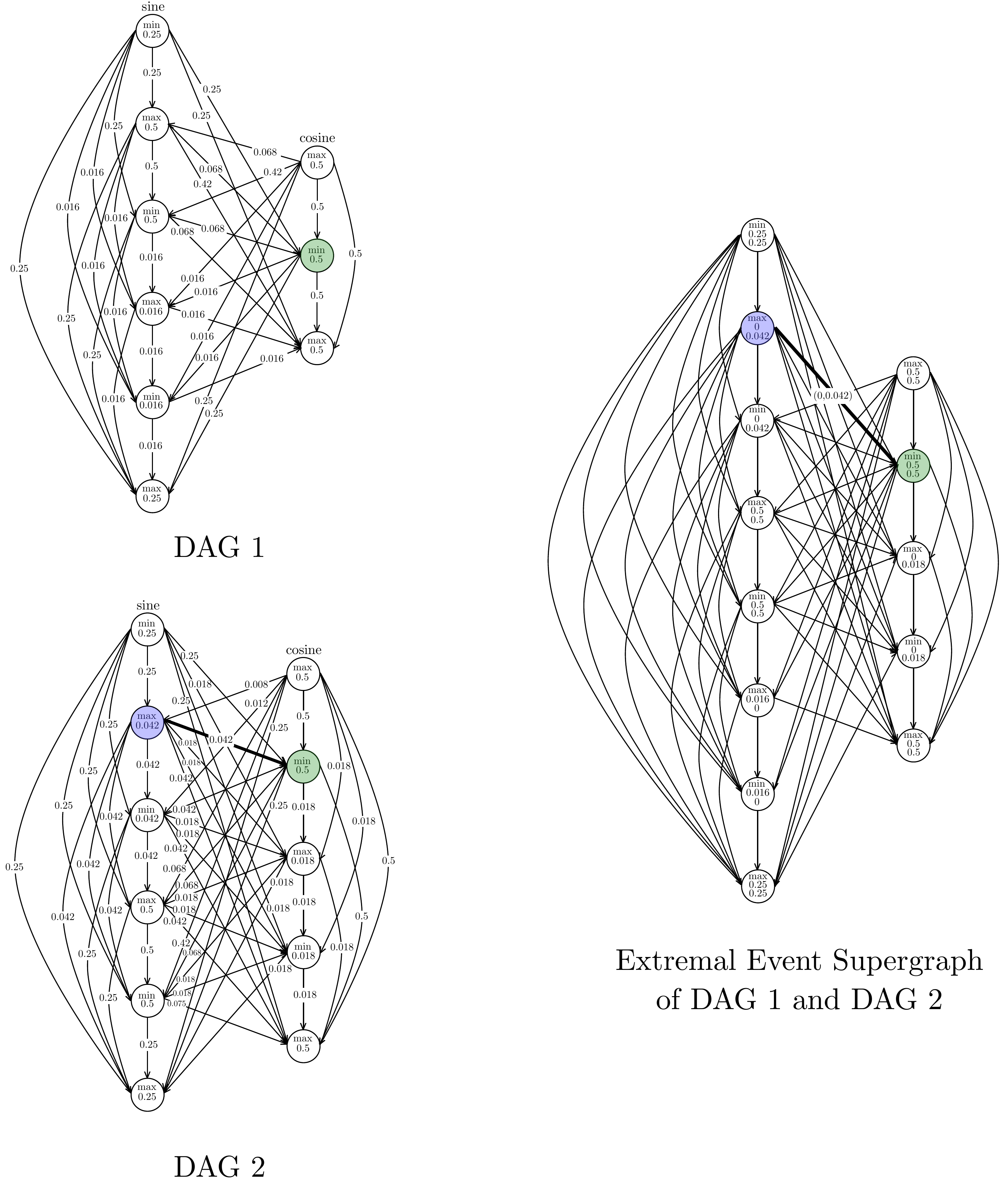}
    \caption{Extremal event supergraph of (extremal event) DAG 1 and DAG 2 from
    \figref{time-series-and-dags}. The nodes on the left represent to the
    optimal alignment between the sine backbones in DAG 1 and DAG 2. The nodes on
    the right represent to the optimal alignment between the cosine backbones
    in DAG 1 and DAG 2. The node weights are listed on the node where the upper
    node weight comes from the weight function for DAG 1 and the lower node
    weight comes from the weight function for DAG 2. The blue node in the extremal event supergraph 
    comes from aligning the blue node in DAG 2 with an insertion. The green node in the extremal event 
    supergraph comes from aligning the green nodes in DAG 1 and DAG 2. For readability, we present
    only one edge weight pair, associated to the bold edge. The edge weight on
    the left is equal to zero since the blue node in DAG 2 is aligned with an insertion. The edge weight on the 
    right is equal to 0.042 which is the edge weight between the blue and green node in DAG 2. }
    \label{fig:superDAG}
\end{figure}

We define the extremal event DAG distance to be the sum of absolute differences in
node and edge weights from the extremal event supergraph determined by the best
alignment we can easily compute.

\begin{defn}[Extremal Event DAG Distance]
    $D$ and $D'$ be two extremal event DAGs where $\x_1, \x_2, \dots \x_n$ are
    the backbones of $D$ and $\y_1, \y_2, \dots, \y_n$ are the backbones of
    $D'$. The \emph{extremal event DAG distance} is defined as:
    \[
        d_{ED}(D, D')
        =  \sum_{i=1}^n d_{\mathcal{B}}(\x_i, \y_i)
            + \inf_{\{\alpha_i\}_{i=1}^n}\sum_{(u,v)\in E_\alpha} |\omega^\alpha_{D}(u,v) - \omega^\alpha_{D'}(u,v)|.
    \]
    where $\{\alpha_i\}_{i=1}^n$ ranges over all sets of optimal alignments
    between the backbones.
    \label{def:EDD}
\end{defn}

We define extremal event DAG distance using optimal alignments between backbones
because of its computability. As we show in \secref{algorithms}, we use modified
edit distance alignment algorithms to efficiently compute backbone alignments.

An open conjecture is that the sum of differences of edge weights is minimized
only under an optimal alignment; that is,
\[
    d_{ED}(D, D')
    = \inf_{\{\alpha\}_{i=1}^n} \left(\sum_{u \in V_\alpha} |\omega^{\alpha}_{D}(u) - \omega^{\alpha}_{D'}(v)|
        + \sum_{(u,v)\in E_{\alpha}} |\omega^{\alpha}_{D}(u,v) - \omega^{\alpha}_{D'}(u,v)|\right)
\]
where $\{\alpha_i\}_{i=1}^n$ ranges over all sets of alignments between the backbones.

If this conjecture is true, then we can prove the triangle inequality for the
extremal event DAG distance using the same composition of alignments that we used for
showing the triangle inequality holds for the backbone metric. If the conjecture
is not true, then it is possible that the triangle inequality does not hold for
the extremal event DAG distance. For the biological applications we have in mind, the
key property that we desire is from a distance is \textit{stability}, which is the property that small changes in two
datasets does not cause a large jump in the distance between the associated
extremal event DAGs. We show this property holds in
\secref{stability} when the functions are ``close" to one another.

\section{Stability of Extremal Event DAGs}
\label{sec:stability}

In this section, we prove a Lipschitz stability result:
that small changes in functions that are sufficiently close result in small
distances between the corresponding extremal event DAGs.  Our results are
similar in flavor to stability for persistence
diagrams \cite{SteinerStability07}.

\subsection{Stability in Backbone Distance}

We begin by proving stability results for the backbone distance. The main result
of this section is \corref{backbone-stability}, which states the backbone
distance between backbones of two nicely tame real valued functions from a
closed interval is bounded by a constant times the $L_{\infty}$ distance
between the two functions. That is,
\[ d_{\mathcal{B}}(B(f), B(f')) \leq K \norm{f-f'}_{\infty}. \]

To get there, we show that the maximum difference in node weights arising from an optimal
alignment is bounded by the $L_\infty$ distance of the two corresponding
functions (\thmref{backbone-infinity-stability}). This leads us to comparing the
backbone distance to the following distance that looks at the maximum distance
between aligned node weights that arises from an optimal backbone alignment.

\begin{defn}[Backbone Infinity Distance]
    Let $\x$, $\y$ be backbones. We define the \emph{backbone infinity distance}
    between $\x$ and $\y$ as
    $$
        d_{\mathcal{B}_{\infty}}(\x, \y)
            = \inf_\alpha \max{|\omega_\x(\alpha_\x(i)) - \omega_\y(\alpha_\y(i))|}
    $$
    where $\alpha$ ranges over all alignments of $\x$ and $\y$.
    \label{def:backbone-infty-dist}
\end{defn}

In \appref{backboneinfty}, we prove the backbone infinity distance is a
metric. The proof is a simplified version of the proof that the backbone
distance is a metric (\propref{backbone-dist-is-dist}).

\subsubsection{Relationship Between Local Extrema, Points in Persistence Diagrams, and Backbone Nodes}

We prove backbone distance stability by moving between concepts of local extrema
of functions, points in persistence diagrams, and backbone nodes. We describe
the relationship between these three concepts next.

Let $f: C \rightarrow \R$ be a nicely tame function and $(t, f(t))$ be a local
minimum of $f$ that does not represent the essential component in $D(f)$. Recall
from \secref{node-lives} that at a height of $f(t)$ in the sublevel set
filtration, a new connected component is born. The death of this connected
component happens at the height of a local maximum denoted as $\zeta_f(t)$
(recall \defref{birth-death-map}). This implies existence of a point $(f(t), \zeta_f(t)) \in D(f)$ in the persistence diagram..
We then compute the node life, $\frac{1}{2}\pers_f(t) =
\frac{1}{2}(\zeta_f(t)-f(t))$. This shows that the node $(\text{min}, \frac{1}{2}\pers_f(t)) \in
B(f)$ is a backbone node. In summary, a local minimum $(t, f(t))$ corresponds to a point in the
persistence diagram $(f(t), \zeta_f(t)) \in D(f)$, and a vertex in a
backbone $(\text{min}, \frac{1}{2}\pers_f(t)) \in B(f)$.

Now suppose $t$ represents the essential component in $D(f)$, which means that
$t$ is a global minimum of $f$. Then, $(t, f(t))$ corresponds to $(f(t), \infty)
\in D(f)$ and $(\text{min}, \frac{1}{2}\pers_f(t)) \in B(f)$ where
$\frac{1}{2}\pers_f(t) = \frac{1}{2}(\max(f) - f(t))$.

There is the same type of correspondence for local maxima of $f$ by applying the
same process to $-f$. All the local maxima of $f$ become local minima of $-f$. To 
make the correspondence between nodes in backbones, extrema, and points in
persistence diagrams more precise, we define the following.

\begin{defn}[Truncated Persistence Points]
    Let $f:C \rightarrow \R$ be a nicely tame function. Let $(t, f(t))$ be a
    local minimum of $f$. The point $(f(t), \zeta_f(t))$ is the \emph{truncated
    persistence point} of $(t, f(t))$.
    \label{def:extrema}
\end{defn}

We often refer to the truncated persistence points as persistence points. 
We note that if $(t, f(t))$ is a local maximum of $f$, then we declare the point
$(-f(t), \zeta_{-f}(t))$ as the persistence point of $(t, f(t))$.
Because of the correspondence between extrema, persistence points, and
backbone nodes, we discuss pairings of extrema or persistence points to
get aligned pairs in backbone alignments.

\begin{figure}[]
    \centering
    \begin{subfigure}[b]{0.43\textwidth}
        \centering
        \includegraphics[width=\textwidth]{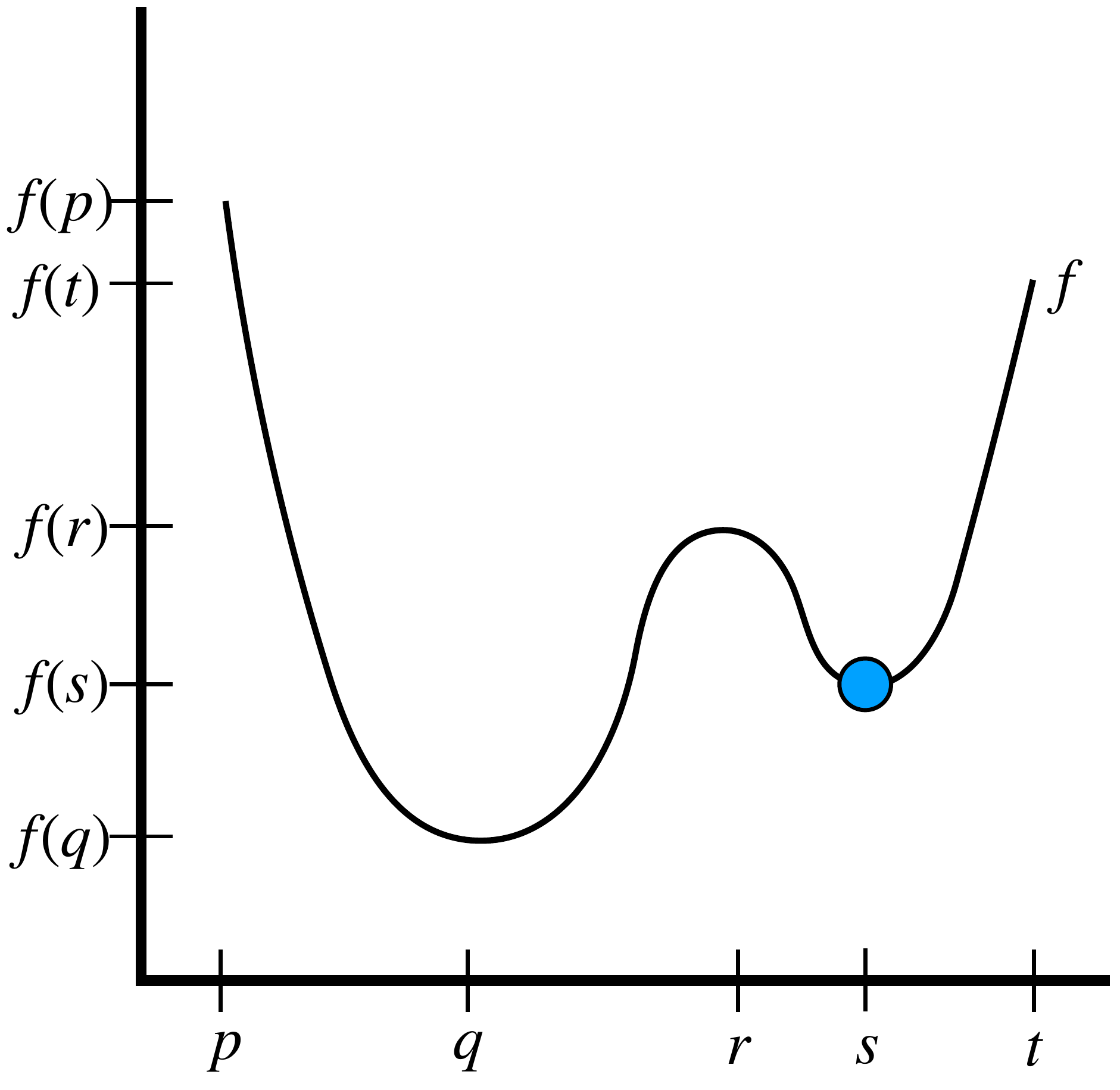}
        \caption{$f$}
        \label{fig:relationships-function}
    \end{subfigure}
    \hfil
    \begin{subfigure}[b]{0.43\textwidth}
        \centering
        \includegraphics[width=\textwidth]{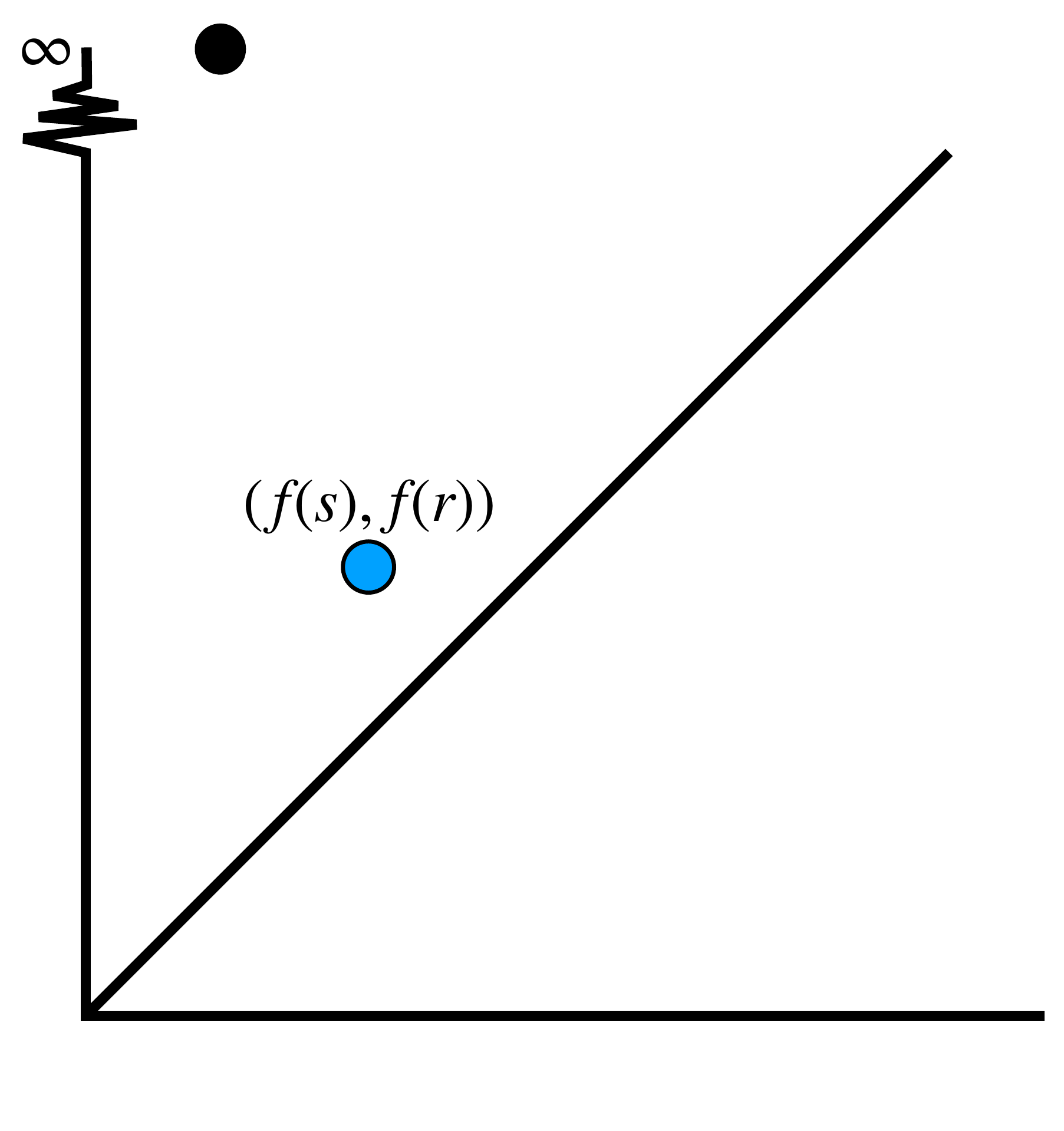}
        \caption{$D(f)$}
        \label{fig:relationships-persdgm}
    \end{subfigure}
    \hfil
    \begin{subfigure}[b]{0.1\textwidth}
        \centering
        \includegraphics[height=2.5in]{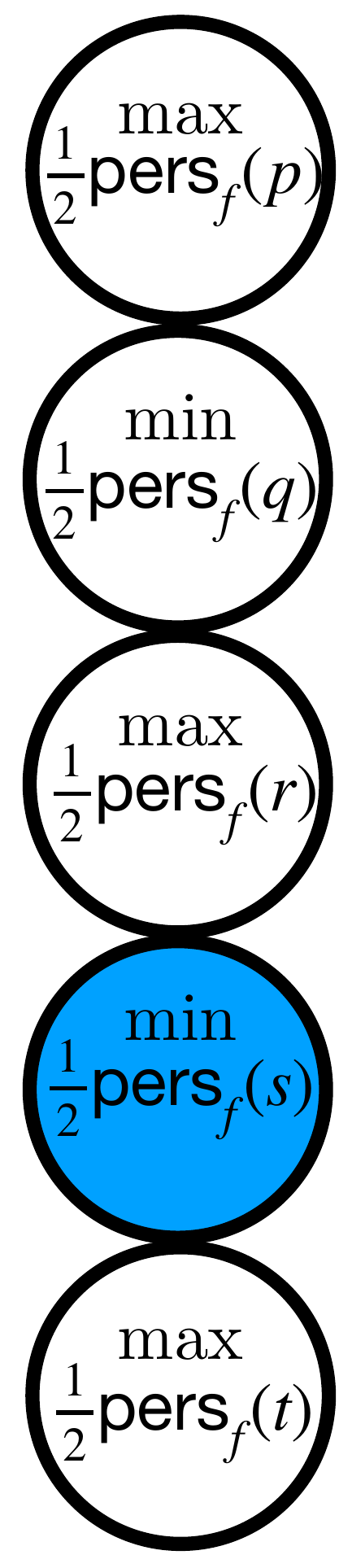}
        \caption{$B(f)$}
        \label{fig:relationships-backbone}
    \end{subfigure}
    \hfil
    \caption{Moving Between Local Minima, Persistence Diagrams, and Backbone
    Nodes. The local minimum, $(s, f(s))$ in light blue corresponds to the point
    $(f(s), f(r)) \in D(f)$ and $(\text{min}, \frac{1}{2}\pers_f(s)) \in B(f)$.}
    \label{fig:moving-between}
\end{figure}

\begin{rem}[Persistence Diagram Containing Diagonal]
Persistence diagrams are often defined as in \defref{pd}
along with a union of all points on the diagonal $\Delta = \{(x,x) \in
\R^2\}$ counted with infinite multiplicity. The addition of the diagonal is useful for
defining distances between persistence diagrams. For the remainder of this section, 
we assume that persistence diagrams contain all points on the diagonal counted with 
infinite multiplicity. This representation is useful for constructing alignments between backbones.
\label{rem:diagonal}
\end{rem}

\subsubsection{$d_{\mathcal{B}_{\infty}}$ is stable}
A key result that we use is the \emph{Box Lemma}, that is proved in \cite{SteinerStability07} to prove stability for persistence diagrams.

\begin{lem}[Box Lemma (\cite{SteinerStability07})]
    Let $X$ be a topological space,  $f,g: X\rightarrow \R$ be tame functions
    and let $\varepsilon = \norm{f-g}_{\infty}$. For $a<b<c<d$, let $R = [a,b]
    \times [c,d]$ be a box in the extended plane, $\overline{\R}^2$, and
    \mbox{$R_{\varepsilon} = [a+\varepsilon, b-\varepsilon] \times [c+\varepsilon,
    d-\varepsilon]$} be the box obtained by shrinking $R$ by $\varepsilon$ on all
    sides. Then,
    $$ \size{D(f)\cap R_{\varepsilon}} \leq \size{D(g) \cap R}.$$
\end{lem}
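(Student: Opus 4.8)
The plan is to deduce the Box Lemma from the interleaving of sublevel sets forced by $\norm{f-g}_\infty=\varepsilon$ together with an injective ``induced matching'' between the two persistence diagrams. For every $r$ the hypothesis gives inclusions $f^{-1}(-\infty,r]\subseteq g^{-1}(-\infty,r+\varepsilon]\subseteq f^{-1}(-\infty,r+2\varepsilon]$ and the symmetric chain, compatible with the sublevel-set inclusions; applying $H_0$ yields an $\varepsilon$-interleaving of the degree-$0$ sublevel-set persistence modules $V$ of $f$ and $W$ of $g$. Since $f$ and $g$ are tame, $V$ and $W$ are pointwise finite-dimensional, hence decompose into interval modules, and the intervals are in bijection with the points of $D(f)$ (respectively $D(g)$): a point $(x,y)$ corresponds to the half-open interval $[x,y)$, with $y$ possibly $\infty$.

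First I would invoke the standard consequence of an $\varepsilon$-interleaving of interval-decomposable modules: it induces an injective partial matching between the two barcodes in which matched intervals have births within $\varepsilon$ and deaths within $\varepsilon$ of each other, while every \emph{unmatched} interval has length at most $2\varepsilon$; in particular the matching is defined on every interval of length exceeding $2\varepsilon$. Then I would use the geometry of the shrunk box. If $R_\varepsilon$ is degenerate ($b-a<2\varepsilon$ or $d-c<2\varepsilon$) the claim is trivial, so assume otherwise. For $(x,y)\in D(f)\cap R_\varepsilon$ we have $x\leq b-\varepsilon$ and $y\geq c+\varepsilon$, so the corresponding interval of $V$ has length $y-x\geq(c-b)+2\varepsilon>2\varepsilon$ and is therefore matched to an interval of $W$ giving some $(x',y')\in D(g)$ with $|x-x'|\leq\varepsilon$ and $|y-y'|\leq\varepsilon$; hence $x'\in[x-\varepsilon,x+\varepsilon]\subseteq[a,b]$ and $y'\in[y-\varepsilon,y+\varepsilon]\subseteq[c,d]$, i.e.\ $(x',y')\in D(g)\cap R$. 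Injectivity of the matching, respecting multiplicities, then gives $\size{D(f)\cap R_\varepsilon}\leq\size{D(g)\cap R}$.

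The main obstacle is bookkeeping rather than any real difficulty: one must pin down the open/closed conventions on intervals so that the bijection between points of $D(f)$ and summands of $V$ is exact (points of infinite persistence never lie in the finite box $R_\varepsilon$ and can be set aside), and one must quote the induced-matching statement in a form that yields both the two-sided $\varepsilon$ bound on endpoints and the ``short unmatched interval'' clause. A self-contained alternative, and the route taken in~\cite{SteinerStability07}, avoids interval decompositions entirely: express $\size{D(f)\cap R_\varepsilon}$ and $\size{D(g)\cap R}$ as alternating sums of persistent Betti numbers at the corners of the two boxes (the box-counting / $k$-triangle formula) and compare the sums using the interleaving inequalities $\beta_g^{s,t}\leq\beta_f^{s+\varepsilon,t-\varepsilon}$ (valid for $t-s\geq2\varepsilon$) and $\beta_g^{s-\varepsilon,t+\varepsilon}\leq\beta_f^{s,t}$ and the monotonicity of $(s,t)\mapsto\beta^{s,t}$. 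There the subtlety is that a box count is an \emph{alternating} sum, so the four Betti numbers cannot be bounded in isolation; one must run the comparison on the two grouped differences (``births in the vertical strip with death above a threshold''). I would present the induced-matching argument as the main proof and defer to~\cite{SteinerStability07} for the Betti-number version.
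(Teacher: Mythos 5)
The paper does not actually prove the Box Lemma; it imports it verbatim from Cohen-Steiner, Edelsbrunner, and Harer \cite{SteinerStability07}, so there is no in-paper argument to compare against. Your proof is correct, and your primary route is genuinely different from the original one. Passing from $\norm{f-g}_\infty=\varepsilon$ to an $\varepsilon$-interleaving of the degree-zero sublevel-set modules, invoking the induced-matching (algebraic stability) theorem to obtain an injective partial matching that moves births and deaths by at most $\varepsilon$ and leaves unmatched only intervals of length at most $2\varepsilon$, and then observing that any point of $D(f)\cap R_\varepsilon$ has persistence at least $(c-b)+2\varepsilon>2\varepsilon$ and is therefore matched into $D(g)\cap R$, is sound; injectivity with multiplicities gives the count. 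This buys a clean, summand-level argument at the cost of importing a heavier theorem (plus the decomposition and endpoint-convention bookkeeping you flag), whereas the original proof you correctly sketch as the alternative --- alternating sums of persistent Betti numbers at the corners of the two boxes, compared via the interleaving inequalities on grouped differences --- is self-contained and elementary but requires the care with alternating sums that you describe. One small correction: since $R$ lives in $\overline{\R}^2$, the case $d=\infty$ is permitted, and then $R_\varepsilon$ \emph{does} contain points of infinite persistence, so they cannot simply be set aside; they cause no harm, however, because infinite intervals trivially exceed length $2\varepsilon$, are matched to infinite intervals with births within $\varepsilon$, and the same containment argument applies with $y=y'=\infty$.
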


The next result is similar in flavor to the Easy Bijection Lemma from
\cite{SteinerStability07}.  We  first prove stability for the backbone infinity
distance in a special case. The result will depend on two constants.

\begin{defn}[Constants $\delta_{\min}$, $\delta_{\max}$ ]\label{def:constant-delta1}
    Let $X$ be a topological space and \mbox{$f:X\rightarrow \R$} be a tame function.
    Define $\delta_{\min}$ to be half of the smallest distance between two distinct  off-diagonal points, or a point in $D(f)$ and a point on the diagonal, that is,
    $$\delta_{\min} := \frac{1}{2} \min\{\norm{p-q}_\infty \mid p \in D(f)\setminus \Delta, q \in D(f), p\neq q\}.$$
   The constant $\delta_{\max}$ is defined analogously using $D(-f)$. 
\end{defn} 

Next, we note a relationship between \defref{constant-delta1} and the minimum node life of extrema of $f$.

\begin{lem}[Minimum of Node Lives is Bounded Below by $\delta_{\min}$]
Let $f:C\rightarrow \R$ be a nicely tame function. Let $\{t_i\}_{i=1}^n$ be the domain coordinates for local minima of $f$. Define $\delta$ to be half the minimum of the node lives of $t_i$, that is,
$$ \delta:= \frac{1}{2} \min \{\pers_f(t_i)\}_{i=1}^n.$$
Then $\delta_{\min} \leq \delta.$
\label{lem:min-node-lives}
\end{lem}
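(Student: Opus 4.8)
The plan is to unwind both definitions and show that each quantity $\frac{1}{2}\pers_f(t_i)$ appearing in $\delta$ is witnessed by a pair of points in $D(f)$ (or a point and the diagonal) whose $L_\infty$ distance is at least the corresponding persistence, hence at least $2\delta_{\min}$. First I would fix the index $k$ achieving the minimum in $\delta$, so $\delta = \frac{1}{2}\pers_f(t_k)$, and let $(b,d) = (f(t_k), \zeta_f(t_k)) \in D(f)$ be the (truncated) persistence point corresponding to the local minimum at $t_k$; here $d = \max(f)$ if $t_k$ is the global minimum (essential component) and $d$ is the merge height otherwise. In either case $\pers_f(t_k) = d - b > 0$, so $(b,d)$ is an off-diagonal point of $D(f)$.

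Next I would produce a second point $q \in D(f)$ (possibly on the diagonal) with $\norm{(b,d) - q}_\infty \le \pers_f(t_k)$. The natural candidate is the diagonal point $q = (d,d)$ (equivalently $(b,b)$): then $\norm{(b,d)-(d,d)}_\infty = \max\{|b-d|,0\} = d-b = \pers_f(t_k)$. Since $(b,d) \in D(f)\setminus\Delta$ and $(d,d)\in D(f)$ (recall \remref{diagonal}, the diagonal is included with infinite multiplicity) and $(b,d)\neq(d,d)$, this pair is admissible in the minimum defining $\delta_{\min}$. Therefore
\[
\delta_{\min} = \frac{1}{2}\min\{\norm{p-q}_\infty \mid p \in D(f)\setminus\Delta,\ q \in D(f),\ p \neq q\} \le \frac{1}{2}\norm{(b,d)-(d,d)}_\infty = \frac{1}{2}\pers_f(t_k) = \delta.
\]

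I expect the only subtlety — the "main obstacle," though it is minor — is bookkeeping the essential component: one must check that the global minimum's persistence point $(b,\max(f))$ is genuinely an off-diagonal point of $D(f)$ (it is, since $\max(f) > f(t_k)$ whenever $f$ is nonconstant, and the degenerate constant case makes the claim vacuous as there are no off-diagonal points and $\delta = 0$), and that using the diagonal point $(d,d)$ rather than another genuine persistence point is legitimate under the convention of \remref{diagonal}. Everything else is a direct substitution of \defref{constant-delta1} and the persistence-of-extrema definition. A symmetric remark (not needed here, but worth noting) gives the analogous bound $\delta_{\max} \le \frac{1}{2}\min\{\pers_{-f}(t)\}$ over the local maxima of $f$ via $D(-f)$.
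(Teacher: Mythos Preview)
Your proposal is correct and follows essentially the same approach as the paper: pick the local minimum $t_k$ realizing $\delta$, exhibit a diagonal point close to its persistence point $(b,d)$, and use that pair to witness the minimum defining $\delta_{\min}$. The only difference is the choice of diagonal witness: you take $q=(d,d)$, giving $\norm{(b,d)-q}_\infty=d-b=2\delta$ and hence exactly $\delta_{\min}\le\delta$, whereas the paper uses the orthogonal projection $q=\bigl(\tfrac{b+d}{2},\tfrac{b+d}{2}\bigr)$, giving $\norm{(b,d)-q}_\infty=\tfrac{d-b}{2}=\delta$, which in fact yields the sharper bound $\delta_{\min}\le\delta/2$ even though the paper only records $\le\delta$.
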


\begin{proof}
Let $t \in \{t_{i}\}_{i=1}^n$ such that $\frac{1}{2}\pers_f(t) = \delta$. Observe the point $(\frac{1}{2}\pers_f(t), \frac{1}{2}\pers_f(t))$ is the orthogonal projection of $(f(t), \zeta_f(t))$ onto the diagonal. In particular,\\ 
$(\frac{1}{2}\pers_f(t)+f(t), \frac{1}{2}\pers_f(t)+f(t))$ is the closest point on the diagonal to $(f(t), \zeta_f(t))$. Notice,
\begin{align*}
|\zeta_f(t)-(\frac{1}{2}\pers_f(t)+f(t))| &= |\zeta_f(t)-(\frac{1}{2}\zeta_f(t)-\frac{1}{2}f(t)+f(t))|\\
&= |\frac{1}{2}\zeta_f(t)-\frac{1}{2}f(t)|\\
&=\frac{1}{2}\pers_f(t).
\end{align*}
Furthermore,
\begin{align*}
|f(t)-(\frac{1}{2}\pers_f(t)+f(t))| &= |f(t)-(\frac{1}{2}\zeta_f(t)-\frac{1}{2}f(t)+f(t))|\\
&= |\frac{1}{2}f(t)-\frac{1}{2}\zeta_f(t)|\\
&=\frac{1}{2}\pers_f(t).
\end{align*}
Therefore,
$$\norm{(f(t), \zeta_f(t)) - (\frac{1}{2}\pers_f(t)+f(t), \frac{1}{2}\pers_f(t)+f(t))}_{\infty} = \frac{1}{2}\pers_f(t) = \delta.$$
Additionally, since for all $t_i \in \{t_i\}_{i=1}^n$ where $t_i \neq t$, we have $\frac{1}{2}\pers_f(t_i) \geq \delta$, it must be the case that
$$\norm{(f(t_i), \zeta_f(t_i)) - (\frac{1}{2}\pers_f(t_i)+f(t_i), \frac{1}{2}\pers_f(t_i)+f(t_i))}_{\infty} \geq \delta.$$
This implies that half the minimum distance between a point $p \in D(f)\setminus \Delta$ and a point on the diagonal is equal to $\delta$, that is,
$$ \frac{1}{2} \min\{\norm{p - q}_{\infty} \mid p \in D(f) \setminus \Delta, q \in \Delta \} = \delta.$$
Lastly, since 
$$ \{\norm{p - q}_{\infty} \mid p \in D(f) \setminus \Delta, q \in \Delta \} \subset \{\norm{p-q}_\infty \mid p \in D(f)\setminus \Delta, q \in D(f), p\neq q\}$$
we conclude $\delta_{\min} \leq \delta$.
\end{proof}

Using the same proof but with $-f$ and $D(-f)$, we find $\delta_{\max} \leq \delta$. We use the two constants $\delta_{\min}$ and $\delta_{\max}$ to determine when functions are ``close".

\begin{defn}[Very Close]
    Let $f:C\rightarrow \R$ be a nicely tame function. Let $\delta_f =
    \min\{\delta_{\min}, \delta_{\max}\}$. A nicely tame function $f': C\rightarrow \R$ is
    \emph{very close} to $f$ if $\norm{f-f'}_\infty < \delta_f.$
    \label{def:very-close}
\end{defn}

Next we prove an analogue of the Easy Bijection Lemma~\cite{SteinerStability07} for backbones. 
We start by constructing an alignment between two backbones arising from nicely tame functions $f$ and $f'$ where $f'$
is very close to $f$. \figref{direct-alignment} shows an example on how to
construct the direct alignment between very close functions.

\begin{con}[Direct Alignment]\label{con:direct-alignment}
    Let $f, f': C\rightarrow \R$ be nicely tame functions such that $f'$ is very
    close to $f$.
    Let~$\varepsilon = \norm{f-f'}_{\infty}$. Note, that since $f,f'$ are very close, we have $\varepsilon <
    \delta_f$ . The \emph{direct alignment}
    construction consists of two steps:
    \begin{enumerate}
        \item \orgemph{Pairing nodes in $B(f)$ with $B(f')$}.\label{step:pairing}
            Recall that each node in $B(f)$ and $B(f')$ corresponds to a local extremum
            of $f$ and $f'$, respectively.
            We begin by pairing local minima of $f$ with local minima of~$f'$.
            By definition of persistence diagrams, there is
            a one-to-one correspondence between the local minima of $f$ and the
            points in~$D(f)$. Thus, we can pair local minima of $f$ and $f'$ by pairing off diagonal
            points in $D(f)$ and $D(f')$, respectively.

            Let $p = (p_1, p_2) \in D(f)\setminus \Delta$. We describe what point in $D(f')$ is paired with $p$.  
            Since $p \in \square_{\varepsilon}(p)$, the Box Lemma tells us that the multiplicity
            $\mu(p)$ of~$p$ satisfies
            \[
                \mu(p) \leq \size{D(f') \cap \square_{\varepsilon}(p)}
                \leq \size{D(f) \cap \square_{2\varepsilon}(p)}.
            \]
            By definition of $\delta_f$ and the assumption $\varepsilon < \delta_f$, 
            we know that $p$ is the only point contained in the set $D(f) \cap
            \square_{2\varepsilon}(p)$. Therefore, $\size{D(f') \cap
            \square_{\varepsilon}(p)} = \mu(p)$. Furthermore, since $p \in D(f) \cap
            \square_{\varepsilon}(p)$, there is the same number of points, with
            multiplicity, in $D(f') \cap \square_{\varepsilon}(p)$ and in $D(f) \cap
            \square_{\varepsilon}(p)$. 
            
            We explain how to define a bijection by pairing the points in the squares $D(f) \cap \square_{\varepsilon}(p)$ and $D(f') \cap \square_{\varepsilon}(p)$. Let~$n=\mu(p)$ and
            let~$\{t_i\}_{i=1}^n$
            be the set of the domain coordinates of the local minima
            of $f$ for which $f(t_i) = p_1$. Let~$q=(q_1, q_2) \in
            D(f') \cap \square_{\varepsilon}(p)$. Observe $q_1 = f'(t)$ for some local
            minimum $(t, f'(t))$ of~$f'$. Because $p, q\in \square_{\varepsilon}(p)$, we
            have $\norm{p-q}_{\infty} \leq \varepsilon<\delta_f$. In particular,
            $$ |p_1-q_1|=|f(t_i)-f'(t)| < \delta_f \text{, for all }i \in [n]. $$
            This implies that
            $$ f(t_i)-\delta_f < f'(t) <f(t_i)+\delta_f \text{, for all }i \in [n].$$
            This inequality, the fact that $\delta_f\leq \delta_{\min}$, and $f(t_i)=p_1$ for
            all $i \in [n]$ implies
            $$t \in A := (f-\delta_f)^{-1}(-\infty, p_1+\delta_f).$$
            By \lemref{min-node-lives}, $\delta_f \leq \frac{1}{2}\min\{\pers_f(t_i)\}_{i=1}^n$. Applying Proposition 1 of \cite{BerryUsing20}, we find $A$ is a disjoint union of intervals and each contains
            exactly one~$t_i$,
            i.e., $A = \bigcup_{i=1}^n \varphi_{\delta_f}(t_i)$. Let~$t_i^*\in
            \{t_i\}_{i=1}^n$ such that $t \in \varphi_{\delta_f}(t_i^*)$. For
            our alignment, we pair the local minima $(t_i^*,
            f(t_i^*))$ with $(t, f'(t))$. Iterating this process for all points~$q\in
            D(f') \cap \square_{\varepsilon}(p)$ results in a bijection between points in 
            the squares $D(f) \cap \square_{\varepsilon}(p)$ and $D(f') \cap \square_{\varepsilon}(p)$.

            Iterating the above procedure for all points $p \in D(f)\setminus \Delta$,
            we pair all local minima of $f$ with local minima of $f'$.
            All remaining local minima of $f'$ are paired with an empty node. The order of
            the alignment is given by the domain coordinates of $f'$.

            What remains are the local maxima of $f$ and $f'$. To pair these extrema, we
            apply the same exact process to minima of $-f$ and $-f'$ since they are
            local maxima of $f$ and $f'$.

        \item \orgemph{Indexing the pairs so that order of the backbones for $B(f)$ and
            $B(f')$ are preserved.}
            Let $\x=B(f)$ and $\x'=B(f')$.
            Since extrema in $f$ (and $f'$) are in one-to-one
            correspondence with nodes in $\x$ (and $\x'$, respectively), we know
            that each pair of aligned extrema in Step~(\ref{step:pairing}) correspond
            to a pair of nodes in~$\tilde{\x} \times \tilde{\x}'$.
            To construct the \emph{direct alignment} $\alpha:[k] \rightarrow
            \tilde{\x}\times \tilde{\x}'$, we order the pairs found in Step~(\ref{step:pairing}) based on the order
            of the domain coordinates of the local extrema of $f'$. It follows
            that $k$ is the number of local extrema of $f'$.  Let
            $(t'_i, f'(t_i'))$ be the $i^{th}$ extremum based on order of domain
            coordinates of $f'$, and assume, without loss of generality, that
            this extremum is a local minimum. Then, $ \alpha(i) =
            (\alpha_{\x}(i), \alpha_{\x'}(i)) $ is given by $\alpha_{\x'}(i) =
            (\min, \frac{1}{2}\pers_{f'}(t'_i))$ and $\alpha_{\x}(i)$ is the
            paired node from Step~(\ref{step:pairing}).
            
           \end{enumerate}
           \end{con}
           
           \begin{lem}[Direct Alignment is an Alignment]
           Let $f, f': C\rightarrow \R$ be nicely tame functions such that $f'$ is very
           close to $f$. Let $\x = B(f)$ and $\x' = B(f')$. The direct alignment, $\alpha:[k] \rightarrow \tilde{\x} \times \tilde{\x}'$ constructed in \conref{direct-alignment} is an alignment.
           \end{lem}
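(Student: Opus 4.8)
The plan is to verify that the direct alignment $\alpha:[k]\to\tilde{\x}\times\tilde{\x}'$ from \conref{direct-alignment} satisfies each of the four defining properties of an alignment in \defref{alignment}, namely: no null alignments, preserves order of backbones, no misalignments, and restriction to matching. The proof is essentially bookkeeping, but care is needed because the construction pairs minima of $f$ with minima of $f'$ and, separately, maxima with maxima, before re-indexing everything by the domain coordinates of $f'$.

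First I would observe that, by \conref{direct-alignment}, each index $i\in[k]$ corresponds to the $i^{th}$ local extremum $(t_i', f'(t_i'))$ of $f'$ in increasing domain-coordinate order, so $k$ is the number of local extrema of $f'$ and no index maps to $(\zero,\zero)$ (the $\x'$-coordinate is always a genuine node of $\x'$); this gives Property~\ref{property:nullalignments}. For Property~\ref{property:restrictiontomatching}, I would note that the $\x'$-coordinates $\alpha_{\x'}(i)$ enumerate the nodes of $\x'$ exactly once by construction, and that the pairing in Step~\ref{step:pairing} is built from bijections between $D(f)\cap\square_\varepsilon(p)$ and $D(f')\cap\square_\varepsilon(p)$ over the distinct points $p\in D(f)\setminus\Delta$ (together with the analogous construction on $-f$, $-f'$ for maxima), so each node of $\x$ appears exactly once as some $\alpha_{\x}(i)$ — the remaining unpaired nodes of $\x'$ are paired with $\zero$, which is consistent since $k$ counts the extrema of $f'$. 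For Property~\ref{property:nomisalignments}, the key point is that in Step~\ref{step:pairing} minima of $f$ are only ever paired with minima of $f'$ (resp. maxima with maxima), so whenever $(x,y)\in\im(\alpha)$ with both entries nonzero, the strings agree; if one entry is $\zero$ this property holds vacuously.

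The main obstacle is Property~\ref{property:preservesbackbones}, partial monotonicity of $\alpha_{\x}$ (the $\x'$-side is immediate from the re-indexing in Step~2). One must show that if $\alpha_{\x}(i)\neq\zero$ and $\alpha_{\x}(j)\neq\zero$ then $\iota_\x(\alpha_\x(i)) < \iota_\x(\alpha_\x(j))$ iff $i<j$. Since $i$ and $j$ are ordered by the domain coordinates of the matched $f'$-extrema and $\iota_\x$ is ordered by the domain coordinates of the $f$-extrema, this amounts to showing the matching is \emph{order-preserving on the domain}: if local extrema $(s,f'(s))$ and $(u,f'(u))$ of $f'$ with $s<u$ are matched to local extrema $(\tau_s, f(\tau_s))$ and $(\tau_u, f(\tau_u))$ of $f$ respectively, then $\tau_s < \tau_u$. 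Here I would use the geometry built into the construction: each matched pair of minima lies in a common $\delta_f$-extremal interval, i.e. $s, \tau_s \in \varphi^f_{\delta_f}(\tau_s)$ and, I would argue, also $s \in \varphi^{f'}_{\delta_f}(s)$ with the intervals for $f$ and $f'$ tightly linked via the Box Lemma bookkeeping; since $\varepsilon=\norm{f-f'}_\infty < \delta_f$, Proposition~1 of \cite{BerryUsing20} guarantees the $\delta_f$-extremal intervals of $f$ at distinct same-type extrema are pairwise disjoint, and likewise for $f'$, so matched extrema cannot "cross." The careful statement is: if $\tau_u \le \tau_s$ while $s < u$, then (handling also the interleaving of minima and maxima, since a minimum of $f'$ and a maximum of $f'$ between $s$ and $u$ must match correspondingly ordered extrema of $f$) one of the disjointness conclusions is violated, a contradiction. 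I expect the cleanest route is to prove a small lemma — that distinct extrema of $f$ (resp. $f'$) have disjoint $\delta_f$-extremal intervals and that each matched pair shares such an interval — and then deduce monotonicity directly; the alternation of minima and maxima along the backbone forces the combined (min + max) matching to be monotone once each of the two sub-matchings is.

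Once all four properties are checked, the lemma follows. I would not belabor the routine verifications (Properties~\ref{property:nullalignments}, \ref{property:nomisalignments}, \ref{property:restrictiontomatching}) and would concentrate the write-up on the order-preservation argument for Property~\ref{property:preservesbackbones}, since that is where the geometric input from \lemref{min-node-lives}, the Box Lemma, and Proposition~1 of \cite{BerryUsing20} actually gets used.
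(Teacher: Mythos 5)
Your proposal follows essentially the same route as the paper's proof: the three combinatorial properties (no null alignments, no misalignments, restriction to matching) are dispatched as immediate from \conref{direct-alignment}, and the real work is order preservation on the $\x$-side, which both you and the paper derive from the fact that each matched $f'$-extremum lies in the $\delta_f$-extremal interval of its $f$-partner together with disjointness of those intervals (Proposition~1 of \cite{BerryUsing20} via \lemref{min-node-lives}). The only divergence is cosmetic: for extrema of opposite type the paper invokes \lemref{properties}\,(\ref{stmt:nodelife-extrema-containment}) for adjacent min/max pairs, whereas you appeal to alternation of minima and maxima forcing the combined matching to be monotone — both arguments are at the same level of detail and rest on the same geometric input.
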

           
            \begin{proof}
            We show \defref{alignment} holds. By \conref{direct-alignment}, we immediately see we have no null alignments, misalignments, and have a restriction to matching. Hence, \propertyref{nullalignments}, \propertyref{nomisalignments}, and \propertyref{restrictiontomatching} hold. What remains is showing the alignment preserves order of backbones. 
            
            Since the nodes of
            $\x'$ are ordered by domain coordinates of local extrema of~$f'$,
            the alignment $\alpha$ preserves the order of nodes of  $\x'$.
            We now show that the alignment $\alpha$ also preserves the order of nodes in
            $B(f)$.
            Consider nodes~$\alpha_{\x}(i)$, $\alpha_{\x}(j) \in \x=B(f)$ such that $i<j$
            and  $\alpha_{\x}(i)$,~$\alpha_{\x}(j)$ map to local extrema $(t, f(t))$ and
            $(s, f(s))$, respectively. Trivially, $t, s$ are
            contained in~$\varphi_{\delta_f}(t)$ and~$\varphi_{\delta_f}(s)$, respectively. Since $\norm{f-f'}_{\infty} < \delta_f \leq \min\{\frac{1}{2}\pers_f(t_i)\}_{i=1}^n$ by \lemref{min-node-lives}, then if $t$ and $s$ are not adjacent, $\varphi_{\delta_f}(t)\cap \varphi_{\delta_f}(s) = \emptyset$ by Proposition 1 of \cite{BerryUsing20}. Otherwise $t$ and $s$ are adjacent and by \lemref{properties}, \stmtref{nodelife-extrema-containment}, we have $t \notin \varphi_{\delta_f}(s)$ and $s \notin \varphi_{\delta_f}(t)$. Either way, we find
            $\iota(\alpha_{\x'}(i))<\iota(\alpha_{\x'}(j))$ implies
            $\iota(\alpha_{\x}(i))<\iota(\alpha_{\x}(j))$. The same argument in reverse
            can be used to show that if~$i<j$, then
            $\iota(\alpha_{\x}(i))<\iota(\alpha_{\x}(j))$. Therefore, the order of the
            backbones $B(f)$ and $B(f')$ is preserved and we have constructed an alignment between
            $B(f)$ and $B(f')$.
\end{proof}

\begin{figure}[htp]
    \centering
    \begin{subfigure}[b]{0.48\textwidth}
        \centering
        \includegraphics[width=\textwidth]{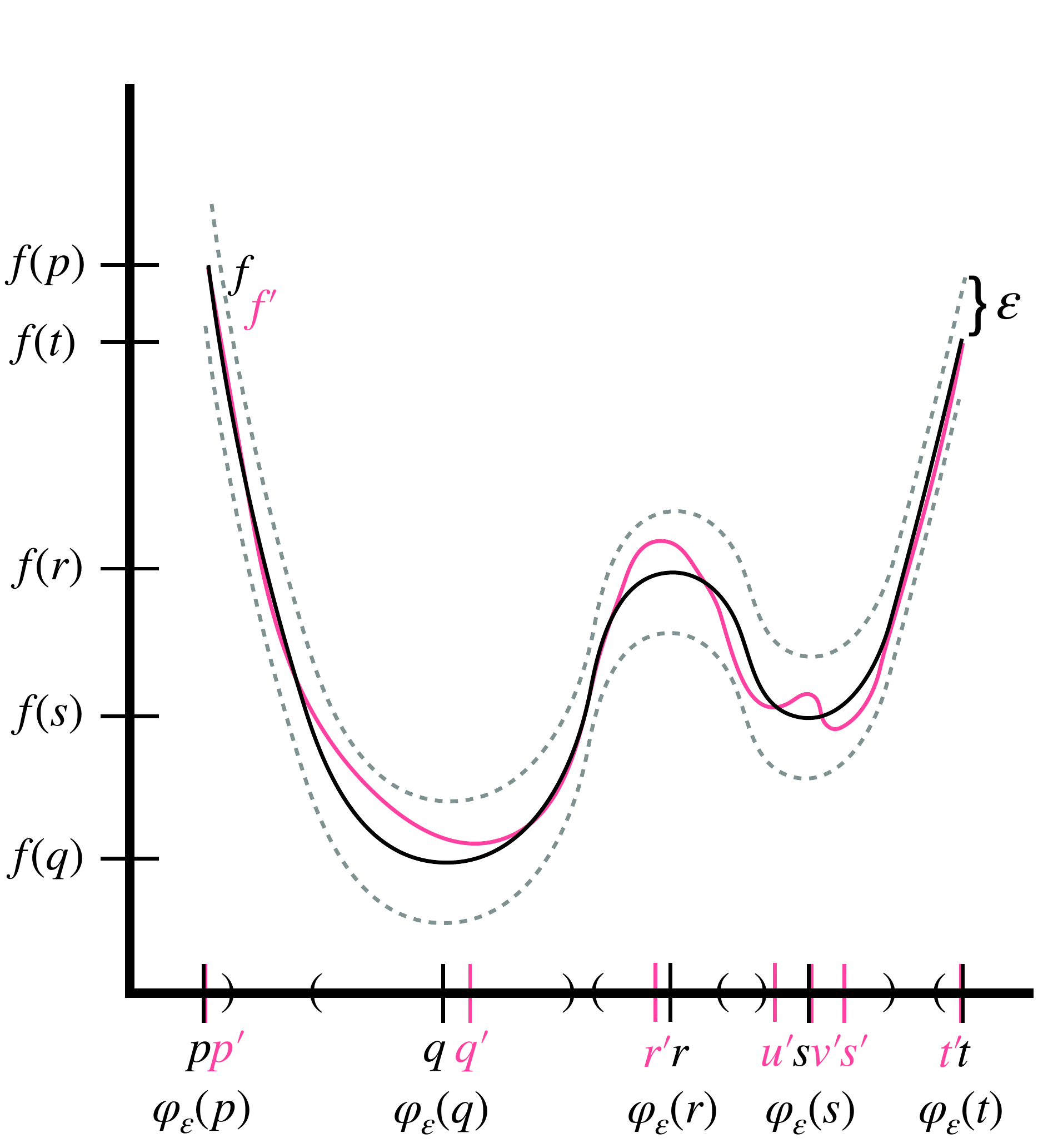}
        \caption{$f, f'$}
        \label{fig:veryclose-function}
    \end{subfigure}
    \hfil
    \begin{subfigure}[b]{0.48\textwidth}
        \centering
        \includegraphics[width=\textwidth]{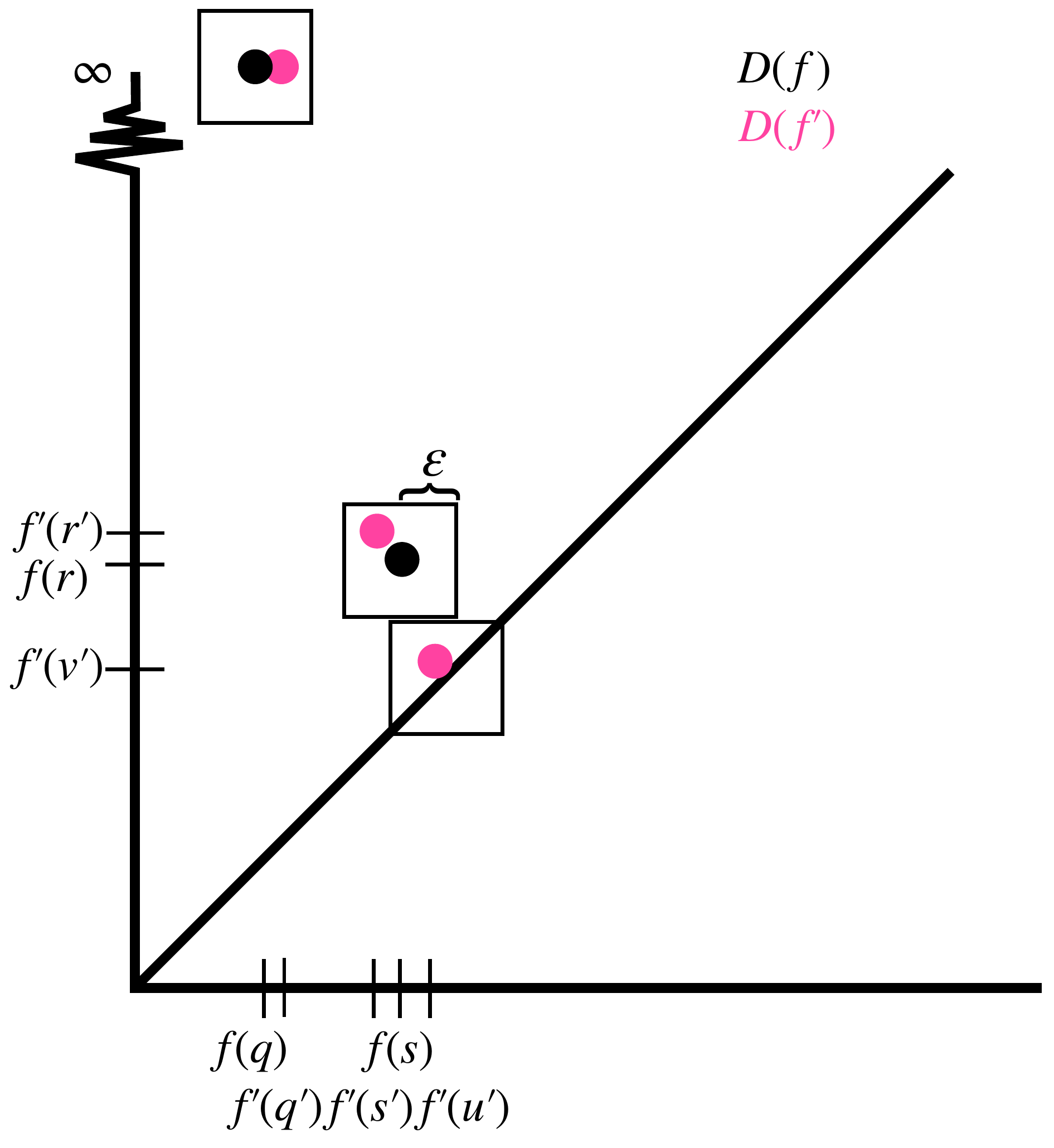}
        \caption{$D(f), D(f')$}
        \label{fig:veryclose-persdgm}
    \end{subfigure}\\
    \hfil
    \begin{subfigure}[b]{.96\textwidth}
        \centering
        \includegraphics[width=\textwidth]{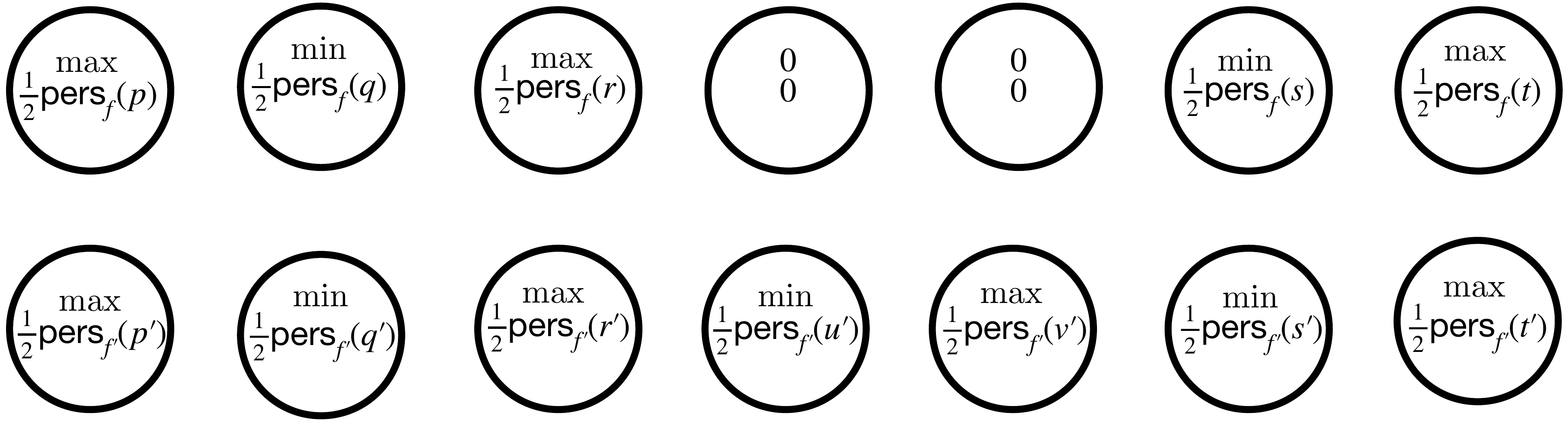}
        \caption{Direct Alignment between $B(f)$ and $B(f')$}
        \label{fig:veryclose-backbone}
    \end{subfigure}
    \hfil
    \caption{Construction of Direct Alignment. In \figref{veryclose-function}, $f$
    is the black function while $f'$ is the pink function. The black labeled
    ticks denote the domain coordinates of the local extrema of $f$ and the pink
    labeled ticks denote the domain coordinates of the local extrema of $f'$.
    The $\varepsilon$-extremal intervals for the local extrema of $f$ are
    illustrated. Since any two points in $D(f)$ where one is not a diagonal point, have a distance
    of at least $\varepsilon$, and $f'\in N_{\varepsilon}(f)$, we have $f'$ is
    very close to $f$. Applying the Direct Alignment Lemma, we get pairings of
    points in $D(f)$ and $D(f')$ as shown in \figref{veryclose-persdgm}. From
    the pairings in $D(f)$ and $D(f')$, we get pairings of nodes with the label
    ``min" that preserve order. The preservation of order comes from how the
    $\varepsilon$-extremal intervals for minima of $f$ are disjoint. We apply an analogous process to pair nodes with the label ``max".
    The alignment that is constructed
    in the Direct Alignment Lemma for $f$ and $f'$ is shown in
    \figref{veryclose-backbone}. } \label{fig:direct-alignment}
\end{figure}

Next, we prove a bound on the absolute difference between aligned weights in the
direct alignment.

\begin{lem}[Bound in Difference in Node Weights in Direct Alignment]
    Let $f,f': C\rightarrow \R$ be nicely tame functions such that $f'$ is very
    close to $f$.  Let \mbox{$\varepsilon := \norm{f-f'}_{\infty}$}. Let $\x=B(f)$,
    $\x'=B(f')$, and $\alpha:[k] \to \tilde{\x} \times \tilde{\x}'$ be the direct
    alignment as defined in \conref{direct-alignment}. Then, the absolute
    difference in weights between aligned nodes is bounded by $\varepsilon$;
    that is, for all $(x,x') \in \im(\alpha)$,
    $|w_x-w_{x'}| \leq \varepsilon$.
\label{lem:nodeweights-easyalignment}
\end{lem}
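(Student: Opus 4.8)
The plan is to argue case-by-case according to whether the aligned pair $(x,x')$ involves an insertion or is a nontrivial match, and in the latter case according to whether the corresponding extrema are global (essential) or not. In every case we will reduce the bound $|w_x - w_{x'}| \leq \varepsilon$ to an estimate on birth and death coordinates of the paired truncated persistence points, which the Direct Alignment construction already controls via the Box Lemma.

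First I would treat insertions. If $(x,\zero) \in \im(\alpha)$ (the case $(\zero, x')$ being symmetric), then by \conref{direct-alignment} the node $x$ comes from an off-diagonal point $p = (p_1,p_2) \in D(f)$ (or $D(-f)$) that was \emph{not} matched to any point of $D(f')$ inside $\square_\varepsilon(p)$; equivalently, $p$ lies within $\varepsilon$ of the diagonal in the $L_\infty$ sense, so $\tfrac12|p_2 - p_1| \leq \varepsilon$. Since $w_x = \tfrac12\pers_f(t) = \tfrac12(p_2 - p_1)$ and $w_\zero = 0$, we get $|w_x - w_\zero| = w_x \leq \varepsilon$. (I would double-check the essential-component subcase: if the global minimum of $f$ were ever unmatched it would force the global minimum of $f'$ to also be far from its own diagonal point, which is impossible since $\|f-f'\|_\infty < \delta_f$; so essential components are always matched and this subcase is vacuous.)

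Next I would handle a nontrivial match $(x,x') \in \im(\alpha)$. Here $x$ corresponds to a local minimum $(t_i^*, f(t_i^*))$ of $f$ (WLOG a minimum; maxima follow by applying the argument to $-f$, $-f'$), with truncated persistence point $p = (f(t_i^*), \zeta_f(t_i^*)) \in D(f)$, and $x'$ corresponds to a local minimum $(t, f'(t))$ of $f'$ with truncated persistence point $q = (f'(t), \zeta_{f'}(t)) \in D(f')$, and by construction $q \in \square_\varepsilon(p)$, i.e. $|f(t_i^*) - f'(t)| \leq \varepsilon$ and $|\zeta_f(t_i^*) - \zeta_{f'}(t)| \leq \varepsilon$. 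Then
\[
  |w_x - w_{x'}| = \tfrac12\bigl|(\zeta_f(t_i^*) - f(t_i^*)) - (\zeta_{f'}(t) - f'(t))\bigr|
  \leq \tfrac12\bigl(|\zeta_f(t_i^*) - \zeta_{f'}(t)| + |f(t_i^*) - f'(t)|\bigr) \leq \tfrac12(\varepsilon + \varepsilon) = \varepsilon.
\]
The one genuinely delicate point is the global-minimum subcase: when $(t_i^*, f(t_i^*))$ is the essential component, $\zeta_f(t_i^*) = \max(f)$ rather than a merge height, so I must confirm that the Direct Alignment still pairs it with the essential component of $D(f')$ and that $|\max(f) - \max(f')| \leq \varepsilon$ — but this is immediate from $\|f - f'\|_\infty = \varepsilon$. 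So the essential-component matches also satisfy the bound, now using $|\max(f) - \max(f')| \leq \varepsilon$ and $|f(t_i^*) - f'(t)| \leq \varepsilon$ in the same triangle-inequality computation.

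The main obstacle I anticipate is purely bookkeeping rather than conceptual: making sure the three structural claims extracted from \conref{direct-alignment} — (i) every off-diagonal $p \in D(f)$ matched to $q \in D(f')$ satisfies $q \in \square_\varepsilon(p)$; (ii) every unmatched off-diagonal point is within $\varepsilon$ of the diagonal; (iii) essential components are matched to essential components — are each justified cleanly from the Box Lemma argument already given in the construction, and that the maxima case genuinely reduces to the minima case because $-f'$ is very close to $-f$ with the \emph{same} $\varepsilon$ (which holds since $\|-f - (-f')\|_\infty = \|f - f'\|_\infty$, and $\delta_{-f}$ uses $D(-f)$ and $D(f)$ so $\delta_{-f} = \delta_f$). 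Once those are in place the inequality in each case is a one-line triangle-inequality estimate.
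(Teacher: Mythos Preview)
Your proposal is correct and follows essentially the same case split and estimates as the paper's proof. One minor clarification: in the direct alignment the insertions occur only on the $f'$ side (every off-diagonal $p\in D(f)$ is matched via the Box Lemma, so the $(x,\zero)$ case is vacuous), but your ``symmetric'' remark covers the $(\zero,x')$ case that actually arises, and your essential-component argument via $|\max(f)-\max(f')|\le\varepsilon$ is slightly more direct than the paper's.
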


\begin{proof}

    Let $(x,x') \in \im(\alpha)$. Either both represent extrema from
    $f$ and $f'$, or the node $x'$ is the empty node.

    \orgemph{First, assume that both nodes represent extrema.} For this proof, we assume
    they are minima and note an analogous argument holds for maxima.  Let
    $(t, f(t))$ and $(t', f'(t'))$ be the local minima corresponding to nodes
    $x$ and $x'$, respectively.
    Either both of these extrema do not represent the essential component
    in~$D(f)$ and $D(f')$ or at least one of them does.
    \orgemph{Suppose neither represents
    the essential component.} Consider the persistence points~$p=(f(t),
    \zeta_f(t))$, $q=(f'(t'), \zeta_{f'}(t'))$ in $D(f)\setminus \Delta$ and $D(f')\setminus \Delta$
    respectively. From \conref{direct-alignment}, we know that
    both~$p, q \in \square_{\varepsilon}(p)$. This implies $\norm{p-q}_\infty
    \leq \varepsilon$. Hence,
    \begin{equation*}
    \begin{split}
        |w_x-w_{x'}| &= \frac{1}{2}|\pers_f(t)-\pers_{f'}(t')|\\
        &= \frac{1}{2}|(\zeta_f(t)-f(t))-(\zeta_{f'}(t')-f'(t'))|\\
        &= \frac{1}{2}|(\zeta_f(t)-\zeta_{f'}(t'))+(f'(t')-f(t))|\\
        &\leq \frac{1}{2}|\zeta_f(t)-\zeta_{f'}(t')|+\frac{1}{2}|f'(t')-f(t)|\\
        &\leq \frac{\varepsilon}{2} + \frac{\varepsilon}{2} = \varepsilon .
    \end{split}
    \end{equation*}
    \orgemph{Next, consider the case at least one of  $(t, f'(t))$ or $(t', f'(t'))$
    represents the essential component.}  By the Box Lemma, we can infer that both
    points have to represent the essential component. Then we know these points
    are global minima of $f$ and $f'$.
    Since~$(f(t), \infty)$ and $(f'(t'), \infty)$ are both contained in the
    square of radius $\varepsilon$ centered at $(f(t), \infty)$ in the extended
    plane,~$|f(t)-f'(t')|\leq\varepsilon$. Additionally, by \conref{direct-alignment}
    we know that $t_{\max}$, a global maximum of $f$, is paired with
    $t'_{\max}$, a global maximum of $f'$ such that $t'_{\max} \in
    \varphi_{\delta_f}(t_{\max})$. This implies $|f(t_{\max})-f'(t'_{\max})| \leq
    \varepsilon$. Applying the same computation as above, we see
    $|w_x-w_{x'}| \leq \varepsilon$.

    \orgemph{Lastly, consider the case $(t', f'(t'))$ is paired with an empty
    node.} 
    Consider the point $(f'(t'), \zeta_{f'}(t')) \in D(f')$. The Box Lemma implies
    $\square_{\varepsilon}((f'(t'), \zeta_{f'}(t')))$ must contain at least one
    point from $D(f)$. By assumption $(t', f'(t'))$ is paired with an empty node,
    and so $\square_{\varepsilon}((f'(t'), \zeta_{f'}(t'))$ must contain a point on
    the diagonal. Therefore, $(f'(t'), \zeta_{f'}(t'))$ is within an
    $L_{\infty}$  distance of
    $\varepsilon$ from a point on the diagonal. Because the point~$(\frac{1}{2}\pers_{f'}(t')+f(t'),
    \frac{1}{2}\pers_{f'}(t')+f(t'))$ is the orthogonal projection of $(f'(t'),
    \zeta_{f'}(t'))$ onto the diagonal, it is the closest point on the diagonal in $\R^2$ to $(f'(t'),
    \zeta_{f'}(t'))$. Therefore
    \begin{equation}\label{square}
    \norm{(f'(t'), \zeta_{f'}(t')) - (\frac{1}{2}\pers_{f'}(t')+f(t'),
    \frac{1}{2}\pers_{f'}(t')+f(t'))}_{\infty} \leq \varepsilon.
    \end{equation}
    Comparing the $x$-coordinates of pair of points in \eqref{square}, we find
    $(\frac{1}{2}\pers_{f'}(t')+f(t'))-f'(t')\leq \varepsilon$ and comparing the
    $y$-coordinates we find
    $\zeta_{f'}(t')-(\frac{1}{2}\pers_{f'}(t')+f(t'))\leq \varepsilon$.
    Observe
    $$(\frac{1}{2}\pers_{f'}(t')+f(t'))-f'(t') = \zeta_{f'}(t')-(\frac{1}{2}\pers_{f'}(t')+f(t')) = \frac{1}{2}\pers_f(t).$$
    %
    %
    We find  $\frac{1}{2}\pers_{f'}(t') \leq \varepsilon$.

    We conclude that for all paired extrema, \[ |w_x-w_{x'}| \leq \varepsilon. \]
\end{proof}

We can now prove local stability for the backbone infinity distance.

\begin{lem}[Local Backbone Infinity Stability]
    Let $f,f': C\rightarrow \R$ be nicely tame functions such that $f'$ is very close to $f$. Then,
    $$d_{\mathcal{B}_\infty}(B(f), B(f')) \leq \norm{f-f'}_{\infty}.$$
\label{lem:localbbinfty-stability}
\end{lem}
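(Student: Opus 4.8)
The plan is to exhibit a single alignment that witnesses the bound, namely the direct alignment from \conref{direct-alignment}, and then invoke the lemmas already established about it. Recall that $d_{\mathcal{B}_\infty}(B(f),B(f'))$ is defined as an infimum over \emph{all} alignments of the maximum over aligned pairs of $|\omega_\x(\alpha_\x(i)) - \omega_\y(\alpha_\y(i))|$. Since the infimum is bounded above by the value attained at any particular alignment, it suffices to produce one alignment whose associated maximum is at most $\varepsilon := \norm{f-f'}_\infty$.

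First I would observe that, because $f'$ is very close to $f$, \conref{direct-alignment} applies and produces the direct alignment $\alpha : [k] \to \tilde{\x} \times \tilde{\x}'$ where $\x = B(f)$ and $\x' = B(f')$; by the lemma ``Direct Alignment is an Alignment'' this $\alpha$ is indeed a legitimate alignment in the sense of \defref{alignment}, so it is one of the alignments over which the infimum in \defref{backbone-infty-dist} ranges. Next I would recall \lemref{nodeweights-easyalignment}, which states that for every $(x,x') \in \im(\alpha)$ we have $|w_x - w_{x'}| \leq \varepsilon$. One subtlety is that $d_{\mathcal{B}_\infty}$ takes a maximum over all indices $i \in [k]$, including those where one side is the empty node $\zero$; but in the direct alignment the only insertions that occur are local minima (or maxima) of $f'$ paired with $\zero$, and \lemref{nodeweights-easyalignment} explicitly covers this case, giving $\frac{1}{2}\pers_{f'}(t') \leq \varepsilon$, i.e. $|w_x - w_{x'}| = |0 - w_{x'}| \leq \varepsilon$ there as well. (There are no insertions on the $\x$ side since the direct alignment indexes by the extrema of $f'$ and every extremum of $f$ gets matched, per Step~(\ref{step:pairing}); and even if there were, the same type of bound would hold by the symmetric argument.) Hence $\max_i |\omega_\x(\alpha_\x(i)) - \omega_{\x'}(\alpha_{\x'}(i))| \leq \varepsilon$.

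Putting this together, $d_{\mathcal{B}_\infty}(B(f), B(f')) = \inf_\alpha \max_i |\omega_\x(\alpha_\x(i)) - \omega_{\x'}(\alpha_{\x'}(i))| \leq \max_i |\omega_\x(\alpha_\x^{\mathrm{direct}}(i)) - \omega_{\x'}(\alpha_{\x'}^{\mathrm{direct}}(i))| \leq \varepsilon = \norm{f-f'}_\infty$, which is the claim. The proof is essentially a one-line deduction once the machinery of the direct alignment is in hand; all the real work — the Box Lemma estimates, the use of $\delta_f$ and \lemref{min-node-lives} to guarantee the pairing is well-defined and order-preserving, and the case analysis on essential components and insertions — has already been absorbed into \conref{direct-alignment} and \lemref{nodeweights-easyalignment}. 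So there is no serious obstacle remaining; the only thing to be careful about is to confirm that the quantity maximized in \defref{backbone-infty-dist} is exactly the quantity bounded pairwise in \lemref{nodeweights-easyalignment}, i.e. that insertions contribute $|0 - w|$ and are handled by the final case of that lemma, which they are.
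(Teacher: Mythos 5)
Your proof is correct and is essentially identical to the paper's: both exhibit the direct alignment as a witness and invoke \lemref{nodeweights-easyalignment} to bound every aligned pair (including insertions) by $\norm{f-f'}_\infty$, then use that the infimum in \defref{backbone-infty-dist} is bounded by the value at any particular alignment. Your extra care about the empty-node case is a detail the paper leaves implicit but is indeed covered by the final case of that lemma.
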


\begin{proof}
    In \lemref{nodeweights-easyalignment}, we showed that using the direct
    alignment between $B(f)$ and $B(f')$, the absolute difference in aligned
    node weights is bounded by $\norm{f-f'}_{\infty}$. Since the backbone
    infinity distance is defined by using an optimal alignment, we get
    $$d_{\mathcal{B}_{\infty}}(B(f), B(f')) \leq \norm{f-f'}_{\infty}.$$
\end{proof}

To remove the assumption that $f$ and $f'$ are very close and thus 
to globalize the backbone infinity stability result, we construct a
straight-line homotopy between $f$ and $f'$ and consider a finite number of
functions within this homotopy for which every two successive functions are very
close. For each such pair of functions, \lemref{localbbinfty-stability} applies
and we are able to apply almost the same argument as the proof of the
Interpolation Lemma in \cite{SteinerStability07}. Because we sample
functions between $f$ and $f'$ in the homotopy, we are able to conclude that
$d_{\mathcal{B}_\infty}(B(f), B(f')) \leq \norm{f-f'}_{\infty}$.

\begin{thm}[Backbone Infinity Stability]
    Let $f,f': C\rightarrow \R$ be nicely tame functions. Then,
    $$d_{\mathcal{B}_\infty}(B(f), B(f')) \leq \norm{f-f'}_{\infty}.$$
    \label{thm:backbone-infinity-stability}
\end{thm}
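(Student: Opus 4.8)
The plan is to lift the local estimate of \lemref{localbbinfty-stability} to arbitrary nicely tame $f, f'$ by interpolation, following the way the Interpolation Lemma of \cite{SteinerStability07} lifts its Easy Bijection Lemma. Set $f_s := (1-s)f + s f'$ for $s \in [0,1]$, so that $f_0 = f$, $f_1 = f'$, and $\norm{f_s - f_{s'}}_\infty = |s - s'|\,\norm{f-f'}_\infty$. The objective is to choose a partition $0 = s_0 < s_1 < \dots < s_N = 1$ fine enough that for every $k$ the function $f_{s_{k+1}}$ is \emph{very close} to $f_{s_k}$ in the sense of \defref{very-close}. Granting such a partition, \lemref{localbbinfty-stability} bounds each consecutive term, and since $d_{\mathcal{B}_{\infty}}$ is a metric (\appref{backboneinfty}) the triangle inequality telescopes:
\begin{align*}
    d_{\mathcal{B}_{\infty}}(B(f), B(f'))
    &\le \sum_{k=0}^{N-1} d_{\mathcal{B}_{\infty}}\big(B(f_{s_k}), B(f_{s_{k+1}})\big)\\
    &\le \sum_{k=0}^{N-1} \norm{f_{s_k} - f_{s_{k+1}}}_\infty
    = \sum_{k=0}^{N-1} (s_{k+1} - s_k)\,\norm{f-f'}_\infty
    = \norm{f-f'}_\infty ,
\end{align*}
which is exactly the asserted inequality.

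The partition is produced by a compactness argument on $[0,1]$. For each $s$, the threshold $\delta_{f_s}$ of \defref{very-close} is strictly positive whenever $f_s$ is nicely tame, since $D(f_s)$ and $D(-f_s)$ then have only finitely many off-diagonal points; moreover $s' \mapsto \norm{f_s - f_{s'}}_\infty$ is continuous and vanishes at $s' = s$. Combining these with the Box Lemma (to control how $\delta_{f_{s'}}$ can shrink near $s$), each such $s$ has an open interval $U_s \ni s$ so small that for all $s', s'' \in U_s$ one of $f_{s'}, f_{s''}$ is very close to the other. A finite subcover of $\{U_s\}_{s\in[0,1]}$ yields a partition whose successive points lie in a common $U_s$, and then \lemref{localbbinfty-stability} supplies each per-step bound used above.

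The main obstacle is that the straight-line interpolants $f_s$ need not be nicely tame (a convex combination of two functions with finitely many local extrema can oscillate infinitely often, producing infinitely many homological critical values), and, even when $f_s$ is nicely tame, $\delta_{f_s}$ can degenerate to $0$ at parameters where two off-diagonal points of $D(f_s)$ coincide, so that $f_s$ has no very close functions and the covering step breaks near those parameters. I would resolve both, as in \cite{SteinerStability07}, by a genericity-and-limit reduction: first replace $f, f'$ by uniformly close piecewise-linear functions, perturbed generically so that every interpolant between them is piecewise linear (hence nicely tame) with pairwise distinct off-diagonal persistence points; run the argument above to get the bound for that pair; then pass to the limit as the piecewise-linear approximants converge uniformly to $f$ and $f'$, invoking continuity of $B(\cdot)$ under uniform convergence (a short Box Lemma estimate). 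Carrying out this reduction --- in particular the covering near the finitely many exceptional parameters --- is the delicate part; the telescoping, the metric axioms, and the per-step use of \lemref{localbbinfty-stability} are all routine.
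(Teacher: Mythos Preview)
Your approach is essentially the same as the paper's: linear interpolation $h_\lambda=(1-\lambda)f+\lambda f'$, a compactness covering of $[0,1]$ (the paper uses the explicit intervals $J_\lambda=(\lambda-\delta(\lambda)/2c,\lambda+\delta(\lambda)/2c)$ with $c=\norm{f-f'}_\infty$ and $\delta(\lambda)=\delta_{h_\lambda}$), application of \lemref{localbbinfty-stability} to consecutive samples, and a telescope via the triangle inequality for $d_{\mathcal{B}_\infty}$. The paper does not engage the two issues you flag---it simply assumes each $h_\lambda$ is nicely tame with $\delta_{h_\lambda}>0$ and proceeds---so your proposed PL/genericity-and-limit reduction is extra care beyond the published argument rather than a different route.
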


\begin{proof}
    Let $c:=\norm{f-f'}_{\infty}$. Define $h_{\lambda} := (1-\lambda)f+ \lambda
    f'$ where $\lambda \in [0,1]$. This is the  family of convex combinations of
    $f$ and $f'$ forms a linear interpolation between the two functions,
    starting at $h_0 = f$ and ending at $h_1 = f'$. Furthermore, we define
    $\delta(\lambda) := \delta_{h_{\lambda}}$ as in \defref{very-close}.
    Consider the open cover $U$ of $[0,1]$ by open intervals  $J_{\lambda} =
    (\lambda - \delta(\lambda)/2c, \lambda+\delta(\lambda)/2c)$ for all $\lambda
    \in [0,1]$. The compactness of $[0,1]$ implies the existence of a finite
    subcover $U'$ of $U$. Let $\lambda_1<\lambda_2<\dots<\lambda_n$ be the
    midpoints of the open intervals in $U'$. Observe, that half the length of
    $J_\lambda$ is equal to $\delta(\lambda)/2c.$ Since any two consecutive
    intervals $J_{\lambda_i}$ and $J_{\lambda_{i+1}}$ have a non-empty
    intersection,
    \begin{equation*}
    \begin{split}
        \lambda_{i+1}-\lambda_i &\leq \delta(\lambda_i)/2c +\delta(\lambda_{i+1})/2c\\
        &\leq 2\max\{\delta(\lambda_i)/2c, \delta(\lambda_{i+1})/2c\}\\
        &= \max\{\delta(\lambda_i), \delta(\lambda_{i+1})\}/c
    \end{split}
    \end{equation*}
    Furthermore, note
    \begin{equation*}
    \begin{split}
        |h_{\lambda_i}-h_{\lambda_{i+1}}|
        &= |((1-\lambda_i)f+\lambda_i f') - ((1-\lambda_{i+1})f + \lambda_{i+1}f')|\\
        &= |f(\lambda_{i+1}-\lambda_i)-f'(\lambda_{i+1}-\lambda_i)|\\
        &= \norm{f-f'}_\infty (\lambda_{i+1}-\lambda_i).
    \end{split}
    \end{equation*}
    This implies
    $$
        \norm{h_{\lambda_i}-h_{\lambda_{i+1}}}_\infty
        =c(\lambda_{i+1}-\lambda_i) \leq \max\{\delta(\lambda_i),
        \delta(\lambda_{i+1})\}.
    $$

    Therefore, $h_{\lambda_i}$ is very close to $h_{\lambda_{i+1}}$ or
    vice-versa. Either way, \lemref{localbbinfty-stability} applies and we have
    \[ d_{\mathcal{B}_\infty}(h_{\lambda_i}, h_{\lambda_{i+1}}) \leq \norm{h_{\lambda_i} - h_{\lambda_{i+1}}}_{\infty} \]
    for all $1 \leq i \leq n-1$.  Setting $\lambda_0=0$ and $\lambda_{n+1}=1$,
    we see the inequality also holds for $i=0$ and $i=n$ because
    $h_{\lambda_1}$ is very close to $h_{\lambda_0}$, and $h_{\lambda_{n+1}}$ is
    very close to $h_{\lambda_{n}}$ Therefore,
    \begin{equation*}
    \begin{split}
        d_{\mathcal{B}_{\infty}}(B(f),B(f'))
        & \leq \sum_{i=0}^{n} d_{\mathcal{B}_\infty}(B(h_{\lambda_i}), B(h_{\lambda_{i+1}}))\\
        &\leq \sum_{i=0}^{n} \norm{h_{\lambda_i}-h_{\lambda_{i+1}}}_{\infty}\\
        &= \norm{f-g}_{\infty}.
    \end{split}
    \end{equation*}
    The first inequality follows from the triangle inequality of the backbone
    infinity distance (\lemref{backbone-infty-triangle}). The
    last equality follows from how the collection $h_{\lambda_i}$ samples the
    straight line homotopy from $f$ to $f'$. Thus $d_{\mathcal{B}_\infty}(B(f),
    B(f')) \leq \norm{f-f'}_{\infty}.$
\end{proof}

\subsubsection{Backbone Stability Results}

Using \thmref{backbone-infinity-stability}, we also get stability results for
the backbone distance. If the alignment that realizes the backbone infinity
distance between two backbones, $B(f)$ and $B(f')$ is of length $K$, then the
sum of absolute differences of node weights is bounded by $K
\norm{f-f'}_{\infty}$. This is because the backbone distance is bounded by the
sum of absolute differences in node weights from the alignment realizing the
backbone infinity distance.

\begin{cor}[Backbone Stability]
    Let  $f,f': C\rightarrow \R$ be nicely tame functions. Let $K$ be the length
    of the alignment realizing the backbone infinity distance between $B(f)$ and
    $B(f')$. Then,
    $$d_{\mathcal{B}}(B(f), B(f')) \leq K\norm{f-f'}_{\infty}.$$
    \label{cor:backbone-stability}
\end{cor}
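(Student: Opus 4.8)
The plan is to read the bound off directly from \thmref{backbone-infinity-stability}, reusing the very alignment that witnesses the backbone infinity distance rather than constructing anything new. Write $\x = B(f)$ and $\y = B(f')$. The key observation is that every summand appearing in the defining sum of $d_{\mathcal{B}}(\x,\y)$ (\defref{backbone-dist}) is one of the per-index weight differences $|\omega_\x(\alpha_\x(i)) - \omega_\y(\alpha_\y(i))|$ over which $d_{\mathcal{B}_{\infty}}$ takes a maximum (\defref{backbone-infty-dist}); so the infinity distance controls each individual term of the $L_1$-type sum, and the number of terms is exactly the length of the alignment.

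First I would note that between the two finite backbones there are only finitely many alignments in the sense of \defref{alignment}, so the infimum defining $d_{\mathcal{B}_{\infty}}(\x,\y)$ is attained by some alignment $\alpha$, which by hypothesis has length $K$. Next I would unpack the sum in \defref{backbone-dist}: it consists of the terms $|w_x - w_y|$ over nontrivial matches $(x,y)\in\im(\alpha)$ together with $w_x$ (resp.\ $w_y$) over insertions $(x,\zero)$ (resp.\ $(\zero,y)$), for a total of $K$ terms. Since the empty node satisfies $\zero = (0,0)$, each insertion term $w_x$ equals $|w_x - 0|$, which is precisely $|\omega_\x(\alpha_\x(i)) - \omega_\y(\alpha_\y(i))|$ for the index $i$ with $\alpha(i) = (x,\zero)$; the same identification holds for $(\zero,y)$ and, trivially, for nontrivial matches. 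Hence each of the $K$ terms is at most $\max_i |\omega_\x(\alpha_\x(i)) - \omega_\y(\alpha_\y(i))| = d_{\mathcal{B}_{\infty}}(\x,\y)$, and by \thmref{backbone-infinity-stability} this is at most $\norm{f-f'}_{\infty}$.

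Finally, since $d_{\mathcal{B}}(\x,\y)$ is the infimum over all alignments of that sum, it is bounded above by the value of the sum on $\alpha$ itself, namely a sum of $K$ terms each bounded by $\norm{f-f'}_{\infty}$, which yields $d_{\mathcal{B}}(B(f), B(f')) \leq K\norm{f-f'}_{\infty}$. I do not expect a genuine obstacle here: the only point needing a little care is the bookkeeping that identifies each insertion cost with a term of the maximum defining $d_{\mathcal{B}_{\infty}}$ (which works exactly because the empty node carries weight $0$), together with the trivial remark that a realizing alignment exists by finiteness of the set of alignments.
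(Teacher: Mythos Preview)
Your proposal is correct and follows essentially the same approach as the paper: use the alignment realizing $d_{\mathcal{B}_\infty}$, bound each of its $K$ summands by $d_{\mathcal{B}_\infty}(B(f),B(f'))\le \norm{f-f'}_\infty$ via \thmref{backbone-infinity-stability}, and then invoke that $d_{\mathcal{B}}$ is an infimum over alignments. The only additions you make are the (helpful but routine) observations that the infimum is attained by finiteness and that insertion costs are genuine weight differences because $\zero=(0,0)$; the paper leaves these implicit.
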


If we are unable to compute $K$, note that we can bound $K$ by the number of
extrema of $f$ plus the number of extrema of $f'$ because that is the longest
possible length of an alignment between $B(f)$ and $B(f')$.

\subsection{Local Extremal Event DAG Stability}
We showed stability between backbones. In this section, we extend those results
to the entire extremal event DAG in a local case (when $f'$ is extremely close to
$f$). We start by proving that the direct alignment for functions we call
\textit{extremely close} is the optimal backbone alignment for the backbones of
those two functions.

\begin{defn}[Extremely Close]
    Let $f:C\rightarrow \R$ be a nicely tame function. Let $\delta_f$ be as defined in \defref{constant-delta1}. 
    A nicely tame function $f': C\rightarrow \R$ is \emph{extremely
    close} to $f$ if $\norm{f-f'}_\infty < \delta_f/2.$
\end{defn}

The difference between functions that are very close and extremely close is that the 
constant is $\delta_f$ is divided by two for functions that are extremely close. 
This is needed to show that the direct alignment between two functions that are 
extremely close is the unique optimal alignment. Proving this involves several technical lemmas 
which we prove in \appref{stability}.  


\subsubsection{Bounding Differences in Aligned Node and Edge Weights}
We next prove a few lemmas that bound differences in node weights of aligned extrema.

For the rest of this subsection, we use the following assumptions and notation:

\begin{assumps}[Local Stability Assumptions]
Let  $F=\{f_i\}_{i=1}^n$ and \mbox{$F'=\{f'_i\}_{i=1}^n$} be collections of nicely tame
functions from $C$ to $\R$.
Furthermore, suppose $f_i'$ is extremely close 
to $f_i$ for each $i\in [n]$. Let $D=(V, E,\omega_V,\omega_E)$ and \mbox{$D'=(V',
E',\omega'_V,\omega'_E)$}
be the extremal event DAGs of $F$ and $F'$, respectively.  Let~$S_\alpha = (V_\alpha, E_\alpha,
    \omega_{\alpha}, \omega'_{\alpha})$ be the extremal event supergraph arising from the set of alignments $\alpha
= \{\alpha_i\}_{i=1}^n$ that is used to compute the extremal event DAG distance
between $D$ and $D'$.
\label{assumps:edd-stability}
\end{assumps}

\begin{lem}[Bound on Difference in Node Lives]
    Assume \assumpsref{edd-stability}. Let $v(i,j) \in V_{\alpha}$. Then,
    \[ |\omega_{\alpha}(v(i,j)) - \omega'_{\alpha}(v(i,j))| \leq \norm{f_i-f'_i}_{\infty}. \]
    \label{lem:node-weights-bound}
\end{lem}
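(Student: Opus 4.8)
The plan is to reduce the statement to the local backbone infinity stability result already established. Recall that a vertex $v(i,j) \in V_\alpha$ of the extremal event supergraph corresponds to a pair $\alpha^{(i)}(j) = (\alpha_{\x_i}(j), \alpha_{\y_i}(j))$, where $\x_i = B(f_i)$ and $\y_i = B(f_i')$. By \defref{supergraph-weights}, the weight $\omega_\alpha(v(i,j))$ equals $\omega_{V}$ of the node $\alpha_{\x_i}(j)$ when that node lies in $V$ (i.e., is a genuine extremum of $f_i$), and equals $0$ otherwise (when $\alpha_{\x_i}(j) = \zero$); symmetrically, $\omega'_\alpha(v(i,j))$ is the node life of $\alpha_{\y_i}(j)$ if it is a genuine extremum of $f_i'$, and $0$ if $\alpha_{\y_i}(j) = \zero$. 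Thus $|\omega_\alpha(v(i,j)) - \omega'_\alpha(v(i,j))|$ is exactly $|w_x - w_{x'}|$ for the aligned pair $(x, x') = \alpha^{(i)}(j)$, where an empty node contributes weight $0$.

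First I would observe that since $f_i'$ is extremely close to $f_i$, it is in particular very close to $f_i$ (because $\delta_{f_i}/2 < \delta_{f_i}$), and moreover the alignments $\{\alpha_k\}_{k=1}^n$ used to define the extremal event DAG distance are optimal alignments between the paired backbones. The key additional input — established in \appref{stability} and invoked in this subsection — is that for extremely close functions the optimal backbone alignment is the direct alignment of \conref{direct-alignment}; hence $\alpha_i$ coincides with the direct alignment between $B(f_i)$ and $B(f_i')$. Then \lemref{nodeweights-easyalignment} applies directly: for every $(x, x') \in \im(\alpha_i)$ (including pairs involving an empty node, which that lemma's last case handles), we have $|w_x - w_{x'}| \leq \norm{f_i - f_i'}_\infty$.

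Combining these two observations, for any $v(i,j) \in V_\alpha$ the pair $\alpha^{(i)}(j)$ is an aligned pair in the direct alignment between $B(f_i)$ and $B(f_i')$, so
\[
|\omega_\alpha(v(i,j)) - \omega'_\alpha(v(i,j))| = |w_x - w_{x'}| \leq \norm{f_i - f_i'}_\infty,
\]
which is the claimed bound. The main obstacle is the bookkeeping at the interface: one must carefully check that the supergraph weight conventions (in particular the ``$0$ otherwise'' clauses for nodes not in $V$ or $V'$) line up precisely with the insertion case of \lemref{nodeweights-easyalignment}, and that the alignment $\alpha_i$ used in the supergraph is indeed the direct alignment — this last point relies on the optimality/uniqueness results deferred to the appendix, so the argument here is essentially an assembly of those pieces rather than new analysis.
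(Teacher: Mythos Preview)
Your proposal is correct and follows essentially the same approach as the paper: invoke \lemref{direct-alignment-optimal} to conclude that the optimal alignment $\alpha_i$ is the direct alignment, and then apply \lemref{nodeweights-easyalignment} to bound $|w_x - w_{x'}|$ by $\norm{f_i - f_i'}_\infty$. Your additional bookkeeping about the supergraph weight conventions and the empty-node case is accurate and slightly more explicit than the paper's proof, but the underlying argument is identical.
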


\begin{proof}
    By \lemref{direct-alignment-optimal}, the alignment $\alpha_i$ between $B(f_i)$ and
    $B(f'_i)$ is the direct
    alignment. In
    \lemref{nodeweights-easyalignment}, we showed the absolute difference in node weights
    between aligned nodes is bounded by $\norm{f_i-f'_i}_{\infty}$.
    Therefore,
    \[ |\omega_{\alpha}(v(i,j)) - \omega_{\alpha}'(v(i,j))| \leq \norm{f_i-f'_i}_{\infty}. \]
\end{proof}

From \lemref{node-weights-bound}, we can conclude that if $f'$ is extremely
close to $f$, then the backbone distance between $B(f)$ and $B(f')$ is bounded
by the number of extrema in $f'$ multiplied by $\norm{f-f'}_{\infty}$. This is
because the direct alignment has a length of $B(f')$ and the absolute difference in
aligned node weights for each pair is bounded by $\norm{f-f'}_{\infty}$.
\begin{cor}[Bound on Backbone Distance for Extremely Close Functions]
    Let $f, f': C \rightarrow \R$ be nicely tame functions such that $f'$ is
    extremely close to $f$. Let $k$ be the number of extrema of $f'$. Then,
    \[ d_{\mathcal{B}}(B(f), B(f')) \leq k\norm{f-f'}_{\infty}. \]
    \label{cor:local-backbone-stability}
\end{cor}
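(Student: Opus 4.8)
The plan is to use the direct alignment of \conref{direct-alignment} as one particular alignment between $B(f)$ and $B(f')$ and to bound, summand by summand, the cost expression appearing in \defref{backbone-dist}. Throughout, write $\varepsilon := \norm{f-f'}_{\infty}$, $\x := B(f)$, and $\x' := B(f')$.

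First I would observe that an extremely close function is in particular \emph{very close}: since $\varepsilon < \delta_f/2 < \delta_f$, \conref{direct-alignment} and the lemma stating that the direct alignment is an alignment both apply, so there is a valid alignment $\alpha \colon [k] \to \tilde{\x} \times \tilde{\x}'$, and by the indexing step of \conref{direct-alignment} its length $k$ equals the number of local extrema of $f'$. Because each node of $\x$ and of $\x'$ occurs exactly once in $\im(\alpha)$ and $(\zero,\zero) \notin \im(\alpha)$, the image has exactly $k$ elements; moreover the Box Lemma argument inside \conref{direct-alignment} matches every point of $D(f)$ --- equivalently, every local extremum of $f$ --- to a point of $D(f')$, so $\im(\alpha)$ contains no pair of the form $(x,\zero)$: every one of the $k$ elements is either a nontrivial match $(x,x')$ or an insertion $(\zero,x')$ of a node of $B(f')$.

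Since \defref{backbone-dist} defines $d_{\mathcal{B}}(B(f),B(f'))$ as an infimum over \emph{all} alignments, it is at most the value of the bracketed expression in \eqref{backbone-dist} at this particular $\alpha$, which by the previous paragraph equals
\[
  \sum_{(x,x') \in \im(\alpha)} |w_x - w_{x'}| \;+\; \sum_{(\zero,x') \in \im(\alpha)} w_{x'},
\]
a sum of exactly $k$ terms. I would then bound each term by $\varepsilon$: for a nontrivial match this is precisely \lemref{nodeweights-easyalignment}, and for an insertion $(\zero,x')$, the ``paired with an empty node'' case in the proof of \lemref{nodeweights-easyalignment} shows that the node weight $w_{x'} = \frac{1}{2}\pers_{f'}(t')$ of the corresponding extremum $(t',f'(t'))$ of $f'$ satisfies $w_{x'} \leq \varepsilon$. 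Summing the $k$ bounds gives $d_{\mathcal{B}}(B(f),B(f')) \leq k\varepsilon = k\norm{f-f'}_{\infty}$, as claimed.

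I do not expect a real obstacle here; the only point needing care is that \lemref{nodeweights-easyalignment}, as stated, asserts the bound only for nontrivial matches, so I would explicitly invoke the empty-node case from within its proof (or record it as a one-line standalone observation). Everything else is bookkeeping against \defref{backbone-dist}; note in particular that one does not need to know the direct alignment is \emph{optimal}, only that it is one admissible alignment.
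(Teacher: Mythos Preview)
Your proposal is correct and follows essentially the same approach as the paper: bound the cost of the length-$k$ direct alignment term by term using \lemref{nodeweights-easyalignment} (including the empty-node case handled in its proof), then invoke that $d_{\mathcal{B}}$ is an infimum over all alignments. Your remark that one does not need optimality of the direct alignment here is accurate and slightly cleaner than the paper's phrasing, which routes the bound through \lemref{node-weights-bound}.
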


Next we give a bound on the absolute difference in heights of aligned extrema
when every pair of functions is extremely close. In the following lemma we simplify notation as
follows:
\begin{itemize}
    \item $u_j := \alpha_{B(f_i)}(j)$
    \item $u'_j :=
            \alpha_{B(f'_i)}(j)$.
\end{itemize}
\begin{lem}[Bound on Difference in Heights of Aligned
    Extrema]\label{lem:aligned-heights-bound}
    Assume \assumpsref{edd-stability}. Let $v(i,j)$ be a vertex in $S_\alpha$.
    Let $\{ \alpha_i\}_{i=1}^n$ be the set of backbone alignments between $B(f_i)$ and
    $B(f'_i)$ that determines $S_{\alpha}$.
    Let $u_j \in B(f_i)$ and $u'_j \in B(f'_i)$. Let
    $(t, f_i(t))$ and $
    (t', f_i(t'))$ be the local extrema corresponding to $u_j$ and $u_j'$,
    respectively. Then,
    \[ |f_i(t) - f'_i(t')| \leq \norm{f_i-f'_i}_{\infty}. \]
\end{lem}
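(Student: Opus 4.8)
The plan is to reduce this to the $\varepsilon$-box estimate already established for node weights in \lemref{nodeweights-easyalignment}. First I would observe that since $f'_i$ is extremely close to $f_i$ it is in particular very close to $f_i$, so \lemref{direct-alignment-optimal} applies and the alignment $\alpha_i$ determining $S_\alpha$ is exactly the direct alignment of \conref{direct-alignment} between $B(f_i)$ and $B(f'_i)$. Writing $\varepsilon := \norm{f_i-f'_i}_\infty$, the hypothesis that both $u_j$ and $u'_j$ are genuine backbone nodes (neither equals $\zero$) means that the extrema $(t,f_i(t))$ and $(t',f'_i(t'))$ form one of the pairs explicitly matched in Step~(\ref{step:pairing}) of the direct alignment construction, so the goal becomes showing $|f_i(t)-f'_i(t')|\le\varepsilon$ for such a matched pair.

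Next I would reduce to the case of local minima: if the matched pair consists of local maxima, then they are local minima of $-f_i$ and $-f'_i$, the direct alignment restricted to maxima is by construction the direct alignment of $B(-f_i)$ and $B(-f'_i)$ restricted to their minima, and $\norm{(-f_i)-(-f'_i)}_\infty = \varepsilon$, so the minima case applied to $-f_i,-f'_i$ gives $|{-f_i(t)}-({-f'_i(t')})|\le\varepsilon$, which is the claim. For matched minima that do not represent the essential component, I would invoke \defref{extrema}: the associated truncated persistence points are $p=(f_i(t),\zeta_{f_i}(t))\in D(f_i)\setminus\Delta$ and $q=(f'_i(t'),\zeta_{f'_i}(t'))\in D(f'_i)\setminus\Delta$, and by the pairing rule in \conref{direct-alignment} both $p$ and $q$ lie in $\square_\varepsilon(p)$, so $\norm{p-q}_\infty\le\varepsilon$; comparing first coordinates then gives $|f_i(t)-f'_i(t')|=|p_1-q_1|\le\norm{p-q}_\infty\le\varepsilon$. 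For the case where one of the minima represents the essential component, I would rerun the Box Lemma argument from the proof of \lemref{nodeweights-easyalignment} to conclude that both do, so that the corresponding persistence points are $(f_i(t),\infty)$ and $(f'_i(t'),\infty)$, both contained in the $\varepsilon$-box about $(f_i(t),\infty)$, and comparing first coordinates again yields $|f_i(t)-f'_i(t')|\le\varepsilon$.

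I expect the main obstacle to be bookkeeping rather than analysis: one must confirm carefully that the vertex $v(i,j)$ of $S_\alpha$ really does correspond to a pair produced by Step~(\ref{step:pairing}) of \conref{direct-alignment} (using that $\alpha_i$ is the direct alignment and that $u_j,u'_j\neq\zero$), and must track which truncated persistence point is attached to which extremum across $D(f_i)$ and $D(f'_i)$. Once that identification is pinned down, the containment of both persistence points in a common $\varepsilon$-box is immediate, and the bound $|f_i(t)-f'_i(t')|\le\norm{f_i-f'_i}_\infty$ follows exactly as in the node-weight computation. A minor point worth a sentence in the final write-up is that the inequalities obtained are in fact strict ($<\varepsilon$), which is harmless since the lemma only asserts $\le$.
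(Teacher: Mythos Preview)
Your proposal is correct and follows essentially the same approach as the paper: invoke \lemref{direct-alignment-optimal} to identify $\alpha_i$ with the direct alignment, then use the fact from \conref{direct-alignment} that the paired persistence points lie in a common $\varepsilon$-box and compare first coordinates. The paper's proof is simply a terser version of yours, omitting the explicit minima/maxima and essential-component case splits that you spell out.
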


\begin{proof}
    By \lemref{direct-alignment-optimal}, $\alpha_i$ is the direct alignment for all $i \in [n]$. Recall
    from \conref{direct-alignment}, that both $(f_i(t), \zeta_{f_i}(t))$ and
    $(f'_i(t'), \zeta_{f'_i}(t'))$ are contained in the square centered at
    $(f_i(t), \zeta_{f_i}(t))$ of radius $\norm{f_i-f'_i}_{\infty}$. Hence, \mbox{$|f_i(t) -
    f'_i(t')| \leq \norm{f_i-f'_i}_{\infty}$}.
\end{proof}

We now have a bound on the maximum difference between node weights in extremal event
DAGs. What remains is bounding the difference in edge weights between extremal event
DAGs when  each pair of functions is extremely close. Let
$(v(i,k), v(j,m))$ be an edge in the extremal event
supergraph. We show
\[ |\omega_{\alpha}(v(i,k), v(j,m)) - \omega'_{\alpha}(v(i,k), v(j,m))|
    \leq  \max\{ \norm{f_i-f'_i}_\infty, \norm{f_j - f'_j}_\infty \}.
\]

For \lemref{intersection-difference-bound}, recall from \thmref{edge-weights} 
that $\varepsilon^*(t,s)$ is the infimum $\varepsilon$ for which $\varphi_{\varepsilon}(t)\cap \varphi_{\varepsilon}(s) \neq \emptyset$. 
Furthermore, we simplify notation as follows:
\begin{itemize}
    \item $u_k := \alpha_{B(f_i)}(k)$
    \item $u'_k := \alpha_{B(f'_i)}(k)$
    \item $s_m := \alpha_{B(f_j)}(m)$
    \item $s'_m := \alpha_{B(f'_j)}(m)$.
\end{itemize}
\begin{lem}[Bound on Difference of Extremal Interval Intersection Values]
    Assume \assumpsref{edd-stability}.
    Let
    $(v(i,k), v(j,m)) \in E_\alpha$ such that $i \neq j$, and all four nodes
    defining these two edges, $u_k$, $u'_k$, $s_m$, $s'_m$ are not empty nodes.
    Suppose the extrema these nodes represent are $(t,
    f_i(t))$, $(s, f_j(s))$, $(t', f'_i(t'))$, and
    $(s', f'_j(s'))$, respectively. Then,
    \[ |\varepsilon^*(t, s) - \varepsilon^*(t', s')| \leq \varepsilon_{i,j}\]
    where $\varepsilon_{i,j} := \max\{ \norm{f_i-f'_i}_\infty, \norm{f_j - f'_j}_\infty \}$.
    \label{lem:intersection-difference-bound}
\end{lem}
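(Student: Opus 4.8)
Write $\eta_i := \norm{f_i - f'_i}_\infty$ and $\eta_j := \norm{f_j - f'_j}_\infty$, so that $\varepsilon_{i,j} = \max\{\eta_i,\eta_j\}$. The plan is to establish the two one-sided bounds $\varepsilon^*(t',s') \leq \varepsilon^*(t,s) + \varepsilon_{i,j}$ and $\varepsilon^*(t,s) \leq \varepsilon^*(t',s') + \varepsilon_{i,j}$, which together give the claim. Both will follow from a single pair of containments relating the extremal intervals of a function and of an extremely close one. By \lemref{direct-alignment-optimal}, the alignment $\alpha_i$ is the direct alignment of \conref{direct-alignment}, so $t$ is paired with $t'$ and $s$ with $s'$; running the argument on $-f_j,-f'_j$ if needed, I may assume $t,t'$ are minima of $f_i,f'_i$ and $s,s'$ minima of $f_j,f'_j$. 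The key claim is that for every $\varepsilon \geq 0$,
\[ \varphi^{f'_i}_{\varepsilon}(t') \subseteq \varphi^{f_i}_{\varepsilon+\eta_i}(t), \qquad \varphi^{f_i}_{\varepsilon}(t) \subseteq \varphi^{f'_i}_{\varepsilon+\eta_i}(t'), \]
and symmetrically with $s,s',f_j,f'_j,\eta_j$ in place of $t,t',f_i,f'_i,\eta_i$.

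To prove the first containment I would proceed in two steps. \emph{Pointwise step:} $\varphi^{f'_i}_\varepsilon(t')$ is the connected component containing $t'$ of $\{x : f'_i(x) < f'_i(t')+2\varepsilon\}$; since $f_i \leq f'_i + \eta_i$ pointwise and $f'_i(t') \leq f_i(t) + \eta_i$ by \lemref{aligned-heights-bound}, every such $x$ satisfies $f_i(x) < f_i(t) + 2(\varepsilon+\eta_i)$, so $\varphi^{f'_i}_\varepsilon(t')$ is a connected subset of $\{x : f_i(x) < f_i(t)+2(\varepsilon+\eta_i)\}$, whose component containing $t$ is exactly $\varphi^{f_i}_{\varepsilon+\eta_i}(t)$. \emph{Connectedness step:} it then suffices to check that $t$ and $t'$ lie in the same component of that last set, i.e.\ that $\varphi^{f'_i}_\varepsilon(t')$ meets $\varphi^{f_i}_{\varepsilon+\eta_i}(t)$. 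Here I would use that $t' \in \varphi^{f_i}_{\delta_{f_i}}(t)$ by \conref{direct-alignment} (so that the segment between $t$ and $t'$ lies in a single $f_i$-extremal interval), together with monotonicity of $\varepsilon \mapsto \varphi^{f_i}_\varepsilon(t)$, the disjointness of the $\delta_{f_i}$-extremal intervals of distinct minima (Proposition~1 of \cite{BerryUsing20}), and the non-containment of adjacent extremal intervals (\lemref{properties}, \stmtref{nodelife-extrema-containment}), to rule out a point of $f_i$-height at least $f_i(t) + 2(\varepsilon+\eta_i)$ separating $t$ from $t'$. The second containment is obtained the same way, interchanging $f_i$ and $f'_i$ in the pointwise step and reusing the same pairing for the connectedness step; and both statements then hold verbatim for $s,s'$ with $\eta_j$.

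Granting the containments, I would fix any $\varepsilon > \varepsilon^*(t,s)$, so $\varphi^{f_i}_\varepsilon(t) \cap \varphi^{f_j}_\varepsilon(s) \neq \emptyset$. Applying the second containment for $i$ and its analogue for $j$ gives $\varphi^{f'_i}_{\varepsilon+\eta_i}(t') \cap \varphi^{f'_j}_{\varepsilon+\eta_j}(s') \neq \emptyset$; since $\eta_i,\eta_j \leq \varepsilon_{i,j}$, monotonicity of the extremal intervals in their radius upgrades this to $\varphi^{f'_i}_{\varepsilon+\varepsilon_{i,j}}(t') \cap \varphi^{f'_j}_{\varepsilon+\varepsilon_{i,j}}(s') \neq \emptyset$, whence $\varepsilon^*(t',s') \leq \varepsilon + \varepsilon_{i,j}$. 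Letting $\varepsilon \downarrow \varepsilon^*(t,s)$ yields $\varepsilon^*(t',s') \leq \varepsilon^*(t,s) + \varepsilon_{i,j}$. Repeating the computation with the first containment (from $f'$ to $f$) gives $\varepsilon^*(t,s) \leq \varepsilon^*(t',s') + \varepsilon_{i,j}$, and combining the two bounds proves $|\varepsilon^*(t,s)-\varepsilon^*(t',s')| \leq \varepsilon_{i,j}$.

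The hard part will be the connectedness step: showing that passing from $f_i$ to the extremely close $f'_i$ cannot create an $f_i$-barrier between the paired minima $t$ and $t'$ inside the enlarged sublevel set. This is precisely where the factor $\delta_{f_i}/2$ in the definition of \emph{extremely close} is exploited, and where one must genuinely use that $t$ and $t'$ are the same extremum rather than merely nearby critical points; everything else is monotonicity of $\varepsilon\mapsto\varphi_\varepsilon$ together with the bookkeeping $\varepsilon_{i,j}=\max\{\eta_i,\eta_j\}$.
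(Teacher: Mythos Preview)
Your proof is correct and follows essentially the same route as the paper's: reduce to minima, establish the nested containment $\varphi_\varepsilon(\cdot) \subset \varphi_{\varepsilon+\varepsilon_{i,j}}(\cdot')$ via the pointwise inequalities coming from \lemref{aligned-heights-bound} together with $\|f_i-f'_i\|_\infty \leq \varepsilon_{i,j}$, and then let $\varepsilon \downarrow \varepsilon^*$ to obtain both one-sided bounds. Your explicit separation of the pointwise step from the connectedness step (checking that $t$ and $t'$ lie in the same component of the enlarged sublevel set) is a worthwhile refinement---the paper simply asserts the endpoint inequalities $\le(\varphi_{\varepsilon+\varepsilon_{i,j}}(t')) < \le(\varphi_\varepsilon(t))$ and $\re(\varphi_{\varepsilon+\varepsilon_{i,j}}(t')) > \re(\varphi_\varepsilon(t))$ without isolating this issue.
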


\begin{proof}
    Consider the case that both $(t, f_i(t))$ and $(s, f_j(s))$ are local
    minima. In the case that one or both are local maxima, we replace one, or
    both $f_i, f_j$ by the corresponding negative function and convert the
    problem to a problem about two minima.  Hence, only considering the case
    that both are local minima is
    sufficient. Additionally, we omit superscripts on $\varepsilon$-extremal
    intervals to avoid notational clutter. An input of $t$ or $s$ indicates the
    $\varepsilon$-extremal interval is computed from $f_i$ or $f_j$,
    respectively. An input of $t'$ or $s'$ indicates the $\varepsilon$-extremal
    interval is computed from $f'_i$ or $f'_j$, respectively. For convenience of exposition, let $\varepsilon_i =
    \norm{f_i-f'_i}_\infty$, $\varepsilon_j = \norm{f_j-f'_j}_\infty$, and
    \mbox{$\varepsilon_{ij}:= \max\{ \varepsilon_i, \varepsilon_j\}$}.

    Suppose $\varepsilon^*(t, s)<\varepsilon^*(t', s')$. Let $\varepsilon >
    \varepsilon^*(t, s)$. Then $\varphi_\varepsilon(t) \cap
    \varphi_\varepsilon(s) \neq \emptyset$.  By
    \lemref{aligned-heights-bound},
    \[ |f_i(t) - f'_i(t')| \leq \varepsilon_i \leq \varepsilon_{i,j}. \]
    Hence, $f_i(t) \leq f'_i(t')+\varepsilon_{i,j}$. Additionally, since
    $\norm{f_i-f'_i}_{\infty} \leq \varepsilon_{i,j}$,
    \[(f'_i-\varepsilon_{i,j})(x) \leq f_i(x) \text{ for all } x \in C. \]
    These two inequalities imply $f_i(t)+\varepsilon \leq
    f'_i(t')+\varepsilon+\varepsilon_{i,j}$ and
    $(f'_i-\varepsilon-\varepsilon_{i,j})(x) \leq f_i(x)-\varepsilon$ for all $x
    \in C$.
    Recall
    \begin{align*}
    & \varphi_{\varepsilon}(t) \text{ is the connected component of }
    (f_i-\varepsilon)^{-1}(f_i(t)+\varepsilon) \text{ containing } t, \\
    & \varphi_{\varepsilon+\varepsilon_{i,j}}(t') \text{ is the connected
    component of }
    (f'_i-\varepsilon-\varepsilon_{i,j})^{-1}(f'_i(t')+\varepsilon+\varepsilon_{i,j})
    \text{ containing } t'.
    \end{align*}
    Therefore, $\le(\varphi_{\varepsilon+\varepsilon_{i,j}}(t')) <
    \le(\varphi_\varepsilon(t))$ and
    $\re(\varphi_{\varepsilon+\varepsilon_{i,j}}(t')) > \re(\varphi_\varepsilon(t))$.
    We get $\varphi_\varepsilon(t) \subset
    \varphi_{\varepsilon+\varepsilon_{i,j}}(t')$. Similarly, we get
    $\varphi_\varepsilon(s) \subset
    \varphi_{\varepsilon+\varepsilon_{i,j}}(s')$. The non-empty intersection of
    \mbox{$\varphi_\varepsilon(t) \cap \varphi_\varepsilon(s)$} implies
    \mbox{$\varphi_{\varepsilon+\varepsilon_{i,j}}(t') \cap
    \varphi_{\varepsilon+\varepsilon_{i,j}}(s')\neq \emptyset$.} This non-empty
    intersection holds true for all $\varepsilon > \varepsilon^*(t, s)$. Since
    $\varepsilon^*(t, s)<\varepsilon^*(t', s')$, we get 
    \[ \varepsilon^*(t', s')\leq\varepsilon_{i,j}+\varepsilon^*(t, s).\] Therefore,
    \[\varepsilon^*(t', s') - \varepsilon^*(t, s) \leq \varepsilon_{i,j}.\]
    In the case $\varepsilon^*(t', s') < \varepsilon^*(t, s)$, we get
    $\varphi_\varepsilon(t') \subset \varphi_{\varepsilon+\varepsilon_{i,j}}(t)$
    and $\varphi_\varepsilon(s') \subset
    \varphi_{\varepsilon+\varepsilon_{i,j}}(s)$ by symmetry. Therefore,
    \[ \varepsilon^*(t, s) - \varepsilon^*(t', s') \leq \varepsilon_{i,j}.\]
    Combining these two cases, we conclude
    \[ |\varepsilon^*(t, s) - \varepsilon^*(t',s')| \leq \varepsilon_{i,j}. \]
\end{proof}

\begin{figure}
    \includegraphics[width=0.6\textwidth]{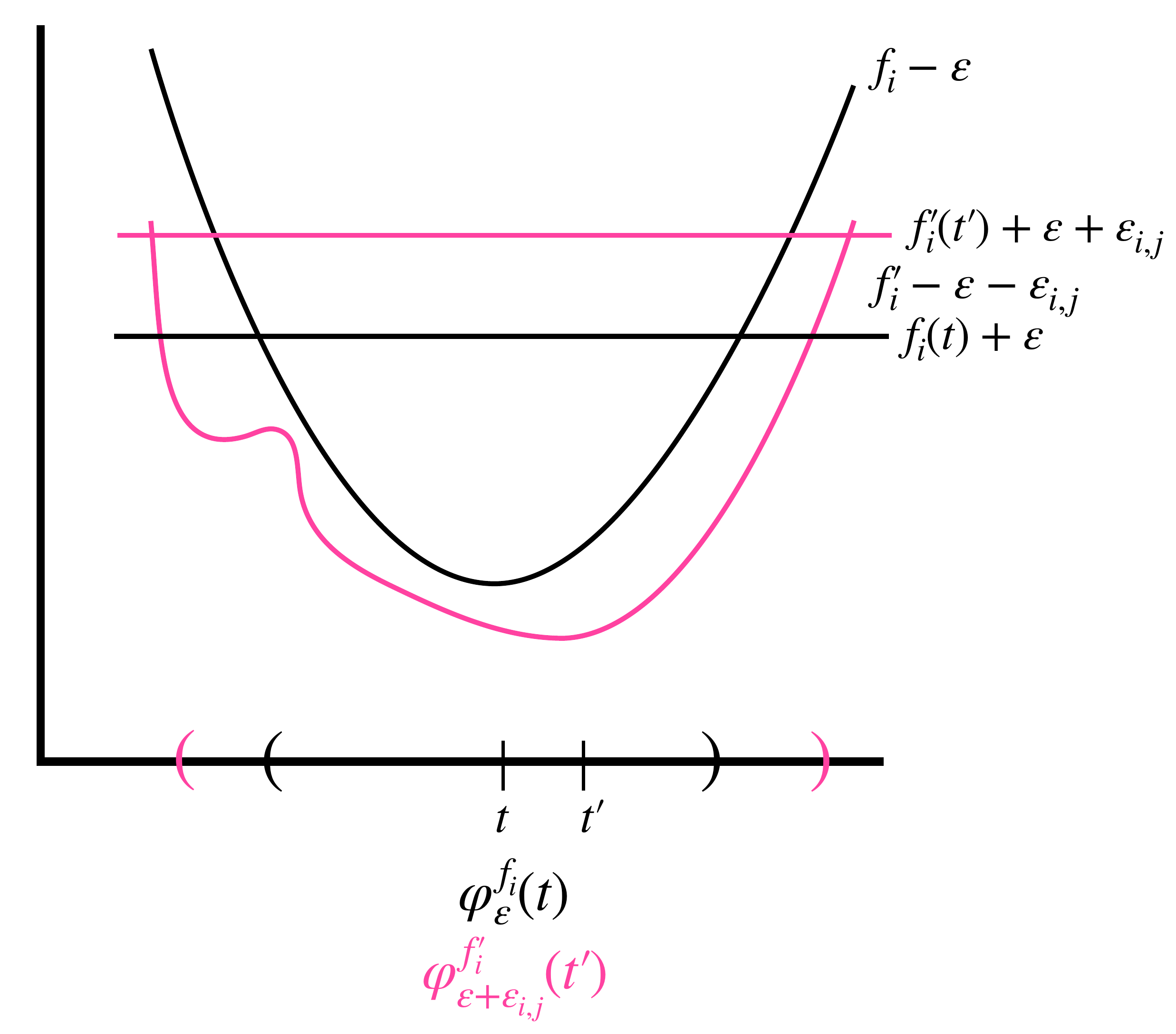}

    \caption{Nested $\varepsilon$-extremal intervals. We see that
        $\varphi_{\varepsilon}^{f_i}(t) \subset
        \varphi_{\varepsilon+\varepsilon_{i,j}}^{f'_i}(t')$. }
    \label{fig:intersection-bound}
\end{figure}

Next, we can bound the absolute difference in aligned edge weights. 

\begin{lem}[Bound on Differences in Edge Weights]\label{lem:edge-bound}
Assume \assumpsref{edd-stability}. Then,
    \[ |\omega_\alpha(v(i,k), v(j, m)) -
        \omega'_\alpha(v(i,k), v(j,m))|
        \leq  \max\{ \norm{f_i-g_i}_\infty, \norm{f_j -
        g_j}_\infty \}.
    \]
\end{lem}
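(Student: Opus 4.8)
The plan is to combine \thmref{edge-weights} with the bounds already established in \lemref{node-weights-bound} and \lemref{intersection-difference-bound}, treating separately the cases that distinguish how the edge weight is computed. First I would recall that by \thmref{edge-weights}, for an edge $(v(i,k),v(j,m))$ the edge weight is a minimum: if $i=j$ it equals $\min\{\omega_V(u_k),\omega_V(s_m)\}$ (restricting to the $i$-th backbone, a statement about node lives only), while if $i\neq j$ it equals $\min\{\omega_V(u_k),\omega_V(s_m),\varepsilon^*(t,s)\}$. The same holds for the primed DAG with the primed extrema. So in all cases $\omega_\alpha(v(i,k),v(j,m))$ and $\omega'_\alpha(v(i,k),v(j,m))$ are each a minimum of two or three quantities, and the corresponding quantities in each list are pairwise close: the node lives differ by at most $\norm{f_i-f'_i}_\infty$ and $\norm{f_j-f'_j}_\infty$ respectively by \lemref{node-weights-bound}, and the intersection values $\varepsilon^*(t,s)$, $\varepsilon^*(t',s')$ differ by at most $\varepsilon_{i,j}=\max\{\norm{f_i-f'_i}_\infty,\norm{f_j-f'_j}_\infty\}$ by \lemref{intersection-difference-bound}.

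The key elementary fact I would invoke is that the min operation is $1$-Lipschitz coordinatewise: if $|a_\ell - b_\ell|\leq c$ for every $\ell$, then $|\min_\ell a_\ell - \min_\ell b_\ell|\leq c$. Applying this with $c=\varepsilon_{i,j}$ — which dominates both $\norm{f_i-f'_i}_\infty$ and $\norm{f_j-f'_j}_\infty$ and also dominates the $\varepsilon^*$ discrepancy — gives immediately $|\omega_\alpha(v(i,k),v(j,m))-\omega'_\alpha(v(i,k),v(j,m))|\leq \varepsilon_{i,j}$, which is the claimed bound (the statement writes $g_i$ for what is $f'_i$ in \assumpsref{edd-stability}). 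The case $i=j$ is even simpler since only node lives enter, and there the bound is in fact $\max\{\norm{f_i-f'_i}_\infty\}=\norm{f_i-f'_i}_\infty\le\varepsilon_{i,j}$.

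The one subtlety to handle carefully is the possibility that one or more of the four nodes $u_k,u'_k,s_m,s'_m$ is the empty node $\zero$, since an edge in $E_\alpha$ may come from an edge present in only one of $D$, $D'$ (recall $E\cup E'\subset E_\alpha$, and a vertex of $V_\alpha$ can align a real extremum of one DAG with an insertion of the other). By \defref{supergraph-weights}, whenever the relevant vertex or edge is not a genuine vertex/edge of $D$ (resp. $D'$) the weight $\omega_\alpha$ (resp. $\omega'_\alpha$) is set to $0$; moreover in \lemref{nodeweights-easyalignment} the node weight paired with an empty node is itself shown to be $\leq \varepsilon$. So in the degenerate cases one side of the difference is $0$ and the other is a minimum of quantities each bounded by $\varepsilon_{i,j}$ (e.g. a node life $\le \norm{f_i-f'_i}_\infty$ by \lemref{nodeweights-easyalignment}), so the difference is still $\le \varepsilon_{i,j}$. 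Enumerating these few degenerate configurations — which side's edge is missing, and which of the two endpoint nodes is empty — is the main bookkeeping obstacle, but each configuration reduces by the same Lipschitz-of-min argument; there is no new analytic content beyond Lemmas \ref{lem:node-weights-bound} and \ref{lem:intersection-difference-bound}.
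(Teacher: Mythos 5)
Your proposal is correct and is essentially the paper's argument in compressed form: the $1$-Lipschitz property of the coordinatewise minimum, applied to the three corresponding quantities bounded by \lemref{node-weights-bound} and \lemref{intersection-difference-bound}, is exactly what the paper verifies by hand in the fifteen explicit cases of \appref{bound-edges}, and your handling of the zero-weight configurations matches the paper's. One small imprecision: when an endpoint is aligned with an insertion, only the node life of that particular extremum is guaranteed to be at most $\varepsilon_{i,j}$ (not ``each'' quantity in the list), but since the nonzero side is a minimum, bounding a single entry already suffices.
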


\begin{proof}
    Let~$(t, f_i(t))$, $(s, f_j(s))$,
    $(t', f'_i(t'))$, and $(s', f'_j(s')) $ be the local extrema corresponding
    to nodes $\alpha_{B(f_i)}(k)$, $\alpha_{B(f_j)}(m)$, $\alpha_{B(f'_i)}(k)$,
    and~$\alpha_{B(f'_j)}(m)$, respectively. Additionally, we omit superscripts
    on~$\varepsilon$-extremal intervals to avoid notational clutter. An input of
    $t$ or $s$ indicates the $\varepsilon$-extremal interval is computed from
    $f_i$ or $f_j$, respectively. An input of $t'$ or $s'$ indicates the
    $\varepsilon$-extremal interval is computed from $f'_i$ or $f'_j$,
    respectively.

    We prove this lemma by discussing several cases. First, we assume that $\omega_\alpha(v(i,k),v(j,m))$ and
    $\omega'_\alpha(v(i,k),v(j,m))$ are non-zero. Then, by definition,
    \begin{align*}
    E_{\diff} &:= |\omega_\alpha(v(i,k), v(j,m)) - \omega'_\alpha(v(i,k), v(j,m))|\\
    &= |\min\{\frac{1}{2}\pers(t), \frac{1}{2}\pers(s), \varepsilon^*(t, s)\} - \min\{\frac{1}{2}\pers(t'),\frac{1}{2}\pers(s'), \varepsilon^*(t', s') \}|.
    \end{align*}

    Let $\varepsilon_i = \norm{f_i-g_i}_\infty$, $\varepsilon_j = \norm{f_j -
    g_j}_\infty$, and $\varepsilon_{i,j} = \max\{\varepsilon_i,
    \varepsilon_j\}$. Now we begin to go through the cases. Note that
    $E_{\diff}$ can be one of nine absolute differences depending on which value the minimum is achieved. In the cases where the
    difference comes from node weights of the same node or the extremal
    intersection values, we can apply either \lemref{node-weights-bound} or
    \lemref{intersection-difference-bound}. In all other cases we split the equality
    $ E_{\diff} = |U_1-U_2|$ into two cases $E_{\diff} = U_1-U_2$ or $E_{\diff}
    = U_2-U_1$. We replace the larger term with one of the possible values from $E_{\diff}$ so that we can apply \lemref{node-weights-bound} or \lemref{intersection-difference-bound}. For example, in one of the cases, if we assume $E_{\diff} = \frac{1}{2}(\pers(s') - \pers(t))$. Then, $\pers(s') \leq \pers(t')$. Applying \lemref{node-weights-bound}, we find
            \[ E_{\diff} = \frac{1}{2}(\pers(s') - \pers(t)) \leq \frac{1}{2}(\pers(t') - \pers(t)) \leq \varepsilon_i \leq \varepsilon_{i,j}. \]
    All together, we have 15 cases. We explicitly write these 15 cases out in \appref{bound-edges}.
Based on these bounds, we conclude in the case that $\omega_\alpha(v(i,k),v(j,m))$ and
    $\omega'_{\alpha}(v(i,k),v(j,m))$ are non-zero, $E_{\diff} \leq \varepsilon_{i,j}$.

    Now assume one of $\omega_\alpha(v(i,k), v(j,m))$ or
    $\omega'_{\alpha}(v(i,k), v(j,m))$ is equal to zero. Without loss of
    generality, suppose $\omega'_\alpha(v(i,k), v(j,m)) = 0$. Then,
    \[ E_{\diff} = \min\{ \frac{1}{2}\pers(t), \frac{1}{2}\pers(s), \varepsilon^*(t, s) \}. \]
    Applying \lemref{node-weights-bound}, we find
    \[ \frac{1}{2}\pers(t) \leq \varepsilon_i \leq \varepsilon_{i,j}, \quad \frac{1}{2}\pers(s)
        \leq \varepsilon_j \leq \varepsilon_{i,j}. \]
    If $E_{\diff} = \varepsilon^*(t, s)$, then $\varepsilon^*(t, s) \leq \frac{1}{2}\pers(t)
    \leq \varepsilon_i \leq \varepsilon_{i,j}$. Hence, $E_{\diff} \leq
    \varepsilon_{i,j}$.

    Combining all the cases, we can conclude
    \[ |\omega_{\alpha}(v(i,k), v(j,m)) - \omega'_{\alpha}(v(i,k), v(j,m))|
    \leq \varepsilon_{i,j}.\]
\end{proof}

Using the bounds we established between aligned node and edge weights in
the extremal event supergraph arising from extremely close functions, we can
bound the extremal event DAG distance.

\begin{thm}[Extremal Event DAG Stability]
    Assume \assumpsref{edd-stability}.
    Let $k_i$ be the number of extrema in $f'_i$. Let
    $\varepsilon_i := \norm{f_i-f'_i}_{\infty}$ and $\varepsilon_{i,j} :=
    \max\{\varepsilon_i, \varepsilon_j\}$. Let $P$ be the set of unordered pairs
    between the first $n$ positive integers.  Let $S_{\alpha}|_{\alpha_i}$ be the 
    restricted subgraph of $S_{\alpha}$ that is induced by $\alpha_i$. Furthermore, let $i\neq j$ and
    denote $E_{i,j}$ to be the set of cross edges in $S_{\alpha}$, that is,
    $(u,v) \in E_{i,j}$ if $u \in S_{\alpha}|_{\alpha_i}$ and $v \in
    S_{\alpha}|_{\alpha_j}$. Then,
    \[ d_{ED}(D, D')
        \leq \sum_{i=1}^n k_i\varepsilon_i+\sum_{i=1}^n\binom{k_i}{2}\varepsilon_i
        + \sum_{(i,j) \in P} |E_{i,j}| \varepsilon_{i,j}. \]
    \label{thm:extremal-DAG-stability}
\end{thm}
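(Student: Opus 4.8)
The plan is to expand $d_{ED}(D,D')$ via \defref{EDD} and bound its two summands separately, always working with the set of alignments $\alpha=\{\alpha_i\}_{i=1}^n$ fixed in \assumpsref{edd-stability}. By \lemref{direct-alignment-optimal}, each $\alpha_i$ is the direct alignment of \conref{direct-alignment} between $B(f_i)$ and $B(f_i')$ and is the (unique) optimal backbone alignment; hence $S_\alpha$ is the supergraph built from these $\alpha_i$, and the infimum over optimal alignments in \defref{EDD} is attained at $\alpha$. For the first summand, applying \corref{local-backbone-stability} to each pair $(f_i,f_i')$ gives $d_{\mathcal B}(\x_i,\y_i)=d_{\mathcal B}(B(f_i),B(f_i'))\leq k_i\varepsilon_i$, so $\sum_{i=1}^n d_{\mathcal B}(\x_i,\y_i)\leq\sum_{i=1}^n k_i\varepsilon_i$.

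For the second summand, $\sum_{(u,v)\in E_\alpha}\lvert\omega_\alpha(u,v)-\omega'_\alpha(u,v)\rvert$, I split $E_\alpha$ into the \emph{within-component} edges lying inside a single $S_\alpha|_{\alpha_i}$ and the \emph{cross} edges, which lie in $\bigcup_{\{i,j\}\in P}E_{i,j}$; since an edge $(v(i,j),v(k,l))$ has either $i=k$ or $i\neq k$, these two families are disjoint and exhaust $E_\alpha$. In the direct alignment only nodes of $B(f_i')$ can be aligned with insertions, so $\alpha_{\y_i}$ never takes the value $\zero$; combined with the fact that backbones carry the total order of their domain coordinates, this shows that for every $j<l$ the pair $(\alpha_{\y_i}(j),\alpha_{\y_i}(l))$ lies in $E'$, hence $(v(i,j),v(i,l))\in E_\alpha$, so $S_\alpha|_{\alpha_i}$ contributes exactly $\binom{k_i}{2}$ edges. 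On each within-component edge, \lemref{edge-bound} (with both function indices equal to $i$) gives the bound $\varepsilon_i$: when both endpoints come from genuine extrema this uses \thmref{edge-weights}(1) together with $\lvert\min\{a,b\}-\min\{a',b'\}\rvert\leq\max\{\lvert a-a'\rvert,\lvert b-b'\rvert\}$ and \lemref{node-weights-bound}, and when one endpoint is an insertion node $\omega_\alpha$ vanishes on that edge while $\omega'_\alpha$ equals a half-persistence that is $\leq\varepsilon_i$ by \lemref{node-weights-bound}. On each cross edge in $E_{i,j}$, \lemref{edge-bound} gives the bound $\varepsilon_{i,j}$ (its analytic input being \lemref{intersection-difference-bound}). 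Summing over all edges,
\[
 \sum_{(u,v)\in E_\alpha}\lvert\omega_\alpha(u,v)-\omega'_\alpha(u,v)\rvert
 \;\leq\;\sum_{i=1}^n\binom{k_i}{2}\varepsilon_i+\sum_{(i,j)\in P}\lvert E_{i,j}\rvert\,\varepsilon_{i,j}.
\]

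Adding the two bounds gives the stated inequality. Every analytic estimate needed is already isolated in \corref{local-backbone-stability}, \lemref{node-weights-bound}, \lemref{intersection-difference-bound}, and \lemref{edge-bound}, so what remains is bookkeeping, which I expect to be the main (if routine) obstacle: one must verify that the partition of $E_\alpha$ into the blocks $S_\alpha|_{\alpha_i}$ and the $E_{i,j}$ is disjoint and exhaustive, that each within-component block really has $\binom{k_i}{2}$ edges (which relies on $\alpha_{\y_i}$ being a full matching of $\y_i$, so that no ordered pair is lost to a double insertion), and that $\omega_\alpha$ and $\omega'_\alpha$ behave on edges incident to insertion vertices exactly as the cases of \lemref{edge-bound} and \lemref{node-weights-bound} require, so that those lemmas apply uniformly across all of $E_\alpha$.
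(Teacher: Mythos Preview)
Your proposal is correct and follows essentially the same route as the paper: decompose $d_{ED}(D,D')$ into the backbone-distance summand and the edge-weight summand, bound the former by \corref{local-backbone-stability}, split $E_\alpha$ into within-backbone edges (of which there are $\binom{k_i}{2}$) and cross edges $E_{i,j}$, and bound each edge's contribution via \lemref{edge-bound}. Your added justification that $S_\alpha|_{\alpha_i}$ has exactly $\binom{k_i}{2}$ edges (because the direct alignment never sends $\alpha_{\y_i}$ to $\zero$) is a detail the paper asserts without comment, so your version is slightly more explicit but otherwise identical.
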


\begin{proof}
    Let $E_i$ be the set of edges in the extremal event supergraph restricted to the
    subgraph induced by $\alpha_i$. Hence,
    for each edge $(u,v) \in E_i$, we have $u$, $v \in S_{\alpha}|_{\alpha_i}$.  The
    extremal event DAG distance between $D$ and $D'$ can be expressed as
    the sum of three terms.

    \begin{equation*}
    \begin{split}
        d_{ED}(D, D')
            &= \sum_{i=1}^n d_{\mathcal{B}}(B(f_i), B(f'_i)) + \sum_{(u,v) \in
    E_i}|\omega_\alpha(u,v) - \omega'_\alpha(u,v)|\\
        &\quad + \sum_{(u,v) \in E_{i,j}}|\omega_\alpha(u,v) - \omega'_\alpha(u,v)|.
    \end{split}
    \end{equation*}

    The first term is the sum of backbone distances between $B(f_i)$ and
    $B(f'_i)$ for each $i \in [n]$. The second term is the sum of the absolute
    difference in edge weights where the nodes defining each edge are from the
    same backbone alignment. The third term is the sum of the absolute
    difference in edge weights where the nodes defining each edge are contained
    in different backbones.

    Applying \corref{local-backbone-stability}, we know that
    $d_{\mathcal{B}}(B(f_i), B(f'_i)) \leq k_i \varepsilon_i$. Hence, we can
    bound the first term
    \[ \sum_{i=1}^n d_{\mathcal{B}}(B(f_i), B(f'_i)) \leq \sum_{i=1}^n
    k_i\varepsilon_i.\]

    Applying \lemref{edge-bound}, and noting both $u, v \in S_{\alpha}|_{\alpha_i}$, we know that if $(u,v) \in E_i$, then
    $|\omega_\alpha(u,v) - \omega'_\alpha(u,v)| \leq
    \varepsilon_i$. There are $\binom{k_i}{2}$ edges in $E_i$. Hence, we can
    bound the second term by
    \[\sum_{(u,v) \in E_i}|\omega_\alpha(u,v) - \omega'_\alpha(u,v)|
        \leq \sum_{i=1}^n \binom{k_i}{2}\varepsilon_i. \]

    Let $|E_{i,j}|$ be the cardinality of the set $E_{i,j}$. Applying
    \lemref{edge-bound}, we can bound the third term
    \[ \sum_{(u,v) \in E_{i,j}}|\omega_\alpha(u,v) - \omega'_\alpha(u,v)|
        \leq \sum_{(i,j) \in P} |E_{i,j}| \varepsilon_{i,j}. \]

    Combining the three bounds we find that

    \[ d_{ED}(D, D')
        \leq \sum_{i=1}^n k_i\varepsilon_i+\sum_{i=1}^n\binom{k_i}{2}\varepsilon_i
        + \sum_{(i,j) \in P} |E_{i,j}| \varepsilon_{i,j}.  \]
\end{proof}

Note that Theorem~\ref{thm:extremal-DAG-stability} requires that in collections
$\{f_i\}_{i=1}^n$ and $\{ f_i'\}_{i=1}^n$ each pair $(f_j, f_j')$ is extremely
close. Therefore, this is a local stability result.
\thmref{backbone-infinity-stability} offers an approach to use a local
result (Lemma~\ref{lem:localbbinfty-stability}) to prove a global result. This
globalization approach uses a homotopy that is sampled sufficiently densely so
that each consecutive pair satisfies the assumptions of the local result.
However, the key ingredient used to aggregate the local results to a global
estimate is a  triangle inequality. It remains an important open question
whether extremal event DAG distance satisfies the triangle inequality. If so,
than a similar globalization process would yield a version of Theorem
~\ref{thm:extremal-DAG-stability} without restrictions on closeness of $f_j$ and
$f_j'$.

\section{Applications}
\label{sec:applications}

We apply the extremal event DAG construction and distance to two applications:
(1) quantifying similarity in replicate experiments of microarray yeast cell cycle
data and (2) providing quantitative evidence that an intrinsic oscillator drives the
blood stage cycle of the malaria parasite \textit{Plasmodium falciparum}. We discuss how to apply 
the extremal event DAG construction and distance to the discrete setting in the supplementary materials.
These two datasets were analyzed in \cite{BerryUsing20} and \cite{SmithAn20}, respectively, 
using a directed maximal common edge subgraph (DMCES) metric that compared $\varepsilon$-DAGs (recall $\varepsilon$-DAGs described after \defref{extremal-DAG})
with a sequence of fixed $\varepsilon$. Because of the computational complexity of computing the 
DMCES metric, these calculations were done on a limited number of time series with a limited number 
of extrema per time series, i.e., on less noisy data. Additionally, since the $\varepsilon$-DAGs specify 
a value of a parameter $\varepsilon$, the experiments were performed over a range of $\varepsilon$ 
between $0$ and $0.15$. The construction of extremal event DAGs does not need the value of 
$\varepsilon$ to be specified. Additionally, both the extremal event DAG construction and distance 
can be computed much more quickly than the DMCES metric which has an exponential time complexity. 
This all means we can compute distances over much larger sets of genes in a significantly 
shorter amount of time.

\subsection{Yeast Cell Cycle Data}

The first dataset consists of microarray time series transcriptomics from the
yeast \textit{Saccharomyces cerevisiae},  published in \cite{OrlandoGlobal08}. The yeast cell cycle is 
well studied and has experimental validation \cite{GuntherMandatory14, ChoThe19, HaaseTopology14, SimonSerial01}.
The amplitude of the data has been normalized between -0.5 and 0.5 and its phase
has been shifted by alignment using CLOCCS analysis,  see ``Appendix A: Yeast
Data Analysis" in \cite{BerryUsing20}. Using the CLOCCS analysis, the replicate
experiments were aligned so that the time series start at the same point in the
yeast cell cycle. Furthermore, the data were truncated to one period so that the
data analysis focuses on the extrema from a synchronized cell population, since
the production of daughter cells causes increasing levels of cell division
asynchrony that reduces the periodic signal.
We analyze two collections of time series data $\mathcal{D}_1$ and
$\mathcal{D}_2$ that each consists of 16 genes and 265 time points.

We perform three different comparison computations:

\begin{enumerate}

    \item We focus on a subset of $\mathcal{D}_1$ and $\mathcal{D}_2$ that
        consists of the time series for four genes: SWI4, YOX1, NDD1, and
        HCM1. We denote these sub-datasets as $\mathcal{D}_1'$ and
        $\mathcal{D}_2'$ respectively. We then compute the extremal event DAG distance
        between the extremal event DAGs of $\mathcal{D}_1'$ and $\mathcal{D}_2'$,
        $$d_{ED}(\DAG(\mathcal{D}_1'), \DAG(\mathcal{D}_2')).$$

    \item We consider dataset $\mathcal{D}_2'$ but switch labels between time series for $CLB2$ and $YOX1$. We call this
        mislabeled dataset $\mathcal{D}'_3$. Then we compute
        $$d_{ED}(\DAG(\mathcal{D}'_1), \DAG(\mathcal{D}'_3)).$$ The comparison
        between $d_{ED}(\DAG(\mathcal{D}_1'), \DAG(\mathcal{D}_2'))$ to
        $d_{ED}(\DAG(\mathcal{D}'_1), \DAG(\mathcal{D}'_3))$ indicates the impact of the replacement of one time series by another on  the extremal event DAG distance.

    \item Lastly, we assess the distance between the full datasets $\mathcal D_1$ and $\mathcal D_2$ by constructing a baseline distribution for the expected distance. We do this by first by
        scrambling the  gene names in $\mathcal{D}_2$  to create a dataset
        $\hat{\mathcal{D}}_2$. We then compute $$d_{ED}(\DAG(\mathcal{D}_1),
        \DAG(\hat{\mathcal{D}}_2)).$$ We repeat this computation 100 times for 100 random name assignments. This
        experiment gives us an idea on the range of possible distances between $\mathcal{D}_1$ and $\mathcal{D}_2$. We then compare this distribution to the actual distance
        $$d_{ED}(\DAG(\mathcal{D}_1), \DAG(\mathcal{D}_2)).$$

\end{enumerate}

Since the extremal event DAG distance can be any non-negative number, it can be
difficult to discern how similar $\mathcal{D}_1$ and $\mathcal{D}_2$ are solely
based on computing $d_{ED}(\DAG(\mathcal{D}_1), \DAG(\mathcal{D}_2)).$ To gain a
better understanding of how similar $\mathcal{D}_1$ and $\mathcal{D}_2$ are, we
perform computation (3) to get a baseline distribution of distances between the
time series in $\mathcal{D}_1$ and time series in $\mathcal{D}_2$. This
distribution can then be used as a null hypothesis $H_0$ for testing $H_1$ that
$\mathcal{D}_1$ and $\mathcal{D}_2$ measure gene expression in the identically
behaving cell in the same environmental condition.

\vspace{1ex}

\noindent \textbf{Computations 1 \& 2.}
We computed
\begin{align*}
    d_{ED}(\DAG(\mathcal{D}_1'), \DAG(\mathcal{D}_2'))) &= 10.34\\
    d_{ED}(\DAG(\mathcal{D}_1'), \DAG(\mathcal{D}_3'))) &= 15.48.
\end{align*}
The mismatched gene dataset causes a 50\% increase in distance even though only 25\% of
the dataset was perturbed, a substantial change. This result is consistent with
the result from numerical experiment 3 in \cite{BerryUsing20} where the same data were
analyzed using $\varepsilon$-DAGs and the DMCES metric. Specifically, DMCES
similarity was computed between the $\varepsilon$-DAGs at $\varepsilon$ values
ranging between 0 and 0.15. A similarity score of one indicated that the
$\varepsilon$-DAGs are equal whereas a similarity score of 0 indicates the
$\varepsilon$-DAGs are very dissimilar. The similarity ranged between 0.7 and 1
for $\mathcal{D}_1'$ and $\mathcal{D}_2'$, whereas the similarity ranged between
0.4 and 0.6 for $\mathcal{D}_1'$ and $\mathcal{D}_3'$. The same qualitative conclusion can be drawn from our results and the earlier work; namely that replacing one time series with another decreases similarity, or increases distance, between datasets.

\vspace{1ex}

\noindent \textbf{Computation 3.}
After computing $d_{ED}(\DAG(\mathcal{D}_1)), \DAG(\hat{\mathcal{D}}_2))$ 100
times we get the distribution of distances shown in \figref{yeast-scramble} with
the following statistics
\begin{itemize}
    \item Maximum Distance = 384.30
    \item Mean Distance = 341.90
    \item Minimum Distance = 273.95
    \item Standard Deviation = 23.14
\end{itemize}
We found $d_{ED}(\DAG(\mathcal{D}_1), \DAG(\mathcal{D}_2)) = 150.44$.
 \begin{figure}
     \includegraphics[width=.8\textwidth]{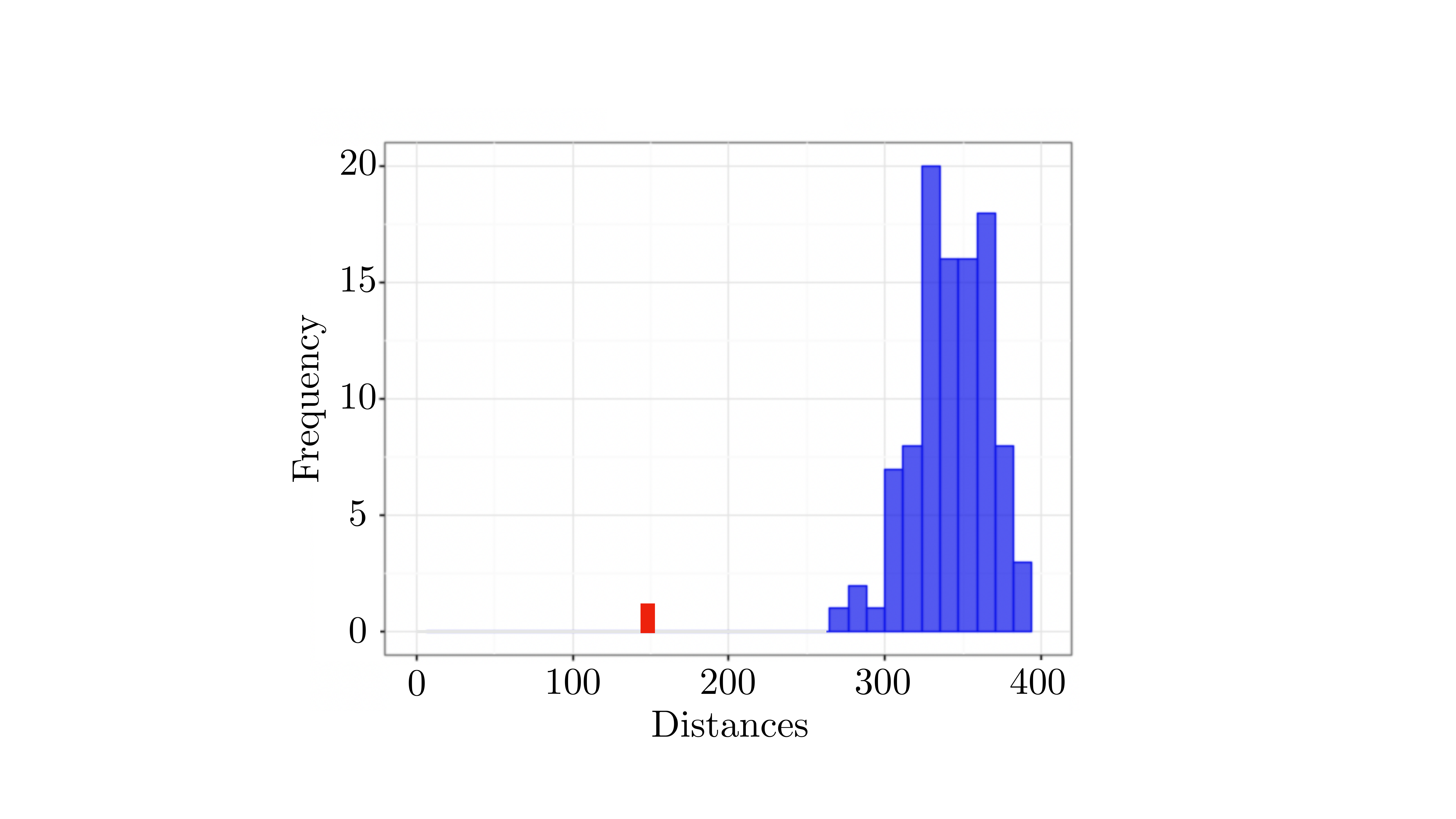}
        \caption{Extremal Event DAG Distances in Experiment 3. The red bar is $d_{ED}(\DAG(\mathcal{D}_1), \DAG(\mathcal{D}_2))$, the distance between the two yeast datasets without any scrambling of genes.}
        \label{fig:yeast-scramble}
 \end{figure}
Therefore  $d_{ED}(\DAG(\mathcal{D}_1),
\DAG(\mathcal{D}_2))$ is roughly eight standard deviations (see \figref{yeast-scramble}) below the mean of the
estimated null distribution, which suggests that there is a significant amount
of similarity between $\mathcal{D}_1$ and $\mathcal{D}_2$.

In numerical experiment 4 in~\cite{BerryUsing20}, the same goal of measuring replicate similarity was approached using a different technique. Subsets of 4 and 8 genes out of the 16 total were used to construct $\varepsilon$-DAGs over the range of $\varepsilon \in [0,0.15]$. No baseline was calculated, but the computations showed a mean similarity that was usually over the relatively high value of 80\% and frequently over 90\%.

Overall, the results of these three computations are consistent with the numerical
experiments in \cite{BerryUsing20}. We offer the significant improvement that we did
not need to run each of these computations over a range of $\varepsilon$ because
the extremal event DAG construction does not depend on $\varepsilon$. Because of the vast increase in computational efficiency, we were able to perform computation 3 using all 16 genes in the two datasets instead of randomly sampled subsets. It was shown in~\cite{BerryUsing20} that using subsets of increasing size decreased variance in the subsampled computations rather than substantially changing mean performance, which was taken to be evidence that subsampling was a good proxy for the distance between $\mathcal D_1$ and $\mathcal D_2$. In other words, if the DMCES method could have computed the distance between $\mathcal D_1$ and $\mathcal D_2$, then increasing the size of the subsamples would show convergence to that value. We offer a method that does not require a convergence argument, but rather can compute the value directly.

\subsection{Malaria Parasite Data} \label{sec:parasite-data}

In this application, we seek to show that oscillatory genes in strains of
\textit{Plasmodium falciparum} maintain as much of a phase ordering as well known
circadian genes across various mouse tissues. This provides circumstantial
evidence that malaria parasites have an internal clock that is at least as
conserved as that of the circadian oscillator, as shown in~\cite{SmithAn20}.
The mouse data comes from \cite{ZhangA14} that contains the circadian
transcriptomes of 12 mouse organ tissues every two hours for 48 hours.
In~\cite{SmithAn20}, similarity between datasets within malaria or mouse was
determined by choosing ``reference datasets'' to which the other strains and
tissues were compared, as opposed to computing all pairwise comparisons. The 3D7
strain and the liver tissue were chosen as the reference datasets in \textit{Plasmodium
falciparum} and the mouse tissues respectively.

In both collections of time series, subsets of periodic genes were selected that peak at similar times across parasite strains or mouse tissues. These genes are called ``in-phase" subsets. After this subset of genes was found, all datasets were interpolated with piecewise cubic Hermite interpolating polynomial spline to one hour intervals. The data were wrapped so that there could be a common starting point. Furthermore, \textit{Plasmodium falciparum} was down-sampled by removing every odd datapoint. More details on the experiments and data preprocessing can be found in the supplementary materials of \cite{SmithAn20}. After the pre-processing steps, our time series contain 119 parasite genes and 107 mouse genes.

To gain a better understanding of how similar our datasets are, we
created a baseline distribution for each strain or tissue by randomly interchanging gene names and shifting each time series by a random amount, using the same random shifts as in \cite{SmithAn20}. Specifically, if we view the values of a time series as an ordered list $h_1, h_2, \dots , h_n$ of length $n$, we randomly select $1\leq m \leq n$ to create a new shifted time series \[h_m, h_{m+1}, \dots, h_n, h_1, h_2, \dots, h_{m-1}.\]
The phase shift operation preserves characteristics of the dataset except for the ordering of
extrema. For each strain and tissue, the baseline distance was computed between the unpermuted and unshifted reference dataset and the permuted and shifted datasets.

We use the following notation for the parasite datasets:

\begin{itemize}
\item Let $\mathcal{D}_1$ be the collection of time series from strain 3D7. This is the reference dataset.
\item Let $\mathcal{D}_2$ be the collection of time series from strain FVO-NIH.
\item Let $\mathcal{D}_3$ be the collection of time series from strain SA250.
\item Let $\mathcal{D}_4$ be the collection of time series from strain D6.
\item Let $\mathcal{D}_2'$ be the collection of shifted and permuted time series of $\mathcal{D}_2$.
\item Let $\mathcal{D}_3'$ be the collection of shifted and permuted time series of $\mathcal{D}_3$.
\item Let $\mathcal{D}_4'$ be the collection of shifted and permuted time series of $\mathcal{D}_4$.
\end{itemize}

We perform the following computations to study the parasite data.
\begin{enumerate}
    \item Pick 1500 random subsets of 15 genes. With the arbitrary choice of 15
        genes fixed, let $\hat{\mathcal{D}_i}$ denote the subset of the
        corresponding time series from $\mathcal{D}_i$ for $i \in \{1,2,3, 4\}$.
        Compute $$d_{ED}(\DAG(\hat{\mathcal{D}}_1), \DAG(\hat{\mathcal{D}}_j))$$
        for each random subset and each $j \in \{2,3,4\}$.
    \item  Pick a new set of 1500 random subsets of 15 genes, where $\hat{\mathcal{D}_i}$ for $i \in \{1,2,3, 4\}$ is defined as above for each subset. Let $\hat{\mathcal{D}'_j}$
        denote the subset of the shifted and permuted time series $\mathcal{D}'_j$ for $j \in \{2,3, 4\}$.
        Compute $d_{ED}(\DAG(\hat{\mathcal{D}}_1), \DAG(\hat{\mathcal{D}}_j'))$
        for each random subset and $j \in \{2,3,4\}$.
\end{enumerate}

We do analogous experiments to study the mouse gene data. We
use the following notation.

\begin{itemize}
\item Let $\mathcal{D}_a$ be the collection of time series recorded from liver tissue. This is the reference dataset.
\item Let $\mathcal{D}_b$ be the collection of time series recorded from kidney tissue.
\item Let $\mathcal{D}_c$ be the collection of time series recorded from lung tissue.
\item Let $\mathcal{D}_b'$ be the collection of shifted and permuted time series of $\mathcal{D}_b$.
\item Let $\mathcal{D}_c'$ be the collection of shifted and permuted time series of $\mathcal{D}_c$.
\end{itemize}

We perform the following experiments to study the mouse tissue data.

\begin{enumerate}
    \item Pick 1500 random subsets of 15 genes. Let $\hat{\mathcal{D}_i}$ denote
        the subset of the corresponding time series from $\mathcal{D}_i$ for $i
        \in \{a,b, c\}$. Compute
        $d_{ED}(\DAG(\hat{\mathcal{D}}_a), \DAG(\hat{\mathcal{D}}_j))$ for each
        random subset and $j \in \{b, c\}$.
    \item  Pick a new set of 1500 random subsets of 15 genes, where $\hat{\mathcal{D}_i}$ for $i \in \{1,2,3, 4\}$ is defined as above for each subset. Let $\hat{\mathcal{D}'_j}$
        denote the subset of the shifted and permuted time series $\mathcal{D}'_j$ for $j \in \{b, c\}$.
        Compute $d_{ED}(\DAG(\hat{\mathcal{D}}_a), \DAG(\hat{\mathcal{D}}_j'))$
        for each random subset and $j \in \{b,c\}$.
\end{enumerate}

We summarize the results of the experiments below and provide more statistics in
\appref{tables}. In \figref{parasite-histograms} and
\figref{mouse-histograms}, we see in general that the baseline distances are
larger than the distances between cell lines and reference cell line.

\begin{figure}[h]
    \centering
    \begin{subfigure}[b]{0.32\textwidth}
        \centering
        \includegraphics[width=\textwidth]{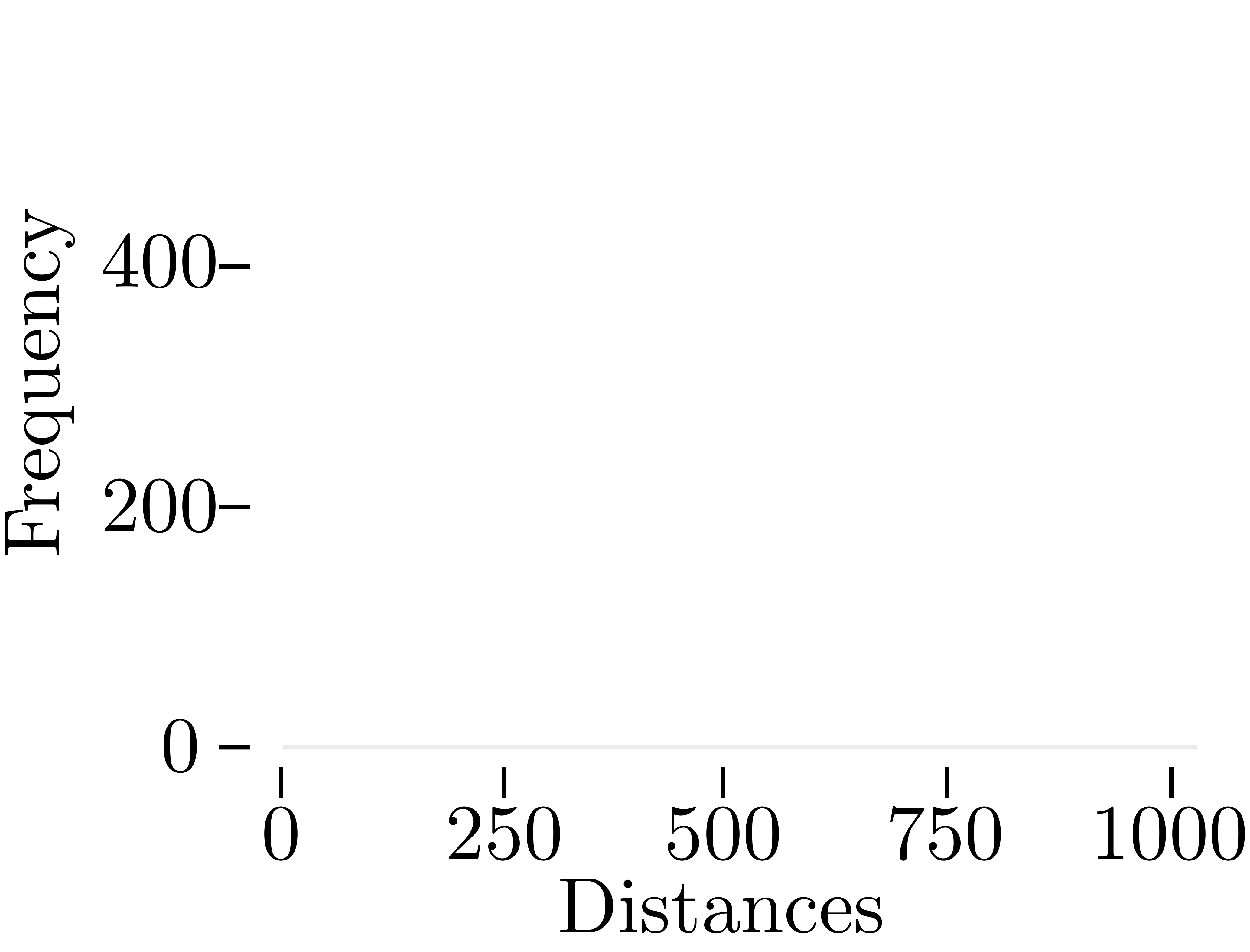}
        \caption{D6}
        \label{fig:D6}
    \end{subfigure}
    \begin{subfigure}[b]{0.32\textwidth}
        \centering
        \includegraphics[width=\textwidth]{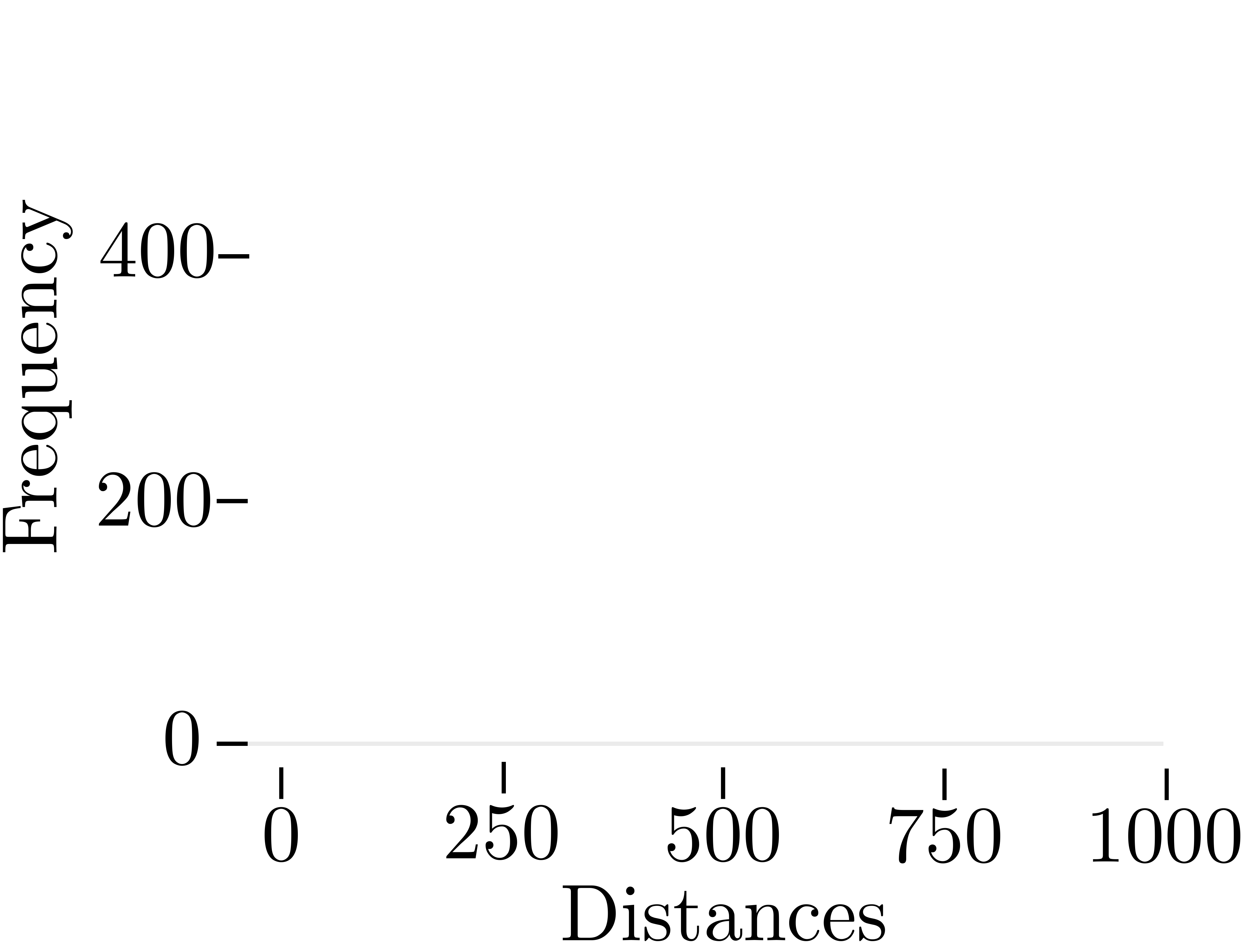}
        \caption{FVO-NIH}
        \label{fig:FVO}
    \end{subfigure}
    \begin{subfigure}[b]{0.32\textwidth}
        \centering
        \includegraphics[width=\textwidth]{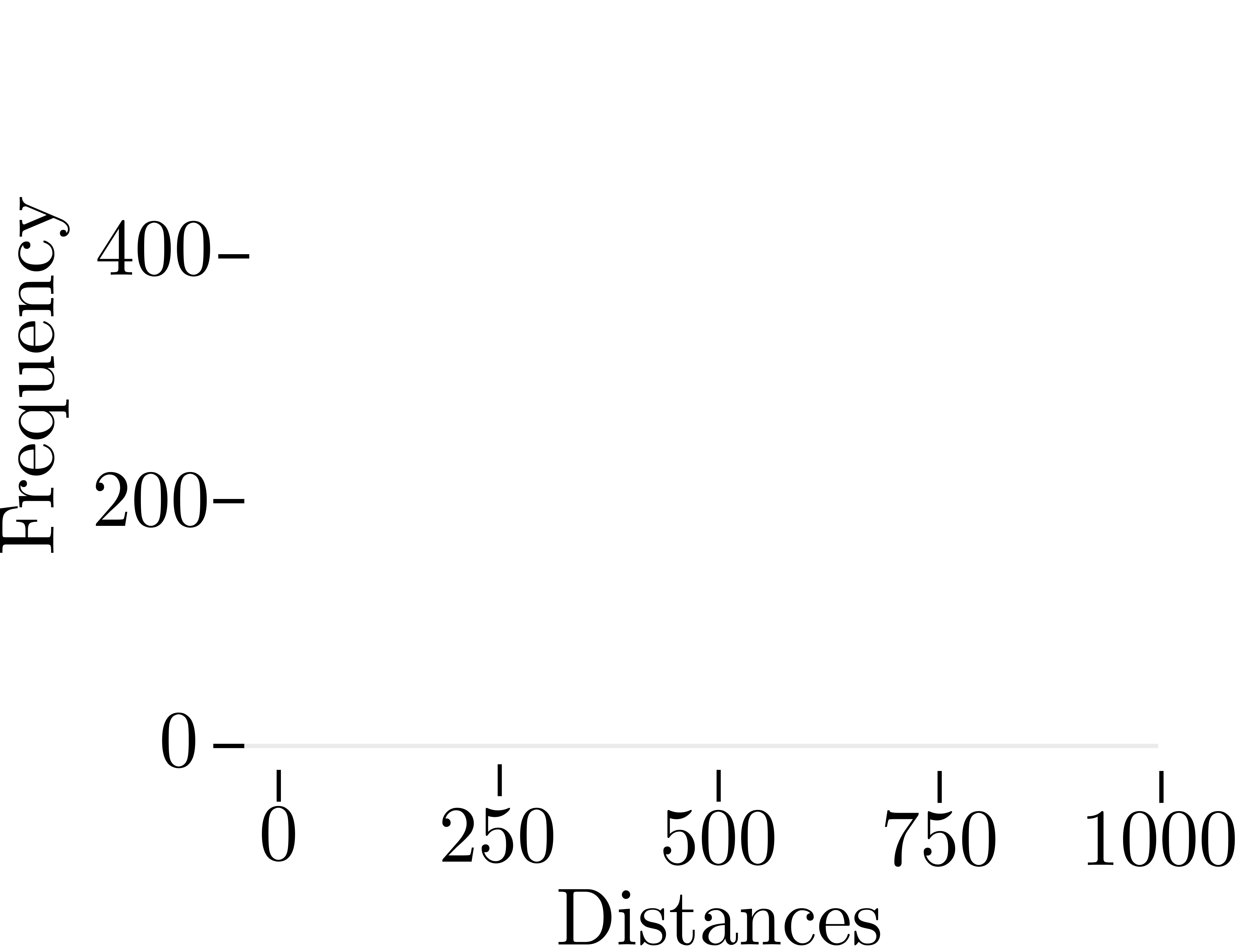}
        \caption{SA250}
        \label{fig:SA250}
    \end{subfigure}
    \caption{Histograms from \textit{Plasmodium falciparum} experiments. The reference
    strain was 3D7 for all experiments. The distribution of baseline extremal
    event DAG distances is shown in blue for each graph and the distribution of
    extremal event DAG distances is shown in purple. In all three plots,  the
    extremal event DAG distances are smaller than the corresponding baseline
    distances. Performing a paired t-test to the blue and purple distributions
    with a null hypothesis that the distributions are the same in all three
    plots resulted in a $p$-value below machine precision.}
    \label{fig:parasite-histograms}
\end{figure}

\begin{figure}[h]
     \centering
     \begin{subfigure}[b]{0.33\textwidth}
         \centering
         \includegraphics[width=\textwidth]{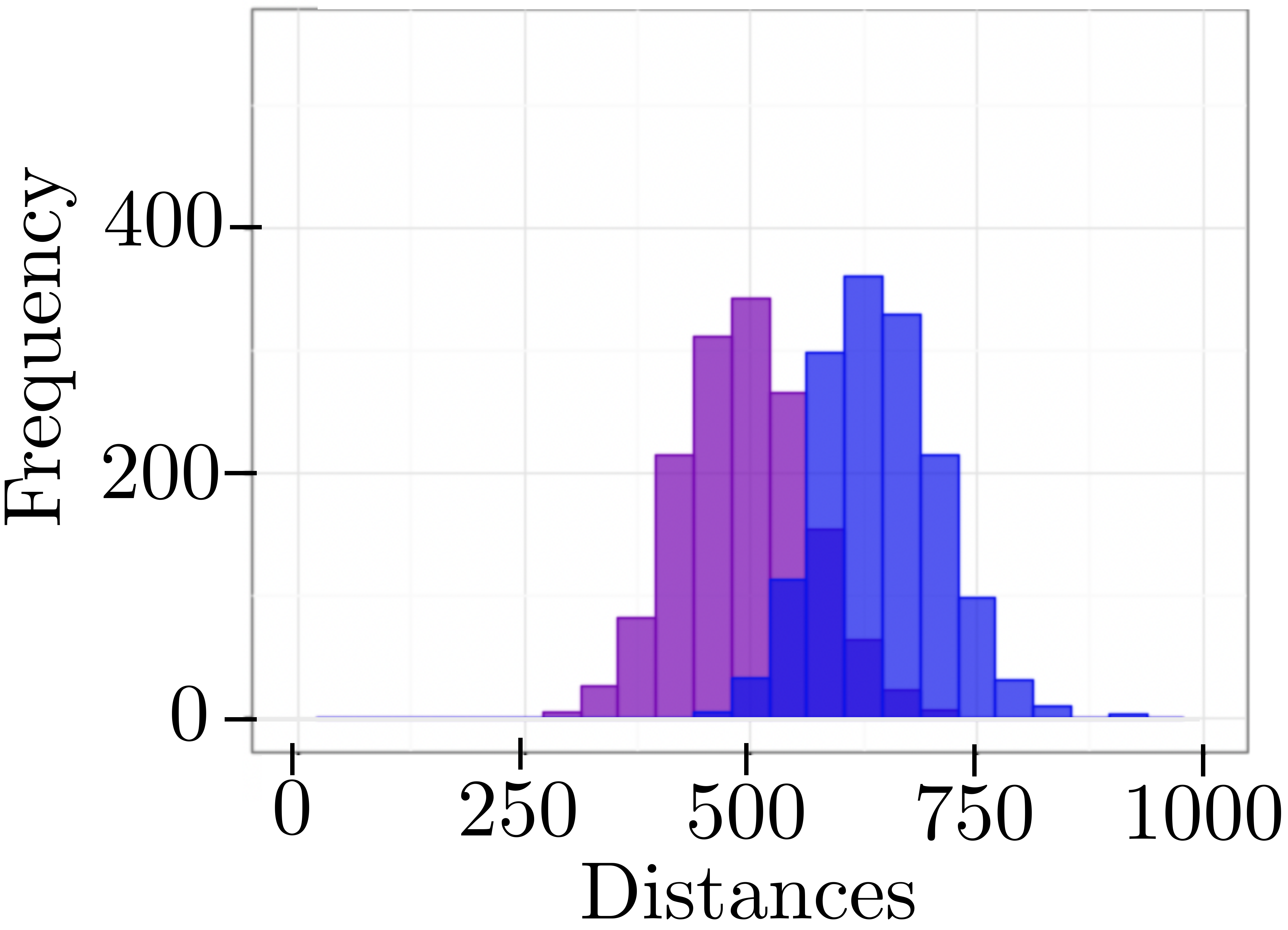}
         \caption{Lung}
         \label{fig:lung}
     \end{subfigure}
     \hfill
     \begin{subfigure}[b]{0.33\textwidth}
         \centering
         \includegraphics[width=\textwidth]{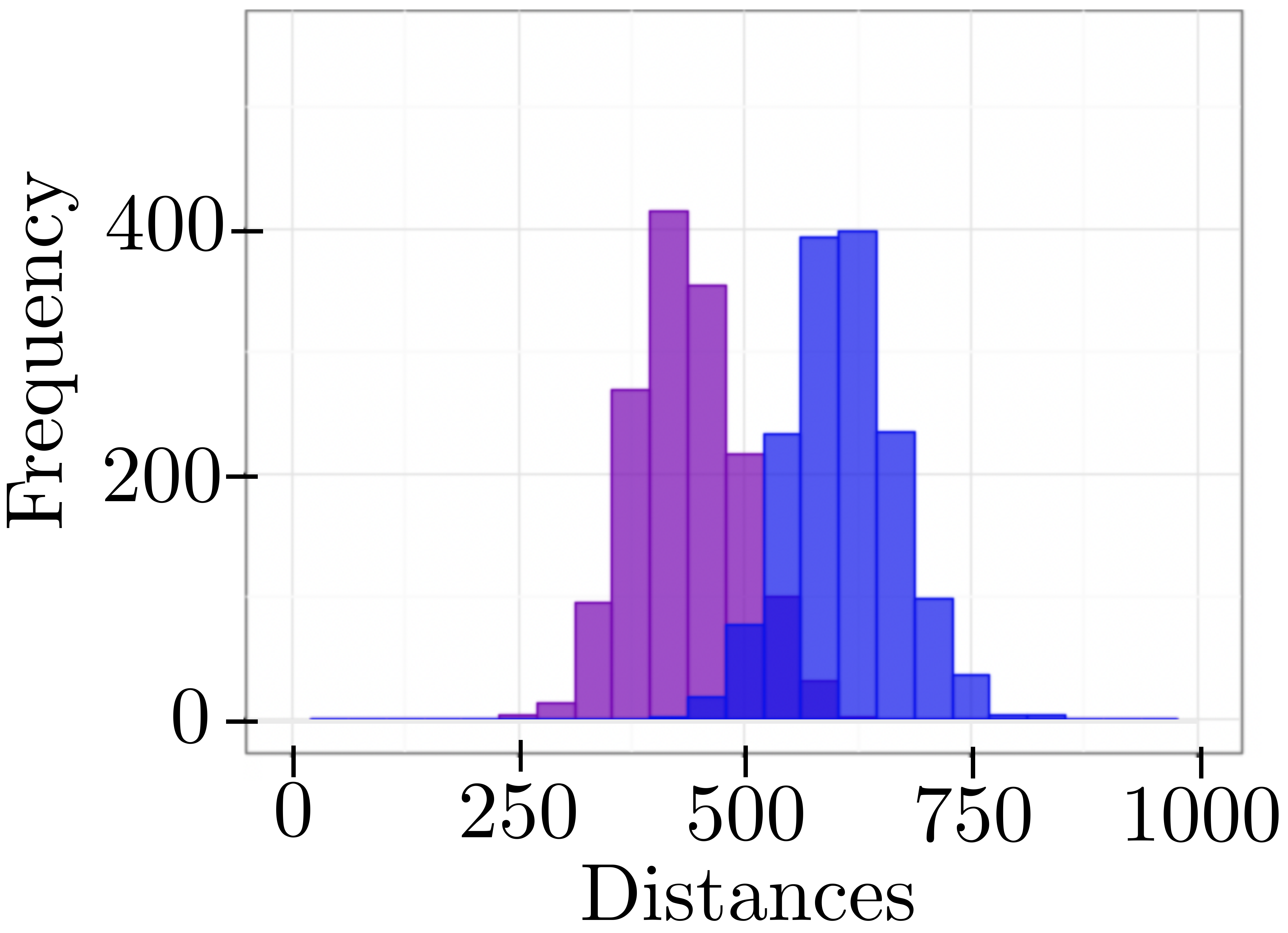}
         \caption{Kidney}
         \label{fig:kidney}
     \end{subfigure}
     \caption{Histograms from mouse experiments. The reference cell line was
     liver for all experiments. The distribution of baseline extremal event DAG
     distances is shown in blue for each graph and the distribution of extremal
     event DAG distances is shown in purple. In all three plots, we see the
     extremal event DAG distances are smaller than the corresponding baseline
     distances. Performing a paired t-test to the blue and purple distributions
     with a null hypothesis that the distributions are the same in all three
     plots resulted in a $p$-value below machine precision. } 
     \label{fig:mouse-histograms}
\end{figure}

Overall, we find the extremal event DAG distances from the parasite experiments to be smaller than the extremal event DAG distances from the mouse experiments, both in absolute value and in terms of distribution overlap, as found in \cite{SmithAn20}, indicating support for a malarial clock. Additionally, \figref{parasite-histograms} shows another pattern seen in~\cite{SmithAn20}, which is that the D6 data have the smallest distance to the 3D7 reference data and that SA250 has the largest distance to 3D7 compared both in absolute value and to their respective baselines.

In this computation, we used subsampling to computationally handle the size of dataset, as in~\cite{SmithAn20}, and computed 1500 samples instead of 5000. However, we used more than double the genes in our samples, 15 versus 6, and incorporated information about all levels of $\varepsilon$ instead of only a small fixed subset of $\varepsilon$.
The consistency of results found between the computations  validates the methodology of using extremal event DAGs in place of $\varepsilon$-DAGs.

\section{Conclusion}

We constructed a weighted directed graph descriptor of collections of time
series data that keeps track of the order and prominence of
extrema, and is robust to experimental noise that arises from taking discrete
time samples of a continuous process. Furthermore, we define a distance between
these extremal event DAGs that constructs an extremal event supergraph using a
modified version of the edit distance and then computes the distance by taking
the $L_1$ distance between aligned node and edge weights in the extremal
event supergraph. The benefit of this distance is that it can be computed via
dynamic programming and efficiently. We used this
distance to compare the similarity of experimental replicates in yeast cell
cycle data, the similarity of circadian gene expression in different mouse
tissues, and the similarity of gene expression across malaria parasite strains.
Our results are consistent with results from other literature
\cite{BerryUsing20, SmithAn20} that used directed maximal common edge subgraphs
of $\varepsilon$-DAGs \cite{NeremA19}. The benefit to using the extremal event
DAG methodology is that the savings in memory and computation speed facilitates
the analysis of significantly larger datasets.

Furthermore, we prove several stability results. In particular, the backbone
distance arising from two functions is bounded by the $L_{\infty}$ distance
of the two functions multiplied by the number of nodes in the backbone infinity
alignment. Using backbone stability, we prove the extremal event DAG distance is
stable in a local case. Local here means that the individual time series in one
collection differs from the corresponding time series in the other collection by
the amount that allows direct alignment of the minima and maxima
between the two time series. Additionally, one of the time series can have small
amplitude additional maxima and minima.
Extension of the local stability result to a stability between arbitrary
multivariate time series is challenging.  If the triangle inequality 
for the extremal event DAG distance holds, then the local
stability of the extremal event DAG can be used to prove a global stability
result using the same technique as in the proof of backbone stability
(\thmref{backbone-infinity-stability}). While in this paper we define the extremal
event DAG distance using the $L_1$ norm, we suspect that stability
also holds if we use any $L_p$ norm. We leave this generalization as future
work.

We focus on a descriptor that is robust to measurement error that arises from taking
discrete time samples of a continuous process. However, there can be other types
of uncertainty present in the data. One type is related to signal
processing and is seen as small peaks in the data. If we want to remove this
type of measurement errors from our analysis, we can apply a preprocessing step using
techniques from \cite{AudunSeparating20}. This technique applies sublevel set
persistence to a time series to determine a node life threshold. Nodes with a
node life below the threshold are classified as results of signal processing errors, or as noise.
Eliminating  nodes classified as noise and then computing extremal event DAGs
gives a smaller descriptor of a collection of time series that further increases
the size of computationally feasible datasets. Similar preprocessing steps to
remove small peaks can be made using Fourier transforms \cite{BollSuppression79,
KamathA02}.

In summary, extremal event DAGs are a new computational tool that can be used
alone or in combination with noise reduction algorithms to summarize and compare
collections of time series data.

\bibliographystyle{plain}
\bibliography{DAG}

\section{Appendix} \label{sec:appendix}

\subsection{Properties of $\varepsilon$-Extremal Intervals}\label{app:ext-intervals-prop}
We prove some properties of the $\varepsilon$-extremal intervals that are useful for computing the edge weights. 
For \lemref{nesting} and \lemref{properties}, we omit the superscript and subscript $f$ from $\varphi^f_{\varepsilon}$ and $\pers_f$ since $f$ is the only function we are considering.

\begin{lem}[Nesting of $\varepsilon$-Extremal Intervals]
    Let $f:C\rightarrow \R$ be a nicely tame function.
    Let $t$ be the domain coordinate of a local extremum of $f$. If $0 <
    \varepsilon_0 < \varepsilon_1$,  then,
    \[ \varphi_{\varepsilon_0}(t) \subset \varphi_{\varepsilon_1}(t). \]
\label{lem:nesting}
\end{lem}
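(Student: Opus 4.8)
The statement is symmetric under $f \mapsto -f$ (which swaps local minima and maxima, as noted after \defref{birth-death-map}), so it suffices to treat the case where $(t,f(t))$ is a local minimum; the case of a local maximum follows by applying the minimum case to $-f$ and using that $\varphi^f_\varepsilon(t) = \varphi^{-f}_\varepsilon(t)$ for a local maximum $t$ of $f$. So I would first dispatch the maximum case this way, then focus on the minimum case.

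For the minimum case, the idea is just to unwind \defref{epsilon-intervals} and observe that the defining sublevel sets are nested in $\varepsilon$. Concretely, $(f - \varepsilon)^{-1}(-\infty, f(t) + \varepsilon) = \{x \in C \mid f(x) < f(t) + 2\varepsilon\}$. Write $U_\varepsilon := \{x \in C \mid f(x) < f(t)+2\varepsilon\}$. Since $0 < \varepsilon_0 < \varepsilon_1$ gives $f(t) + 2\varepsilon_0 < f(t) + 2\varepsilon_1$, we immediately get $U_{\varepsilon_0} \subseteq U_{\varepsilon_1}$, and both sets contain $t$ (as $f(t) < f(t) + 2\varepsilon$ for $\varepsilon > 0$). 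By definition, $\varphi_{\varepsilon_0}(t)$ is the connected component of $U_{\varepsilon_0}$ containing $t$, and $\varphi_{\varepsilon_1}(t)$ is the connected component of $U_{\varepsilon_1}$ containing $t$.

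The only remaining point is the elementary topological fact: if $A \subseteq B$ are subspaces of $\R$ (with the subspace topology inherited from $C$) and $x \in A$, then the connected component of $x$ in $A$ is contained in the connected component of $x$ in $B$. This holds because the connected component of $x$ in $A$ is a connected subset of $B$ containing $x$, hence contained in $x$'s connected component in $B$, which is maximal among connected subsets of $B$ containing $x$. Applying this with $A = U_{\varepsilon_0}$, $B = U_{\varepsilon_1}$, $x = t$ yields $\varphi_{\varepsilon_0}(t) \subseteq \varphi_{\varepsilon_1}(t)$, as desired. I do not anticipate any real obstacle here; the proof is a short unwinding of definitions plus the standard monotonicity of connected components under inclusion, and the nice tameness of $f$ is only used insofar as it guarantees the $\varepsilon$-extremal intervals are well defined.
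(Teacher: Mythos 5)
Your proof is correct and follows essentially the same route as the paper's: establish that the defining open sets are nested in $\varepsilon$ (you do this by simplifying $(f-\varepsilon)^{-1}(-\infty,f(t)+\varepsilon)$ to $\{x\mid f(x)<f(t)+2\varepsilon\}$, while the paper chains two inclusions, but the content is identical), then conclude via the standard fact that the connected component of $t$ in the smaller set lies inside the connected component of $t$ in the larger set. Your reduction of the maximum case to the minimum case via $-f$ is also valid and matches the paper's "analogous argument" remark.
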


\begin{proof}
Assume $t$ is the domain coordinate of a local minimum. Since sublevel sets of a function form a  filtration, we have
$$(f-\varepsilon_0)^{-1}(-\infty, f(t)+\varepsilon_0) \subset (f-\varepsilon_0)^{-1}(-\infty, f(t)+\varepsilon_1).$$
At the same time,  because $\varepsilon_0 <\varepsilon_1$,
 $$(f-\varepsilon_0)^{-1}(-\infty, f(t)+\varepsilon_1)\subset (f-\varepsilon_1)^{-1}(-\infty, f(t)+\varepsilon_1).$$
Combining these two statements we find that
  \begin{equation}\label{subset}
  (f-\varepsilon_0)^{-1}(-\infty, f(t)+\varepsilon_0)\subset (f-\varepsilon_1)^{-1}(-\infty, f(t)+\varepsilon_1).
  \end{equation}
Note that  $\varphi_{\varepsilon_0}(t) \cap \varphi_{\varepsilon_1}(t) \not = \emptyset$ because they are connected components of\\
$(f-\varepsilon_0)^{-1}(-\infty, f(t)+\varepsilon_0)$ and \mbox{$(f-\varepsilon_1)^{-1}(-\infty, f(t)+\varepsilon_1)$} respectively that contain~$t$. Therefore (\ref{subset}) implies
  $\varphi_{\varepsilon_0}(t)\subset \varphi_{\varepsilon_1}(t)$. An analogous argument holds if $t$ is the domain coordinate of a local maximum.
\end{proof}

\begin{lem}[Properties of $\varphi_{\varepsilon}(t_i)$]
    Let $f:C\rightarrow \R$ be a nicely tame function.
    Let $t_1<t_2<...<t_n$ be the domain coordinates of the local extrema of $f$.
    The following statements hold.
    \begin{enumerate}
        \item The length $\length(\varphi^f_{\varepsilon}(t))$
            increases as a function of~$\varepsilon$. 
            \label{stmt:monotonicity}\label{stmt:properties-inc}
        \item For $i<n$, $\varepsilon \leq \frac{1}{2}|f(t_i)-f(t_{i+1})|$ if and only if
            $t_{i+1} \notin \varphi_{\varepsilon}(t_i)$ and $t_{i} \notin
            \varphi_{\varepsilon}(t_{i+1})$.
            \label{stmt:extrema-containment}\label{stmt:properties-epsiff}
        \item For $i<n$, if $\varepsilon \leq \frac{1}{2}\min\{\pers(t_i), \pers(t_{i+1})\}$,
            then $t_{i+1} \notin \varphi_{\varepsilon}(t_i)$ and $t_{i} \notin
            \varphi_{\varepsilon}(t_{i+1})$.
            \label{stmt:nodelife-extrema-containment}\label{stmt:properties-extrema-containment}
    \end{enumerate}
    \label{lem:properties}
\end{lem}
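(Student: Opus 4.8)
The plan is to treat the three parts in order; (1) is a one-line consequence of \lemref{nesting}, (2) is a direct computation using that $f$ is monotone between consecutive extrema, and (3) reduces to (2) together with the inequality $\min\{\pers(t_i),\pers(t_{i+1})\}\le |f(t_i)-f(t_{i+1})|$, which is the real content. For (1): if $0<\varepsilon_0<\varepsilon_1$ then \lemref{nesting} gives $\varphi_{\varepsilon_0}(t)\subseteq\varphi_{\varepsilon_1}(t)$, and length is monotone under inclusion of intervals, so $\length(\varphi_{\varepsilon_0}(t))\le\length(\varphi_{\varepsilon_1}(t))$.

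For (2), I would use that consecutive extrema alternate in type, so I may assume $(t_i,f(t_i))$ is a local minimum and $(t_{i+1},f(t_{i+1}))$ a local maximum (the other case follows by replacing $f$ with $-f$); since there is no extremum strictly between $t_i$ and $t_{i+1}$, $f$ is monotone on $[t_i,t_{i+1}]$, hence its maximum there is $f(t_{i+1})$ and its minimum is $f(t_i)$. By \defref{epsilon-intervals}, $\varphi_\varepsilon(t_i)$ is the connected component of $\{x\in C: f(x)<f(t_i)+2\varepsilon\}$ containing $t_i$; since the only arc joining $t_i$ to $t_{i+1}$ inside $C$ is the segment $[t_i,t_{i+1}]$, the point $t_{i+1}$ lies in this component if and only if $\max_{[t_i,t_{i+1}]}f<f(t_i)+2\varepsilon$, i.e.\ if and only if $\varepsilon>\tfrac12|f(t_i)-f(t_{i+1})|$. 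Dually, $\varphi_\varepsilon(t_{i+1})$ is the component of $\{x\in C: f(x)>f(t_{i+1})-2\varepsilon\}$ containing $t_{i+1}$, and $t_i$ lies in it if and only if $f(t_i)>f(t_{i+1})-2\varepsilon$, again equivalent to $\varepsilon>\tfrac12|f(t_i)-f(t_{i+1})|$. So each of the two non-containment conditions is equivalent to $\varepsilon\le\tfrac12|f(t_i)-f(t_{i+1})|$, and (2) follows.

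For (3), by (2) it suffices to prove $\min\{\pers(t_i),\pers(t_{i+1})\}\le|f(t_i)-f(t_{i+1})|$. Keeping the normalization above, set $h:=f(t_{i+1})-f(t_i)$ and suppose for contradiction that $\pers(t_i)>h$ and $\pers(t_{i+1})>h$. Let $\zeta:=\zeta_f(t_i)$ be the merge height of the minimum $t_i$ (\defref{birth-death-map}), so $\pers(t_i)=\zeta-f(t_i)$ forces $\zeta>f(t_{i+1})$; and let $\eta$ be the merge height of the maximum $t_{i+1}$ as computed in $-f$, so $\pers(t_{i+1})=f(t_{i+1})-\eta$ forces $\eta<f(t_i)$. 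Let $I$ be the connected component of the sublevel set $\{f<\zeta\}$ containing $t_i$ (the interval occupied by $t_i$'s class just before it dies) and $J$ the connected component of the superlevel set $\{f>\eta\}$ containing $t_{i+1}$. Monotonicity of $f$ on $[t_i,t_{i+1}]$ and $\zeta>f(t_{i+1})$ give $[t_i,t_{i+1}]\subseteq\{f<\zeta\}$, so $t_{i+1}\in I$; symmetrically $t_i\in J$. By the elder rule, $t_i$ represents $I$ up to its death, so $\inf_I f=f(t_i)$, and dually $\sup_J f=f(t_{i+1})$. Now $\sup_J f=f(t_{i+1})<\zeta$ makes $J$ a connected subset of $\{f<\zeta\}$ meeting $I$, hence $J\subseteq I$; and $\inf_I f=f(t_i)>\eta$ makes $I$ a connected subset of $\{f>\eta\}$ meeting $J$, hence $I\subseteq J$. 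Thus $I=J$, so $\sup_I f=f(t_{i+1})$. But $I$ cannot be all of $C$ (that would force $\max_C f<\zeta$, impossible since $\zeta$ is either $\max f$ or the value of a local maximum, hence $\zeta\le\max_C f$), so $I$ terminates at an interior point where $f$ equals $\zeta$, whence $\sup_I f=\zeta$. Therefore $\zeta=f(t_{i+1})$, contradicting $\zeta>f(t_{i+1})$, and (3) is proved.

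The only substantive obstacle is (3), and within it the two steps needing care are: (i) the elder-rule identities $\inf_I f=f(t_i)$ and $\sup_J f=f(t_{i+1})$, i.e.\ that the representative of a merge-tree component attains the extreme value on that component; and (ii) the claim $\sup_I f=\zeta$, which requires checking how the component $I$ terminates (at an interior ``saddle'' versus an endpoint of $C$). I expect the degenerate configurations (for instance $t_i$ being the global minimum, so that $I$ runs to an endpoint of $C$) to be absorbed by the single observation that $\zeta\le\max_C f$ always holds, so that no separate case analysis beyond the one above should be necessary.
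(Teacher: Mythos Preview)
Your argument is correct. Parts~(1) and~(2) proceed exactly as in the paper: (1) is immediate from the nesting lemma, and (2) is the same computation exploiting that $f$ is monotone between consecutive extrema (you phrase it via $\max_{[t_i,t_{i+1}]}f$, the paper via the defining inequality $f(t_{i+1})-\varepsilon\ge f(t_i)+\varepsilon$, but these are the same step).

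Part~(3) is where you and the paper diverge. The paper does not prove~(3) at all: it simply invokes Proposition~4 of \cite{BerryUsing20}. You instead give a self-contained argument, reducing to the inequality $\min\{\pers(t_i),\pers(t_{i+1})\}\le|f(t_i)-f(t_{i+1})|$ and proving it by a contradiction that pits the sublevel component $I$ of $t_i$ (just before its death) against the superlevel component $J$ of $t_{i+1}$ (just before its death), forcing $I=J$ and then deriving $\sup_I f=f(t_{i+1})=\zeta$, which contradicts $\zeta>f(t_{i+1})$. The two steps you flagged are fine: (i) that $t_i$ (resp.\ $t_{i+1}$) realizes the minimum (resp.\ maximum) on $I$ (resp.\ $J$) is exactly what it means for $t_i$ to be the representative of its component up to the merge height $\zeta$, which is the content of the elder rule; (ii) since $\zeta\le\max_C f$ always (either $\zeta=\max f$ or $\zeta$ is the height of a local maximum), $I\subsetneq C$, and any boundary point of $I$ interior to $C$ satisfies $f=\zeta$, giving $\sup_I f=\zeta$. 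The global-extremum cases are absorbed as you anticipated. Your route has the advantage of being self-contained and of explaining \emph{why} the node life is bounded by the height gap to the adjacent extremum; the paper's route is shorter but opaque without the cited reference.
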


\begin{proof}

    Let $f$ and $T=\{t_i\}_{i=1}^n$ be defined as in the lemma statement.
    We prove the three statements for local minima first.
    Let $i \in [n]$. Assume that~$(t_i, f(t_i))$ is a local
    minimum.  Note that, since minimum and maximum alternate in $T$,
    we know that $t_{i+1}$ (if it exists) is a local maximum.

    \orgemph{Proof of \stmtref{properties-inc} for minima}.
    Consider two values $0 < \varepsilon_0 < \varepsilon_1$. By \lemref{nesting}, we find
    \[ \varphi_{\varepsilon_0}(t_i) \subset \varphi_{\varepsilon_1}(t_i).\]
    Therefore $\length(\varphi_{\varepsilon_0}(t_i)) \leq \length(\varphi_{\varepsilon_1}(t_i))$.

    \orgemph{Proof of \stmtref{properties-epsiff} for minima}.
    For the forward direction, we assume $i < n$ and $\varepsilon \leq \frac{1}{2}|f(t_i)-f(t_{i+1})|$.
    Since $t_i$ is a local minimum and~$t_{i+1}$ is a local maximum,
    we have~$\varepsilon \leq \frac{1}{2}(f(t_{i+1})-f(t_i))$, which~implies
        \begin{equation}\label{eqn:minmaxineq}
            f(t_{i+1})-\varepsilon \geq f(t_i)+\varepsilon.
        \end{equation}
    By definition of $\varepsilon$-extremal intervals
    (\defref{epsilon-intervals}) and since $t_i$ is a local minimum,
    any point $x \in
    \varphi_{\varepsilon}(t_i)$ satisfies $f(x)-\varepsilon <
    f(t_i)+\varepsilon$. Since we already established that
    $ f(t_{i+1})-\varepsilon \geq f(t_i)+\varepsilon$ in \eqnref{minmaxineq},
    we know that~$t_{i+1} \notin \varphi_{\varepsilon}(t_i)$.
    Similarly, since $t_{i+1}$ is a maximum, for any point~$y \in \varphi_{\varepsilon}(t_{i+1})$,
    we know that~$f(y)+\varepsilon > f(t_{i+1})-\varepsilon$. Along with
    \eqnref{minmaxineq}, we conclude $t_i \notin
    \varphi_{\varepsilon}(t_{i+1})$.

    Next, we prove the backward direction by contrapositive.
    Assume $i < n$ and  $\varepsilon >
    \frac{1}{2}|f(t_i)-f(t_{i+1})|$. Since~$t_i$ is a local minimum and
    $t_{i+1}$ is a local maximum, we have
    $$f(t_{i+1})-\varepsilon <
    f(t_i)+\varepsilon.$$
    Therefore, $t_{i+1} \in (f-\varepsilon)^{-1}(-\infty,
    f(t_i)+\varepsilon)$. In order for $t_{i+1} \in
    \varphi_{\varepsilon}(t_i)$, we need to show that $t_{i+1}$ is in
    the connected component of  $(f-\varepsilon)^{-1}(-\infty,
    f(t_{i})+\varepsilon)$ containing $t_i$.
    Recalling that $\le(*)$ denotes the left endpoint of an interval and since
    \mbox{$t_i \in \varphi_{\varepsilon}(t_i)$}, we have
    $\le(\varphi_{\varepsilon}(t_i))<t_i<t_{i+1}$. In addition, since $t_i$
    and~$t_{i+1}$ are adjacent,
    $\re(\varphi_{\varepsilon}(t_i))>t_{i+1}$. We conclude that $t_{i+1} \in
    \varphi_{\varepsilon}(t_i)$. Therefore,
    \stmtref{extrema-containment} holds for minima.

    \orgemph{Proof of \stmtref{properties-extrema-containment} for
    minima}.
   \stmtref{properties-extrema-containment} follows directly from
    \cite[Proposition 4]{BerryUsing20}.

    For the case where $(t_i, f(t_i))$ is a local  maximum,
    we substitute $-f$ for $f$ and follow the proofs above.
\end{proof}

\subsection{Computing Edge Weights Lemma} \label{app:edgeweights}
We prove \thmref{edge-weights}. We first need the following lemma.

\begin{lem}[Comparability of Extrema from Same function when $\varepsilon$ is Smaller than Node Weights]
    Let $F=\{f_i:C \rightarrow \R\}_{i=1}^n$ be a collection of nicely tame
    functions where $t_1^i<t_2^i<\dots <t_{k_i}^i$ are all the domain coordinates of the local extrema of
    $f_i$. Let \mbox{$\DAG(F) := (V,E,\omega_V,\omega_E)$} be the extremal event DAG
    of~$F$. Let $(v(i,j), v(c,d)) \in E$. Suppose $i=c$, $\varepsilon < \min\{\omega_V (v(i, j)), \omega_V (v(c, d))\}$, 
    and $\varphi^{f_i}_{\varepsilon}(t^i_j)\cap \varphi^{f_c}_{\varepsilon}(t^c_d) \neq \emptyset$. Then $t^i_j$ and $t^c_d$ are comparable.
    \label{lem:appedgeweights}
\end{lem}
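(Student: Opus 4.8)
The plan is to check the three conditions of \defref{comparable} in turn; conditions (1) and (2) are immediate and all the work is in condition (3). Since $(v(i,j),v(c,d))\in E$ and $i=c$, we have $t_j^i<t_d^i$, and the hypothesis $\varepsilon<\min\{\omega_V(v(i,j)),\omega_V(v(c,d))\}=\tfrac12\min\{\pers_{f_i}(t_j^i),\pers_{f_i}(t_d^i)\}$ gives $\pers_{f_i}(t_j^i)>2\varepsilon$ and $\pers_{f_i}(t_d^i)>2\varepsilon$, i.e. conditions (1) and (2). So it remains to show condition (3), and I will show $t_j^i\prec_\varepsilon t_d^i$. First I note $t_j^i$ and $t_d^i$ have opposite types: if they were both minima (resp. both maxima) of $f_i$, then since $\varepsilon<\tfrac12\min\{\pers_{f_i}(t_j^i),\pers_{f_i}(t_d^i)\}$, Proposition~1 of~\cite{BerryUsing20} (applied to $f_i$, resp. $-f_i$) would give $\varphi^{f_i}_\varepsilon(t_j^i)\cap\varphi^{f_i}_\varepsilon(t_d^i)=\emptyset$, contradicting the hypothesis. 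Replacing $f_i$ by $-f_i$ if necessary (which fixes the $\varepsilon$-extremal intervals, the persistences, and the order $t_j^i<t_d^i$), I may therefore assume $(t_j^i,f_i(t_j^i))$ is a local minimum and $(t_d^i,f_i(t_d^i))$ a local maximum.

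Next I unpack $\prec_\varepsilon$ in the single-function case: for an arbitrary nicely tame $g\in N_\varepsilon(f_i)$ I must produce a local minimum $t'$ of $g$ lying in $\varphi^{f_i}_\varepsilon(t_j^i)$ and a local maximum $s'$ of $g$ lying in $\varphi^{f_i}_\varepsilon(t_d^i)$ with $t'<s'$. Write $\varphi^{f_i}_\varepsilon(t_j^i)=(\ell_1,r_1)$ and $\varphi^{f_i}_\varepsilon(t_d^i)=(\ell_2,r_2)$. Because $\varepsilon<\tfrac12\pers_{f_i}(t_j^i)$, the sublevel-set component $\varphi^{f_i}_\varepsilon(t_j^i)$ contains no point at which $f_i<f_i(t_j^i)$, so $f_i(t_j^i)\le f_i<f_i(t_j^i)+2\varepsilon$ on $\varphi^{f_i}_\varepsilon(t_j^i)$, and (away from $\partial C$) $f_i$ equals $f_i(t_j^i)+2\varepsilon$ at $\ell_1$ and $r_1$; hence $g>f_i(t_j^i)+\varepsilon$ near those endpoints while $g(t_j^i)<f_i(t_j^i)+\varepsilon$, so $g$ attains its minimum over $[\ell_1,r_1]$ at an interior point, which is a local minimum $t'$ of $g$ in $\varphi^{f_i}_\varepsilon(t_j^i)$. (This is essentially Proposition~2 of~\cite{BerryUsing20}.) Symmetrically, using $\varepsilon<\tfrac12\pers_{f_i}(t_d^i)$, one gets $f_i(t_d^i)-2\varepsilon< f_i\le f_i(t_d^i)$ on $\varphi^{f_i}_\varepsilon(t_d^i)$, $g<f_i(t_d^i)-\varepsilon$ near $\ell_2,r_2$, $g(t_d^i)>f_i(t_d^i)-\varepsilon$, so $g$ has a local maximum in $(\ell_2,r_2)$.

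It then remains to choose the maximum to the right of $t'$. If $t'\le\ell_2$, every local maximum of $g$ in $(\ell_2,r_2)$ already lies to the right of $t'$ and we are done. Otherwise $t'$ lies in the overlap $\varphi^{f_i}_\varepsilon(t_j^i)\cap\varphi^{f_i}_\varepsilon(t_d^i)$, and one argues that $g$ has a local maximum inside $\varphi^{f_i}_\varepsilon(t_d^i)$ to the right of $t'$: since $g$ increases immediately right of the local minimum $t'$, $g(t_d^i)>f_i(t_d^i)-\varepsilon>g(r_2^-)$, and (if $t'<t_d^i$) $t_d^i\in(t',r_2)$, the maximum of $g$ over $[t',r_2]$ is attained at an interior point $s'\in(t',r_2)\subseteq(\ell_2,r_2)$, a local maximum of $g$ with $s'>t'$. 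In the remaining subcase $t_d^i\le t'$ one instead starts from the other side, taking $s''=\operatorname{argmax}_{[\ell_2,r_2]}g$ (an interior local maximum of $g$ in $\varphi^{f_i}_\varepsilon(t_d^i)$) and, by the same trapping estimates together with \lemref{properties}, \stmtref{nodelife-extrema-containment} controlling the positions of $t_j^i,t_d^i$ relative to $\ell_2,r_1$, locating a local minimum of $g$ in $\varphi^{f_i}_\varepsilon(t_j^i)$ to the left of $s''$; the degenerate situations in which an endpoint of one of the two intervals lies on $\partial C$ fold into the case $t'\le\ell_2$. In every case we exhibit perturbed extrema $t'\in\varphi^{f_i}_\varepsilon(t_j^i)$ and $s'\in\varphi^{f_i}_\varepsilon(t_d^i)$ of $g$ with $t'<s'$; as $g$ was arbitrary this gives $t_j^i\prec_\varepsilon t_d^i$, establishing condition (3).

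The hard part is the last step: because the hypothesis of the lemma is exactly that the two $\varepsilon$-extremal intervals overlap, the perturbed extrema are not automatically in the right order, and pinning down a correctly ordered pair requires the localization argument above plus a short case analysis on how $t'$, $\ell_2$, $r_2$ and $t_d^i$ are arranged. Everything else is bookkeeping once one records the two ``trapping'' facts $f_i(t_j^i)\le f_i<f_i(t_j^i)+2\varepsilon$ on $\varphi^{f_i}_\varepsilon(t_j^i)$ and $f_i(t_d^i)-2\varepsilon< f_i\le f_i(t_d^i)$ on $\varphi^{f_i}_\varepsilon(t_d^i)$, which come from $\varepsilon<\tfrac12\pers$ together with \defref{epsilon-intervals}.
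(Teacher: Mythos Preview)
Your argument is on the right track and uses the same raw ingredients as the paper (Propositions~1 and~2 of \cite{BerryUsing20} and the trapping bounds $f_i(t_j^i)\le f_i<f_i(t_j^i)+2\varepsilon$ on $\varphi_\varepsilon^{f_i}(t_j^i)$, etc.), but you are missing one structural fact that the paper establishes at the very start: the two extrema $t_j^i$ and $t_d^i$ are \emph{adjacent} in the list of extrema of $f_i$. The paper gets this from Theorem~2 of \cite{BerryUsing20} (together with Proposition~1 for opposite types). You only argue opposite types, not adjacency.

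This matters because your ``remaining subcase $t_d^i\le t'$'' is handled by invoking \lemref{properties}\,\stmtref{nodelife-extrema-containment}, but that statement is formulated only for \emph{consecutive} extrema $t_i,t_{i+1}$, so you cannot cite it without first knowing $t_j^i$ and $t_d^i$ are adjacent. Once adjacency is in hand, \lemref{properties}\,\stmtref{nodelife-extrema-containment} gives $t_d^i\notin\varphi_\varepsilon^{f_i}(t_j^i)$, hence $t_d^i\ge r_1>t'$, and your ``remaining subcase'' is actually vacuous; only the cases $t'\le\ell_2$ and $\ell_2<t'<t_d^i$ occur, and your direct construction for those is fine. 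So the fix is simply to cite Theorem~2 of \cite{BerryUsing20} (as the paper does) to obtain adjacency before the case split.

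As a secondary remark, the overall strategy differs from the paper: you build the ordered pair of perturbed extrema directly, whereas the paper argues by contradiction. Assuming $t_j^i,t_d^i$ incomparable, it takes a bad perturbation $g$ with every perturbed min/max pair in the wrong order, picks such a pair that is adjacent in $g$, and derives from the trapping bounds that $g(t_d')<g(t_j')$, contradicting adjacency of a max followed by a min. Both routes work once adjacency of $t_j^i,t_d^i$ is on the table; your direct approach is arguably more transparent, but as written it leans on \lemref{properties}\,\stmtref{nodelife-extrema-containment} in a place where its hypothesis has not yet been verified.
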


\begin{proof}
    Since $i=c$ in this case, we omit the superscripts $i$ and $c$ of $t_j^i$
    and $t_d^c$. Additionally we set $f := f_i$. Furthermore, we omit the
    subscript and superscript $f$ from the functions $\pers_f$
    and~$\varphi_{\varepsilon}^f$.  
    
By Proposition 1 and Theorem 2 of \cite{BerryUsing20},
            one of $t_j$ or $t_d$ is the domain coordinate of a local maximum
            while the other is the domain coordinate of a local minimum, and
            these two extrema are adjacent, i.e. there are no extrema of $f$
            between $t_j$ and $t_d$. Without loss of generality, suppose~$(t_j,
            f(t_j))$ is a local minimum,~$(t_d, f(t_d))$ is a local maximum, and
            $t_j< t_d$. By way of contradiction, suppose $t_j$ and~$t_d$ are
            incomparable. Since we assume that the first two requirements of
            \defref{comparable} are satisfied, it is the third condition that is
            violated. Therefore it is not true that  $t_j\prec_\varepsilon t_d$
            nor it is true that  $t_d \prec_\varepsilon t_j$. Since $t_j < t_d$,
            there exists $g \in N_{\varepsilon}(f)$ such that for every $t_j'
            \in \varphi_{\varepsilon}(t_j)$ that is a domain coordinate of a
            local minimum of $g$, and $t_d' \in \varphi_{\varepsilon}(t_d)$ that
            is a domain coordinate of a local maximum of~$g$, we
            have~$t_j'> t_d'$. Consider such $t_j'$ and $t_d'$ that are adjacent.

            \orgemph{We claim that $g(t_j')>f(t_d)-\varepsilon$}. On the contrary, suppose
            $g(t_j') \leq f(t_d)-\varepsilon$. We show there exists a local
            maximum $(t_g', g(t_g'))$ such that $t_g' \in
            \varphi_{\varepsilon}(t_d)$ and $t_j' < t_g'$, which contradicts the
            assumption $t_j$ and~$t_d$ are incomparable.

            Since $g \in N_{\varepsilon}(f)$, we have $f(t_d)-\varepsilon <
            g(t_d)$. Hence, $g(t_j')<g(t_d)$. Suppose $C=[a_1, a_2]$ where $a_1,a_2 \in
            \R$. We discuss two cases: either $\re(\varphi_{\varepsilon}(t_d))=a_2$ or $\re(\varphi_{\varepsilon}(t_d))\neq a_2$.

                    \orgemph{We prove Case 1 that $\re(\varphi_{\varepsilon}(t_d)) \neq a_2$}. By definition of
                    $\varepsilon$-extremal intervals,
                    $f(\re(\varphi_{\varepsilon}(t_d)))+\varepsilon =
                    f(t_d)-\varepsilon$. Since $g \in N_{\varepsilon}(f)$,
                    \[
                        g(\re(\varphi_{\varepsilon}(t_d)))<f(\re(\varphi_{\varepsilon}(t_d)))+\varepsilon
                        = f(t_d)-\varepsilon.
                    \]
                    Recall, $f(t_d)-\varepsilon <
                    g(t_d)$. Therefore, $g(\re(\varphi_{\varepsilon}(t_d))) <
                    g(t_d)$. By \stmtref{nodelife-extrema-containment} of
                    \lemref{properties}, we have $t_d \notin
                    \varphi_{\varepsilon}(t_j)$. Hence, $t_j'<t_d$. Furthermore,
                    by assumption $t_d'<t_j'$ where $t_d' \in
                    \varphi_{\varepsilon}(t_d)$. Hence,~$t_j'>t_d'>\le(\varphi_{\varepsilon}(t_d))$ and
                    $t_j'<t_d<\re(\varphi_{\varepsilon}(t_d))$. Thus, $t_j' \in
                    \varphi_{\varepsilon}(t_d)$. All together we have, \[
                        t_j'<t_d<\re(\varphi_{\varepsilon}(t_d)),\quad g(t_j') <
                        g(t_d), \quad g(t_d)>g(\re(\varphi_{\varepsilon}(t_d))).
                    \] This and the assumption $g$ is nicely tame implies there
                    exists a local maximum $(t_g', g(t_g'))$ such that~$t_g' \in
                    (t_j', \re(\varphi_{\varepsilon}(t_d)))$. Hence, $t_g' \in
                    \varphi_{\varepsilon}(t_d)$ and $t_j'<t_g'$.
                    
                    \orgemph{We prove Case 2 $\re(\varphi_{\varepsilon}(t_d))=a_2$}. Using the same reasoning as in the case 
                    $(\re(\varphi_{\varepsilon}(t_d)) \neq a_2)$, we find 
                    	\[t_j'<t_d<\re(\varphi_{\varepsilon}(t_d)),\quad g(t_j') <
                        g(t_d). \]
                        	Furthermore, we must have either
	         \[ g(t_d) > g(\re(\varphi_{\varepsilon}(t_d))) \text{ or } g(a_2) = g(\re(\varphi_{\varepsilon}(t_d))) \geq g(t_d). \]
	         Either way, we can conclude there exists a local maximum $t_g' \in (t_j',a_2] \subset \varphi_{\varepsilon}(t_d)$ of $g$.  

            In both cases ($\re(\varphi_{\varepsilon}(t_d))=a_2$ and $\re(\varphi_{\varepsilon}(t_d))\neq a_2$) we find there exists a local
            maximum $(t_g', g(t_g'))$ such that $t_g' \in
            \varphi_{\varepsilon}(t_d)$ and $t_j'<t_g'$. This shows that $t_j
            \prec_{\varepsilon} t_d$, which is a contradiction.

            Therefore the claim
            $$g(t_j')>f(t_d)-\varepsilon$$
            holds.
            A similar argument can be used to show
            $$g(t_d')<f(t_j)+\varepsilon.$$
             By \stmtref{nodelife-extrema-containment} of
             \lemref{properties}, $t_j \notin
            \varphi_{\varepsilon}(t_d)$ since $\varepsilon < \min\{\omega_V(v(i,j)), \omega_V(v(c,d))\} = \frac{1}{2}\min\{\pers(t_j), \pers(t_d)\}$. Applying
             \stmtref{extrema-containment} of
            \lemref{properties}, we get $\varepsilon \leq
            \frac{1}{2}(f(t_d)-f(t_j))$. Hence,
            $$g(t_j')>f(t_d)-\varepsilon\geq \frac{1}{2}(f(t_d)+f(t_j)).$$
            $$g(t_d')<f(t_j)+\varepsilon\leq \frac{1}{2}(f(t_d)+f(t_j)).$$
            This implies that
            $$g(t_d')<g(t_j').$$
            Since $t_d'<t_j'$ and $(t_d', g(t_d'))$ is a local maximum while
            $(t_j', g(t_j'))$ is local minimum of $g$, there must exist domain
            coordinates of a local minimum and maximum of $g$ between $t_d'$ and
            $t_j'$. Hence, we reach a contradiction with the assumption that
            $t_d'$ and $t_j'$ are adjacent extrema. Therefore, we conclude that
            $t_j$ and $t_d$ are comparable.

\end{proof}

\noindent \textbf{\thmref{edge-weights}} (Computing Edge Weights). Let $F=\{f_i:C \rightarrow \R\}_{i=1}^n$ be a collection of nicely tame
    functions where $t_1^i<t_2^i<\dots <t_{k_i}^i$ are all the domain coordinates of the local extrema of
    $f_i$. Let \mbox{$\DAG(F) := (V,E,\omega_V,\omega_E)$} be the extremal event DAG
    of~$F$. For all edges~$(v(i,j), v(c,d)) \in E$, the following statements hold

    \begin{enumerate}
        \item If $i=c$, then
            $$\omega_{E}(v(i,j), v(c,d)) = \min\{\omega_V(v(i,j)), \omega_V(v(c,d))\}.$$ \label{stmt:edge-weights-same}
        \item If $i\neq c$, then
            $$\omega_{E}(v(i,j), v(c,d))=\min\{\omega_V(v(i,j)), \omega_V(v(c,d)), \varepsilon^*(t_j^i, t_d^c)\},$$
            where
            $$\varepsilon^*(t_j^i, t_d^c) := \inf\{\varepsilon \mid \varphi^{f_i}_{\varepsilon}(t_j^i) \cap \varphi^{f_c}_{\varepsilon}(t_d^c) \neq \emptyset\}.$$
            \label{stmt:edge-weights-dif}
    \end{enumerate}

\begin{proof}
    Assume all hypotheses.

    \orgemph{First, we prove \stmtref{edge-weights-same}}.
    Since $i=c$ in this case, we omit the superscripts $i$ and $c$ of $t_j^i$
    and $t_d^c$. Additionally we set $f := f_i$. Furthermore, we omit the
    subscript and superscript $f$ from the functions $\pers_f$
    and~$\varphi_{\varepsilon}^f$.  
    
    \orgemph{First, suppose $\varepsilon < \min\{\omega_V(v(i,j)), \omega_V(v(c,d))\}$.} We show that $t_j$ and $t_d$ are comparable.  We consider two~cases.

       \orgemph{Suppose $\varphi_{\varepsilon}(t_j) \cap \varphi_{\varepsilon}(t_d) =
            \emptyset$}. Without loss of generality, assume $t_j<t_d$. Let $g\in
            N_{\varepsilon}(f)$. Using Proposition 2 and Corollary 2
            of~\cite{BerryUsing20}, we get that $\varphi_{\varepsilon}(t_j)$ and
            $\varphi_{\varepsilon}(t_d)$ contain local extrema $t_j'$ and $t_d'$
            of the same type as~$t_j$ and $t_d$. Since
            $\varphi_{\varepsilon}(t_j) \cap \varphi_{\varepsilon}(t_d) =
            \emptyset$, $t_j < t_d$ implies $t_j' < t_d'$. Therefore, $t_j$ and
            $t_d$ are comparable.

        \orgemph{Next suppose $\varphi_{\varepsilon}(t_j) \cap \varphi_{\varepsilon}(t_d) \neq
            \emptyset$.} This is the content of \lemref{appedgeweights} in \appref{edgeweights}.
    Altogether we find that if
    $\varepsilon < \min\{\omega_V(v(i,j), \omega_V(v(c,d))\}$, then $t_j$ and $t_d$ are
    comparable. 
    
   \orgemph{ Lastly, we consider the case that $\varepsilon \geq
    \min\{\omega_V(v(i,j), \omega_V(v(c,d))\}$.} By \defref{comparable}, $t_j$ and $t_d$ are
    incomparable.

    We have shown that $t_j$, $t_d$ are comparable for all $\varepsilon <
    \min\{\omega_V(v(i,j)), \omega_V(v(c,d))\}$ and $t_j$, $t_d$ are incomparable for
    all~$\varepsilon \geq \min\{\omega_V(v(i,j)), \omega_V(v(c,d))\}$. Therefore, \[
        \min\{\omega_V(v(i,j)), \omega_V(v(c,d))\} =  \inf\{\varepsilon\mid t_j \text{ and }
        t_d \text{ are incomparable}\}. \] We conclude  $\omega_{E}(v(i,j),
    v(c,d)) = \min\{\omega_V(v(i,j)), \omega_V(v(c,d))\}$.

\vspace{1ex}

    \orgemph{Next, we prove \stmtref{edge-weights-dif}}.
    Let $(t_j^i, f_i(t_j^i))$ and $(t_d^c, f_c(t_d^c))$ be local extrema.  First,
    let
    $$\varepsilon\leq\min\{\omega_V(v(i,j)), \omega_V(v(c,d)),
    \varepsilon^*\}.$$
     Then, by definition of $\varepsilon^*$, the intervals
    $\varphi^{f_i}_{\varepsilon}(t_j^i)$ and
    $\varphi^{f_c}_{\varepsilon}(t_d^c)$ are disjoint. Additionally, from
    Proposition 2 and Corollary 2 of~\cite{BerryUsing20},  both intervals
    guarantee existence of local extrema of the appropriate type under
    any~$\varepsilon$-perturbation. Therefore, $t_j^i$ and $t_d^c$ are
    comparable.

    Next, if $\varepsilon^*< \varepsilon \leq \min\{\omega_V(v(i,j)), \omega_V(v(c,d))\}$, then $\varphi_{\varepsilon}^{f_i}(t_j^i) \cap
    \varphi^{f_c}_{\varepsilon}(t_d^c) \neq \emptyset$.  Since a local extremum
    of an $\varepsilon$-perturbation can happen anywhere in
    $\varphi_{\varepsilon}(t_j^i)$ and $\varphi_{\varepsilon}(t_d^c)$, then
    $t_j^i$ and $t_d^c$ are incomparable at $\varepsilon$.

    Lastly, if $\varepsilon\geq\min\{\omega_V(v(i,j)), \omega_V(v(c,d))\}$,
    then by \defref{comparable}, $t_j^i$ and $t_d^c$ are incomparable.

    We conclude that  $\omega_{E}(v(i,j),
    v(c,d))=\min\{\omega_V(v(i,j), \omega_V(v(c,d)), \varepsilon^*(t_j^i,
    t_d^c)\}$.
\end{proof}

\subsection{Backbone Distance is A Metric}\label{app:backbone}
In order to show the triangle
inequality, $d_{\mathcal{B}}(\x, \z) \leq d_{\mathcal{B}}(\x, \y) +
d_{\mathcal{B}}(\y, \z)$, we need a way of composing an alignment between $\x$
and $\y$ with an alignment between $\y$ and $\z$. This is the content of
\conref{composition-map} and \lemref{induced-alignment}.

\begin{con}[Composition of Alignments]\label{con:composition-map}
    Let $\x$, $\y$, $\z$ be backbones and $\alpha_1: [k] \rightarrow \tilde{\x}
    \times \tilde{\y}$, $\alpha_2: [l]
    \rightarrow \tilde{\y} \times \tilde{\z}$
    be alignment maps. The \emph{composition of $\alpha_1$ and
    $\alpha_2$} induces an ordered correspondence between $\tilde{\x}$ and $\tilde{\z}$
    whose nontrivial pairs are given by
$$
       \A := \left\{ (x,z) ~|~
            \exists y\in \y \text { s.t.\ } (x,y) \in \im(\alpha_1)
            \text{ and }
            (y,z) \in \im(\alpha_2)
        \right\}.
    $$
    The pairs in $\A$ are ordered using the order given in $\x$.  
    The set $A$, however does not form a complete alignment of $\x$ and $\z$ because $x_i \in \x$ with  $(x_i, \zero) \in \im(\alpha_1)$ and $z_j \in \z$ with $(\zero,z_j) \in \im(\alpha_2)$ are not accounted for in pairs in $\A$. We include all such pairs $(x_i,\zero)$ and $(\zero,z_j)$ to construct a function $\alpha_2\circ \alpha_1: [s] \to \tilde{\x} \times \tilde{\z}$ in such a way that each  pair satisfies
    \begin{itemize}
    \item[(a)] If $\iota_\x(x_i) < \iota_\x(x)$ ($\iota_\x(x) <\iota_\x(x_j))$  and $(x_i,z_i) \in A$ ($(x_j,z_j) \in A$) then for 
    \[ \alpha_2\circ \alpha_1(s_1) = (x_i,z_i),\quad \alpha_2\circ \alpha_1(s_2) = (x, \zero),\quad  \alpha_2\circ \alpha_1(s_3) (x_j,z_j) \] 
    we have the order  $s_1<s_2$ ($s_2 <s_3$).
    \item[(b)] If $\iota_\z(z_i) < \iota_\z(z)$ ($\iota_\z(z) <\iota_\z(z_j))$  and $(x_i,z_i) \in A$ ($(x_j,z_j) \in A$) then  for 
    \[ \alpha_2\circ \alpha_1(s_1) = (x_i,z_i),\quad \alpha_2\circ \alpha_1(s_2) = (\zero,z),\quad  \alpha_2\circ \alpha_1(s_3) (x_j,z_j) \] 
    we have the order $s_1<s_2$ ($s_2 <s_3$).
    \end{itemize} 
    An extension of $\A$ to an alignment $\alpha_2\circ \alpha_1: [s] \to \tilde{\x} \times \tilde{\z}$ exists, but also that it may not be unique since there may be multiple options on where to place consecutive insertions. We resolve this ambiguity arbitrarily in the following way. If there is a set of consecutive insertions $\{ (x_{i_1},\zero),(x_{i_2},\zero),\ldots, (x_{i_k},\zero), (\zero, z_{j_1}),\ldots, (\zero, z_{j_s})\} $ between a pair $(x_i, z_i)$ and a pair $(x_j, z_j)$, then we order them starting with the insertions $\{(x_{i_r}, \zero)\}_{r=1}^k$ ordered by the order given in $\x$, followed by the insertions $\{(\zero, z_{j_r})\}_{r=1}^s$ insertions ordered by the order given in $\z$. We hasted to point out that the following results (\lemref{induced-alignment}, \lemref{backbone-triangle}, and \lemref{backbone-infty-triangle}) hold for $\alpha_2\circ \alpha_1$ defined using any choice of order of such consecutive insertions. 
\end{con}

The ordered correspondence $\alpha_2 \circ \alpha_1$ is, in fact, an alignment:

\begin{lem}[Composition is an Alignment]\label{lem:induced-alignment}

    Let $\x$, $\y$, $\z$ be backbones and $\alpha_1: [k] \rightarrow \tilde{\x} \times
    \tilde{\y}$, $\alpha_2: [l]
    \rightarrow \tilde{\y} \times \tilde{\z}$
    be alignment maps.  Then,
    $\alpha_2 \circ \alpha_1$ is an alignment between $\x$ and $\z$.

\end{lem}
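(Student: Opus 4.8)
The plan is to verify all four defining conditions of \defref{alignment} for $\alpha_2 \circ \alpha_1$; three of them transfer almost directly from the corresponding properties of $\alpha_1$ and $\alpha_2$, while partial monotonicity requires the real work. For \emph{no null alignments}: every pair of $\A$ has the form $(x,z)$ with $x \in \x$ and $z \in \z$ (it comes from $(x,y) \in \im(\alpha_1)$ and $(y,z) \in \im(\alpha_2)$ with $x,y,z$ all genuine nodes), and each pair adjoined in \conref{composition-map} is of the form $(x_i,\zero)$ or $(\zero,z_j)$ with $x_i \in \x$, $z_j \in \z$; hence $(\zero,\zero)$ never appears in the image. For \emph{no misalignments}: for $(x,z) \in \A$ with intermediate node $y \in \y$, the no-misalignment property of $\alpha_1$ forces $s_x = s_y$ (neither $x$ nor $y$ is empty) and that of $\alpha_2$ forces $s_y = s_z$, so $s_x = s_z$; an inserted pair has a $\zero$ coordinate and so satisfies the condition trivially. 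For \emph{restriction to matching}: each $x_i \in \x$ has a unique $\alpha_1$-partner; if that partner lies in $\y$ and its unique $\alpha_2$-partner lies in $\z$ then $x_i$ occurs in exactly one pair of $\A$ and in no inserted pair, and otherwise $x_i$ is adjoined exactly once as $(x_i,\zero)$; the symmetric argument handles each $z_j \in \z$.

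For \emph{preserves order of backbones}, I would first show that the orders on $\A$ induced by $\x$ and by $\z$ coincide. Given $(x,z),(x',z') \in \A$ with intermediate nodes $y,y' \in \y$, partial monotonicity of the $\x$- and $\y$-coordinate functions of $\alpha_1$ gives $\iota_\x(x) < \iota_\x(x')$ iff $\iota_\y(y) < \iota_\y(y')$, and partial monotonicity of the $\y$- and $\z$-coordinate functions of $\alpha_2$ gives $\iota_\y(y) < \iota_\y(y')$ iff $\iota_\z(z) < \iota_\z(z')$; chaining these equivalences, ordering the pairs of $\A$ by $\x$ is the same as ordering them by $\z$, so this order is unambiguous and partially monotone in both coordinates. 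It then remains to check that threading in the leftover nodes via rules (a) and (b) of \conref{composition-map} preserves partial monotonicity of both coordinate functions: an inserted $(x_i,\zero)$ constrains only the $\x$-coordinate function and an inserted $(\zero,z_j)$ only the $\z$-coordinate function, and within a consecutive block of insertions the stated convention (first the $\x$-insertions in $\x$-order, then the $\z$-insertions in $\z$-order) is compatible with monotonicity in both coordinates.

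The main obstacle is precisely this last point: checking that conditions (a) and (b) can be satisfied simultaneously, i.e., that every $\x$-insertion can be placed between the $\A$-pairs dictated by (a) and every $\z$-insertion between those dictated by (b) at the same time. This reduces to the observation that the $\x$-insertions and $\z$-insertions falling between a fixed consecutive pair of $\A$-pairs are constrained only among themselves and never against one another (an $\x$-insertion and a $\z$-insertion are incomparable under both $\iota_\x$ and $\iota_\z$), so the fixed tie-breaking convention of \conref{composition-map} always yields a valid linear order. Combining the four verifications shows that $\alpha_2 \circ \alpha_1$ is an alignment between $\x$ and $\z$.
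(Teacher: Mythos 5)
Your proof is correct and follows essentially the same route as the paper's: verify the four conditions of \defref{alignment}, with the null-alignment, no-misalignment, and restriction-to-matching properties transferring directly, and the order-preservation property proved by chaining partial monotonicity through the intermediate $\y$-nodes. Your treatment of the insertion-threading step (showing the $\x$- and $\z$-orders on $\A$ coincide, so constraints (a) and (b) of \conref{composition-map} can be satisfied simultaneously) is somewhat more explicit than the paper's, which simply asserts the extension is added in an order-preserving way, but the underlying argument is the same.
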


\begin{proof}
\orgemph{We show that all properties of \defref{alignment} hold.}
Since $\alpha_1, \alpha_2$ are alignments, they contain no null alignments or misalignments. This implies that $\alpha_2 \circ \alpha_1$ also contains no null alignments or misalignments. Furthermore, $\im(\alpha_2\circ \alpha_1)$ contains all $x \in \x$ that are nontrivially aligned with $y \in \y$ along with all $x \in \x$ that are aligned with an insertion. Hence all $x\in \x$ appear in the image of $(\alpha_2 \circ \alpha_1)_{\x}$ exactly once. The same can be said for all $z \in \z$. Therefore, \propertyref{nullalignments}, \propertyref{nomisalignments}, and \propertyref{restrictiontomatching} of \defref{alignment} hold. 

\orgemph{Lastly, we show that $\alpha_2 \circ \alpha_1$ preserves order of the backbones $\x$ and $\z$.} First consider the set of nontrivial pairs $\A$. We show the order of the (partial) backbones $\x$ and $\z$ is preserved in this set. By construction, we know the order of $\x$ is preserved. Replacing each pair $(x,z) \in A$ by $(x,y) \in \im(\alpha_1)$ where $(y,z) \in \im(\alpha_2)$, we see the order of $\y$ is preserved because $\alpha_1$ is an alignment. Next, replacing each of these pairs with $(y,z)$ where $(y,z) \in \im(\alpha_2)$, we see that order $\z$ is preserved in $A$ because $\alpha_2$ is an alignment. Hence, the order of the backbones $\x$ and $\z$ are preserved in $\A$. By \conref{composition-map}, each remaining trivial pair is added to the set $\A$ so that the order of the backbones $\x$ and $\z$ are preserved. Thus, \propertyref{preservesbackbones} of \defref{alignment} also holds and we have an alignment $\alpha_2 \circ \alpha_1$ between $\x$ and $\z$. 
\end{proof}

Next, we prove that the backbone distance satisfies the triangle inequality. We
consider three backbones $\x, \y,$ and $\z$. We use optimal alignments
between $\x$ and $\y$, and $\y$ and $\z$, to construct an alignment between $\x$
and $\z$. We show that the constructed alignment between $\x$ and $\z$ satisfies the
triangle inequality. Since the cost of the alignment between $\x$ and $\z$ that we find is
an upper bound of the cost of an optimal alignment between $\x$ and $\z$, the backbone
distance must also satisfy the triangle inequality.

\begin{lem}[Backbone Distance Satisfies Triangle Inequality]\label{lem:backbone-triangle}
    Let $\x$, $\y$, $\z$ be backbones. Then,
    $$d_{\mathcal{B}}(\x, \z) \leq d_{\mathcal{B}}(\x, \y) + d_{\mathcal{B}}(\y,
    \z).$$
\end{lem}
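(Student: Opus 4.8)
The plan is to produce a single alignment between $\x$ and $\z$ whose cost is at most $d_{\mathcal{B}}(\x,\y)+d_{\mathcal{B}}(\y,\z)$, by composing optimal alignments for the two given pairs. Note first that, because a backbone is a finite sequence, there are only finitely many alignments between any two backbones, so the infimum in \defref{backbone-dist} is attained; moreover the quantity inside that infimum is exactly $\cost(\alpha)$ of \defref{cost}, since evaluating the term for an insertion $(x,\zero)$ gives $|w_x - 0| = w_x$ because $\zero = (0,0)$. Let $\alpha_1 \colon [k] \to \tilde{\x} \times \tilde{\y}$ and $\alpha_2 \colon [l] \to \tilde{\y} \times \tilde{\z}$ be optimal alignments, so that $\cost(\alpha_1) = d_{\mathcal{B}}(\x,\y)$ and $\cost(\alpha_2) = d_{\mathcal{B}}(\y,\z)$. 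By \conref{composition-map} and \lemref{induced-alignment}, the composition $\alpha_2 \circ \alpha_1$ is an alignment between $\x$ and $\z$, so $d_{\mathcal{B}}(\x,\z) \le \cost(\alpha_2 \circ \alpha_1)$. It therefore suffices to prove $\cost(\alpha_2 \circ \alpha_1) \le \cost(\alpha_1) + \cost(\alpha_2)$.

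I would prove this by a term-by-term accounting of the summands of $\cost(\alpha_2 \circ \alpha_1)$. Recall from \conref{composition-map} that the nontrivial pairs of $\alpha_2 \circ \alpha_1$ are exactly the pairs $(x,z)$ for which there is a (necessarily unique) $y \in \y$ with $(x,y) \in \im(\alpha_1)$ and $(y,z) \in \im(\alpha_2)$; for such a pair the triangle inequality on $\R$ gives $|w_x - w_z| \le |w_x - w_y| + |w_y - w_z|$, where the two terms on the right are summands of $\cost(\alpha_1)$ and $\cost(\alpha_2)$ respectively. Every other element $x$ of $\x$ is an insertion in $\alpha_2 \circ \alpha_1$, contributing $w_x$; such an $x$ arises either from $(x,\zero) \in \im(\alpha_1)$ directly, which is the same summand $w_x$ of $\cost(\alpha_1)$, or from $(x,y) \in \im(\alpha_1)$ with $(y,\zero) \in \im(\alpha_2)$, in which case $w_x \le |w_x - w_y| + w_y$, the two right-hand terms again being summands of $\cost(\alpha_1)$ and $\cost(\alpha_2)$. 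The symmetric statement bounds the contribution of each insertion $(\zero,z)$. Since alignments restrict to partial matchings, the $\y$-elements, and hence the summands of $\cost(\alpha_1)$ and $\cost(\alpha_2)$, invoked across these bounds are pairwise disjoint, and any summand of $\cost(\alpha_1) + \cost(\alpha_2)$ never invoked (for example the weight $w_y$ of a $y$ with $(\zero,y) \in \im(\alpha_1)$ and $(y,\zero) \in \im(\alpha_2)$) only adds slack. Summing the pointwise inequalities yields $\cost(\alpha_2 \circ \alpha_1) \le \cost(\alpha_1) + \cost(\alpha_2)$, and hence $d_{\mathcal{B}}(\x,\z) \le d_{\mathcal{B}}(\x,\y) + d_{\mathcal{B}}(\y,\z)$.

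The main obstacle is this middle bookkeeping step: one must enumerate the cases by how each $\y$-element is treated on the two sides (matched on both sides, an insertion in $\alpha_1$, or an insertion in $\alpha_2$) and verify that every summand of $\cost(\alpha_1)$ and of $\cost(\alpha_2)$ is charged at most once. I do not expect trouble from the ordering and partial-monotonicity axioms of an alignment, since \lemref{induced-alignment} already certifies that $\alpha_2 \circ \alpha_1$ is a legitimate alignment; and the cost bound is insensitive to the arbitrary ordering of consecutive insertions fixed in \conref{composition-map}, exactly as the closing remark of that construction anticipates.
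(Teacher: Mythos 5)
Your proposal is correct and follows essentially the same route as the paper's proof: compose the two optimal alignments via \conref{composition-map} and \lemref{induced-alignment}, apply the triangle inequality on $\R$ to the nontrivially matched pairs, and bound each insertion of the composition by a case analysis on whether the intermediate $\y$-element was itself matched or inserted, checking that each summand of $\cost(\alpha_1)+\cost(\alpha_2)$ is charged at most once. The paper carries out your ``bookkeeping step'' explicitly by decomposing the insertion sets into disjoint unions (its Equations (4)--(12)), but the argument is the same.
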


\begin{proof}

    Let $\alpha_1: [k] \rightarrow \tilde{\x}
    \times \tilde{\y}$, $\alpha_2: [m]
    \rightarrow \tilde{\y} \times \tilde{\z}$ be optimal alignments. Consider the composition alignment~\mbox{$\alpha_2 \circ \alpha_1$} from \conref{composition-map}. Define $\A$ as in \conref{composition-map}, 
    $$\Ax := \{(x, y) \in \im(\alpha_1)  \mid \exists z \in \z \st (y, z) \in \im(\alpha_2) \},$$ 
    and 
    $$\Az := \{(y,z) \in \im(\alpha_2)  \mid \exists x \in \x \st (x,y) \in \im(\alpha_1) \}.$$ 
    We start by considering and justifying all the relations we need in the
    triangle inequality computation.
    Since $d_{\mathcal{B}}(\x, \z)$ is computed
    using an optimal alignment between $\x$ and $\z$ and $\alpha_2 \circ
    \alpha_1$ is one alignment,
    \begin{equation}\label{eq:1}
    d_{\B}(\x, \z) \leq 
        \sum_{(x,z) \in \im(\alpha_2\circ \alpha_1)} |w_x - w_z|
        + \sum_{(x,\zero) \in \im(\alpha_2\circ \alpha_1)}  w_x
        + \sum_{(\zero,z) \in \im(\alpha_2\circ \alpha_1)} w_z 
    \end{equation}

    We apply the $L_1$ norm triangle inequality to the first term in \eqref{eq:1} to get
    \begin{equation}\label{eq:2}
    \sum_{(x,z) \in \im(\alpha_2\circ \alpha_1)} |w_x - w_z| \leq \sum_{(x, y) \in \Ax} |w_x - w_y| + \sum_{(y,z) \in \Az}|w_y-w_z|.
    \end{equation}
      \vskip0.2in
      
    Now we discuss the second term in \eqref{eq:1}. 
    Define 
    $$\Xzero := \{(x, \zero) \in \im(\alpha_2\circ \alpha_1) \mid \exists y \in \y \st (x, y) \in \im(\alpha_1)\setminus \Ax\}.$$
    Observe, if $(x,y) \in \im(\alpha_1) \setminus \Ax$, then for all $z\in \z$, $(y,z) \notin \im(\alpha_2)$. Hence, $(y, \zero) \in \im(\alpha_2)$. This implies the set  
    $$(\x \times \{\zero\}) \cap \im(\alpha_2 \circ \alpha_1) = ((\x \times \{\zero\}) \cap \im(\alpha_1)) \cup \Xzero$$
    and this union is disjoint. Thus,
    
    \begin{equation}\label{eq:4}
    \sum_{(x, \zero) \in \im(\alpha_2 \circ \alpha_1) } w_x = \sum_{(x, \zero) \in \im(\alpha_1)} w_x + \sum_{(x, \zero) \in \Xzero} w_x. 
    \end{equation}
    
    By definition of $\Xzero$, for each $(x, \zero) \in \Xzero$, there exists $y \in \y$ such that $(x, y) \in \im(\alpha_1)\setminus \Ax$ and, at the same time,  $(y, \zero) \in \im(\alpha_2)$. Noting this observation and 
    applying the triangle inequality from the $L_1$-norm to the last term
    of Equation \eqref{eq:4}, gives
    \begin{equation}\label{eq:6}
    \begin{split}
        \sum_{(x, \zero) \in \im(\alpha_2\circ \alpha_1)}w_x \leq \sum_{(x, \zero) \in \im(\alpha_1)} w_x
        &+ \sum_{\im(\alpha_1)\setminus \Ax}|w_x- w_y|\\
         &+ \sum_{(y, \zero) \in \im(\alpha_2) \st (x,y) \in \im(\alpha_1)} w_y.
    \end{split}
    \end{equation}
    
    \vskip0.2in

    Now we discuss the third term in \eqref{eq:1}.
     Define 
     $$\Zzero := \{(\zero, z) \in \im(\alpha_2\circ \alpha_1) \mid \exists y \in \y \st (y, z) \in \im(\alpha_2)\setminus \Az\}.$$
    Similarly, the set 
    $$(\{\zero\} \times \z) \cap \im(\alpha_2 \circ \alpha_1) = ((\{\zero\} \times \z) \cap \im(\alpha_2)) \cup \Zzero.$$ This implies
    
    \begin{equation}\label{eq:5}
        \sum_{(\zero, z) \in \im(\alpha_2 \circ \alpha_1)} w_z
        = \sum_{(\zero, z) \in \im(\alpha_2)} w_z + \sum_{(\zero, z) \in \Zzero}w_z.
    \end{equation}

By definition, for each $(\zero, z) \in \Zzero$, there exists $y\in \y$ such that $(y,z) \in \im(\alpha_2)\setminus \Az$ and 
$(\zero, y) \in \im(\alpha_1)$. Applying the triangle inequality from the $L_1$-norm to the
    last term of Equation \eqref{eq:5}, we get 
    \begin{equation}\label{eq:7}
    \begin{split}
    \sum_{(\zero, z) \in \im(\alpha_2 \circ \alpha_1)} w_z
        \leq \sum_{(\zero, z) \in \im(\alpha_2)} w_z
            &+ \sum_{(y, z) \in \im(\alpha_2)\setminus \Az} |w_z-w_y|\\
            & + \sum_{(\zero, y) \in \im(\alpha_1) \st (y,z) \in \im(\alpha_2) } w_y.
    \end{split}
    \end{equation}

We derive two additional sets of relationships that will be used in the final estimate. We first note that
 since $\im(\alpha_1)\cap (\x \times \y) = \Ax \cup ((\im(\alpha_1)\cap (\x \times \y)) \setminus \Ax)$ and this union is disjoint,
    \begin{equation}\label{eq:9}
    \sum_{(x, y) \in \Ax} |w_x - w_y| +\sum_{(x,y) \in \im(\alpha_1)\setminus \Ax} |w_x - w_y| = \sum_{(x, y) \in \im(\alpha_1) } |w_x - w_y|.
    \end{equation}
    Analogously, since $\im(\alpha_2) \cap (\y \times \z) = \Az \cup ((\im(\alpha_2)\cap (\y\times \z))\setminus \Az)$ and this union is disjoint. Thus,
    \begin{equation}\label{eq:11}
        \sum_{(y,z) \in \Az}|w_y-w_z| +\sum_{(y,z) \in \im(\alpha_2) \setminus \Az}|w_y - w_z| = \sum_{(y,z) \in \im(\alpha_2)}|w_y - w_z|.
    \end{equation}

The second set of relationships are inequalities. First, notice if $(y,z) \in \im(\alpha_2) \setminus \Az$, then $y$ must align with an empty node in the alignment $\alpha_1$. Hence, 
$$\{(\zero, y) \mid (y,z) \in \im(\alpha_2) \setminus \Az\} \subset \{(\zero, y) \in \im(\alpha_1)\}.$$ This implies
    
    \begin{equation}\label{eq:10}
        \sum_{(\zero, y) \st (y, z) \in \im(\alpha_2) \setminus \Az} w_y
        \leq  \sum_{(\zero, y) \in \im(\alpha_1)} w_y.
    \end{equation}

Finally, we note that $\{(y, \zero) \in \im(\alpha_2) \mid (x,y) \in \im(\alpha_1)\} \subset \im(\alpha_2) \cap (\y \times \{\zero\})$. Therefore,
    
    \begin{equation}\label{eq:12}
        \sum_{(y, \zero) \in \im(\alpha_2) \st (x,y) \in \im(\alpha_1)} w_y
        \leq \sum_{(y, \zero) \in \im(\alpha_2)} w_y
    \end{equation}

    Now we can put all these relations together to prove the backbone distance
    satisfies the triangle inequality.

    \begin{equation*}
    \begin{split}
        d_{\mathcal{B}}(\x, \z) &\leq  \sum_{(x,z) \in \im(\alpha_2\circ \alpha_1) } |w_x - w_z|
        + \sum_{(x,\zero) \in \im(\alpha_2\circ \alpha_1)}  w_x
        + \sum_{(\zero, z) \in \im(\alpha_2\circ \alpha_1)} w_z \\
            &\quad\quad \text{ by Equation \eqref{eq:1}}\\
       &\leq \sum_{(x, y) \in \Ax} |w_x - w_y| + \sum_{(y,z) \in \Az}|w_y-w_z| + \sum_{(x,\zero) \in \im(\alpha_2\circ \alpha_1)}  w_x\\
        &\quad\quad + \sum_{(\zero,z) \in \im(\alpha_2\circ \alpha_1)} w_z \text{ by Equation \eqref{eq:2}}\\              
       &\leq \sum_{(x, y) \in \Ax} |w_x - w_y| + \sum_{(x,y) \in \im(\alpha_1) \setminus \Ax}|w_x- w_y|\\ 
       &\quad\quad + \sum_{(x, \zero) \in \im(\alpha_1)} w_x
      + \sum_{(y, \zero) \in \im(\alpha_2) \st (x,y) \in \im(\alpha_1)} w_y \\
       &\quad\quad + \sum_{(y,z) \in \Az}|w_y-w_z| + \sum_{(y, z) \in \im(\alpha_2)\setminus \Az} |w_z-w_y|\\ 
       &\quad\quad+ \sum_{(\zero, y) \in \im(\alpha_1) \st (y,z) \in \im(\alpha_2)} w_y 
        + \sum_{(\zero, z) \in \im(\alpha_2)} w_z\\
        &\quad\quad \text{by Equations \eqref{eq:6} and \eqref{eq:7}}\\        
        & \leq \sum_{(x, y) \in \im(\alpha_1)} |w_x - w_y| + \sum_{(x, \zero) \in \im(\alpha_1)} w_x  +\sum_{(\zero, y) \in \im(\alpha_1)} w_y\\
        & \quad\quad + \sum_{(y,z) \in \im(\alpha_2)}|w_y - w_z| + \sum_{(y, \zero) \in \im(\alpha_2)} w_y + \sum_{(\zero, z) \in \im(\alpha_2)} w_z\\
        &\quad\quad  \text{ by Equations \eqref{eq:9}, \eqref{eq:10}, \eqref{eq:11}, and \eqref{eq:12}}\\
        &= d_{\mathcal{B}}(\x, \y)+d_{\mathcal{B}}(\y, \z).
    \end{split}
    \end{equation*}

    Hence, $d_{\mathcal{B}}(\x, \z) \leq d_{\mathcal{B}}(\x, \y) + d_{\mathcal{B}}(\y, \z).$

\end{proof}

We can now prove the backbone distance is a metric.

\begin{prop}[Backbone Distance is a Metric]
    The backbone distance (\defref{backbone-dist}) is a metric.
\label{prop:backbone-dist-is-dist}
\end{prop}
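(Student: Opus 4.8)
The plan is to verify the three metric axioms not already in hand, since the triangle inequality is exactly \lemref{backbone-triangle}. Non-negativity is immediate: every summand in \eqref{backbone-dist} is either an absolute value or a node weight lying in $\R_{\geq 0}$, so $d_{\mathcal B}(\x,\y)\geq 0$ for all backbones $\x,\y$.

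For symmetry, I would exhibit a cost-preserving involution on alignments. Given an alignment $\alpha\colon[k]\to\tilde{\x}\times\tilde{\y}$, define $\alpha^{\mathrm{op}}\colon[k]\to\tilde{\y}\times\tilde{\x}$ by swapping the two coordinate functions, $\alpha^{\mathrm{op}}(i):=(\alpha_{\y}(i),\alpha_{\x}(i))$. Each of the four conditions in \defref{alignment} is symmetric in the two backbones, so $\alpha^{\mathrm{op}}$ is again an alignment; an insertion $(x,\zero)\in\im(\alpha)$ becomes $(\zero,x)\in\im(\alpha^{\mathrm{op}})$ and conversely, while $|w_x-w_y|=|w_y-w_x|$ on nontrivial matches, so the value of \eqref{backbone-dist} is unchanged. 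Since $\alpha\mapsto\alpha^{\mathrm{op}}$ is a bijection between the alignments of $(\x,\y)$ and those of $(\y,\x)$, taking infima yields $d_{\mathcal B}(\x,\y)=d_{\mathcal B}(\y,\x)$.

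For the identity of indiscernibles, the forward direction is easy: when $\x=\y=(x_1,\dots,x_n)$, the diagonal alignment $\alpha(i)=(x_i,x_i)$ uses no insertions, satisfies \defref{alignment} trivially (matched strings are equal and each coordinate function is monotone), and has cost $0$, so $d_{\mathcal B}(\x,\x)=0$. For the converse, suppose $d_{\mathcal B}(\x,\y)=0$. Because there are only finitely many alignments between two finite backbones, the infimum in \eqref{backbone-dist} is attained by some alignment $\alpha$, and as each summand is non-negative, each must vanish: every nontrivial match $(x,y)\in\im(\alpha)$ has $s_x=s_y$ (by \propertyref{nomisalignments}) and $w_x=w_y$, while every insertion $(x,\zero)$ or $(\zero,y)$ forces $w_x=0$, respectively $w_y=0$. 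Since the node weights of a backbone $B(f)$ of a nicely tame function are node lives $\frac{1}{2}\pers_f(t)>0$, no insertions can occur; hence $\alpha_{\x}$ and $\alpha_{\y}$ are order-preserving bijections onto $\x$ and $\y$ (by \propertyref{preservesbackbones} and \propertyref{restrictiontomatching}) whose matched nodes agree in string and weight, so $\x=\y$ entry by entry.

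I expect the identity-of-indiscernibles direction to be the only delicate point, since it closes only once weight-zero nodes are excluded — which is why the argument invokes positivity of node lives for backbones arising from nicely tame functions; everything else is routine and, together with \lemref{backbone-triangle}, gives that $d_{\mathcal B}$ is a metric. (On abstract backbones allowing vanishing weights one would obtain only a pseudometric, as a weight-$0$ node could be matched to an insertion at no cost.)
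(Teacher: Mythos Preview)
Your proof follows the same strategy as the paper's: check non-negativity, symmetry, and definiteness directly, and invoke \lemref{backbone-triangle} for the triangle inequality. You are in fact more careful than the paper on the definiteness direction---the paper asserts without further comment that the vanishing of the insertion sums forces every node of $\x$ to be nontrivially matched, whereas you correctly isolate the need for strictly positive node weights (supplied by the positivity of node lives for $B(f)$ with $f$ nicely tame) and note that for abstract backbones admitting weight-zero nodes one obtains only a pseudometric.
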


\begin{proof}
    Let $\x$, $\y$ be backbones.
    Recall that for each $x\in \x$, we can write $x=(s_x, w_x)$; likewise for
    $y\in\y$.
    Let~$\alpha: [k] \rightarrow \tilde{\x} \times
    \tilde{\y}$ be an optimal alignment. We verify all properties of a metric.

    \orgemph{Non-Negativity.} Since all node weights are non-negative,
    $d_{\mathcal{B}}(\x, \y) \geq 0$.

    \orgemph{Symmetry.} By construction, $d_{\mathcal{B}}$ is
    symmetric; see \eqref{backbone-dist}.

    \orgemph{Definiteness.} If $\x = \y$, then the optimal alignment
    aligns each node with itself, and there are no insertions. Hence, all node weights match and
    $d_{\mathcal{B}}(\x, \y) = 0$.

    On the other hand, assume $d_{\mathcal{B}}(\x, \y) = 0$. This implies
    $$
        0=
        \sum_{(x,y) \in \im(\alpha)} |w_x - w_y|
        + \sum_{(x,\zero) \in \im(\alpha)} w_x
        + \sum_{(\zero, y) \in \im(\alpha)} w_y.
    $$
    Since all weights are non-negative, the latter two summands (corresponding
    to nodes aligned with insertions) sum to zero.
    Since there are no null alignments (\defref{alignment} \propertyref{nullalignments}),
    every node in $\x$ is aligned
    with a node in $\y$. We have $$0 =
    \sum_{(x,y) \in \im(\alpha)} |w_x - w_y|$$ and so
    $w_x = w_y$ for all nontrivial pairs $(x,y)$.
    Additionally, if $(x,y) \in \im (\alpha)$, we know that $s_x =
    s_y$.
    Since this is true
    for all nodes, we must have $\x = \y$.

    \orgemph{Triangle Inequality.} By
    \lemref{backbone-triangle}, the triangle inequality holds.

    Therefore, the backbone distance is a metric.

\end{proof}

\subsection{Backbone Infinity Distance is A Metric}\label{app:backboneinfty}

We show the backbone infinity distance is a metric. We start by proving that the triangle inequality holds.

 \begin{lem}[Backbone Infinity Distance Satisfies Triangle Inequality]
     Let $\x, \y, \z$ be backbones. Then,
     \[ d_{\mathcal{B}_{\infty}}(\x, \z) \leq d_{\mathcal{B}_{\infty}}(\x, \y) + d_{\mathcal{B}_{\infty}}(\y, \z) \]
     \label{lem:backbone-infty-triangle}
 \end{lem}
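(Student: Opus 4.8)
The plan is to mimic the proof of the triangle inequality for the backbone distance (\lemref{backbone-triangle}), which is already established in the excerpt, but simplified because we now take a \emph{maximum} of absolute differences of aligned weights rather than a \emph{sum}, and insertions contribute a single-node weight rather than being part of a sum. First I would take optimal alignments realizing the two distances: let $\alpha_1:[k]\to\tilde{\x}\times\tilde{\y}$ and $\alpha_2:[m]\to\tilde{\y}\times\tilde{\z}$ be alignments with $d_{\mathcal{B}_\infty}(\x,\y)=\max_i|\omega_\x(\alpha_{1,\x}(i))-\omega_\y(\alpha_{1,\y}(i))|$ and similarly for $\alpha_2$. By \conref{composition-map}, the composition $\alpha_2\circ\alpha_1$ is an ordered correspondence between $\tilde{\x}$ and $\tilde{\z}$, and by \lemref{induced-alignment} it is a genuine alignment. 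Since $d_{\mathcal{B}_\infty}(\x,\z)$ is an infimum over all alignments, it suffices to bound the maximum weight difference that $\alpha_2\circ\alpha_1$ produces by $d_{\mathcal{B}_\infty}(\x,\y)+d_{\mathcal{B}_\infty}(\y,\z)$.

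Next I would case-split on the type of each pair in $\im(\alpha_2\circ\alpha_1)$, exactly as in \conref{composition-map}. For a nontrivial pair $(x,z)$ there is a $y\in\y$ with $(x,y)\in\im(\alpha_1)$ and $(y,z)\in\im(\alpha_2)$; the ordinary triangle inequality for $|\cdot|$ gives $|w_x-w_z|\le|w_x-w_y|+|w_y-w_z|\le d_{\mathcal{B}_\infty}(\x,\y)+d_{\mathcal{B}_\infty}(\y,\z)$. For a pair $(x,\zero)$ coming from composition: either $(x,\zero)\in\im(\alpha_1)$ directly, in which case the contributed quantity is $w_x=|w_x-0|\le d_{\mathcal{B}_\infty}(\x,\y)\le d_{\mathcal{B}_\infty}(\x,\y)+d_{\mathcal{B}_\infty}(\y,\z)$, or $(x,y)\in\im(\alpha_1)$ with $(y,\zero)\in\im(\alpha_2)$, in which case $w_x=|w_x-0|\le|w_x-w_y|+|w_y-0|\le d_{\mathcal{B}_\infty}(\x,\y)+d_{\mathcal{B}_\infty}(\y,\z)$. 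The case $(\zero,z)$ is symmetric. Since the value assigned by $\alpha_2\circ\alpha_1$ is the maximum over all these finitely many contributions, and each is bounded by $d_{\mathcal{B}_\infty}(\x,\y)+d_{\mathcal{B}_\infty}(\y,\z)$, the maximum is too; taking the infimum over alignments on the left yields $d_{\mathcal{B}_\infty}(\x,\z)\le d_{\mathcal{B}_\infty}(\x,\y)+d_{\mathcal{B}_\infty}(\y,\z)$.

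One technical point to handle carefully: the maximum in \defref{backbone-infty-dist} ranges over $i\in[k]$, i.e.\ over all pairs $\alpha(i)$ including those involving insertions, and $\omega_\x(\zero)=\omega_\y(\zero)=0$ since $\zero=(0,0)$; so an insertion pair $(x,\zero)$ contributes $|w_x-0|=w_x$ to the max, consistent with the bookkeeping above. I should also confirm that $\max$ over an empty set (when both backbones are empty) is handled—this is the degenerate case and the distance is $0$, so the inequality holds trivially. The main obstacle, such as it is, is purely organizational: making sure every pair type produced by \conref{composition-map} is covered and that the chosen optimal alignments' bounds propagate; there is no real analytic difficulty, since this is a strictly simpler version of the already-proven \lemref{backbone-triangle} (no summation triangle-inequality accounting, no disjoint-union decompositions of insertion sets). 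I would therefore keep the write-up short, referencing \conref{composition-map} and \lemref{induced-alignment} for the construction and the alignment property, and spending the bulk of the proof on the three-or-four-line case analysis of weight contributions.
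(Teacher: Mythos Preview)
Your proposal is correct and follows essentially the same approach as the paper: both take optimal alignments $\alpha_1,\alpha_2$, form the composition $\alpha_2\circ\alpha_1$ via \conref{composition-map} and \lemref{induced-alignment}, and then case-split on whether the maximizing pair in $\im(\alpha_2\circ\alpha_1)$ is a nontrivial match $(x,z)$ (handled by the real triangle inequality through the intermediate $y$) or an insertion $(x,\zero)$ or $(\zero,z)$ (each further split into whether it arises directly from $\alpha_1$/$\alpha_2$ or through a two-step chain). The only cosmetic difference is that the paper names the three maxima $C_1,C_2,C_3$ and argues according to which is largest, whereas you bound every pair's contribution uniformly and then take the max; the content is identical.
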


 \begin{proof}
Let $\alpha_1: [k] \rightarrow \tilde{\x}
    \times \tilde{\y}$, $\alpha_2: [m]
    \rightarrow \tilde{\y} \times \tilde{\z}$ be optimal alignments. Consider the composition alignment~\mbox{$\alpha_2 \circ \alpha_1$} from \conref{composition-map}. Define $\A$ as in \conref{composition-map}, 
    $$\Ax := \{(x, y) \in \im(\alpha_1)  \mid \exists z \in \z \st (y, z) \in \im(\alpha_2) \},$$ 
    and 
    $$\Az := \{(y,z) \in \im(\alpha_2)  \mid \exists x \in \x \st (x,y) \in \im(\alpha_1) \}.$$ 

      Let
      $$C_1 := \max_{(x,z) \in \im(\alpha_2\circ \alpha_1)}\{|w_x - w_z|\}, \quad C_2 = \max_{(x, \zero) \in \im(\alpha_2\circ \alpha_1)}\{w_x\}, \quad C_3 = \max_{(\zero, z) \in \im(\alpha_2\circ \alpha_1)}\{w_z\}.$$
     Because~$d_{\mathcal{B}_{\infty}}(\x, \z)$ is computed from an optimal alignment
     between $\x$ and $\z$, then
     \[ d_{\mathcal{B}_\infty}(\x, \z) \leq \max \{ C_1, C_2, C_3 \}. \]

     Suppose $\max\{ C_1, C_2, C_3\} = C_1$. Let $(x, z) \in \im(\alpha_2 \circ \alpha_1)$ such that $| w_x - w_z| = C_1$. Observe since $(x,z) \in \im(\alpha_2 \circ \alpha_1)$ then there exists $y \in \y$ such that $(x, y) \in \im(\alpha_1)$ and $(y,z) \in \im(\alpha_2)$. 
     This observation and the triangle inequality from
     the $L_1$-norm implies
     \begin{equation*}
     \begin{split}
       |w_x - w_z| &\leq |w_x - w_y| +|w_y - w_z|\\
       &\leq d_{\mathcal{B}_{\infty}}(\x, \y) + d_{\mathcal{B}_{\infty}}(\y, \z).
     \end{split}
     \end{equation*}
     Hence, in the case $\max\{C_1, C_2, C_3\} = C_1$, we have
     $d_{\mathcal{B}_{\infty}}(\x, \z) \leq d_{\mathcal{B}_{\infty}}(\x, \y) +
     d_{\mathcal{B}_{\infty}}(\y, \z)$.

     Next, suppose $\max\{C_1, C_2, C_3\} = C_2$. Let $(x, \zero) \in \im(\alpha_2\circ \alpha_1)$ such that
     $w_x = C_2$. Define 
    $$\Xzero := \{(x, \zero) \in \im(\alpha_2\circ \alpha_1) \mid \exists y \in \y \st (x, y) \in \im(\alpha_1)\setminus \Ax\}.$$
    Observe, if $(x,y) \in \im(\alpha_1) \setminus \Ax$, then for all $z\in \z$, $(y,z) \notin \im(\alpha_2)$. Hence, $(y, \zero) \in \im(\alpha_2)$. This implies the set  
    $$(\x \times \{\zero\}) \cap \im(\alpha_2 \circ \alpha_1) = ((\x \times \{\zero\}) \cap \im(\alpha_1)) \cup \Xzero$$
    and this union is disjoint. Either $(x, \zero) \in  (\x \times \{\zero\}) \cap \im(\alpha_1)$ or $(x, \zero) \in \Xzero$. 
    If $(x, \zero) \in (\x \times \{\zero\}) \cap \im(\alpha_1)$, then 
    \[ C_2 = w_x \leq d_{\mathcal{B}_{\infty}}(\x, \y) \leq d_{\mathcal{B}_{\infty}}(\x, \y) +d_{\mathcal{B}_{\infty}}(\y, \z). \]
    Now suppose $(x, \zero) \in \Xzero$.  By definition of $\Xzero$, there exists $y \in \y$ such that $(x, y) \in \im(\alpha_1)\setminus \Ax$ and, at the same time,  $(y, \zero) \in \im(\alpha_2)$. Noting this observation and 
    applying the triangle inequality from the $L_1$-norm gives    
    \begin{equation*}
    \begin{split}
        w_x &\leq |w_x- w_y| + w_y\\
        &\leq d_{\mathcal{B}_{\infty}}(\x, \y) +d_{\mathcal{B}_{\infty}}(\y, \z).
    \end{split}
    \end{equation*}
    We can conclude that in the case $\max\{C_1, C_2, C_3\} = C_2$ that
     $d_{\mathcal{B}_{\infty}}(\x, \z) \leq d_{\mathcal{B}_{\infty}}(\x, \y) +
     d_{\mathcal{B}_{\infty}}(\y, \z)$.

     Lastly, suppose $\max\{C_1, C_2, C_3\} = C_3$. Let $(\zero, z) \in \im(\alpha_2\circ \alpha_1)$ such that \mbox{$w_z = C_3$}. 
      Define 
     $$\Zzero := \{(\zero, z) \in \im(\alpha_2\circ \alpha_1) \mid \exists y \in \y \st (y, z) \in \im(\alpha_2)\setminus \Az\}.$$
    Similarly, the set 
    $$(\{\zero\} \times \z) \cap \im(\alpha_2 \circ \alpha_1) = ((\{\zero\} \times \z) \cap \im(\alpha_2)) \cup \Zzero$$ and this union is disjoint. Either $(\zero, z) \in (\{\zero\} \times \z) \cap \im(\alpha_2)$ or $(\zero , z) \in \Zzero$. 
    If $(\zero, z) \in (\{\zero\} \times \z) \cap \im(\alpha_2)$, then 
    \[ C_3 = w_z \leq d_{\mathcal{B}}(\y, \z) \leq d_{\mathcal{B}}(\x, \y) + d_{\mathcal{B}}(\y, \z) . \]
    Now suppose $(\zero, z) \in \Zzero$. By definition of $\Zzero$, there exists $y\in \y$ such that $(y,z) \in \im(\alpha_2)\setminus \Az$ and $(\zero, y) \in \im(\alpha_1)$. Noting this observation and applying the triangle inequality from the $L_1$-norm, we get 
    \begin{equation*}
    \begin{split}
    w_z &\leq |w_z - w_y| + w_y\\
    &\leq d_{\mathcal{B}_{\infty}}(\y, \z) + d_{\mathcal{B}_{\infty}}(\x, \y).
    \end{split}
    \end{equation*}
     We can conclude in the case $\max\{C_1, C_2, C_3\} = C_3$ that
     $d_{\mathcal{B}_{\infty}}(\x, \z) \leq d_{\mathcal{B}_{\infty}}(\x, \y) +
     d_{\mathcal{B}_{\infty}}(\y, \z)$.

     Therefore, the backbone infinity distance satisfies the triangle inequality.
 \end{proof}

\begin{prop}[Backbone Infinity Distance is a Metric]
    Let $\x, \y, \z$ be backbones. The backbone infinity distance (\defref{backbone-infty-dist}) satisfies all properties of a metric.
\end{prop}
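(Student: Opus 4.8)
The plan is to verify the metric axioms for $d_{\mathcal{B}_{\infty}}$ one at a time, directly from \defref{backbone-infty-dist}, in close parallel with the proof that $d_{\mathcal{B}}$ is a metric (\propref{backbone-dist-is-dist}). Since the triangle inequality has already been established in \lemref{backbone-infty-triangle}, the only work that remains is non-negativity, symmetry, and definiteness. So I would open the proof by reducing to these three properties and then dispatch them in turn.

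Non-negativity is immediate: node weights lie in $\R_{\geq 0}$, and $d_{\mathcal{B}_{\infty}}(\x,\y)$ is an infimum over alignments of a maximum of absolute differences, hence non-negative. Symmetry follows because transposing the two coordinate functions of an alignment $\alpha \colon [k] \to \tilde{\x} \times \tilde{\y}$ produces an alignment of $\y$ and $\x$ of the same cost, and the quantity $\max_i |\omega_\x(\alpha_\x(i)) - \omega_\y(\alpha_\y(i))|$ is symmetric in the two arguments; taking infima over both families gives $d_{\mathcal{B}_{\infty}}(\x,\y) = d_{\mathcal{B}_{\infty}}(\y,\x)$.

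For definiteness, the forward direction is easy: when $\x = \y$, the identity alignment (each node paired with itself, no insertions) has cost $0$, so $d_{\mathcal{B}_{\infty}}(\x,\y) = 0$. For the converse, suppose $d_{\mathcal{B}_{\infty}}(\x,\y) = 0$ and let $\alpha$ be an optimal alignment. Then every term $|\omega_\x(\alpha_\x(i)) - \omega_\y(\alpha_\y(i))|$ vanishes; so any node aligned with an insertion has weight $0$, and any nontrivially matched pair consists of two nodes of equal weight which, by No Misalignments (\propertyref{nomisalignments} of \defref{alignment}), also carry equal strings. Using No Null Alignments (\propertyref{nullalignments}) together with Restriction to Matching (\propertyref{restrictiontomatching}), one argues that $\alpha$ in fact matches the nodes of $\x$ and $\y$ bijectively with identical string--weight data, forcing $\x = \y$. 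This last step — ruling out that the two backbones differ by superfluous insertions — is the one place requiring a little care, handled exactly as in the corresponding step of \propref{backbone-dist-is-dist}, and it is the main (mild) obstacle in the argument. Finally, citing \lemref{backbone-infty-triangle} for the triangle inequality completes the verification that $d_{\mathcal{B}_{\infty}}$ is a metric.
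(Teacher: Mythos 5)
Your proposal is correct and follows essentially the same route as the paper: reduce to non-negativity, symmetry, and definiteness, cite \lemref{backbone-infty-triangle} for the triangle inequality, and for the converse of definiteness argue that an optimal alignment with all terms vanishing forces matched pairs to agree in weight and string and (via no null alignments) forces $\x=\y$. The only difference is cosmetic — you flag the insertion-of-weight-zero subtlety explicitly where the paper passes over it — so no further changes are needed.
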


\begin{proof}
Recall that for each $x\in \x$, we can write $x=(s_x, w_x)$; likewise for
    $y\in\y$.
    Let~$\alpha: [k] \rightarrow \tilde{\x} \times
    \tilde{\y}$ be an optimal alignment. We verify all properties of a metric.
    
     \orgemph{Non-Negativity.} Since all node weights are non-negative,
    $d_{\mathcal{B}_{\infty}}(\x, \y) \geq 0$.
    
    \orgemph{Symmetry.} By construction, $d_{\mathcal{B}_{\infty}}$ is
    symmetric; see \defref{backbone-infty-dist}.
    
    \orgemph{Definiteness.} If $\x = \y$, then the optimal alignment
    aligns each node with itself, and there are no insertions. Hence, all node weights match and
    $d_{\mathcal{B}_{\infty}}(\x, \y) = 0$.
    
    On the other hand, assume $d_{\mathcal{B}_{\infty}}(\x, \y) =0$. This implies
    \[ 0 =  \inf_\alpha \max{|\omega_\x(\alpha_\x(i)) - \omega_\y(\alpha_\y(i))|}. \]
            Therefore, $|\omega_\x(\alpha_\x(i)) - \omega_\y(\alpha_\y(i))| \leq 0$
            for all $i \in [n]$. Furthermore, $|\omega_\x(\alpha_\x(i)) -
            \omega_\y(\alpha_\y(i))| \geq 0$ for all $i \in [n]$. Thus,
            $|\omega_\x(\alpha_\x(i)) - \omega_\y(\alpha_\y(i))| = 0$ for all $i \in
            [n]$. Hence, each aligned pair of nodes must have the same node
            weight. By \defref{alignment}, we never align two empty nodes. This
            implies each node in $\x$ is aligned with a node in $\y$.
            Furthermore, each aligned pair must have the same label by
            \defref{alignment}. We can conclude $\x=\y$.
            
       \orgemph{Triangle Inequality.} By \lemref{backbone-infty-triangle}, the triangle inequality holds.
       
    Therefore, the backbone infinity distance is a metric.
\end{proof}

\subsection{Lemmas Used For Proving \lemref{direct-alignment-optimal} that Direct Alignment is Optimal and Unique when Functions are Extremely Close.}\label{app:stability}

\begin{lem}[Extremal Pairs With Node Life Differences Greater Than $\varepsilon$]
    Let $f, f': C\rightarrow \R$ be nicely tame functions such that $f'$ is
    extremely close to $f$. Let $\varepsilon := \norm{f-f'}_{\infty}$. Let $(t, f(t))$ be a local minimum of $f$ and $(t', f'(t'))$ be a local minimum of $f'$.
    Suppose $p:= (f(t), \zeta_f(t)) \in D(f)\setminus \Delta$ and $q := (f'(t'),
    \zeta_{f'}(t')) \in D(f') \cap \square_{\varepsilon}(p)$. Then,
    \begin{enumerate}
        \item $\frac{1}{2}\pers_{f'}(t')>\varepsilon$.
        \item $|\frac{1}{2}\pers_{f'}(t')-\frac{1}{2}\pers_f(s)| >\varepsilon$ for all
            $(f(s), \zeta_f(s)) \in D(f) \setminus \square_{\varepsilon}(p)$.
    \end{enumerate}
    \label{lem:over-epsilon}
\end{lem}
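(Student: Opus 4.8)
The plan is to recast everything in terms of $L_\infty$ distances to the diagonal $\Delta$ in the persistence plane. The basic identification is that for a birth--death pair $(b,d)$ with $d\ge b$ one has $\norm{(b,d)-\Delta}_\infty=\tfrac{1}{2}(d-b)$ (the orthogonal‑projection computation already carried out inside the proof of \lemref{min-node-lives}); applied to $p=(f(t),\zeta_f(t))$ and $q=(f'(t'),\zeta_{f'}(t'))$ this gives $\norm{p-\Delta}_\infty=\tfrac{1}{2}\pers_f(t)$ and $\norm{q-\Delta}_\infty=\tfrac{1}{2}\pers_{f'}(t')$, and likewise $\norm{(f(s),\zeta_f(s))-\Delta}_\infty=\tfrac{1}{2}\pers_f(s)$. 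I then combine this with the elementary fact that $x\mapsto\norm{x-\Delta}_\infty$ is $1$‑Lipschitz for $\norm{\cdot}_\infty$, and with extreme closeness, which gives $\varepsilon:=\norm{f-f'}_\infty<\delta_f/2\le\delta_{\min}/2$. Since $(p,z)$ is an admissible pair in the minimum of \defref{constant-delta1} for every $z\in\Delta\subset D(f)$ and for every $z\in D(f)\setminus\{p\}$, this closeness bound yields $\norm{p-\Delta}_\infty\ge 2\delta_{\min}>4\varepsilon$ and, more generally, $\square_{4\varepsilon}(p)\cap D(f)=\{p\}$.

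For statement (1): since $q\in\square_\varepsilon(p)$ we have $\norm{q-p}_\infty<\varepsilon$, so by the Lipschitz property
\[
\tfrac{1}{2}\pers_{f'}(t')=\norm{q-\Delta}_\infty\ \ge\ \norm{p-\Delta}_\infty-\norm{q-p}_\infty\ >\ 4\varepsilon-\varepsilon\ =\ 3\varepsilon\ >\ \varepsilon .
\]
For statement (2): write $r:=(f(s),\zeta_f(s))\in D(f)\setminus\square_\varepsilon(p)$. Because $\varepsilon>0$ we have $p\in\square_\varepsilon(p)$, hence $r\ne p$, and as $p$ is off‑diagonal, \defref{constant-delta1} gives $\norm{p-r}_\infty\ge 2\delta_{\min}>4\varepsilon$, whence $\norm{q-r}_\infty\ge\norm{p-r}_\infty-\norm{q-p}_\infty>3\varepsilon$. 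If $r\in\Delta$ this finishes matters immediately: then $\tfrac{1}{2}\pers_f(s)=\norm{r-\Delta}_\infty=0$ and part (1) already gives $\bigl|\tfrac{1}{2}\pers_{f'}(t')-\tfrac{1}{2}\pers_f(s)\bigr|=\norm{q-\Delta}_\infty>\varepsilon$. If $r$ is off‑diagonal, I would pass to the difference of persistences
\[
\pers_{f'}(t')-\pers_f(s)=\bigl(\zeta_{f'}(t')-\zeta_f(s)\bigr)-\bigl(f'(t')-f(s)\bigr)
\]
and split according to whether the $>4\varepsilon$ gap $\norm{p-r}_\infty$ is realized in the birth coordinate or the death coordinate, using $\norm{q-p}_\infty<\varepsilon$ together with the node‑weight estimate of \lemref{nodeweights-easyalignment} to control the slack incurred when $q$ is replaced by $p$, and the structural fact $\square_{4\varepsilon}(p)\cap D(f)=\{p\}$ recorded in \conref{direct-alignment} to pin down where $r$ can sit relative to $p$.

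The step I expect to be the main obstacle is precisely this off‑diagonal case of (2): a large $L_\infty$ separation between two persistence points does not, on its own, force their distances to $\Delta$ (their node lives) to be far apart, so the Lipschitz bound alone is not enough and the argument has to use the specific local geometry of $D(f)$ near $p$ and the fact that $r$ is a genuine persistence point of a nicely tame function. Everything else — the identification of node lives with distances to $\Delta$, part (1), and the companion consequence that no point of $D(f')$ lying within $\varepsilon$ of $p$ can be matched to $\Delta$ at cost at most $\varepsilon$ — follows directly from the setup.
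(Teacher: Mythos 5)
Part (1) of your proposal is correct and is essentially the paper's own argument in different notation: the paper observes that $\square_{4\varepsilon}(p)$ misses the diagonal (since $4\varepsilon<2\delta_f\le 2\delta_{\min}$) and then reads off that any $q\in D(f')\cap\square_{\varepsilon}(p)$ has birth at most $f(t)+\varepsilon$ and death at least $f(t)+7\varepsilon$, which is exactly your estimate $\norm{q-\Delta}_\infty\ge\norm{p-\Delta}_\infty-\norm{q-p}_\infty>4\varepsilon-\varepsilon=3\varepsilon$. Your justification that $\norm{p-\Delta}_\infty\ge 2\delta_{\min}$ because the diagonal is included in $D(f)$ is the right one.

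The genuine gap is part (2). The diagonal case you dispose of is vacuous here (a persistence point $(f(s),\zeta_f(s))$ of an actual local minimum of a nicely tame function has positive persistence, hence is off $\Delta$), so the entire content of (2) lies in the off-diagonal case --- and for that case you supply no argument, only the observation $\norm{q-r}_\infty>3\varepsilon$ and a plan to ``split according to which coordinate realizes the gap.'' As you yourself diagnose, that plan cannot succeed on its own: when $r$ is separated from $p$ roughly in the diagonal direction, both coordinates differ by more than $4\varepsilon$ yet $\pers(q)$ and $\pers(r)$ need not differ at all, so no $L_\infty$-separation bound between $q$ and $r$ (or $p$ and $r$) can by itself yield $\frac{1}{2}|\pers_{f'}(t')-\pers_f(s)|>\varepsilon$. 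The paper's proof closes this case by first deducing $q\notin\square_{2\varepsilon}\bigl((f(s),\zeta_f(s))\bigr)$ from the disjointness of the $2\varepsilon$-squares about distinct off-diagonal points of $D(f)$, and then arguing ``from planar geometry,'' via comparison with the corner point $(f(s)-2\varepsilon,\zeta_f(s))$ of that square, that $|\pers_{f'}(t')-\pers_f(s)|>2\varepsilon$. That corner comparison is precisely the step you flag as the obstacle and leave open; it is where all the work of (2) sits, and the additional geometric input it requires (the position of $q$ relative to $r$, not merely their $L_\infty$ distance) is exactly what your proposal never supplies. Until that case is argued, the lemma is not proved; your skepticism about whether exclusion from a square can control a difference of persistences is well placed and should be directed at making that final step rigorous.
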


\begin{proof}

    Suppose $p:= (f(t), \zeta_f(t)) \in D(f)\setminus \Delta$ and \mbox{$q := (f'(t'),
    \zeta_{f'}(t'))$ $\in D(f') \cap \square_{\varepsilon}(p)$}.

    \begin{enumerate}

        \item We first show $\frac{1}{2}\pers_{f'}(t') >\varepsilon$. Consider
            $\square_{4\varepsilon}(p)$. Since $f'$ is extremely close to $f$,
            $\varepsilon <\delta_f/2$. Hence, $4\varepsilon <2\delta_f$. Recall
            $2\delta_f$ is at most the smallest distance between any two points in
            $D(f)$, provided that at least one point is not on the diagonal.
            Therefore, $\square_{4\varepsilon}(p)$ does not intersect the
            diagonal. Consider a square of radius $4\varepsilon$ where the right
            bottom corner is the point  $(f(t)+4\varepsilon,f(t)+4\varepsilon)$,
            that is, $\square_{4\varepsilon}((f(t), f(t)+8\varepsilon))$. Since
            $q \in D(f') \cap \square_{\varepsilon}(p)$, the difference of the
            $y$- and $x$- coordinates of $q$ (which is the persistence of $q$)
            is bounded below by the difference of $y$- and $x$- coordinates of
            the point $(f(t)+\varepsilon, f(t)+7\varepsilon)$, see
            Figure~\ref{fig:over-epsilon-diag}. Hence,
            \[
                \frac{1}{2}\pers_{f'}(t')
                > \frac{1}{2}((f(t)+7\varepsilon) - (f(t)+\varepsilon))
                = 3\varepsilon
                > \varepsilon.
            \]

        \item Let $(f(s), \zeta_f(s)) \in D(f) \setminus
            \square_{\varepsilon}(p)$. Since $f'$ is extremely close to $f$,
            $\varepsilon <\delta_f/2$ and  hence $2\varepsilon <\delta_f$. Since
            $\delta_f$ is at most half the smallest distance between two points in
            $D(f)$ where at least one point is not on the diagonal,
            the squares of radius $2\varepsilon$ centered at
            different points in $D(f)\setminus \Delta$ are all disjoint. This
            implies
            \[\norm{ (f(t), \zeta_f(t)) - (f(s), \zeta_f(s)) }_{\infty}
            >2\varepsilon. \]
            Furthermore, by the direct alignment, we know $q \in
            \square_{\varepsilon}(p)$. This implies
            $q\notin\square_{2\varepsilon}((f(s), \zeta_f(s)))$. Consider the
            point $((f(s)-2\varepsilon), \zeta_f(s))$. From planar geometry (See
            \figref{over-epsilon-points}),
            \[|\pers_{f'}(t')-\pers_f(s)| > |(\zeta_f(s)-f(s)) - (\zeta_f(s)-(f(s)
            - 2\varepsilon))| = 2\varepsilon.\]
            Therefore, $\frac{1}{2}|\pers_{f'}(t') - \pers_f(s)| > \varepsilon$.

    \end{enumerate}

\end{proof}

\begin{figure}[htp]
    \centering
    \begin{subfigure}[b]{0.5\textwidth}
        \centering
        \includegraphics[width=\textwidth]{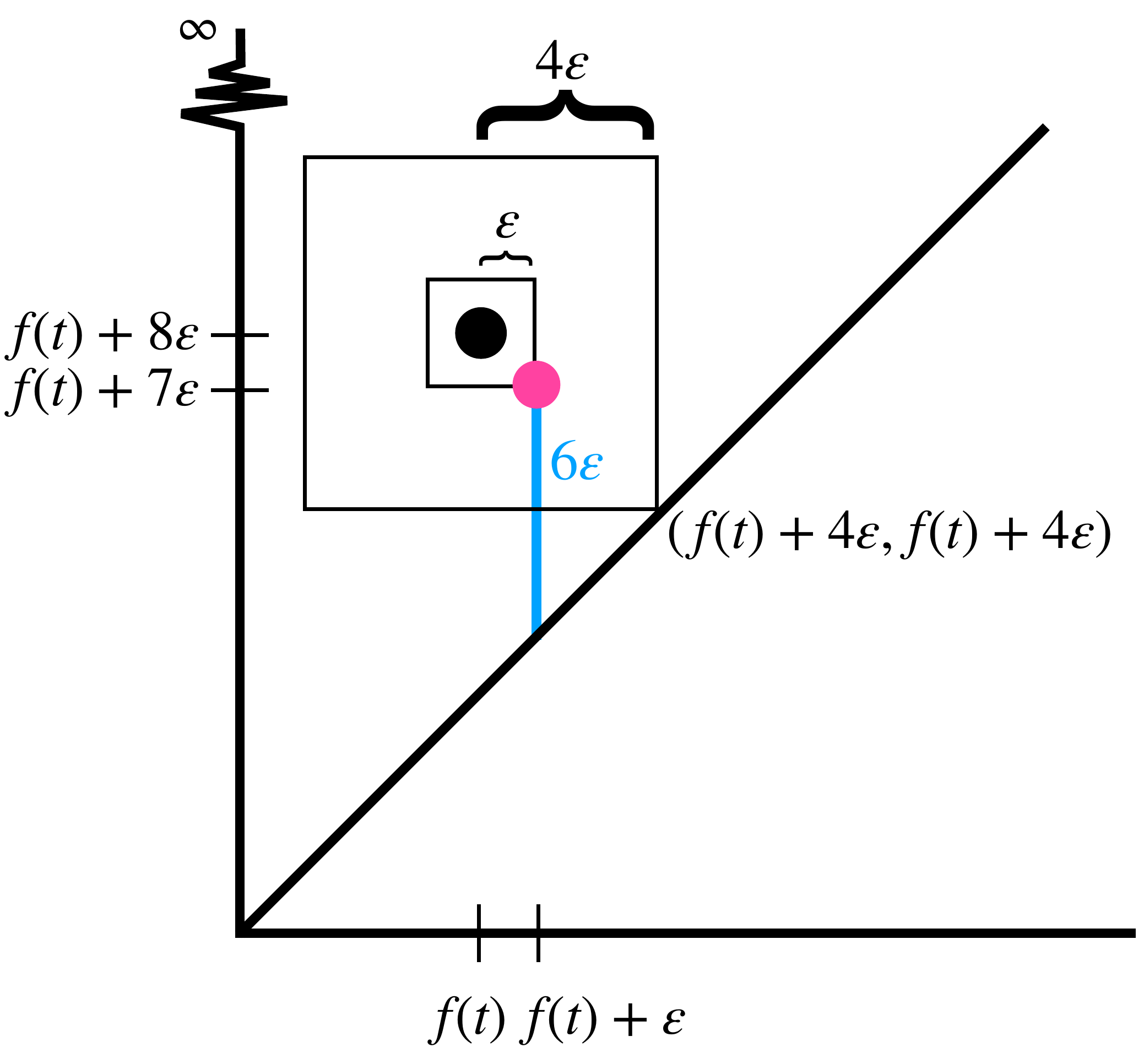}
        \caption{Case 1}
        \label{fig:over-epsilon-diag}
    \end{subfigure}
    \hfil
    \begin{subfigure}[b]{0.48\textwidth}
        \centering
        \includegraphics[width=\textwidth]{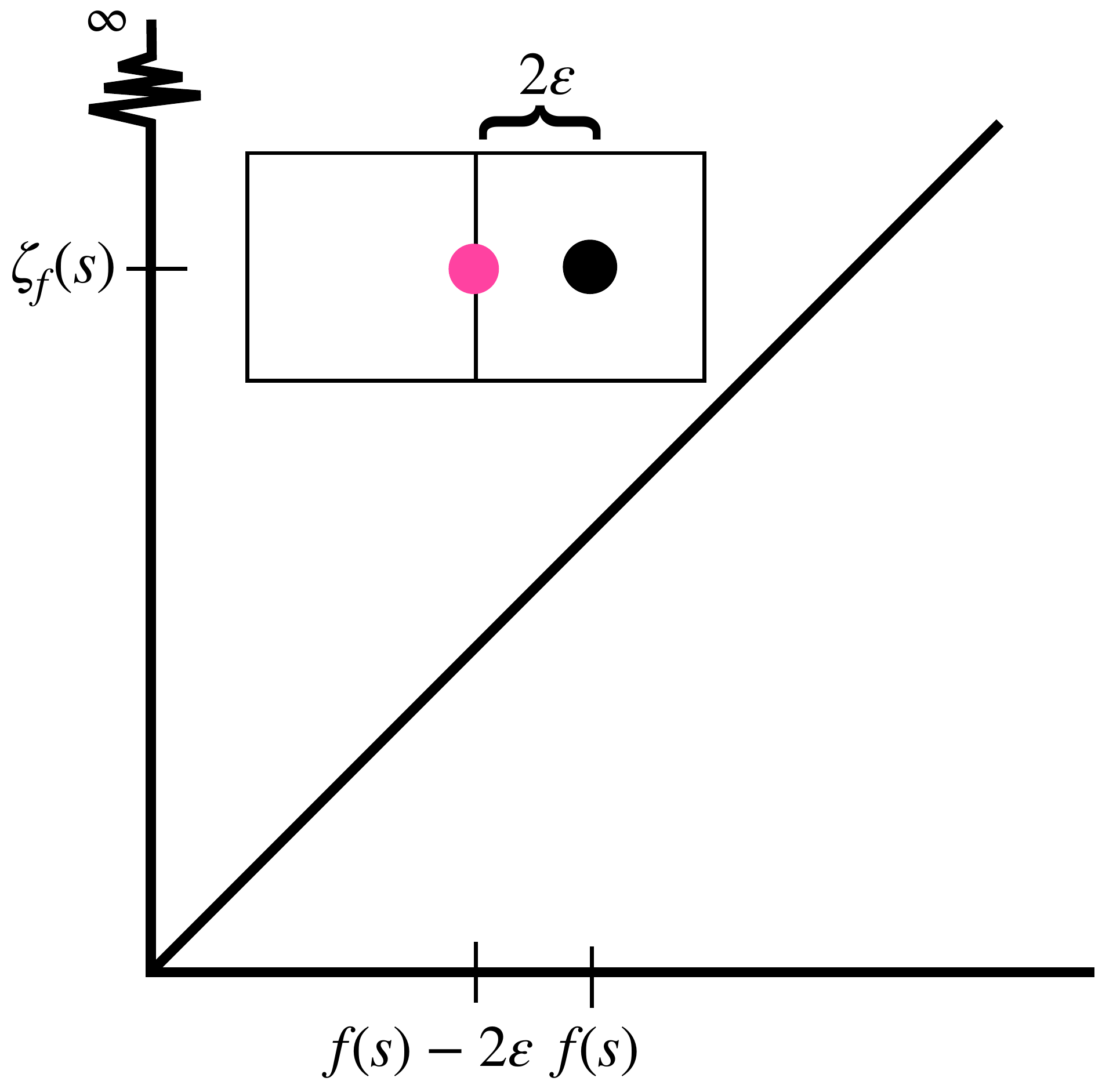}
        \caption{Case 2}
        \label{fig:over-epsilon-points}
    \end{subfigure}
     \hfil
    \caption{Geometric Arguments for \lemref{over-epsilon}.  In
        \figref{over-epsilon-diag},
        $\frac{1}{2}\pers_{f'}(t')>\frac{1}{2}((f(t)+7\varepsilon) -(f(t)+\varepsilon) = 3\varepsilon>\varepsilon.$ In
        \figref{over-epsilon-points},
        $\frac{1}{2}|\pers_{f'}(t')-\pers_{f}(s)|>\frac{1}{2}|(\zeta_f(s)-f(s))-(\zeta_f(s)-(f(s)-2\varepsilon))|=\varepsilon$.}
    \label{fig:over-epsilon}
\end{figure}

Next, we  need the following technical lemma that is used to prove the uniqueness and optimality of the direct alignment for extremely
close and nicely tame functions.

\begin{lem}[Bijections Within Boxes]\label{lem:app-stability}
    Let $f, f': C\rightarrow \R$ be nicely tame functions such that $f'$ is
    extremely close to $f$. Let $\alpha$ be the direct alignment defined in \conref{direct-alignment} between $\x := B(f)$ and $\x' := B(f')$. Suppose $\eta$ is a different alignment between $\x$ and $\x'$ such that $\cost(\eta) \leq \cost(\alpha)$.  For each $x' \in \x'$, let $x_{\alpha} \in \tilde{\x}$
    be the unique element such that~$(x_{\alpha},x') \in
    \im(\alpha)$. For each $x' \in \x'$, define $x_{\eta}$ similarly.
Suppose there exists $x' \in \x'$ such 
that 
\[ |w_{x'}-w_{x_{\eta}}| \leq |w_{x'}-w_{x_{\alpha}}| \]
 and $x_{\eta}\neq x_{\alpha}$. Then 
\begin{enumerate}
\item $|w_{x'}-w_{x_{\eta}}| = |w_{x'}-w_{x_{\alpha}}|.$ \label{stmt:app-stability-equality}
\item There exists $z' \in \x'$ for which $|w_{z'}-w_{x_{\eta}}|>\varepsilon$ where $\varepsilon := \norm{f-f'}_{\infty}$. \label{stmt:app-stability-greater}
\end{enumerate}
\end{lem}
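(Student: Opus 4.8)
The plan is to first dispose of the degenerate possibility that $x_\alpha$ is the empty node, and then, in the remaining case, use \lemref{over-epsilon} to force $x_\eta$ to represent the same point of $D(f)$ as $x_\alpha$; this immediately gives part~(1), and part~(2) is then extracted from the box geometry produced along the way, together with the hypotheses that $\eta$ is a genuine alignment, different from $\alpha$, with $\cost(\eta)\le\cost(\alpha)$. To eliminate $x_\alpha=\zero$: if $x_\alpha=\zero$ then $x'$ is aligned with an insertion in the direct alignment, so by the last case in the proof of \lemref{nodeweights-easyalignment} we have $w_{x'}\le\varepsilon$; the hypothesis then reads $|w_{x'}-w_{x_\eta}|\le|w_{x'}-0|=w_{x'}\le\varepsilon$, so $w_{x_\eta}\le 2\varepsilon$. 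But $x_\eta\neq x_\alpha=\zero$ means $x_\eta$ is an honest node of $B(f)$, and applying \lemref{min-node-lives} to $f$ and to $-f$ shows every node of $B(f)$ has weight at least $\delta_f$; since $\varepsilon<\delta_f/2$ this forces $w_{x_\eta}\ge\delta_f>2\varepsilon$, a contradiction. Hence $x_\alpha\neq\zero$.

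For part~(1), assume $x_\alpha\neq\zero$. After replacing $(f,f')$ by $(-f,-f')$ if necessary we may take $x'$ to represent a local minimum $(t',f'(t'))$ of $f'$ — the subcase where $x'$ represents the essential component of $f'$ runs the same way, using the essential-component part of the proof of \lemref{nodeweights-easyalignment} — with truncated persistence point $q=(f'(t'),\zeta_{f'}(t'))\in D(f')\setminus\Delta$. By \conref{direct-alignment} there is a point $p=(p_1,p_2)\in D(f)\setminus\Delta$ with $q\in\square_\varepsilon(p)$ and with $x_\alpha$ representing $p$; since $p\in\square_\varepsilon(p)$ as well, $\norm{q-p}_\infty\le\varepsilon$, so $|w_{x'}-w_{x_\alpha}|=\tfrac12|(q_2-q_1)-(p_2-p_1)|\le\varepsilon$. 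By \lemref{over-epsilon}(1), $w_{x'}>\varepsilon$, which excludes $x_\eta=\zero$; hence $x_\eta$ represents a local minimum $(s,f(s))$ of $f$, with truncated persistence point $(f(s),\zeta_f(s))\in D(f)$. If $(f(s),\zeta_f(s))\notin\square_\varepsilon(p)$ then \lemref{over-epsilon}(2) gives $|w_{x'}-w_{x_\eta}|>\varepsilon\ge|w_{x'}-w_{x_\alpha}|$, contradicting the hypothesis; so $(f(s),\zeta_f(s))\in\square_\varepsilon(p)$. Since $\varepsilon<\delta_f/2\le\delta_{\min}$, the only point of $D(f)$ within $L_\infty$-distance $\varepsilon$ of $p$ is $p$ itself, so $(f(s),\zeta_f(s))=p$ and therefore $w_{x_\eta}=\tfrac12(p_2-p_1)=w_{x_\alpha}$; in particular $|w_{x'}-w_{x_\eta}|=|w_{x'}-w_{x_\alpha}|$, which is part~(1).

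For part~(2), note that by~(1) the nodes $x_\eta$ and $x_\alpha$ are distinct nodes of $B(f)$ representing the same point $p$, so $p$ has multiplicity $\mu(p)\ge 2$ in $D(f)$; by the Box Lemma the nodes of $B(f')$ that the direct alignment matches to the $\mu(p)$ copies of $p$ are precisely those whose persistence points lie in $\square_\varepsilon(p)$, and each such node has weight within $\varepsilon$ of $\tfrac12(p_2-p_1)=w_{x_\eta}$. Thus any $z'\in\x'$ with $|w_{z'}-w_{x_\eta}|>\varepsilon$ must be a node of $B(f')$ whose persistence point lies outside $\square_\varepsilon(p)$, and I would produce one by exploiting that $\eta$ is order-preserving with $\cost(\eta)\le\cost(\alpha)$. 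Let $z_0'\in\x'$ be the unique node with $(x_\eta,z_0')\in\im(\alpha)$ — it exists because the direct alignment matches every node of $B(f)$ to a node of $B(f')$, and $z_0'\neq x'$ because $x_\eta\neq x_\alpha$. Since $\eta$ matches $x_\eta$ to $x'$ rather than to $z_0'$, order-preservation forces $\eta$ to rearrange the matches of $z_0'$ and of the other ``box'' nodes of $B(f')$ separating $x'$ from $z_0'$; keeping track of these pairs with~(1) shows that if every node of $B(f')$ stayed within $\varepsilon$ of $w_{x_\eta}$ then this rearrangement would make $\cost(\eta)>\cost(\alpha)$, a contradiction. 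Hence some node of $B(f')$ is matched under $\eta$ to a partner more than $\varepsilon$ from $w_{x_\eta}$, and since every box node is within $\varepsilon$ of $w_{x_\eta}$ this node is the desired $z'$. I expect this last step to be the crux: part~(1) is a clean consequence of \lemref{over-epsilon} and the definition of extremely close, whereas (2) needs a careful analysis of how an order-preserving re-matching propagates along the backbone, and it is precisely here that the strengthened hypothesis $\cost(\eta)\le\cost(\alpha)$ — not merely $\eta\neq\alpha$ — is essential.
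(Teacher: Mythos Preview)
Your argument for part~(1) is essentially the paper's: you exclude $x_\eta=\zero$ via \lemref{over-epsilon}(1) and then use \lemref{over-epsilon}(2) to force the persistence point of $x_\eta$ into $\square_\varepsilon(p)$, hence to equal $p$, giving $w_{x_\eta}=w_{x_\alpha}$. Your handling of the case $x_\alpha=\zero$ (via \lemref{min-node-lives} and the bound $\varepsilon<\delta_f/2$) is in fact more explicit than the paper's, which simply says ``by the same argument.''

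Part~(2), however, has a genuine gap. First, a clarification: the printed statement has a typo---the intended conclusion is $|w_{z'}-w_{z_\eta}|>\varepsilon$ (where $z_\eta$ is the $\eta$-partner of $z'$), as both the paper's proof and the application in \lemref{direct-alignment-optimal} make clear. More importantly, your proposed mechanism does not work. You argue that a rearrangement of box matches would force $\cost(\eta)>\cost(\alpha)$, but every node of $B(f)$ with persistence point $p$ has the \emph{same} weight $\tfrac12(p_2-p_1)$, so any bijection among box nodes contributes identical cost; no strict inequality can be extracted this way. In particular, your final claim that the hypothesis $\cost(\eta)\le\cost(\alpha)$ is ``essential'' for part~(2) is incorrect---the paper's proof of~(2) does not use the cost hypothesis at all.

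The paper's argument is instead a pigeonhole argument on domain coordinates. Suppose for contradiction that every box node of $B(f')$ is matched by $\eta$ to a box node of $B(f)$. Since $x_\eta\neq x_\alpha$, the restricted bijection differs from $\alpha$'s, so some $\eta$-matched pair $(s,s')$ in the box has $s'\notin\varphi_\varepsilon(s)$ (because $\alpha$ is \emph{defined} by the condition $s'\in\varphi_\varepsilon(s)$). Assuming $s<s'$, the box extrema of $f$ with domain coordinate $>s$ must, by order-preservation of $\eta$, be matched among the box extrema of $f'$ with domain coordinate $>s'$; but the disjointness of the $\varphi_\varepsilon$-intervals forces these two sets to have different cardinalities, contradicting bijectivity. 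Hence some box node $z'\in\x'$ is matched by $\eta$ to a non-box node $z_\eta$ of $B(f)$, and \lemref{over-epsilon}(2) then gives $|w_{z'}-w_{z_\eta}|>\varepsilon$. This is the missing idea in your sketch.
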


\begin{proof}
\orgemph{We prove \stmtref{app-stability-equality}. }

            Let $(t', f'(t'))$ be the local extemum of $f'$ associated with node
            $x'$, and let $(t,f(t))$ be the local extremum of $f$ associated
            with $x_{\alpha}$. Without loss of generality, we
            asume $(t', f'(t'))$ is a local minimum. (Note that if $(t', f'(t'))$ is
            a local maximum, then we apply the same argument  with
            $-f'$ and~$-f$).
            Since $(x_{\alpha},x')\in \im(\alpha)$, we know that $(t,f(t))$ is
            also a local minimum.
            We first prove, by the way of contradiction,
            that neither $x_{\eta}$ nor
            $x_{\alpha}$ is the empty node.

            If $x_{\eta}$ is the empty node,
            then, by the assumption $x_{\eta} \neq
            x_{\alpha}$, it follows  that
            $x_{\alpha}$ is not the empty node.
            From \conref{direct-alignment}, we know that
            $(f'(t'), \zeta_{f'}(t')) \in D(f') \cap
            \square_{\varepsilon}(f(t), \zeta_f(t))$.

            Applying \lemref{over-epsilon}, we find
            \begin{equation}\label{here:bigger}
                w_{x'}>\varepsilon.
            \end{equation}
            On the other hand, by  assumption,
            $$|w_{x'} - w_{x_{\eta}}|
                \leq |w_{x'} - w_{x_{\alpha}}| \leq
                \varepsilon,
            $$
            where  the last inequality follows from
            \lemref{nodeweights-easyalignment}. Finally, using
            \eqref{here:bigger} and that $w_{x_{\eta}} =0$,
            we have
            $$| w_{x'} - w_{x_{\eta}}|
                = w_{x'}>\varepsilon,
            $$
            giving a contradiction. Therefore, we conclude that
            $x_{\eta}$ is not the empty node.

            If $x_{\alpha}$ is the empty node, then by the same
            argument as above, we arrive at contradiction.
            Therefore,~$x_{\alpha}$ is also not the empty node.

            Therefore, neither $x_{\eta}$ nor
            $x_{\alpha}$ is an empty node.
            Let~$(s_{\eta}, f(s_{\eta}))$ be the local extremum
            of $f$ associated with node $x_{\eta}$.
            By
            \lemref{nodeweights-easyalignment},~$|w_{x'} -w_{x_{\alpha}}| \leq \varepsilon$.
            Additionally, consider the point $p = (f(s_{\eta}),
            \zeta_f(s_{\eta})) \in D(f)$ and $\square_{\varepsilon}(p)$.  In
            \conref{direct-alignment}, we established a bijection between
            the multiplicity of $p$, denoted $\mu(p)$, and the number
            of points contained in $D(f') \cap \square_{\varepsilon}(p)$.
            Additionally, in \lemref{nodeweights-easyalignment}, we showed for all points $(f'(x), \zeta_{f'}(x)) \in D(f') \cap
            \square_{\varepsilon}(p)$, we have
            $ \frac{1}{2}|\pers_{f'}(x) - \pers_f(s_{\eta})| \leq \varepsilon$. Furthermore, by
            \lemref{over-epsilon},~$\frac{1}{2}|\pers_{f'}(y) - \pers_f(s_{\eta})| >
            \varepsilon$ for all~$(f'(y), \zeta_{f'}(y)) \in D(f') \setminus
            \square_{\varepsilon}(p)$. By assumption,
            \begin{equation}\label{here:boxbound}
                |w_{x'} - w_{x_{\eta}}| \leq |w_{x'} - w_{x_{\alpha}}| \leq \varepsilon.
            \end{equation}
            Let~$(s_{\alpha}, f(s_{\alpha}))$ be the local extremum
            of $f$ associated with node $x_{\alpha}$. Since the only
            point of $D(f)$ contained in the square
            $\square_{\varepsilon}(p)$ is the point $p$
            with multiplicity $\mu(p)$,  we must have $(f(s_{\eta}),
            \zeta_f(s_{\eta})) = (f(s_{\alpha}), \zeta_f(s_{\alpha}))$ in order for \eqref{here:boxbound} to hold.  Therefore, the two extrema of $f$ have
            the same node lives and thus
            $w_{x_{\eta}} =
            w_{x_{\alpha}}$. We can
            conclude
            \[|w_{x'} -w_{x_{\eta}}| = |w_{x'} -w_{x_{\alpha}}|, \]
            as was to be shown to prove \stmtref{app-stability-equality}.

    \orgemph{Next we prove \stmtref{app-stability-greater}.}
            We note that the
            $\varepsilon$-extremal intervals discussed in this proof are
            all constructed from $f$.
            We claim there exists a local extremum $(s', f'(s'))$ of $f'$ such
            that $(f'(s'), \zeta_{f'}(s')) \in D(f')\cap \square_{\varepsilon}(p)$
            and is aligned via $\eta$ with an extremum $(s, f(s))$ such that
            $(f(s), \zeta_f(s)) \in D(f) \setminus
            \square_{\varepsilon}(p)$. By way of contradiction, suppose
            that is not the case.
            Hence, all persistence points in $D(f')\cap \square_{\varepsilon}(p)$ are paired with persistence points in $D(f) \cap \square_{\varepsilon}(p)$. Since $\eta\not =\alpha$, $\eta$ restricted to $D(f) \cap
            \square_{\varepsilon}(p)$  is a bijection onto $D(f') \cap
            \square_{\varepsilon}(p)$ that is different from the
            bijection $\alpha$. In other
            words, if we denote $\gamma:= \alpha_{| D(f)\cap
            \square_{\varepsilon}(p)}$ and~$\gamma':= \eta_{|
            D(f)\cap \square_{\varepsilon}(p)}$, we have $\gamma \not =
            \gamma'$. By \conref{direct-alignment} of $\alpha$ and noting $\eta
            \not = \alpha$, there exists an extremum~$(s', f'(s'))$ that aligns via $\eta$ with
            an extremum $(s, f(s))$ such that $s'$ does not belong to the
            interval of size $\varepsilon$ around $s$; that is, $s'
            \notin \varphi_{\varepsilon}(s)$.  Without loss of
            generality, suppose $s < s'$. Since $f'$ is extremely close
            to~$f$, the number of extrema with persistence points
            contained in $(D(f) \cup D(f'))\cap\square_{\varepsilon}(p)$
            where the domain coordinates are greater than $s$ is the
            same for both $f$ and $f'$. Let
            \[  A:= \{ (t,f(t)) \;|\; (f(t),\zeta_f(t)) \in D(f) \cap
            \square_{\varepsilon}(p) \mbox{ and } t>s \} \]
            be the set of extrema of $f$ whose persistence
            points are contained in $D(f) \cap \square_{\varepsilon}(p)$
            such that the domain coordinates of these extrema are
            greater than $s$.  Additionally, let
            \[  B:=  \{ (t,f'(t)) \;|\; (f'(t),\zeta_{f'}(t)) \in D(f') \cap \square_{\varepsilon}(p) \mbox{ and } t>s' \} \]

            To preserve order in $\eta$, elements of $A$ must be aligned with the
            elements of $B$. Since~$|A| > |B|$, by the pigeonhole principle,
            at least two extrema of~$f$ are aligned by $\eta$ with the same extremum of $f'$.
            This contradicts the fact that $\eta$ is a bijection.
            This contradiction shows that there exists a $z \in \x'$
            for which \mbox{$|w_{z'} - w_{z_{\eta}}| >
            \varepsilon.$}
            By construction of the direct alignment, $|w_{z'}
            -w_{z_{\eta}}| \leq \varepsilon.$ Therefore,
            $|w_{z'} - w_{z_{\eta}}| > |w_{z'}
            -w_{z_{\eta}}| $.

\end{proof}

Next, we prove the uniqueness and optimality of the direct alignment for extremely
close and nicely tame functions.

\begin{lem}[Direct Alignment Gives Optimal Backbone Alignment]
    Let $f, f': C\rightarrow \R$ be nicely tame functions such that $f'$ is
    extremely close to $f$. Then, the direct alignment defined in
    \conref{direct-alignment} is the unique optimal alignment that realizes
    $d_{\mathcal{B}}(B(f), B(f'))$.
    \label{lem:direct-alignment-optimal}
\end{lem}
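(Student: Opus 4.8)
The plan is to show two things: (i) the direct alignment $\alpha$ has cost equal to $d_{\mathcal{B}}(B(f),B(f'))$, i.e., it is optimal, and (ii) any alignment $\eta$ with $\cost(\eta)\le\cost(\alpha)$ must in fact equal $\alpha$. The whole argument will be driven by the technical \lemref{app-stability} (Bijections Within Boxes) together with \lemref{nodeweights-easyalignment}, \lemref{over-epsilon}, and the identification (from the direct-alignment construction) of backbone nodes with off-diagonal persistence points and with boxes $\square_{\varepsilon}(p)$.

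First I would set $\varepsilon:=\norm{f-f'}_{\infty}$ (note $\varepsilon<\delta_f/2$ since $f'$ is extremely close to $f$) and let $\eta$ be any alignment between $\x:=B(f)$ and $\x':=B(f')$ with $\cost(\eta)\le\cost(\alpha)$; the goal is $\eta=\alpha$. Using \notref{notation-image-alpha}'s conventions, for each $x'\in\x'$ write $x_\alpha$ and $x_\eta$ for the unique elements of $\tilde\x$ paired with $x'$ under $\alpha$ and $\eta$. Summing over $\x'$, $\cost(\eta)\le\cost(\alpha)$ forces $\sum_{x'\in\x'}|w_{x'}-w_{x_\eta}|\le\sum_{x'\in\x'}|w_{x'}-w_{x_\alpha}|$. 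If $\eta\ne\alpha$, there is at least one $x'$ with $x_\eta\ne x_\alpha$, and among all such we may, after shrinking the discrepancy (replacing some mismatched pairs so the total cost does not increase — this is where one uses that a local modification cannot strictly decrease cost below $\cost(\alpha)$), extract some $x'$ with $x_\eta\ne x_\alpha$ and $|w_{x'}-w_{x_\eta}|\le|w_{x'}-w_{x_\alpha}|$. Then \lemref{app-stability}\stmtref{app-stability-equality} gives $|w_{x'}-w_{x_\eta}|=|w_{x'}-w_{x_\alpha}|$, so these are genuinely different alignments of equal local cost; and \lemref{app-stability}\stmtref{app-stability-greater} produces a $z'\in\x'$ with $|w_{z'}-w_{z_\eta}|>\varepsilon$, where $z_\eta$ is the $\eta$-partner of $z'$. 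But by \lemref{nodeweights-easyalignment} applied to the direct alignment, $|w_{z'}-w_{z_\alpha}|\le\varepsilon$; comparing, $|w_{z'}-w_{z_\eta}|>|w_{z'}-w_{z_\alpha}|$ at that coordinate. Propagating this strict inequality at $z'$ back through the cost comparison contradicts $\sum_{x'}|w_{x'}-w_{x_\eta}|\le\sum_{x'}|w_{x'}-w_{x_\alpha}|$ (every other coordinate already satisfies the weak inequality by the equality case, and now one coordinate is strict). Hence no such $\eta\ne\alpha$ exists, so $\alpha$ is the unique minimizer of $\cost$ among alignments between $B(f)$ and $B(f')$, which is exactly $d_{\mathcal{B}}(B(f),B(f'))$.

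The main obstacle is the bookkeeping in the "shrinking the discrepancy" step: \lemref{app-stability} is stated for a single coordinate $x'$, but to chain its two conclusions into a global contradiction one must argue carefully that a minimal-cost competitor $\eta$ still has some coordinate where it strictly does worse than $\alpha$ while being no better elsewhere — essentially an exchange/induction argument on the number of coordinates where $\eta$ and $\alpha$ disagree. Formally I would take $\eta$ to be an optimal alignment (so $\cost(\eta)=c_{\x,\x'}(m,n)\le\cost(\alpha)$) minimizing the number of disagreements with $\alpha$; if that number is positive, apply \lemref{app-stability} at a disagreement coordinate $x'$ to get equality there, then use \stmtref{app-stability-greater} to find $z'$ where $\eta$ strictly exceeds the direct-alignment bound $\varepsilon\ge|w_{z'}-w_{z_\alpha}|$, contradicting optimality of $\eta$ (its cost would strictly exceed $\cost(\alpha)$ at that coordinate, with no compensating savings). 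This yields zero disagreements, i.e. $\eta=\alpha$, simultaneously proving optimality and uniqueness.
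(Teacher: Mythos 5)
Your proposal is correct and follows essentially the same route as the paper's proof: a contradiction argument that isolates a coordinate $x'$ where $\eta$ disagrees with the direct alignment yet does no worse, invokes \lemref{app-stability} to upgrade that to equality and to produce a $z'$ where $\eta$ strictly exceeds the $\varepsilon$-bound of \lemref{nodeweights-easyalignment}, and then sums to contradict $\cost(\eta)\le\cost(\alpha)$. The paper organizes the bookkeeping as an explicit dichotomy (all coordinates strictly worse versus some coordinate weakly better, with a sub-case on whether $x_\eta=x_\alpha$ there) rather than your minimal-disagreement exchange argument, but the substance is identical.
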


\begin{proof}
    Let $\alpha$ be the direct alignment between $\x:=B(f)$ and $\x':=B(f')$.
    Recall in \defref{backbone} that each $x \in \x$ can be written as a tuple $x=(s_x,w_x)$;
    likewise, we can write $x' \in \x'$ as $x'=(s_{x'},w_{x'})$.
    By way of
    contradiction, suppose~$\eta$ is a different alignment between $\x$ and
    $\x'$ such that $\cost(\eta) \leq \cost(\alpha)$.
    Recall from the construction of the direct alignment (\conref{direct-alignment})
    that the length of the direct alignment is
    the number of nodes in~$\x'$. By \conref{direct-alignment}, 
    for each $x' \in \x'$, there exists a unique $x_{\alpha} \in \tilde{\x}$
    such that~$(x_{\alpha},x') \in
    \im(\alpha)$.
    Hence, we can write
    \[
        \cost(\alpha) = \sum_{(x,x')\in \im(\alpha)} |w_{x}-w_{x'}|
        = \sum_{x' \in \x'} |w_{x'} - w_{x_{\alpha}}|.
    \]
    Since $\eta$ is an alignment, it must
    align all nodes of $\x'$.  Hence, we have \mbox{$\length{\eta} \geq \length{\x'}
    =\length{\alpha}$}.  Let $x_{\eta}$ denote the unique element of $\tilde{\x}$
    such that $(x_{\eta},x')\in \im(\eta)$.
    We now discuss the following  logical dichotomy: either
    \begin{itemize}
        \item[Case 1:] for all $x' \in \x'$,
            $|w_{x'} - w_{x_{\eta}}|
            >  |w_{x'} - w_{x_{\alpha}}|$, or
        \item[Case 2:] there exists at least one $x' \in \x'$ for which
            $|w_{x'} - w_{x_{\eta}}|
            \leq  |w_{x'} - w_{x_{\alpha}}|$.
    \end{itemize}

    \orgemph{We first consider Case 1.}
    Suppose, for all $x' \in \x'$, we have
    \mbox{$|w_{x'} - w_{x_{\eta}}|
            >  |w_{x'} - w_{x_{\alpha}}|$}.
    Since $\eta$ aligns all nodes of $\x'$, we have the first inequality below
    \begin{equation*}
    \begin{split}
        \cost(\eta)
            &\geq \sum_{x' \in \x'} |w_{x'} -
            w_{x_{\eta}}|\\
        &>\sum_{x' \in \x'} |w_{x'} - w_{x_{\alpha}}|\\
        &=\cost(\alpha).
    \end{split}
    \end{equation*}

    This is a contradiction with $\cost(\eta) \leq \cost(\alpha)$.

    \orgemph{Now, we consider Case 2.} Suppose there exists $x' \in \x'$ for which
    \begin{equation}
        |w_{x'} - w_{x_{\eta}}|
            \leq  |w_{x'} - w_{x_{\alpha}}|.
        \label{eq:mismatch-boxes}
    \end{equation}

    If $x_{\eta} =x_{\alpha}$ for all such $x' \in \x'$,
    then we get that either $\eta$ is the same alignment as $\alpha$
    or there must exist~$y' \in \x'$ for which
    (\ref{eq:mismatch-boxes}) is not true and hence
    \mbox{$|w_{y'}-w_{y_{\eta}}| >
    |w_{y'} - w_{y_{\alpha}}|$}. 
In particular, since for all instances where $|w_{x'} - w_{x_{\eta}}| \leq  |w_{x'} - w_{x_{\alpha}}|$, 
$x_{\eta}=x_{\alpha}$, this implies  $|w_{x'} - w_{x_{\eta}}| = |w_{x'} - w_{x_{\alpha}}|$. For all other  
$y' \in \x'$, we must have $|w_{y'} - w_{x_{\eta}}| >  |w_{y'} - w_{x_{\alpha}}|$. Therefore in this case,
    ~$\cost(\eta)>\cost(\alpha)$. In either case, we have a
    contradiction with $\cost(\eta) \leq \cost(\alpha)$.

    Hence, there must exist some $x' \in \x'$ for which
    \eqref{eq:mismatch-boxes} holds and $x_{\eta} \neq x_{\alpha}$.
    
    In \lemref{app-stability}, 
    we show that both of the following statements are true:
    \begin{itemize}
        \item[(a)] $|w_{x'} - w_{x_{\eta}}|
            =  |w_{x'} - w_{x_{\alpha}}|$.
        \item[(b)] There exists a $z' \in \x'$ for which
            $|w_{z'} - w_{x_{\eta}}| >\varepsilon$ where $\varepsilon := \norm{f-f'}_{\infty}$. 
          \end{itemize}
    From (a) and (b), we find that for all $x' \in \x'$ for which
    $|w_{x'} - w_{x_{\eta}}| \leq  |w_{x'}- w_{x_{\alpha}}|$
    and $x_{\eta} \neq x_{\alpha}$, the equality
    $|w_{x'} - w_{x_{\eta}}| =  |w_{x'}- w_{x_{\alpha}}|$
    must hold.
    Furthermore, there exists $z' \in \x'$ for which
    $$|w_{z'} - w_{z_{\eta}}| > |w_{z'}- w_{z_{\alpha}}|.$$
    For all remaining $y' \in \x'$, we have
    $$|w_{y'} - w_{y_{\eta}}| \geq  |w_{y'} - w_{y_{\alpha}}|.$$
    Putting this all together, we obtain
    \begin{equation*}
    \begin{split}
        \cost(\eta) &\geq \sum_{x' \in \x'} |w_{x'} - w_{x_{\eta}}|\\
        &>\sum_{x' \in \x'} |w_{x'} -  w_{x_{\alpha}}|\\
        &=\cost(\alpha),
    \end{split}
    \end{equation*}
    which again contradicts the assumption that $\cost(\eta) \leq
    \cost(\alpha)$.

    We conclude that any alignment that is different from the direct alignment has
    a higher cost. Therefore, the direct alignment  is the unique optimal
    alignment that realizes $d_{\mathcal{B}}(\x, \x')=d_{\mathcal{B}}(B(f), B(f'))$.

\end{proof}

\subsection{Cases for Bound Differences in Edge Weights in \lemref{edge-bound}} \label{app:bound-edges}
    \begin{enumerate}
        \item $E_{\diff}= \frac{1}{2}|\pers(t) - \pers(t')|$. Applying \lemref{node-weights-bound}, we find
            \[ E_{\diff} = \frac{1}{2}|\pers(t) - \pers(t')| \leq \varepsilon_i \leq \varepsilon_{i,j}. \]

            \vspace{1ex}

        \item $E_{\diff} = \frac{1}{2}|\pers(s) - \pers(s')|$. Applying \lemref{node-weights-bound}, we find
            \[ E_{\diff} = \frac{1}{2}|\pers(s) - \pers(s')| \leq  \varepsilon_j \leq \varepsilon_{i,j}. \]

            \vspace{1ex}

        \item $E_{\diff} = |\varepsilon^*(t, s) - \varepsilon^*(t', s')|$. Applying \lemref{intersection-difference-bound}, we find
            \[ E_{\diff} =  |\varepsilon^*(t, s) - \varepsilon^*(t', s')| \leq  \varepsilon_{i,j}. \]

            \vspace{1ex}

        \item $E_{\diff} = \frac{1}{2}(\pers(t) - \pers(s'))$. Then, $\pers(t) \leq \pers(s)$. Applying \lemref{node-weights-bound}, we find
            \[ E_{\diff} = \frac{1}{2}(\pers(t) - \pers(s')) \leq \frac{1}{2}(\pers(s) - \pers(s')) \leq \varepsilon_j \leq \varepsilon_{i,j}. \]

            \vspace{1ex}

        \item $E_{\diff} = \frac{1}{2}(\pers(s') - \pers(t))$. Then, $\pers(s') \leq \pers(t')$. Applying \lemref{node-weights-bound}, we find
            \[ E_{\diff} = \frac{1}{2}(\pers(s') - \pers(t)) \leq \frac{1}{2}(\pers(t') - \pers(t)) \leq \varepsilon_i \leq \varepsilon_{i,j}. \]

            \vspace{1ex}

        \item $E_{\diff} = \frac{1}{2}(\pers(s) - \pers(t'))$. Then, $\pers(s) \leq \pers(t)$. Applying \lemref{node-weights-bound}, we find
            \[ E_{\diff} = \frac{1}{2}(\pers(s) - \pers(t')) \leq \frac{1}{2}(\pers(t) - \pers(t')) \leq \varepsilon_i \leq \varepsilon_{i,j}. \]

            \vspace{1ex}

        \item $E_{\diff} = \frac{1}{2}(\pers(t') - \pers(s))$.Then, $\pers(t') \leq \pers(s')$. Applying \lemref{node-weights-bound}, we find
            \[ E_{\diff} = \frac{1}{2}(\pers(t') - \pers(s)) \leq \frac{1}{2}(\pers(s') - \pers(s)) \leq \varepsilon_j \leq \varepsilon_{i,j}. \]

            \vspace{1ex}

        \item $E_{\diff} = \varepsilon^*(t, s) - \frac{1}{2}\pers(t')$. Then, $ \varepsilon^*(t, s) \leq \frac{1}{2}\pers(t).$ Applying \lemref{node-weights-bound}, we find
            \[ E_{\diff} =  \varepsilon^*(t, s) - \frac{1}{2}\pers(t') \leq \frac{1}{2}(\pers(t) - \pers(t')) \leq \varepsilon_i \leq \varepsilon_{i,j}. \]

            \vspace{1ex}

        \item $E_{\diff} = \frac{1}{2}\pers(t') - \varepsilon^*(t, s)$. Then, $\frac{1}{2}\pers(t') \leq \varepsilon^*(t', s')$. Applying \lemref{intersection-difference-bound}, we find
            \[ E_{\diff} = \frac{1}{2}\pers(t') - \varepsilon^*(t, s) \leq \varepsilon^*(t', s') - \varepsilon^*(t, s) \leq \varepsilon_{i,j}. \]

            \vspace{1ex}

        \item $E_{\diff} = \varepsilon^*(t, s) - \frac{1}{2}\pers(s')$. Then, $\varepsilon^*(t, s) \leq \frac{1}{2}\pers(s).$ Applying \lemref{node-weights-bound}, we find
            \[ E_{\diff} =  \varepsilon^*(t, s) - \frac{1}{2}\pers(s') \leq \frac{1}{2}(\pers(s) - \pers(s')) \leq \varepsilon_j \leq \varepsilon_{i,j}. \]

            \vspace{1ex}

        \item $E_{\diff} = \frac{1}{2}\pers(s') - \varepsilon^*(t, s)$. Then, $\frac{1}{2}\pers(s') \leq \varepsilon^*(t', s')$. Applying \lemref{intersection-difference-bound}, we find
            \[ E_{\diff} = \frac{1}{2}\pers(s') - \varepsilon^*(t, s) \leq \varepsilon^*(t', s') - \varepsilon^*(t, s) \leq \varepsilon_{i,j}. \]

            \vspace{1ex}

        \item $E_{\diff} = \varepsilon^*(t', s') - \frac{1}{2}\pers(t)$. Then, $\varepsilon^*(t', s') \leq \frac{1}{2}\pers(t')$. Applying \lemref{node-weights-bound}, we find
            \[ E_{\diff} =  \varepsilon^*(t', s') - \frac{1}{2}\pers(t) \leq \frac{1}{2}(\pers(t') - \pers(t)) \leq \varepsilon_i \leq \varepsilon_{i,j}. \]

            \vspace{1ex}

        \item $E_{\diff} = \frac{1}{2}\pers(t) - \varepsilon^*(t', s')$. Then $\frac{1}{2}\pers(t) \leq  \varepsilon^*(t, s)$. Applying \lemref{intersection-difference-bound}, we find
            \[ E_{\diff} = \frac{1}{2}\pers(t) - \varepsilon^*(t', s') \leq \varepsilon^*(t, s) - \varepsilon^*(t', s') \leq \varepsilon_{i,j}. \]

    \vspace{1ex}

    \item $E_{\diff} = \varepsilon^*(t', s') - \frac{1}{2}\pers(s)$. Then, $\varepsilon^*(t', s') \leq \frac{1}{2}\pers(s')$. Applying \lemref{node-weights-bound}, we find
        \[ E_{\diff} =  \varepsilon^*(t', s') - \frac{1}{2}\pers(s) \leq \frac{1}{2}(\pers(s') - \pers(s)) \leq \varepsilon_j \leq \varepsilon_{i,j}. \]

    \vspace{1ex}

    \item $E_{\diff} = \frac{1}{2}\pers(s) - \varepsilon^*(t', s')$. Then $\frac{1}{2}\pers(s) \leq  \varepsilon^*(t, s)$. Applying \lemref{intersection-difference-bound}, we find
        \[ E_{\diff} = \frac{1}{2}\pers(s) - \varepsilon^*(t', s') \leq \varepsilon^*(t, s) - \varepsilon^*(t', s') \leq \varepsilon_{i,j}. \]

    \end{enumerate}
    
\subsection{Applications}\label{app:tables}
We provide tables summarizing the results of the computations described in \secref{parasite-data}.

\begin{table}[h!]
\begin{tabular}{|c|c|c|c|}
\hline
Distance & Mean & Median & Standard Deviation \\ \hline
$d_{ED}(\text{DAG}(\hat{\mathcal{D}}_1), \text{DAG}(\hat{\mathcal{D}}_2))$ & 123.13 & 122.68 & 22.12 \\ \hline
$d_{ED}(\text{DAG}(\hat{\mathcal{D}}_1), \text{DAG}(\hat{\mathcal{D}}_2'))$ & 267.97 & 268.18 & 28.68 \\ \hline
$d_{ED}(\text{DAG}(\hat{\mathcal{D}}_1), \text{DAG}(\hat{\mathcal{D}}_3))$ & 143.44 & 142.20 & 30.37 \\ \hline
$d_{ED}(\text{DAG}(\hat{\mathcal{D}}_1), \text{DAG}(\hat{\mathcal{D}}_3'))$ & 301.70 & 302.38 & 30.55 \\ \hline
$d_{ED}(\text{DAG}(\hat{\mathcal{D}}_1), \text{DAG}(\hat{\mathcal{D}}_4))$ & 282.00 & 279.72 & 56.59 \\ \hline
$d_{ED}(\text{DAG}(\hat{\mathcal{D}}_1), \text{DAG}(\hat{\mathcal{D}}_4'))$ & 424.06 & 421.20 & 53.24 \\ \hline
\end{tabular}
\caption{Summary of Results from Parasite Data.}
\end{table}

\begin{table}[H]
\begin{tabular}{|c|c|c|c|}
\hline
Distance & Mean & Median & Standard Deviation \\ \hline
$d_{ED}(\text{DAG}(\hat{\mathcal{D}}_a), \text{DAG}(\hat{\mathcal{D}}_b))$ & 493.40 & 490.88 & 71.38 \\ \hline
$d_{ED}(\text{DAG}(\hat{\mathcal{D}}_a), \text{DAG}(\hat{\mathcal{D}}_b'))$ & 642.51 & 640.78 & 66.70 \\ \hline
$d_{ED}(\text{DAG}(\hat{\mathcal{D}}_a), \text{DAG}(\hat{\mathcal{D}}_c))$ & 436.17 & 432.42 & 59.62 \\ \hline
$d_{ED}(\text{DAG}(\hat{\mathcal{D}}_a), \text{DAG}(\hat{\mathcal{D}}_c'))$ & 607.97 & 607.06 & 60.33 \\ \hline
\end{tabular}
\caption{Summary of Results from Mouse Data.}
\end{table}

\section{Supplementary Materials - Computations of Extremal Event DAGs and Extremal Event Supergraphs}
\label{sec:algorithms}

We describe how to compute the extremal event DAG and extremal event DAG distance. Since
experimental time series data often collects discrete time points as opposed to
continuous functions, we first take a detour to discuss $\varepsilon$-extremal
intervals and their associated properties in the discrete setting.

\subsection{Discrete $\varepsilon$-Extremal Intervals}

\begin{defn}[Collection of Time Series]
    A set $\mathcal{D}=\{D_j\}_{j=1}^K$ is a \emph{dataset} composed of
    \emph{time series} $D_j$ on the closed interval $C := [a,b]$ where $D_j =
    \{(z_i, h_i^j)\}_{j=1}^N$ with
    \[ Z := \{z_1 = a, z_2, \dots, z_{N-1}, z_N= b\}, \]
    is an ordered set with $z_j < z_{j+1}$ and the heights $h_j^i$ are the
    heights of the $j^{th}$ points at $z_j$ of time series $i$.
\label{def:time-series}
\end{defn}

For our purposes, we assume that $Z$ denotes the progression of time. However,
the following results hold even when $Z$ denotes some ordered quantity other
than time, for example, distance. 

\begin{defn}[Discrete $\varepsilon$-Extremal Intervals]
    Let $f_i:[a,b] \rightarrow \R$ be the linear interpolation of the time
    series $D_i$. Let $\varepsilon>0$, and suppose $f_i$ has a local extremum at
    $t$. Define the \emph{discrete $\varepsilon$-extremal interval} to be a
    relatively open interval  \mbox{$\d_{\varepsilon}^{f_i}(t)\subset [a,b]$} with
    endpoints in $Z$ such that
    \begin{enumerate}
        \item $\d_{\varepsilon}^{D_i}(t) \supset \varphi_{\varepsilon}^{f_i}(t)$
        \item $\d_{\varepsilon}^{D_i}(t)$ is the minimal such interval, meaning
            there does not exist an interval $I$ with endpoints in $Z$ such that
            $\d_{\varepsilon}^{D_i}(t) \supsetneq I \supset \varphi_{\varepsilon}^{f_i}(t)$.
    \end{enumerate}
\label{def:discrete-extremal-intervals}
\end{defn}

We note that we omit the superscript $D_i$ from $\d_{\varepsilon}^{D_i}$ if the function is clear.

A few properties of $\varepsilon$-extremal intervals still hold in the discrete
case. Namely, Propositions 1 and 2 of \cite{BerryUsing20}. Proposition 1 implies
that as long as $\varepsilon$ does not exceed the node life of two local minima
$(t_j, f_i(t_j))$, $(t_k, f_i(t_k))$ of $D_i$, then \mbox{$\d_{\varepsilon}(t_j) \cap
\d_{\varepsilon}(t_k) = \emptyset$}. Proposition 2 implies that as long as
$\varepsilon$ does not exceed the node of life of the extremum at $t_j$, then
any $\varepsilon$-perturbation of $f_i$ has a local minimum contained in
$\d_{\varepsilon}(t_j)$.  The $\varepsilon$-extremal interval property that is
lost is minimality, which is Proposition 3 of \cite{BerryUsing20}. This
proposition states that the $\varepsilon$-extremal intervals are the smallest
intervals to guarantee extrema of $\varepsilon$-perturbations of $f_i$, given
the node lives of extrema is less than $\varepsilon$. Since discrete
$\varepsilon$-extremal intervals contain the ones we would get from the linear
interpolations, we cannot guarantee minimality. A more thorough discussion of
these properties in the discrete case can be found in Section 4.2 of
\cite{BerryUsing20}.

In regards to properties mentioned in this paper, \stmtref{monotonicity} of
\lemref{properties} states $\varepsilon$-extremal intervals grow as
$\varepsilon$ increases also holds in the discrete setting. This is because the
computation of the discrete $\varepsilon$-extremal intervals is the same as the
continuous case except that the intervals are widened so that the endpoints are
contained in the domain of the time series. This added computation does not
affect the monotonicity of growth in the $\varepsilon$-extremal intervals.

\begin{lem}[Monotonicity of $\d_{\varepsilon}(t_i)$]
    Let $D_j$ be a time series with $t_1<t_2<...<t_n$
    the domain coordinates of the local extrema.
    Then, for each $i \in [n]$,
    the length $\length(\d_{\varepsilon}(t_i))$ is increasing with
    respect to $\varepsilon$.
    \label{lem:discrete-monotonicity}
\end{lem}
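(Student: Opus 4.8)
The plan is to reduce this discrete statement to the continuous nesting property \lemref{nesting}, together with the elementary observation that passing from $\varphi^{f}_{\varepsilon}(t)$ to $\d_{\varepsilon}(t)$ --- rounding an interval up to the smallest relatively open interval with endpoints in $Z$ that contains it --- is monotone with respect to inclusion.

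First I would fix $i \in [n]$, let $f$ denote the linear interpolation of $D_j$ (which is nicely tame, being continuous and piecewise linear on the compact interval $[a,b]$, hence tame with finitely many critical values each of finite preimage), and take arbitrary values $0 < \varepsilon_0 < \varepsilon_1$. Applying \lemref{nesting} to $f$ at the local extremum $t_i$ gives $\varphi^{f}_{\varepsilon_0}(t_i) \subseteq \varphi^{f}_{\varepsilon_1}(t_i)$.

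Next I would record the monotonicity of the rounding operation. For a relatively open interval $I \subseteq [a,b]$, write $\d(I)$ for the minimal relatively open interval with endpoints in $Z$ that contains $I$; such a minimum exists and is unique because $Z$ is finite --- its left endpoint is the largest element of $Z$ not exceeding $\inf I$ and its right endpoint the smallest element of $Z$ at least $\sup I$, both well defined since $a = z_1$ and $b = z_N$ lie in $Z$. If $I \subseteq J$, then $\d(J)$ is in particular a relatively open interval with endpoints in $Z$ containing $I$, so minimality of $\d(I)$ among such intervals forces $\d(I) \subseteq \d(J)$. By \defref{discrete-extremal-intervals}, $\d_{\varepsilon}(t_i) = \d\bigl(\varphi^{f}_{\varepsilon}(t_i)\bigr)$. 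Combining this with the previous paragraph yields $\d_{\varepsilon_0}(t_i) \subseteq \d_{\varepsilon_1}(t_i)$, hence $\length(\d_{\varepsilon_0}(t_i)) \leq \length(\d_{\varepsilon_1}(t_i))$; since $\varepsilon_0 < \varepsilon_1$ were arbitrary, this is exactly the asserted monotonicity.

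I do not anticipate a substantial obstacle: the argument rests entirely on \lemref{nesting} and the trivial monotonicity of rounding. The only point deserving a moment of care is the boundary behavior of $\d(\cdot)$ --- ensuring the minimal enclosing interval with endpoints in $Z$ is well defined even when $\varphi^{f}_{\varepsilon}(t_i)$ abuts the domain endpoints $a$ or $b$ --- which is precisely why \defref{time-series} insists $z_1 = a$ and $z_N = b$.
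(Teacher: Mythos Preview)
Your proposal is correct and matches the paper's approach: the paper does not give a formal proof of this lemma but simply remarks that the discrete intervals are obtained from the continuous ones by widening to endpoints in $Z$, and that this widening does not disturb monotonicity. Your argument makes this precise by invoking \lemref{nesting} and then observing that the rounding-to-$Z$ operation is monotone with respect to inclusion, which is exactly the content of the paper's informal sentence.
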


Additionally, \stmtref{extrema-containment} of \lemref{properties} holds in the
discrete case. We prove that here since we apply it in
\secref{extremal-DAG-computation}.

\begin{lem}[Containment Property of Discrete $\varepsilon$-Extremal Intervals]
Let $D_j$ be a time series with domain coordinates $\{t_i\}_{i=1}^n$ for
$i\in [n]$ where $t_1<t_2<...<t_n.$ Suppose $D_j$ has a local minimum at $t_i$ where $i\leq n-1$.
Then, $\varepsilon \leq \frac{1}{2}|f_j(t_i)-f_j(t_{i+1})|$ if and only if
$t_{i+1} \notin \d_{\varepsilon}(t_i)$.
\label{lem:discrete-extremal-containment}
\end{lem}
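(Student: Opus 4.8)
The plan is to leverage the corresponding continuous statement, \stmtref{extrema-containment} of \lemref{properties}, together with the single new ingredient that a discrete $\varepsilon$-extremal interval is obtained from the continuous one only by pushing its endpoints outward to the nearest sample points in $Z$. The crucial observation is that $t_{i+1}$ itself lies in $Z$: the local extrema of the linear interpolation $f_j$ of $D_j$ occur only at breakpoints of the interpolation or at the endpoints of $C$, all of which belong to $Z$. Consequently, widening $\varphi^{f_j}_{\varepsilon}(t_i)$ to have endpoints in $Z$ can never ``reach past'' $t_{i+1}$ without already having contained it.

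For the reverse implication (in contrapositive form), I would assume $\varepsilon > \frac{1}{2}|f_j(t_i)-f_j(t_{i+1})|$. By \stmtref{extrema-containment} of \lemref{properties} applied to $f_j$, this gives $t_{i+1} \in \varphi^{f_j}_{\varepsilon}(t_i)$. Since \defref{discrete-extremal-intervals} requires $\d^{D_j}_{\varepsilon}(t_i) \supseteq \varphi^{f_j}_{\varepsilon}(t_i)$, we immediately get $t_{i+1} \in \d^{D_j}_{\varepsilon}(t_i)$; contrapositively, $t_{i+1}\notin\d^{D_j}_{\varepsilon}(t_i)$ forces $\varepsilon \leq \frac{1}{2}|f_j(t_i)-f_j(t_{i+1})|$.

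For the forward implication, assume $\varepsilon \leq \frac{1}{2}|f_j(t_i)-f_j(t_{i+1})|$. By \stmtref{extrema-containment} of \lemref{properties} again, $t_{i+1}\notin\varphi^{f_j}_{\varepsilon}(t_i)$. Because $\varphi^{f_j}_{\varepsilon}(t_i)$ is a single interval containing $t_i<t_{i+1}$, this forces $\re(\varphi^{f_j}_{\varepsilon}(t_i)) \leq t_{i+1}$. Now I would invoke the minimality clause of \defref{discrete-extremal-intervals}: the right endpoint $\re(\d^{D_j}_{\varepsilon}(t_i))$ is the least element of $Z$ that is at least $\re(\varphi^{f_j}_{\varepsilon}(t_i))$; since $t_{i+1}\in Z$ and $t_{i+1}\geq \re(\varphi^{f_j}_{\varepsilon}(t_i))$, we conclude $\re(\d^{D_j}_{\varepsilon}(t_i)) \leq t_{i+1}$. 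As $\d^{D_j}_{\varepsilon}(t_i)$ is relatively open on the right, $t_{i+1}$ lies outside it; in the boundary case $\re(\d^{D_j}_{\varepsilon}(t_i)) = t_{i+1}$, minimality forces that endpoint to be excluded, so the conclusion still holds, the only point needing a word of care being $t_{i+1}=b$, where one notes that a half-open interval of the form $(\,\cdot\,,b)$ is itself relatively open in $C$.

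The main obstacle, and the only place discreteness really enters, is the forward direction's ``widening'' step: one must be certain that enlarging the continuous interval to have endpoints in $Z$ does not swallow $t_{i+1}$, and this is exactly where $t_{i+1}\in Z$ is essential. Everything else is bookkeeping and a direct appeal to the already-established continuous containment property.
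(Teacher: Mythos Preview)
Your approach is essentially the paper's: both directions go through the continuous interval $\varphi^{f_j}_{\varepsilon}(t_i)$ first, then pass to the discrete interval via the containment $\d_\varepsilon\supset\varphi_\varepsilon$ (reverse direction) and via minimality plus relative openness (forward direction). The paper carries out exactly this two–step program.

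There is one slip in your reading of the statement that makes your invocation of \stmtref{extrema-containment} of \lemref{properties} technically inapplicable. In the present lemma, the $\{t_i\}_{i=1}^n$ are the \emph{domain coordinates of the time series} $D_j$, i.e.\ the full sample set $Z$, not the extrema of $f_j$; thus $t_{i+1}$ is simply the next sample point (so $t_{i+1}\in Z$ is tautological and your breakpoint argument is unnecessary), and it need not be an extremum of $f_j$. \lemref{properties}\,\stmtref{extrema-containment}, however, is stated for adjacent \emph{extrema}, so it cannot be cited as a black box here. The paper accordingly does not cite it: for the forward direction it just checks $f(t_i)+\varepsilon\le f(t_{i+1})-\varepsilon$ directly to get $t_{i+1}\notin\varphi_\varepsilon(t_i)$, and for the contrapositive it notes that $f$ is linear on $[t_i,t_{i+1}]$ (adjacent sample points), so once $f(t_{i+1})-\varepsilon<f(t_i)+\varepsilon$ the whole segment lies in the sublevel set and $t_{i+1}\in\varphi_\varepsilon(t_i)$. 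Replacing your two citations with these one-line computations fixes the issue; everything else in your write-up goes through unchanged.
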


\begin{proof}
    Let $t_i$ be the domain coordinate of a local minimum of $D_j$. For this
    proof we omit the subscript $j$ from $D_j$ and its linear interpolation
    $f_j$.

    First assume $\varepsilon \leq \frac{1}{2}|f(t_i)-f(t_{i+1})|$.  We show
    $t_{i+1} \notin \d_{\varepsilon}(t_i)$. Since $t_i$ is the domain coordinate of a local minimum, $f(t_i) <f(t_{i+1})$. Thus,
    \[ \varepsilon \leq \frac{1}{2} (f(t_{i+1})-f(t_i)) \]
    \[ f(t_i)+\varepsilon \leq f(t_{i+1})-\varepsilon. \]

    This implies $t_{i+1}\notin (f-\varepsilon)^{-1}(-\infty, f(t_i)+\varepsilon)$. Hence, $t_{i+1}\notin \varphi_{\varepsilon}(t_i)$. 
    By \defref{discrete-extremal-intervals},
    the right endpoint of $\d_{\varepsilon}(t_i)$ is equal to $t_{i+1}$, and the
    right half of this interval is open. Hence, $t_{i+1} \notin
    \d_{\varepsilon}(t_i)$.

    Next we prove that $t_{i+1} \notin \d_{\varepsilon}(t_i)$ implies
    $\varepsilon \leq \frac{1}{2}|f(t_i)-f(t_{i+1})|$. We do this by proving the
    contrapositive. Assume $\varepsilon > \frac{1}{2}|f(t_i)-f(t_{i+1})|$.  Since $t_i$ is the domain coordinate of a local minimum, $f(t_i)<f(t_{i+1})$. Thus,
    \[ \varepsilon >\frac{1}{2}(f({t_{i+1}})-f(t_i)) \]
    \[ f(t_i)+\varepsilon > f(t_{i+1})-\varepsilon. \]
    This implies $t_{i+1} \in (f-\varepsilon)^{-1}(-\infty, f(t_i)+\varepsilon)$. Furthermore, $t_{i+1}$ must be in the same connected component as $t_i$ in  $(f-\varepsilon)^{-1}(-\infty, f(t_i)+\varepsilon)$ since $t_{i+1}$ is adjacent to $t_i$. Therefore, $t_{i+1} \in \varphi_{\varepsilon}(t_i)$. By \defref{discrete-extremal-intervals},
    $\d_{\varepsilon}(t_i) \supset \varphi_{\varepsilon}(t_i)$. Therefore,
    $t_{i+1} \in \d_{\varepsilon}(t_i)$. 
\end{proof}

Applying a symmetric argument as in \lemref{discrete-extremal-containment}, we see
    that \mbox{$\varepsilon \leq \frac{1}{2}|f_j(t_i)-f_j(t_{i-1})|$} if and only if
    $t_{i-1} \notin \d_{\varepsilon}(t_i)$ and $i \geq 2$. 
    
In discrete time series, the idea of incomparability is reduced to intersections
between a finite number of intervals. A linear interpolation of a time series giving rise to $\varepsilon$-extremal
intervals $\varphi_\varepsilon$ results in greater values of $\varepsilon^*$
than the discrete intervals $d_\varepsilon$. Therefore
$\varepsilon^*$ determined by $d_\varepsilon$ is a conservative estimate
of incomparability available from the time series information. 

Lastly, we remark that local stability of the extremal event DAG distance
extends to the discrete case. To see this, note the node lives of extrema in
discrete functions can be computed from the sublevel set persistence diagram
obtained through linearly interpolating the values of the discrete function.
Therefore, \lemref{over-epsilon}, \lemref{direct-alignment-optimal},
\lemref{node-weights-bound}, \corref{local-backbone-stability},
\lemref{aligned-heights-bound} that are all statements about node lives and
differences in aligned node lives all extend to the discrete case. Furthermore,
the proof of \lemref{intersection-difference-bound} (that bounds the difference
between $\varepsilon$-extremal intersection values between aligned edge weights)
relies on the nesting property of $\varepsilon$-extremal intervals. This
property still holds in the discrete case and so
\lemref{intersection-difference-bound} also extends in the discrete case. The
proofs of \lemref{edge-bound} and \thmref{extremal-DAG-stability} use the
aforementioned lemmas. Therefore local stability for extremal event DAGs holds
for discrete time series.

\subsection{Computing the Extremal Event DAG}
\label{sec:extremal-DAG-computation}

In this section, we describe \algref{extremal-DAG},
which computes the extremal event DAG from a collection of time series over a closed
interval. This algorithm is based on two
key insights. First, the edge weight between two local
extrema from the same function is the minimum of the node lives of the two local
extrema (\stmtref{edge-weights-same} of \thmref{edge-weights}). Second, the edge
weight between two local extrema from different
functions is the minimum of the two node lives and the infimum $\varepsilon$ for when the
two~$\varepsilon$-extremal intervals intersect (\stmtref{edge-weights-dif} of
\thmref{edge-weights}). We first describe the
computation of merge trees that are used to compute the node lives of local extrema.

\subsubsection{Merge Trees and Node Lives.}

The merge tree captures the connectivity of sublevel sets of a function. The
information we get from the merge tree of a function is very similar to
information we capture from the zeroth-dimensional persistence diagram from a
sublevel set filtration of $f$.

\begin{defn}[Merge Tree]
    Let $f$ be a real-valued function. Let $\Gamma(f)$ be the graph of $f$. We declare $x\sim y$
    if there exists an $h\in \R$ for which $x,y \in f^{-1}(h)$ and $x,y$ are in the same connected
    component of $f^{-1}(-\infty, h]$. The \emph{merge tree of $f$}, denoted $M_f$, is defined
    to be the quotient space
    $$M_f := \Gamma(f)/\sim.$$
\end{defn}

Given a nicely tame function, $f:C\rightarrow \R$, we construct the structure of the merge
tree, $G=(V, E)$, 
by~\cite{SmirnovTriplet20}. This structure consists of a list of
\emph{merge triplets} where each triplet is a three-tuple of real numbers
$(u,s,v)$ such that $u$ represents the connected component containing itself for $a \in [f(u), f(s))$ and $v$ becomes the representative of the connected component at a height of $f(s)$.

In relation to zeroth-dimensional persistence diagram from a sublevel set filtration,~$(f(u),f(s))$,
is a birth-death pair and the two connected components represented by $u$ and $v$ merge into
the connected component represented by $v$ at a height of $f(s)$.  If the merge triplet has
identical components denoted as $(u,u,u)$, then $u$ is the global minimum of its connected
component in $G$. The time complexity of computing the Merge Tree using Kruskal's algorithm for a function
represented as a graph, $G=(V,E)$ is $O(m\log n)$ where
$n=|V|$ and $m = |E|$ ~\cite{SmirnovTriplet20}. In our setting, the number of edges is bounded by $n$ and so the time complexity is $O(n\log n)$.

We describe how we compute the node lives and node labels of the extremal event DAG,
using the triplets computed from the merge trees.  \algref{minlives} (\textsc{GetMinLives})
takes input a merge tree $M$
for a time series $D=\{(z_i, h_i)\}_{i=1}^N$. \algref{minlives} outputs the
node lives of the local minima of  $D$. For each merge triplet with distinct components, $(z_i,z_j,z_k)$
in $M$, $(h_i, h_j)$ is a point in the zeroth-dimensional persistence diagram from the sublevel
set filtration. By \defref{nodelife}, we get that $\frac{1}{2}\pers_D(z_i)=(h_j-h_i)/2$.
If the merge triplet has all identical components, $(z_j,z_j,z_j)$, then
$\frac{1}{2}\pers_D(z_j)=\frac{1}{2}(\max(\{h_i\}_{i=1}^N)-\min(\{h_i\}_{i=1}^N))$.
Applying one of these two computations to all merge triplets computes node lives
for all local minima in $D$. To compute the node lives of the local maxima, we
apply the same process to $-D$, where we take the negative of all heights $h_i$.

\begin{algorithm}
    \caption{\textsc{GetMinLives}($D$, $M$)}\label{alg:minlives}

        \begin{algorithmic}[1]
            \require Array of merge tree triplets $M$ and time series $D=\{(z_i, h_i)\}_{i=1}^N$.
            \ensure Dictionary of node lives for each point in curve.

        \State minlives $\gets$ Initialize dictionary keyed by locations of extrema

	\For{$(z_i,z_j,z_k) \in M$}
		\If{$z_j = z_k$}
			\State minlives($z_i$) $\gets (\max(\{h_i\}_{i=1}^N)-\min(\{h_i\}_{i=1}^N))/2$
		\Else
			\State minlives($z_i$) $\gets |h_j-h_i|/2$
		\EndIf
	\EndFor\\
         \Return minlives
      \end{algorithmic}
\end{algorithm}

The algorithm \textsc{GetNodeLives}$(D, M_{\min}, M_{\max})$ applies \textsc{GetMinLives}
twice for both $D$ and $-D$ with merge trees $M_{\min}$ and $M_{\max}$ respectively.
Suppose $D$ has $n$ local extrema. Each line in~\algref{minlives} takes constant time. Since
the loop has at most $n$ iterations, then the total time complexity of~\algref{minlives}
is $O(n)$. Hence, the time complexity of \textsc{GetNodeLives} is also $O(n)$.

\subsubsection{Computing Edge Weights.}
We explain how to compute the edge weights of the extremal event DAG. First consider
the edge weight between two nodes in $B(f)$, a single backbone.
By \stmtref{edge-weights-same} of \thmref{edge-weights}, we know the edge weight
is the minimum node life between the two extrema.  Computing the minimum between
two values takes constant time.

Next, we describe how we compute the edge weight between two nodes from
different backbones. In order to do this, we must first compute the infimum
$\varepsilon$ for which two $\varepsilon$-extremal intervals intersect. In the
discrete setting, the growth of the $\varepsilon$-extremal
intervals only change at a finite number of $\varepsilon$. We refer to the
$\varepsilon$ values where discontinuous changes in length occur as
\textit{jumps.} \algref{get-eps-jumps-right} (\textsc{GetEpsJumpsRight})
computes the $\varepsilon$ jumps for the right endpoint of an
$\varepsilon$-extremal interval. We recall in the discrete setting, connected
components are determined by the linear interpolation of points in the image of
$f$. In \lemref{correctness}, we use $\sim$ to denote the equivalence relation
given by connected components i.e., for a time series $D_i=\{(z_j,
h_j^i)\}_{j=1}^N$, and $z_j, z_k \in Z$, we declare $z_j \sim z_k$ at
$\varepsilon>0$ if both $z_j$, $z_k$ are contained in the same connected
component of $(f_i-\varepsilon)^{-1}(-\infty, f_i(x)+\varepsilon)$ where $f_{i}$
is the linear interpolation of $D_i$ and $x \in Z$.

\begin{algorithm}
    \caption{\textsc{GetEpsJumpsRight}($D$, $z_i$)}\label{alg:get-eps-jumps-right}

    \begin{algorithmic}[1]
        \require Time series $D=\{(z_i, h_i)\}_{i=1}^N$ and domain point $z_i \in Z$.
        \ensure Vector of real numbers indicating values for which the right endpoint of $\d_\varepsilon(z_i)$ jumps.

        \State Initialize epsilon array and jump height
        \If{$i \neq N$}
        \State $\text{epsilons} \gets \langle|h_{i+1}-h_i|/2\rangle$ \label{line:initialeps}
        \State $\connectedheight  \gets h_{i+1}$

        \For{$j \gets$ $(i+2) \dots N$} \label{line:jumps-right-for-start}
                    \If{$\left(\mbox{extremum}(z_i) = \min \; \wedge \; h_j\geq \connectedheight \right)$\\
                    \quad \quad \quad $\bigvee \left(\mbox{extremum}(z_i) = \max \; \wedge \; h_j\leq \connectedheight \right)$}
                            \State epsilons \textbf{append} $|h_j-h_i|/2$
                            \State $\connectedheight  \gets h_j$
                    \EndIf
        \EndFor \label{line:jumps-right-for-end}
        \EndIf\\
        \Return epsilons
    \end{algorithmic}
\end{algorithm}

\begin{lem}[Correctness of~\algref{get-eps-jumps-right}]
    Let $D=\{(z_i, h_i)\}_{i=1}^N$ be a time series. Let $z_i$ be the domain
    coordinate of a local extremum of $D$. Let
    $r_i(\varepsilon):\R_{>0}\rightarrow \R$ denote the right endpoint of
    $\d_{\varepsilon}(z_i)$.
    Then, $\textsc{GetEpsJumpsRight}(D, z_i)$ ~\algref{get-eps-jumps-right}
    returns all $\varepsilon$ values for which $r_i$ has a jump discontinuity.
\label{lem:correctness}
\end{lem}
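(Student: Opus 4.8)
\emph{Proof proposal.} The plan is to reduce to the case that $z_i$ is a local minimum of $D$ (if it is a local maximum, run the argument on $-D$; the branch on $\mathrm{extremum}(z_i)$ in the algorithm makes the two cases mirror images), then write down an explicit formula for the step function $r_i$ and read off its jumps. Write $f$ for the linear interpolant $f_i$ of $D$ and $L(\varepsilon):=f(z_i)+2\varepsilon$. By \defref{epsilon-intervals}, the continuous interval $\varphi^f_\varepsilon(z_i)$ is the connected component containing $z_i$ of the open sublevel set $f^{-1}(-\infty,L(\varepsilon))$; since this is an interval around $z_i$, its right endpoint $\re(\varphi^f_\varepsilon(z_i))$ is the first location to the right of $z_i$ at which $f$ attains the value $L(\varepsilon)$, or $z_N$ if there is no such location. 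By the minimality clause of \defref{discrete-extremal-intervals}, $r_i(\varepsilon)$ is the smallest element of $Z$ that is $\ge\re(\varphi^f_\varepsilon(z_i))$; combining that clause with the nesting in \lemref{nesting} gives $\d_\varepsilon(z_i)\subseteq\d_{\varepsilon'}(z_i)$ for $\varepsilon<\varepsilon'$, so $r_i$ is a nondecreasing step function (consistent with \lemref{discrete-monotonicity}) with values in the finite set $Z$.

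The combinatorial heart of the argument is: for $i<k<N$, one has $r_i(\varepsilon)>z_k$ if and only if $L(\varepsilon)>\max\{h_{i+1},\dots,h_k\}$. Indeed, the component of the sublevel set containing $z_i$ reaches strictly past $z_k$ exactly when every sample in $(z_i,z_k]$ lies below $L(\varepsilon)$, and since $f$ is monotone between consecutive samples, $\max_{[z_i,z_k]}f=\max\{h_i,h_{i+1},\dots,h_k\}=\max\{h_{i+1},\dots,h_k\}$, using $h_i<h_{i+1}$ because $z_i$ is a local minimum. Feeding this back into the previous paragraph, $r_i(\varepsilon)=z_k$ precisely on the $\varepsilon$-range where $\max\{h_{i+1},\dots,h_{k-1}\}<L(\varepsilon)\le\max\{h_{i+1},\dots,h_k\}$ (left bound vacuous for $k=i+1$, with the appropriate modification at $k=N$). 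The nonempty ranges partition $(0,\infty)$, so the jumps of $r_i$ are exactly their common endpoints; and the range for $z_k$ is nondegenerate iff $\max\{h_{i+1},\dots,h_{k-1}\}<\max\{h_{i+1},\dots,h_k\}$, that is, iff $h_k>h_\ell$ for all $i<\ell<k$, i.e.\ iff $h_k$ is a running maximum of the tail sequence $(h_{i+1},\dots,h_N)$. For such $k$ the jump occurs at $\varepsilon=\tfrac12\big(\max\{h_{i+1},\dots,h_k\}-h_i\big)=\tfrac12(h_k-h_i)$.

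Finally I would identify the output of \textsc{GetEpsJumpsRight}. A one-step induction on $j$ shows that $\mathtt{connectedheight}$ equals $\max\{h_{i+1},\dots,h_{j-1}\}$ at the start of iteration $j$: the \texttt{else} branch leaves it fixed, and the \texttt{if} branch resets it to $h_j$ exactly when $h_j$ exceeds the current value. Hence $\tfrac12|h_j-h_i|$ is appended iff $h_j$ is a running maximum of the tail, the initialization supplies $\tfrac12|h_{i+1}-h_i|$ for the (trivial) first running maximum, and matching this list against the jump values from the previous paragraph gives the claim. The delicate point---and the one I expect to be the main obstacle---is the behavior at the right boundary $z_N$: once $\re(\varphi^f_\varepsilon(z_i))$ reaches $z_N$ the discrete interval can grow no further, so the step function stabilizes at $z_N$; one must check carefully, using \lemref{discrete-extremal-containment} and the strictness of the height inequalities coming from nice tameness, that this stabilization lines up with the outermost running maximum of the tail (and handle the degenerate case $i+1=N$). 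Everything else is routine bookkeeping with the nested, monotonically growing intervals of \lemref{nesting} and \lemref{discrete-monotonicity}.
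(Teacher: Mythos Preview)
Your approach is correct and takes a genuinely different route from the paper. The paper proves correctness via a loop invariant tied to the algorithm's execution: after processing $z_{i+1},\dots,z_{i+j}$, the array \texttt{epsilons} holds the jump values of $r_i$ seen so far, with the auxiliary invariant that $z_{i+j}\in\d_{\varepsilon'}(z_i)$ for $\varepsilon'=\max(\texttt{epsilons})$. The maintenance step then splits on whether $h_{i+j+1}$ exceeds \texttt{connectedheight} and, in each branch, verifies directly whether a new entry threshold for $z_{i+j+1}$ must be recorded.

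You instead first derive a closed-form description of the step function $r_i$ itself---identifying its plateaus with the ranges $\max\{h_{i+1},\dots,h_{k-1}\}<L(\varepsilon)\le\max\{h_{i+1},\dots,h_k\}$ and hence its jump set with the running maxima of $(h_{i+1},\dots,h_N)$---and only then verify, by an easy induction on \texttt{connectedheight}, that the algorithm enumerates exactly those running maxima. This separates the analytic content (what the jumps of $r_i$ are) from the algorithmic content (that \texttt{connectedheight} tracks the running maximum), making each half shorter and giving an explicit formula for the jump set that stands on its own. The paper's invariant argument stays closer to the code and never needs to name the running-maximum structure globally. Both routes must treat the right boundary $k=N$ separately, and you are right to flag that as the one place needing real care: once $r_i$ reaches $z_N$ there is no further jump, so the correspondence between running maxima and jump values has to be checked against that stabilization.
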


\begin{proof}
    We note that if $i=N$, then $r_{i}(\varepsilon)=z_i$ for all $\varepsilon\geq
    0$. Hence, there are no jumps. \textsc{GetEpsJumpsRight} returns the empty
    array, which is correct. For the rest of this proof, we assume $i\neq N$.
    To prove correctness, we show that we have the following loop invariant. At the start of each
    iteration of the for loop, the array epsilons consists of jumps of $r_i(\varepsilon)$ in sorted
    order. We first remark that jump discontinuities of $r_i(\varepsilon)$ occur at the infimum $\varepsilon$ for
     which a point $z_j \in Z$ is contained in $\d_\varepsilon(z_i)$  with $j > i$. This is
    because in the discrete setting, the $\varepsilon$-extremal intervals only grow when a new point in
    $Z$ is contained in $\d_\varepsilon(z_i)$. In particular, we show the loop
    invariant that $z_{i+j} \in \d_{\varepsilon'}(z_i)$, where $\varepsilon'$ is
    the maximum of the array named epsilons after the $j^{th}$ iteration.

    \begin{description}
        \item[Initialization]  First, consider $\varepsilon_1 = |h_i-h_{i+1}|/2$. By
            \lemref{discrete-extremal-containment}, $z_{i+1} \notin
            \d_{\varepsilon_1}(z_i)$ and for all $\varepsilon > \varepsilon_1$,
            $z_{i+1} \in \d_{\varepsilon}(z_i)$. Hence, $\varepsilon_1$ is the
            infimum $\varepsilon$ for which $z_{i+1}\in \d_{\varepsilon}(z_i)$.
            This implies
            that $\varepsilon_1$ is a jump discontinuity of $r_i$. From \lemref{discrete-monotonicity}, we know that
            $\d_\varepsilon(z_i)$ increases monotonically. Hence no other point in $Z$ is contained in
            $\d_\varepsilon(z_i)$ at a smaller value of $\varepsilon$. Therefore, $\varepsilon_1$ is
            the smallest jump discontinuity of $r_i(\varepsilon)$ and so the loop invariant holds before the
            first iteration.

        \item[Maintenance] Assume the loop invariant holds after the $j^{th}$ iteration. We show it
            also holds after the $j+1^{st}$ iteration. First assume $z_i$ is a local minimum. Then,
            \[ \connectedheight = \max\{h_k \mid k\in [i+1, i+j]\}. \]
            Denote $z_*:=z_{i+(j+1)}$.
            We want to find the infimum $\varepsilon>0$ for which $z_*\in \d_{\varepsilon}(z_i)$.

            Suppose $h_*< \connectedheight$. We claim that $z_*\in \d_{\varepsilon'}(z_i)$ where
            $\varepsilon':=\max\{\text{epsilons}\}$. This means that no new $\varepsilon$ value needs
            to be added to the epsilons vector in \algref{get-eps-jumps}. Let
            \[z':=\text{argmax}\{h_k \mid k\in [i+1,i+j]\}.\]
            Thus $h'=\connectedheight$. Since $h_*< \connectedheight=h'$ and
            $\varepsilon'=(h'-h_i)/2$, then
            \begin{equation*}
            \begin{split}
                h_*-h_i&<h'-h_i=2\varepsilon'\\
                h_*-\varepsilon'&<h_i+\varepsilon'
            \end{split}
            \end{equation*}
            This implies that $z_*\in (f-\varepsilon')^{-1}(-\infty,
            f(z_i)+\varepsilon')$ (recall $f$ is the linear interpolation of
            $D$). Observe, $z_{i+j} \in \d_{\varepsilon'}(z_i)$ by the
            assumption that the loop invariant holds at the $j^{th}$ iteration.
            Since $z_*$ is adjacent to $z_{i+j}$, $z_*$ must be in the same
            connected component of $(f-\varepsilon')^{-1}(-\infty,
            f(z_i)+\varepsilon')$ as $z_i$, i.e., $z_* \sim z_i$. Therefore,
            $z_* \in \d_{\varepsilon'}(z_i)$ and the loop invariant holds.

            Next suppose $h_*\geq\connectedheight$. Let $\varepsilon'$ be as
            before.  Applying a similar
            computation as above we find that $z_*\notin
            (f-\varepsilon')^{-1}(-\infty, f(z_i)+\varepsilon')$.
            Observe that at $\varepsilon^* := (h_*-h_i)/2$ we have
            $$f(z_{*})-\varepsilon^* = f(z_i)+\varepsilon^*.$$

            This leads to the observation that $z_* \in
            (f-\varepsilon)^{-1}(-\infty, h_i+\varepsilon)$ for any $\varepsilon
            > \varepsilon^*$ by \lemref{discrete-monotonicity}. Since
            $\varepsilon^* > \varepsilon'$, $z_{i+j} \in
            (f-\varepsilon)^{-1}(-\infty, f(z_i)+\varepsilon)$ as well. Since
            $z_i \sim z_{i+j}$ by the loop invariant and $z_{i+j} \sim z_*$ by
            adjacency, $z_i \sim z_*$ as desired for any $\varepsilon >
            \varepsilon^*$.
            These two observations tell us that $\varepsilon^*$ is the infimum
            $\varepsilon$ for which $z_{*}\in \d_{\varepsilon}(z_i)$.
            Therefore, $\varepsilon^*$ should indeed be added to the epsilons
            array and is larger than all other values in the array. By
            assumption, the epsilons array is sorted. Hence, the loop invariant
            holds.

            In the case that $(z_i, h_i)$ is a local maximum, we apply a
            symmetric argument by noting that $\d_{\varepsilon}(z_i)$ is the
            (expanded out) connected component of
            $(f+\varepsilon)^{-1}(f(z_i)-\varepsilon, \infty)$ containing $z_i$
            and a new value is added to the $\varepsilon$ vector if $h_j \leq
            \connectedheight$.

        \item[End] Note that the for loop terminates since there are only a finite number of iterations.

    \end{description}

Since the number of jumps of $r_i(\varepsilon)$ is bounded by $N-i$, and epsilons consists of
all infimum $\varepsilon$ for which a point in $N$ and greater than $z_i$ is contained in
$\d_\varepsilon(z_i)$, then~\algref{get-eps-jumps-right} is correct.
\end{proof}

Next, we analyze the time complexity of~\algref{get-eps-jumps-right}. Every line
takes constant time. Since the for loop
(\lineref{jumps-right-for-start}-\lineref{jumps-right-for-end}) has at most
$N-1$ iterations, then the total time complexity of~\algref{get-eps-jumps-right}
is $O(N)$.

Furthermore, we can apply the same algorithm but with going through points on
the left  of $z_i$ to find all the points for which the left endpoint of
$\d_{\varepsilon}(z_i)$ changes. Note that there are $N-1$ points to the left
and right of $z_i$ combined, and so finding all $\varepsilon$-jumps of
$\d_{\varepsilon}(z_i)$ takes $O(N)$. We call this combined function,
\textsc{GetEpsJumps} (\algref{get-eps-jumps}).

\begin{algorithm}
    \caption{\textsc{GetEpsJumps}($D$, $z_i$)}\label{alg:get-eps-jumps}

    \begin{algorithmic}[1]
        \require Time series $D=\{(z_i, h_i)\}_{i=1}^N$ and domain point $z_i \in Z$.
        \ensure Vector of real numbers indicating values for which the left or right endpoint of $\d_\varepsilon(z_i)$ jumps.

        \State Initialize epsilon array and jump height
        \State $\text{epsilons} \gets \{\textsc{GetEpsJumpsLeft}(D, z_i)\}$ \label{line:eps-right}
        \State $\text{epsilons} \textbf{ append } \textsc{GetEpsJumpsRight}(D, z_i)$ \\ \label{line:eps-left}
        \Return epsilons
    \end{algorithmic}
\end{algorithm}

Lastly, to find the infimum $\varepsilon$ for which two $\varepsilon$-extremal
intervals intersect, we apply \algref{get-eps-intersection}.  \algref{get-eps-intersection} takes input of two
time series $D_j, D_k$, two domain coordinates of local extrema of $D_j, D_k$,
and merge trees of  $D_j, -D_j, D_k, -D_k$. \algref{get-eps-jumps}
(\textsc{GetEpsJumps}) is applied to both functions and extrema. Then,
\algref{get-eps-intersection} goes through all jumps in order to find the
smallest one for which the two $\varepsilon$-extremal intervals intersect.

\begin{algorithm}
    \caption{\textsc{GetEpsIntersection}($D_j$, $D_k$, $z_j$, $z_k$, $M_{D_j}$, $M_{-D_{j}}$, $M_{D_k}$, $M_{-D_{k}}$)}\label{alg:get-eps-intersection}

        \begin{algorithmic}[1]
            \require  Time series $D_j=\{(z_i, h_i^j)\}_{i=1}^N$, $D_k=\{(z_i, h_i^k)\}_{i=1}^N$ and domain points of extrema in $D_j$ and $D_k$, denoted $z_j$, $z_k \in Z$.
            		$M_{D_j}$, $M_{-D_{j}}$, $M_{D_k}$, $M_{-D_{k}}$ are merge trees of $D_j, -D_j, D_k, -D_k$ respectively.
            \ensure $\inf_{\varepsilon} \d_{\varepsilon}^{D_j}(z_j) \cap \d_{\varepsilon}^{D_k}(z_k) \neq \emptyset$

	\State Initialize epsilon array
	\State $\text{epsilons} \gets \{\textsc{GetEpsJumps}(D_j, z_j)\}$ \label{line:eps-jumps-f}
	\State $\text{epsilons} \textbf{ append } \textsc{GetEpsJumps}(D_k, z_k)$  \label{line:eps-jumps-g}
	\State $\text{sort epsilons}$ \label{line:eps-sort}
	\For{$\varepsilon \in \text{epsilons}$} \label{line:eps-int-for-start}
		\State $\d_{\varepsilon}^{D_j}(z_j) \gets \textsc{GetExtremalInterval}(D_j, z_j, M_{D_j}, M_{-D_{j}}, \varepsilon)$ \label{line:get-extremal-interval-f}
		\State $\d_{\varepsilon}^{D_k}(z_k) \gets \textsc{GetExtremalInterval}(D_k, z_k, M_{D_k}, M_{-D_{k}}, \varepsilon)$ \label{line:get-extremal-interval-g}
		\If{$\d_{\varepsilon}^{D_j}(z_j) \cap \d_{\varepsilon}^{D_k}(z_k) \neq \emptyset$}\\
			\quad\quad \Return $\varepsilon$
		\EndIf
	\EndFor \label{line:eps-int-for-end}
        \end{algorithmic}
\end{algorithm}

Next, we discuss the time complexity of \algref{get-eps-intersection}. The
number of jumps is bounded by the number of points in the domains of the two
functions. Additionally, at each jump, we compute \textsc{GetExtremalInterval}
for the two extrema, where the computation of the discrete extremal intervals
are discussed in \cite{BerryUsing20} and implemented in
\cite{BeltonComputing21}.
In particular, the time complexity for computing
\textsc{GetExtremalInterval}$(D, z_i, M_D, M_{-D})$ is $O(N)$. This is because
computing the $\varepsilon$-extremal interval requires evaluating $f$ at points
in $Z$ near $z_i$. In summary

\begin{itemize}
    \item \lineref{eps-jumps-f} and \lineref{eps-jumps-g} each take $O(N)$.
    \item \lineref{eps-sort} takes $O(N\log N)$.
    \item \lineref{get-extremal-interval-f} and \lineref{get-extremal-interval-g} take $O(N)$.
    \item  All other lines take constant time.
    \item The number of iterations of the for loop in
        \lineref{eps-int-for-start}-\lineref{eps-int-for-end} is bounded by
        $2N$.
\end{itemize}

All together we compute the time complexity of \textsc{GetEpsIntersection}$(D_j,
D_k,$ $z_j, z_k, M_{D_j}, M_{-D_j}, M_{D_k}, M_{-D_k})$ as
\[ O(2N + N\log N + 2N^2) = O(N \log N + N^2). \]

\subsubsection{Computing Extremal Event DAG}

\algref{extremal-DAG} computes the extremal event DAG from a
collection of time series $\mathcal{D}=\{D_j\}_{j=1}^K$. \algref{extremal-DAG}
uses previously defined algorithms and functions from this section along with
\textsc{InitializeGraph}.  This algorithm is designed and implemented in
\cite{BeltonComputing21}.
\textsc{InitializeGraph} takes a collection of time series $\mathcal{D}$ as
input and outputs $(T, H, V, E)$ where $V, E$ are vertices and directed edges of
the extremal event DAG of $\mathcal{D}$, $T$ is the domain coordinates of the local extrema, and $H$
is the heights of local extrema. This function checks through all points in $Z$
for extrema to record as vertices and then goes through all vertex pairs to
check for edges. Let $N = |Z|$. The number of vertices is bounded by $NK$ and
the number of edges is bounded by ${NK}\choose{2}$. Hence, the time complexity
of \textsc{InitializeGraph} is
\[ O(NK+\frac{(NK)(NK-1)}{2}) = O((NK)^2). \]

\begin{algorithm}
    \caption{\textsc{GetExtremalEventDAG}($\mathcal{D}$)}\label{alg:extremal-DAG}

        \begin{algorithmic}[1]
            \require A collection of time series $\mathcal{D} =\{D_j\}_{j=1}^K$.
            \ensure The extremal event DAG of $\mathcal{D}$.

	    \State $M_{D_j} \gets \text{merge tree for }D_j$ \label{line:mergetreemin}
	    \State $M_{-D_j} \gets \text{merge tree for }-D_j$ \label{line:mergetreemax}
            \State $\text{NodeLives}_j \gets \textsc{GetNodeLives}(D_j, M_{D_j}, M_{-D_j})$ \label{line:nodelivesi}

            \State $(T, V, E) \gets \textsc{InitializeGraph}(\mathcal{D})$
                \quad Initialize unweighted extremal event DAG. \label{line:initializegraph}

            \State Initialize function $\omega_{V} \colon V \to \R$ with all
            values set to zero. \label{line:initializeverts}

            \State Initialize function $\omega_{E} \colon E \to \R$ with all
            values set to zero.

            \For{$v \in V$} \label{line:forloop-nodes}
            	\If{$(T(v), H(v)) \in D_j$}
            	\State $\omega_{V}(v) \gets \text{NodeLives}_j(v)$
		\EndIf
	\EndFor\\ \label{line:endforloop-nodes}

           \For{$e=(u,v) \in E$} \label{line:forloop-edges}
                    \If{$(T(u), H(u)), (T(v), H(v))\in D_j$}
			\State $\omega_{E}(e) \gets \min(\text{NodeLives}_j(u), \text{NodeLives}_j(v))$
                    \ElsIf{$(T(u), H(u))\in D_j, (T(v), H(v))\in D_k$}

			\State $ \varepsilon \gets \textsc{GetEpsIntersection}(D_j, D_k, T(u), T(v), M_{D_j}, M_{-D_j}, M_{D_k}, M_{-D_k})$
				\State $\omega_{E}(e) \gets \min(\text{NodeLives}_j(u), \text{NodeLives}_k(v), \varepsilon)$

                    \EndIf
            \EndFor\\ \label{line:endforloop-edges}

            \Return $(V, E, \omega_{V}, \omega_{E})$

        \end{algorithmic}
\end{algorithm}

The correctness of~\algref{extremal-DAG} follows from the correctness of all our
other previously defined
algorithms. Next we analyze the time complexity.
\begin{itemize}
    \item \lineref{mergetreemin} and \lineref{mergetreemax} each take $KO(N\log N)$.
    \item  \lineref{nodelivesi} takes $KO(N)$.
    \item  Initializing the extremal event DAG in~\lineref{initializegraph} takes $O((NK)^2)$.
    \item Computing all the node weights in
        \lineref{forloop-nodes}-\lineref{endforloop-nodes} has a time complexity
        of $O(NK)$.
    \item Each iteration in the for loop between \lineref{forloop-edges} -
        \lineref{endforloop-edges} takes at most $O(N\log N+N^2)$.  Since the
        number of vertices is bounded above by $NK$, then the number of edges is
        bounded above by \mbox{${NK}\choose{2}$ $= \frac{(NK)(NK-1)}{2}$}. Thus,
        the number of iterations of this for loop is bounded by
        $\frac{(NK)(NK-1)}{2}$.
\end{itemize}

 In total, we get the time complexity to be
 \begin{align*}
&O\left(K(N\log N)+KN+(NK)^2+NK+\frac{(NK)(NK-1)}{2}\left((N\log N)+N^2\right)\right)\\
&= O(N^4K^2).
\end{align*}

\subsection{Computing Optimal Backbone Alignments}
To compute a distance between extremal event DAGs, we align backbones in the extremal event DAGs in
an optimal manner. Here, we describe how the alignment is computed and prove
that the alignment is optimal. Recall, that $\alpha$ denotes an alignment
between two backbones (\defref{alignment}).

\begin{defn}[Alignment Matrix]
    Let $\x =(x_1, x_2, \dots, x_m)$, $\y =(y_1,y_2, \dots, y_n)$ be backbones.
    Note that each $x_i$ can be written as the pair $x_i=(s_{\x,i},w_{\x,i})$;
    likewise each $y_i$ can be written as the pair~$y_i=(s_{\y,i},w_{\y,i})$.
    The \emph{alignment matrix}, denoted $\mat$, is an $(m+1) \times (n+1)$
    matrix recursively defined as follows:

    $$
        \mat[i,j]=
        \begin{cases}
            0 & i=j=1 \\
            \sum_{k=1}^{i-1} w_{\x,k}, & i>1,j=1 \\
            \sum_{k=1}^{j-1} w_{\y,k}, & i=1,j>1 \\
            \min \left\{
                \begin{array}{lr}
                    \mat[i-1, j]+w_{\x,i-1} \\
                    \mat[i, j-1]+w_{\y,j-1} \\
                    \mat[i-1,j-1]+\diff(x_{i-1}, y_{j-1})
                \end{array}
            \right\}                             & \text{otherwise},
        \end{cases}
    $$
    where $\diff: \x \times \y \rightarrow \R_{\geq 0} \cup \{\infty\}$
    is defined by
    $$\diff\left((s_x,w_x), (s_y,w_y)\right) =
        \begin{cases}
            | w_x - w_y |, & \text{if } s_x = s_y \\
            \infty, & \text{otherwise}\\
        \end{cases}
        .
    $$

\label{def:alignment-mat}
\end{defn}

Next we note that the bottom right entry in the alignment matrix is the minimum
cost of aligning two backbones $\x$ and $\y$. This follows from
\cite{WagnerThe74}. Recall the definition of the cost function in \defref{cost}. We also prove it here.

\begin{prop}[Alignment Matrix Finds Minimum Cost]\label{propref:alignment-matrix}
    Let $\x = (x_1, x_2, \dots, x_m)$ and $\y = (y_1, y_2, \dots, y_n)$ be
    backbones. Let $\mat$ be the $(m+1) \times (n+1)$ alignment matrix. Then, $\mat[m+1, n+1] = c_{\x, \y}(m, n)$.
\end{prop}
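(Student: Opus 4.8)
The plan is to prove $\mat[m+1,n+1]=c_{\x,\y}(m,n)$ by establishing the stronger claim that $\mat[i,j]=c_{\x,\y}(i-1,j-1)$ for all $1\le i\le m+1$ and $1\le j\le n+1$, and proceeding by induction on $i+j$. This matches the recursive structure of \defref{alignment-mat} to the structure of optimal alignments from \defref{cost} and \defref{alignment}: the three branches of the $\min$ correspond exactly to the three possible ways the last pair of an optimal alignment of $\xtok{i-1}$ with $\ytok{j-1}$ can look, namely $(x_{i-1},\zero)$, $(\zero,y_{j-1})$, or a nontrivial match $(x_{i-1},y_{j-1})$ (which by \propertyref{nomisalignments} requires $s_{\x,i-1}=s_{\y,j-1}$, hence the use of $\diff$ with its $\infty$ penalty when the strings disagree).

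First I would handle the base cases. When $i=j=1$, both $\xtok{0}$ and $\ytok{0}$ are empty and the only alignment is the empty one, so $c_{\x,\y}(0,0)=0=\mat[1,1]$. When $i>1,j=1$, the only alignment of $\xtok{i-1}$ with the empty backbone aligns every $x_k$ with an insertion, giving cost $\sum_{k=1}^{i-1}w_{\x,k}$, matching the second case of the recursion (here one must note that by \propertyref{restrictiontomatching} every element of $\x$ must appear, so there is genuinely no cheaper option); the case $i=1,j>1$ is symmetric.

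For the inductive step, fix $i,j>1$ and assume the claim for all smaller values of $i+j$. I would prove the two inequalities separately. For $\mat[i,j]\le c_{\x,\y}(i-1,j-1)$: take an optimal alignment $\alpha$ of $\xtok{i-1}$ with $\ytok{j-1}$; its last pair $\alpha(k)$ is one of the three types above, and deleting it yields a valid alignment of a shorter prefix pair whose cost is $\cost(\alpha)$ minus the contribution of that last pair, so by the inductive hypothesis $c_{\x,\y}(i-1,j-1)$ is one of the three quantities in the $\min$, hence $\ge\mat[i,j]$. (One checks that deleting the last pair genuinely produces an alignment of $\xtok{i-1}$ with $\ytok{j-2}$, of $\xtok{i-2}$ with $\ytok{j-1}$, or of $\xtok{i-2}$ with $\ytok{j-2}$ respectively, using that $\alpha_\x,\alpha_\y$ are partially monotone so the largest-index elements are consumed last.) For $\mat[i,j]\ge c_{\x,\y}(i-1,j-1)$: whichever branch achieves the minimum, take an optimal alignment of the corresponding shorter prefix pair (which has cost $\mat[\cdot,\cdot]$ by the inductive hypothesis) and append the appropriate last pair; one verifies the result is a legal alignment of $\xtok{i-1}$ with $\ytok{j-1}$ (the only nontrivial check is \propertyref{nomisalignments} for the diagonal branch, handled because $\diff=\infty$ exactly when the appended nontrivial match would be a misalignment, so that branch is never the minimizer in that case). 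Its cost is $\mat[i,j]$, so $c_{\x,\y}(i-1,j-1)\le\mat[i,j]$.

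The main obstacle I anticipate is the bookkeeping around insertions and \propertyref{nullalignments}: one must be careful that "append an insertion pair" never creates the forbidden pair $(\zero,\zero)$ and that an optimal alignment of a prefix never ends with two consecutive insertions of opposite type in a way that could be rearranged more cheaply — but since appending $(x_{i-1},\zero)$ or $(\zero,y_{j-1})$ to any alignment only adds a pair with exactly one non-empty coordinate, \propertyref{nullalignments} is preserved automatically, and the minimization over all three branches absorbs any reordering concerns. A secondary subtlety is confirming that the cost as defined in \defref{cost} (a sum over $\im(\alpha)$ including insertion pairs) decomposes additively when the last pair is removed; this is immediate from the definition since $\cost$ is a sum over the image. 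With these points dispatched, setting $i=m+1,j=n+1$ gives $\mat[m+1,n+1]=c_{\x,\y}(m,n)$, which is the claim.
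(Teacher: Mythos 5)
Your proof is correct and takes essentially the same approach as the paper's: an inductive verification that $\mat[i,j]=c_{\x,\y}(i-1,j-1)$ by matching the three branches of the matrix recursion to the three possible types of the last pair in an optimal alignment of the corresponding prefixes. If anything, your version is slightly tighter than the paper's: inducting on $i+j$ cleanly sidesteps some ambiguity in the paper's nested induction hypothesis (which as stated does not obviously cover entries like $\mat[i+1,j-1]$ that the paper's own recurrence requires), you establish the entire first row and column as base cases rather than only $\mat[1,1],\mat[2,1],\mat[1,2],\mat[2,2]$, and you make explicit the two-inequality argument and the appeal to partial monotonicity plus restriction-to-matching to pin down the form of the last aligned pair, steps the paper leaves implicit.
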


\begin{proof}
    For $i \in [n]$, let $x_i=(s_{\x, i},w_{\x, i})$ and $y_i=(s_{\y, i},w_{\y, i})$.

    We proceed by induction. For the base case, first observe that
    $c_{\x,\y}(1,0) = w_{\x,1}$ and $c_{\x,\y}(0,1) = w_{\y, 1}$. By
    construction, $\mat[2,1] = w_{\x, 1}=c_{\x,\y}(1,0)$ and \mbox{$\mat[1,2] =
    w_{\y, 1}=c_{\x,\y}(0,1)$}. Next consider $c_{\x,\y}(1,1)$. The possible
    alignments (see \defref{alignment}) of $\x_1:=\x[1:1]$ and $\y_1:=\y[1:1]$ are
    \begin{enumerate}
        \item $\alpha_1: \{1, 2\} \rightarrow \tilde{\x}_1 \times \tilde{\y}_1$,
            where  $\alpha_1(1) = (x_1, \zero)$ and $\alpha_1(2) = (\zero, y_1)$.
        \item $\alpha_2: \{1, 2\} \rightarrow \tilde{\x}_1 \times \tilde{\y}_1$,
            where  $\alpha_2(1) = (\zero, y_1)$ and $\alpha_2(2) = (x_1, \zero)$.
        \item $\alpha_3: \{1\} \rightarrow \tilde{\x}_1 \times \tilde{\y}_1$,
            where $\alpha_3(1) = (x_1, y_1)$ is a possible alignment.
    \end{enumerate}
    Observe
    $\cost(\alpha_1) = \cost(\alpha_2) = w_{\x, 1}+w_{\y, 1}$ and
    $\cost(\alpha_3) =\diff(x_1,y_1)= |w_{\x, 1}-w_{\y, 1}|$.
    Therefore,
    $$
       c_{\x,\y}(1,1) = \min \left\{
         \begin{array}{lr}
             w_{\x, 1}+w_{\y, 1} \\
           \diff( x_1, y_1)
         \end{array}
       \right.
    $$

    \noindent By construction,
    $$
       \mat[2,2] = \min \left\{
        \begin{array}{lr}
            \mat[2,1]+w_{\y, 1} \\
            \mat[1,2]+w_{\x, 1} \\
          \mat[1,1]+\diff(x_1, y_1)
         \end{array}
        \right.
    $$

    \noindent Substituting $w_{\x, 1}$ for $\mat[2,1]$ , $w_{\y, 1}$
    for $\mat[1,2]$ and zero for $\mat[1,1]$, we find $c_{\x,\y}(1,1) =
    \mat[2,2]$. This shows the base case holds.

    For the induction hypothesis we assume that $\mat[h, k] =
    c_{\x,\y}(h-1,k-1)$ for all $h \leq i$ and $k \leq j$ where $i \leq m$ and
    $j\leq n$.

    In the induction step, we show $\mat[i+1,j] = c_{\x,\y}(i, j-1)$, $\mat[i,
    j+1]=c_{\x, \y}(i-1, j)$, and \mbox{$\mat[i+1, j+1] = c_{\x, \y}(i,j)$}. First
    consider $c_{\x,\y}(i, j-1)$. Let $\alpha: [k]\rightarrow \tilde{\x}[1:i] \times
    \tilde{\y}[1:j-1]$ be an alignment
    of the first $i$ nodes of $\x$ with the first $j-1$ nodes of $\y$ with cost
    $c_{\x, \y}(i,j-1)$. Consider the last pair of nodes aligned via
    $\alpha(k)$. The cost of these two nodes is either (a) the cost of $x_i$
    aligned with an insertion, (b) the cost of $y_{j-1}$ aligned  with an
    insertion, or (c) the cost of $x_i$ aligned with $y_{j-1}$. Note, by
    \defref{alignment}, we never have an insertion aligned with an insertion.
    Since the cost is the minimum across these three possibilities, the cost is
    $$
       c_{\x, \y}(i, j-1)  = \min \left\{
        \begin{array}{lr}
            c_{\x, \y}(i-1, j-1)+ w_{\x, i} \\
            c_{\x, \y}(i, j-2)+w_{\y, j-1} \\
          c_{\x, \y}(i-1, j-2) + \diff(x_i, y_{j-1})\\
         \end{array}
        \right.
    $$

    \noindent Applying the induction hypothesis, we find

    $$
       c_{\x, \y}(i, j-1)  = \min \left\{
        \begin{array}{lr}
            \mat[i, j]+ w_{\x, i} \\
            \mat[i+1, j-1] +w_{\y, j-1} \\
          \mat[i, j-1]+ \diff(x_i, y_{j-1})\\
         \end{array}
        \right.
    $$

    \noindent By construction of $\mat$ (\defref{alignment-mat}), we see that
    $c_{\x, \y}(i, j-1) = \mat[i+1, j]$. Using a similar approach, we find
    $\mat[i, j+1]=c_{\x, \y}(i-1, j)$, and $\mat[i+1, j+1] = c_{\x, \y}(i,j)$.

    This concludes the induction argument. Thus, $\mat[i, j] = c_{\x, \y}(i-1,
    j-1)$ for all $i\leq m+1$ and $j\leq n+1$. In particular, $\mat[m+1, n+1] =
    c_{\x, \y}(m, n)$.
\end{proof}

\subsubsection{Finding Optimal Alignment from Alignment Matrix}

\begin{defn}[Path]
    Let $M$ be an $m \times n$ matrix with real valued entries. A \emph{path in
    $M$} is an injective function $p: [k] \rightarrow M$ such that $p(i)$ and
    $p(i+1)$ are adjacent values in a row, column, or diagonal for all $i \in
    \{1, 2, \dots, k-1\}$.
\end{defn}

\noindent To find an optimal alignment from the alignment matrix we construct a path via \textit{backtracking}.

\vspace{1ex}

\noindent \textbf{Path via Backtracking}.
Let $\x =(x_1, x_2, \dots, x_m)$ and $\y = (y_1, y_2, \dots, y_n)$ be backbones.
Let $\mat$ be the corresponding alignment matrix. We construct a path $p$ in
$\mat$ recursively as follows:
\begin{itemize}
    \item $p(1) = \mat[m+1, n+1]$
    \item If $p(h) = \mat[i, j]$ for  $h \geq 1$ and $i, j >1$, then,
        $$
           p(h+1) = \left\{
            \begin{array}{lr}
                \mat[i-1, j] \text{ if } \mat[i,j] = \mat[i-1, j] + w_{\x, i-1} \\
                \mat[i, j-1] \text{ if } \mat[i,j] = \mat[i, j-1]+ w_{\y, j-1} \\
              \mat[i-1, j-1] \text{ if } \mat[i,j] = \mat[i-1,j-1]+\diff(x_{i-1}, y_{j-1})
             \end{array}
            \right.
        $$
        If multiple of the conditions hold, then define $p(h+1)$ to be any
        \textit{one} of them.  We call $p$ a \emph{backtracking} path.
\end{itemize}

In summary, we are undoing the matrix construction to figure out which matrix
entries lead to the cost in $\mat[m+1, n+1]$. Once we apply backtracking, we
have at least one path from $\mat[m+1, n+1]$ to $\mat[1,1]$. We remark that
backtracking is well-defined. For any entry $p(h) = \mat[i,j]$, one of the three
upper left entries ($\mat[i-1, j]$, $\mat[i, j-1]$, $\mat[i-1, j-1]$) equals
$p(h+1)$ by construction of the alignment matrix (\defref{alignment-mat}). Since
we have a finite matrix, we eventually end at $\mat[1,1]$. We note that a path
constructed from backtracking is not necessarily unique. For describing the
alignment from a backtracking path $p: [k] \rightarrow \mat$, we consider the
\emph{reverse path} $p': [k] \rightarrow \mat$ where $p'(i) = p(k-(i-1))$.

\begin{figure}[htp]
    \includegraphics[width=0.6\textwidth]{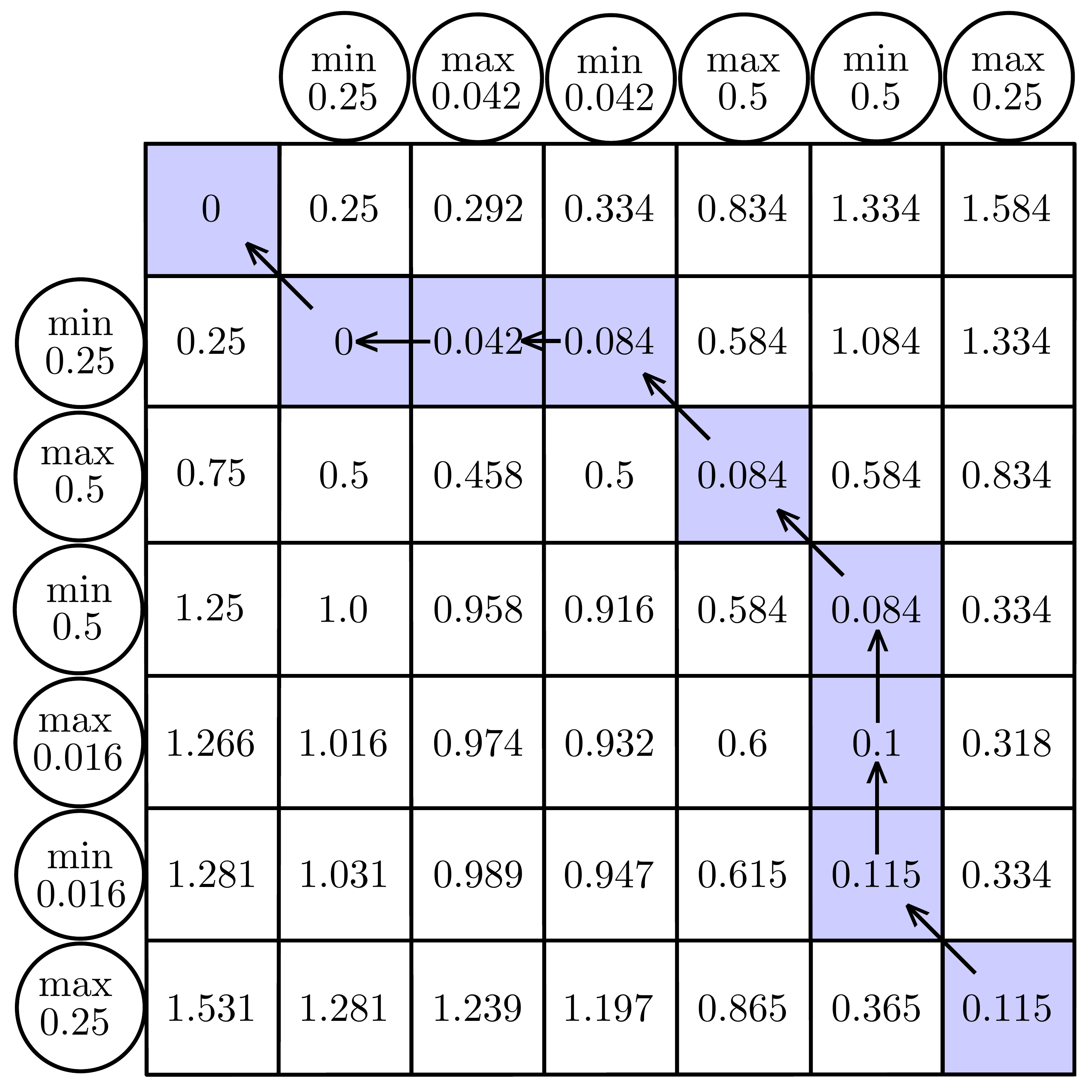}
    \caption{ Backtracking in alignment matrix of sine backbones. Consider the
    backbones~$\x$ and~$\y$ from \figref{backbones}. The path $p:\{1, 2,
    \ldots, 7\} \rightarrow \tilde{\x}\times\tilde{\y}$, constructed from backtracking is illustrated
    through the black arrows and purple highlighted entries. In particular,
    $p(1) = \mat[7,7] = 0.115$ and $p(9) = \mat[1,1] = 0$. }
    \label{fig:alignment-path}
\end{figure}

\vspace{1ex}

\noindent \textbf{Alignment from Backtracking}.  Let $\x =(x_1, x_2, \dots,
x_m)$ and $\y = (y_1, y_2, \dots, y_n)$ be backbones. Let $\mat$ be the
corresponding alignment matrix. Let $p: [k] \rightarrow \mat$ be a path computed
from backtracking and $p': [k] \rightarrow \mat$ be the reverse path. We
construct an alignment $\alpha: [k-1] \rightarrow
\tilde{\x}\times\tilde{\y}$  such that
$$
   \alpha(h) = \left\{
    \begin{array}{lr}
      (x_i, \zero) \quad \text{ if } p'(h) = \mat[i, j] \text{ and } p'(h+1) = \mat[i+1, j]\\
      (\zero, y_j) \quad \text{ if } p'(h) = \mat[i,j] \text{ and } p'(h+1) = \mat[i, j+1]\\
      (x_i, y_j) \quad \text{ if } p'(h) = \mat[i, j] \text{ and }p'(h+1) = \mat[i+1, j+1]
     \end{array}.
    \right.
$$

In other words, the following moves of $p'$ through the matrix $\mat$ mean:
\begin{itemize}
\item Vertical move from $\mat[i,j]$ to $\mat[i+1, j]$ indicates an alignment of $x_i$ with an insertion.
\item Horizontal move from $\mat[i,j]$ to $\mat[i, j+1]$ indicates an alignment of $y_j$ with an insertion.
\item Diagonal move from $\mat[i,j]$ to $\mat[i+1,j+1]$ indicates an alignment of $x_i$ with $y_j$.
\end{itemize}

Next we verify that an alignment found from backtracking is indeed an alignment.

\begin{prop}[Backtracking Finds an Alignment]
    Let $\x = (x_1, x_2, \dots, x_m)$ and $\y = (y_1, y_2, \dots, y_n)$ be
    backbones. Let $\mat$ be the $(m+1) \times (n+1)$ alignment matrix. Let
    $\alpha: [k]\rightarrow \tilde{\x}\times\tilde{\y}$ be an
    alignment found from backtracking. Then,
    $\alpha$ is an alignment.
\end{prop}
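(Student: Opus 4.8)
The plan is to verify directly that the correspondence $\alpha$ extracted from a backtracking path satisfies each of the four defining conditions of an alignment in \defref{alignment}. First I would fix notation: by construction of backtracking, $p$ is a finite path from $\mat[m+1,n+1]$ to $\mat[1,1]$ in which every step moves from $\mat[i,j]$ to one of $\mat[i-1,j]$, $\mat[i,j-1]$, or $\mat[i-1,j-1]$, so the reverse path $p'$ runs from $\mat[1,1]$ to $\mat[m+1,n+1]$ and each of its steps is vertical (row increases by $1$), horizontal (column increases by $1$), or diagonal (both increase by $1$). Writing $p'(h) = \mat[r_h, c_h]$, the sequences $(r_h)$ and $(c_h)$ are then weakly increasing with $r_1 = c_1 = 1$ and $r_{\text{last}} = m+1$, $c_{\text{last}} = n+1$, and at each step at least one of them strictly increases. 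If $m=n=0$ the path is the single vertex $\mat[1,1]$ and $\alpha$ is the empty alignment, which is vacuously an alignment, so I assume henceforth that at least one backbone is nonempty.

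Next I would read off \propertyref{nullalignments}, \propertyref{preservesbackbones}, and \propertyref{restrictiontomatching} from this monotonicity. Property \propertyref{nullalignments} is immediate, since a vertical step produces $(x_i,\zero)$ with $x_i\neq\zero$, a horizontal step produces $(\zero,y_j)$ with $y_j\neq\zero$, and a diagonal step produces $(x_i,y_j)$ with both entries genuine nodes; hence $(\zero,\zero)\notin\im(\alpha)$. For the remaining two, observe that the row index of $p'$ increases (by exactly $1$) at precisely $m$ of the steps, and the step that raises the row from $i$ to $i+1$ is exactly the one that consumes the node $x_i$ (whether that step is vertical or diagonal). Therefore $x_1,x_2,\dots,x_m$ each appear exactly once in the image of $\alpha_\x$, in this order; this gives \propertyref{restrictiontomatching} for $\x$ and shows that $\iota_\x(\alpha_\x(h))$ is strictly increasing over those $h$ with $\alpha_\x(h)\neq\zero$, which is partial monotonicity of $\alpha_\x$. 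Running the same argument on the column index and the nodes of $\y$ handles $\alpha_\y$, establishing \propertyref{preservesbackbones}.

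The only condition requiring the arithmetic structure of $\mat$ is \propertyref{nomisalignments}, and this is where I expect the one genuine point of care. I would first prove by induction on $i+j$ that every entry $\mat[i,j]$ is finite: the base cases in \defref{alignment-mat} are finite sums of node weights, and in the recursive case the minimum is taken over a set containing $\mat[i-1,j]+w_{\x,i-1}$, which is finite by the inductive hypothesis. Now suppose $((s_x,w_x),(s_y,w_y))\in\im(\alpha)$ with both entries genuine nodes; such a pair arises only from a diagonal step, say $p'(h)=\mat[i,j]$ and $p'(h+1)=\mat[i+1,j+1]$ with $(s_x,w_x)=x_i$ and $(s_y,w_y)=y_j$. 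By the backtracking rule this step is selected only when $\mat[i+1,j+1]=\mat[i,j]+\diff(x_i,y_j)$; since the left-hand side is finite, $\diff(x_i,y_j)<\infty$, and by the definition of $\diff$ this forces $s_x=s_{\x,i}=s_{\y,j}=s_y$. Hence \propertyref{nomisalignments} holds, and all four properties are verified, so $\alpha$ is an alignment. The main obstacle is bookkeeping rather than mathematics: one must keep the index conventions of the matrix recursion, the backtracking rule, and the alignment-extraction rule aligned (the off-by-one shifts relating $\mat[i,j]$ to the nodes $x_{i-1},y_{j-1}$), and one must handle the $\infty$ values of $\diff$ correctly, which is precisely the purpose of the finiteness lemma above.
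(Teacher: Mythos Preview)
Your proof is correct and follows essentially the same approach as the paper: you use the monotone row/column indices of the reverse path to establish order preservation and the restriction-to-matching property, and you use finiteness of all matrix entries (since the minimum always includes at least one finite option) to rule out misaligned diagonal steps. The only differences are presentational---you combine Properties~\ref{property:preservesbackbones} and~\ref{property:restrictiontomatching} into a single monotonicity argument and make the finiteness induction for Property~\ref{property:nomisalignments} explicit, whereas the paper argues each property separately and simply notes that at least two of the three values in the recursive minimum are finite.
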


\begin{proof}

    We verify that $\alpha$ is well-defined,  and $\alpha$ satisfies all four properties of being an alignment (\defref{alignment}).
    Let $p': [k+1]\rightarrow \mat$ be the path used to construct $\alpha$.

    \orgemph{First we show $\alpha$ is well-defined.} Let $h \in [k-1]$ and consider
    $\alpha(h)$. Observe exactly one of the three conditions (vertical move, horizontal move, diagonal move) that define
    $\alpha(h)$ holds by the construction of $p'$, and $\alpha(h) \in \tilde{\x}\times\tilde{\y}$. Hence,
    $\alpha$ is well-defined.
    
    Observe by construction, $\alpha$ has no null alignments. Hence, \propertyref{nullalignments} of \defref{alignment} holds. Next we show the remaining properties.

    \orgemph{\propertyref{preservesbackbones} (Preserves Order of Backbones).}
     We use the index function $\iota_\x$ and the coordinate function $\alpha_\x$
    from \defref{alignment}. Suppose $\alpha_\x(i)$, $\alpha_\x(j) \in \x$ where $i <
    j$. Suppose further that $p'(i) = \mat[q, r]$ and $p'(j) = \mat[s, t]$,
    indicating that $\alpha_\x(i) = x_q$, $\alpha_\x(j) = x_s$. By construction of
    $p'$ and the fact that $i<j$ either (1) $q = s$ and $r < t$ or (2) $q < s$.
    We claim that only (2) holds in our setting.

    Assume for a contradiction that $q = s$ and $r < t$. By construction of
    $p'$, this means that there are only horizontal moves between $p'(i)$ and
    $p'(j)$. Hence, $\alpha_\x(i)$ or $\alpha_\x(j)$ or both must be the empty node.
    This contradicts the assumption that both $\alpha_\x(i)$ and $\alpha_\x(j)$ are
    in $\x$. Therefore $q< s$, and in particular, $q < s$ if and only
    if $i < j$.

    Since $\alpha_\x(i) = x_q$ and $\alpha_\x(j) = x_s$, we have the index function
    $\iota_\x(\alpha_\x(i)) = q$ and $\iota_\x(\alpha_\x(j)) = s$ if and only if
    $\iota_\x(\alpha_\x(i))  < \iota_\x(\alpha_\x(j))$. Therefore
    $\iota_\x(\alpha_\x(i))  < \iota_\x(\alpha_\x(j))$ if and only if $i < j$, so
    that $\alpha$ preserves the order of nodes in the backbone $\x$. The same
    argument substituting $\y$ for $\x$ also shows that $\alpha$ preserves the
    order of nodes in the backbone $\y$.
    
    \orgemph{\propertyref{nomisalignments} (No Misalignments).} By design of
    $\mat$, a misalignment has an infinite cost. Since each entry in the
    alignment matrix is a minimum of three values where at least two values are
    finite, then $\mat$ does not contain any infinite entries. This implies that
    when applying backtracking, we never have a diagonal move corresponding to a
    misalignment.
    
    \orgemph{\propertyref{restrictiontomatching} (Restriction to Matching).} 
     Let $x_i \in \x$. 
    Recall in the definition of $p'$, we construct a path using adjacent entries in the alignment matrix 
    such that $p'(0) = \mat[1, 1]$ and $p'(k)=\mat[1,1]$. By definition of $p'$, there exists a $j \in [k+1]$ such that 
    $p'(j) = \mat[i+1, h]$ where $h \in [n+1]$. This implies that $x_i$ appears in 
    $\im(\alpha_\x)$ at least once. Similarly, for $y_i \in \y$, there exists a
    $j \in [k+1]$ such that $p'(j) = \mat[h, i+1]$ where $h \in [m+1]$. Hence $y_i$ 
    appears in $\im(\alpha_\y)$ at least once. Furthermore, in backtracking, no same move (vertical, horizontal, or diagonal) between two matrix entries is repeated twice. 
    Hence, each $x_i \in \x$ and each $y_i \in \y$ appears in $\im(\alpha_\x)$ and $\im(\alpha_\y)$ exactly once. Therefore we have a restriction to matching.    

\end{proof}

We now prove that an alignment found using this backtracking has a cost equal to $\mat[m+1, n+1]$. 

\begin{prop}[Backtracking Finds Alignment with Cost Computed from Alignment Matrix]
    Let \mbox{$\x = (x_1, x_2, \dots, x_m)$} and $\y = (y_1, y_2, \dots, y_n)$ be
    backbones. Let $\mat$ be the $(m+1) \times (n+1)$ alignment matrix. Let
    $\alpha: [k]\rightarrow \tilde{\x}\times\tilde{\y}$ be an alignment found
    from backtracking. Then,
    $\cost(\alpha) = \mat[m+1, n+1]$.
\label{propref:backtracking-optimal}
\end{prop}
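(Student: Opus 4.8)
The plan is a telescoping computation along the backtracking path. Write the backtracking path as $p\colon[k]\to\mat$ with $p(1)=\mat[m+1,n+1]$ and $p(k)=\mat[1,1]$, and its reverse as $p'(i)=p(k+1-i)$, so that $p'(1)=\mat[1,1]$, $p'(k)=\mat[m+1,n+1]$, and $\alpha\colon[k-1]\to\tilde{\x}\times\tilde{\y}$ records the move from $p'(h)$ to $p'(h+1)$ in its value $\alpha(h)$. The heart of the argument is the identity, valid for each $h\in[k-1]$,
\[
\mat\bigl[p'(h+1)\bigr]-\mat\bigl[p'(h)\bigr]=c\bigl(\alpha(h)\bigr),
\]
where for a pair $\pi=(a,b)$ we write $c(\pi):=|w_a-w_b|$ for its per-pair contribution to $\cost$ (with the convention $\zero=(0,0)$, so $c(x,\zero)=w_x$ and $c(\zero,y)=w_y$). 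Granting the identity, I first note that every alignment is injective: by \defref{alignment} \propertyref{restrictiontomatching} each node of $\x$ and of $\y$ appears exactly once, and $(\zero,\zero)\notin\im(\alpha)$ by \propertyref{nullalignments}, so distinct inputs yield distinct outputs; hence $\cost(\alpha)=\sum_{\pi\in\im(\alpha)}c(\pi)=\sum_{h=1}^{k-1}c(\alpha(h))$. The sum then telescopes:
\[
\cost(\alpha)=\sum_{h=1}^{k-1}\Bigl(\mat[p'(h+1)]-\mat[p'(h)]\Bigr)=\mat[p'(k)]-\mat[p'(1)]=\mat[m+1,n+1]-0=\mat[m+1,n+1],
\]
using $\mat[1,1]=0$ from \defref{alignment-mat}.

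To prove the identity I would check the three move types against the backtracking rule. A vertical step $p'(h)=\mat[i,j]$, $p'(h+1)=\mat[i+1,j]$ is taken precisely because $\mat[i+1,j]=\mat[i,j]+w_{\x,i}$, and here $\alpha(h)=(x_i,\zero)$ has contribution $w_{\x,i}$; the horizontal step is symmetric, with increment $w_{\y,j}$ and pair $(\zero,y_j)$. A diagonal step $p'(h)=\mat[i,j]$, $p'(h+1)=\mat[i+1,j+1]$ is taken because $\mat[i+1,j+1]=\mat[i,j]+\diff(x_i,y_j)$, so the increment is $\diff(x_i,y_j)$, while $\alpha(h)=(x_i,y_j)$ contributes $|w_{x_i}-w_{y_j}|$; these agree because backtracking never takes a diagonal step corresponding to a misalignment — the alignment matrix has no infinite entries, as established in the proof that backtracking yields an alignment — so $s_{x_i}=s_{y_j}$ and hence $\diff(x_i,y_j)=|w_{x_i}-w_{y_j}|$. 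When more than one of the three defining conditions for $\mat[p'(h)]$ holds at once, backtracking commits to a single branch; the identity holds for whichever branch was taken, so this causes no difficulty.

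This proposition is essentially bookkeeping: the main obstacle, such as it is, is purely the index correspondence among the forward path $p$, the reverse path $p'$, and $\alpha$ — in particular keeping straight that a step into row $i+1$ of $\mat$ aligns $x_i$ (not $x_{i+1}$) with an insertion — together with invoking the base value $\mat[1,1]=0$ to close the telescope. Equivalently, one could run an induction on $h$ proving $\mat[p'(h)]=\cost(\alphtok{h-1})$, but the telescoping phrasing is shorter and uses nothing beyond \defref{alignment-mat} and the structure of the backtracking construction.
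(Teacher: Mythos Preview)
Your proof is correct and takes essentially the same approach as the paper, which proves by induction on $h$ that $\cost(\alpha[1{:}h]) = p'(h+1)$; your telescoping formulation is just the differential version of that accumulation, as you yourself note at the end. One small notational point: since $p'$ already maps into matrix values, $p'(h)$ \emph{is} the entry and writing $\mat[p'(h)]$ is redundant, but this does not affect the argument.
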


\begin{proof}
    Let $p':[k+1] \rightarrow \mat$ be the path used to construct $\alpha$. 
    We show $\cost(\alpha) = \mat[m+1, n+1]$. To do this, we prove
    $\cost(\alpha[1:h]) = p'(h+1)$ for all $h\leq k$ by induction. 
    For the base case, consider $\alpha[1:1] =\alpha(1)$. There are three
    possibilities for $\alpha(1)$. Either:
    \begin{enumerate}
        \item $\alpha(1) = (x_1, \zero)$
        \item $\alpha(1) = (\zero, y_1)$
        \item $\alpha(1) = (x_1, y_1)$.
    \end{enumerate}
    Recall that $x_i \in \x$ can be expanded as $x_i=(s_{\x, i},w_{\x,i})$;
    likewise, for $y_i \in \y$, we can write $y_i=(s_{\y, i},w_{\y, i})$.
    If~(1), then $\cost(\alpha(1)) = w_{\x,1} = \mat[2,1] = p'(2)$. If~(2),
    then $\cost(\alpha(1)) = w_{\y,1} = \mat[1, 2] = p'(2)$. If (3),
    then $\cost(\alpha(1)) = \diff(x_1, y_1) = \mat[2,2] = p'(2)$. In all three
    cases we find $\cost(\alpha(1)) = p'(2)$.

    Next, we assume the induction hypothesis that $\cost(\alpha[1:h]) = p'(h+1)$ for
    some $h < k$.

    Suppose $p'(h+1) = \mat[i,j]$.  There are three possibilities for
    $\alpha(h+1)$. Either
    \begin{enumerate}
        \item $\alpha(h+1) = (x_i, \zero)$
        \item $\alpha(h+1) = (\zero, y_j)$
        \item $\alpha(h+1) = (x_i, y_j)$.
    \end{enumerate}
    If (1), then
    $$
        \cost(\alpha[1:h+1])
        = \cost(\alpha[1:h]) + w_{\x,i} = p'(h+1) + w_{\x,i} = \mat[i, j] + w_{\x,i} = p'(h+2).
    $$
    If (2), then
    $$
        \cost(\alpha[1:h+1])
        = \cost(\alpha[1:h]) + w_{\y, j} = p'(h+1) + w_{\y, j} = \mat[i, j] + w_{\y, j} = p'(h+2).
    $$
    If (3), then
    \begin{equation*}
    \begin{split}
        \cost(\alpha[1:h+1])
        &= \cost(\alpha[1:h]) + \diff(x_i, y_j) = p'(h+1) + \diff(x_i, y_j)\\
         &= \mat[i, j] + \diff(x_i, y_j) = p'(h+2).
    \end{split}
    \end{equation*}
    All equalities follow from either the induction hypothesis or construction
    of~$p'$

    By induction, $\cost(\alpha[1:h]) = p'(h+1)$ for all $h\leq k$. In particular,
    \[ \cost(\alpha[1:k]) = \cost(\alpha) = p'(k+1) = \mat[m+1, n+1]. \]
\end{proof}

Observe that Proposition~\ref{propref:alignment-matrix} and
Proposition~\ref{propref:backtracking-optimal} give the following corollary.

\begin{cor}[Backtracking Finds Optimal Alignment]
    Let $\x = (x_1, x_2, \dots, x_m)$ and $\y = (y_1, y_2, \dots, y_n)$ be
    backbones. Let $\mat$ be the $(m+1) \times (n+1)$ alignment matrix. Let
    $\alpha: [k]\rightarrow \tilde{\x} \times \tilde{\y}$ be an alignment found from backtracking. Then
    $\cost(\alpha) = c_{\x, \y}(m, n)$.
\end{cor}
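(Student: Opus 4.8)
The plan is to obtain the corollary purely by chaining the two propositions that precede it in this subsection; no new argument is required. First I would invoke Proposition~\ref{propref:alignment-matrix} (Alignment Matrix Finds Minimum Cost), which establishes that the bottom-right entry of the alignment matrix records the optimal partial-alignment cost, i.e.\ $\mat[m+1,n+1] = c_{\x,\y}(m,n)$. Then I would invoke Proposition~\ref{propref:backtracking-optimal} (Backtracking Finds Alignment with Cost Computed from Alignment Matrix), which shows that any alignment $\alpha\colon[k]\to\tilde{\x}\times\tilde{\y}$ produced by the backtracking procedure has cost exactly $\cost(\alpha)=\mat[m+1,n+1]$.

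The key step is then simply the substitution: starting from $\cost(\alpha)=\mat[m+1,n+1]$ and replacing $\mat[m+1,n+1]$ by $c_{\x,\y}(m,n)$ using the first proposition yields $\cost(\alpha)=c_{\x,\y}(m,n)$, which is the claim. One should remark (or it is already implicit from the earlier discussion) that the backtracking procedure is well-defined and does produce such an $\alpha$, so the statement is non-vacuous; this well-definedness was already argued when the backtracking path was introduced, since at each entry $\mat[i,j]$ at least one of the three defining minima is attained, and the path is forced to terminate at $\mat[1,1]$ in finitely many steps.

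The "hard part" here is essentially nonexistent: all the real work was done in proving the two propositions (the induction on matrix entries for minimality, and the induction on path length for the cost identity). The only thing to be careful about is matching conventions — confirming that the $\alpha$ in the corollary statement is exactly the one constructed by the backtracking-to-alignment correspondence used in Proposition~\ref{propref:backtracking-optimal}, and that "$c_{\x,\y}(m,n)$" in Definition~\ref{def:cost} is the same quantity appearing in both propositions. Given that, the proof is a two-line application of transitivity of equality.

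\begin{proof}
    By Proposition~\ref{propref:backtracking-optimal}, the alignment $\alpha$ obtained from backtracking satisfies $\cost(\alpha) = \mat[m+1, n+1]$. By Proposition~\ref{propref:alignment-matrix}, $\mat[m+1, n+1] = c_{\x, \y}(m, n)$. Combining these two equalities gives $\cost(\alpha) = c_{\x, \y}(m, n)$, as claimed.
\end{proof}
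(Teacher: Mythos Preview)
Your proposal is correct and matches the paper's own treatment exactly: the paper states that the corollary follows directly from Proposition~\ref{propref:alignment-matrix} and Proposition~\ref{propref:backtracking-optimal} and does not write out a separate proof. Your two-line chaining of the equalities is precisely the intended argument.
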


\subsubsection{Time Complexity of Computing Backbone and Extremal Event DAG Distance}
Using the dynamic program above, we can compute the backbone distance in $O(mn)$
time where $m$ and $n$ are the lengths of the two backbones. However, since
backbone alignments are not always unique, then computing all optimal backbone
alignments can become costly.

When computing the extremal event DAG distance, we must compute the optimal backbone
alignments that minimize the difference in weights over all possible aligned
edges. Since we could have multiple optimal backbone alignments, then computing
the extremal event DAG distance in the worst case is expensive. However, we have found
empirically for the applications below that almost always there is a unique optimal
alignment. This then results in a polynomial time complexity for computing the
extremal event DAG distance.

\end{document}